\documentclass[aps,prx,reprint,superscriptaddress,longbibliography]{revtex4-2}

\usepackage{graphicx}
\graphicspath{{./}{./figure/}}

\usepackage{amsmath,amssymb,bm,braket}
\usepackage{enumitem}

\usepackage{hyperref}
\hypersetup{colorlinks=true, citecolor=magenta, linkcolor=blue, urlcolor=cyan}

\usepackage{amsthm}
\newtheorem{definition}{Definition}
\newtheorem{theorem}{Theorem}
\newtheorem{proposition}{Proposition}
\newtheorem{corollary}{Corollary}
\newtheorem{lemma}{Lemma}

\newtheorem{example}{Example}
\newtheorem{assumpGibbs}{Assumption}

\newtheorem{assumpThermalization}{Assumption}

\newtheorem{mainresult}{Main result}
\newtheorem{TDLaw}{Planck's principle}

\newcommand{\maceq}{\stackrel{\mathrm{mac}}{\sim}}
\newcommand{\supp}{\mathrm{Supp}}
\newcommand{\initH}{H_L}
\newcommand{\sumst}[2]{\sum_{\substack{#1 \ \mathrm{s.t.} \ #2}}}
\newcommand{\Cell}[1][]{C^{\ell}_{#1}}
\newcommand{\SetAdditive}{\mathcal{A}}
\newcommand{\aLocal}{\alpha}



\begin{document}

\title{Second law of thermodynamics in closed quantum many-body systems}

\author{Yuuya Chiba}
\email{yuya.chiba@riken.jp}
\affiliation{Nonequilibrium Quantum Statistical Mechanics RIKEN Hakubi Research Team, Pioneering Research Institute (PRI), RIKEN, 2-1 Hirosawa, Wako, Saitama 351-0198, Japan}

\author{Yasushi Yoneta}
\email{yasushi.yoneta@riken.jp}
\affiliation{Center for Quantum Computing, RIKEN, 2-1 Hirosawa, Wako, Saitama 351-0198, Japan}

\author{Ryusuke Hamazaki}
\email{ryusuke.hamazaki@riken.jp}
\affiliation{Nonequilibrium Quantum Statistical Mechanics RIKEN Hakubi Research Team, Pioneering Research Institute (PRI), RIKEN, 2-1 Hirosawa, Wako, Saitama 351-0198, Japan}
\affiliation{RIKEN Center for Interdisciplinary Theoretical and
Mathematical Sciences (iTHEMS), RIKEN, Wako, Saitama 351-0198, Japan}

\author{Akira Shimizu}
\email{firstname-lastname@g.ecc.u-tokyo.ac.jp}
\affiliation{Institute for Photon Science and Technology, The University of Tokyo, 7-3-1 Hongo, Bunkyo-ku, Tokyo 113-0033, Japan}
\affiliation{Center for Quantum Information and Quantum Biology, The University of Osaka, Toyonaka, Osaka 560-0043, Japan}

\date{\today}

\begin{abstract}

The second law of thermodynamics for adiabatic operations --- constraints on state transitions in closed systems under external control --- is one of the fundamental principles of thermodynamics. On the other hand, recent studies of thermalization have established that even pure quantum states can represent thermal equilibrium. However, pure quantum states do not satisfy the second law in that they are not passive, i.e., work can be extracted from them if arbitrary unitary operations are allowed, and that various entropy formulas, such as the von Neumann entropy, deviate from thermodynamic entropy in such states. It therefore remains unresolved how thermal equilibrium represented by a pure quantum state can be reconciled with thermodynamics. Here, based on our key quantum-mechanical notions of thermal equilibrium and adiabatic operations, we address the emergence of the second law of thermodynamics in closed quantum many-body systems. We first introduce infinite-observable macroscopic thermal equilibrium (iMATE); a quantum state, including pure states, is said to represent iMATE if the expectation values of \emph{all} additive observables, which correspond to additive quantities in thermodynamics, agree with their equilibrium values. We also introduce a macroscopic operation as unitary evolution generated by a \emph{time-dependent} additive Hamiltonian, which is regarded as corresponding to adiabatic operations. Employing these concepts, we show Planck's principle: no extensive work can be extracted from any quantum state representing iMATE through any cyclic macroscopic operations with the operation times independent of the system size. Furthermore, we introduce a quantum-mechanical form of entropy density such that it agrees with thermodynamic entropy density for any quantum state representing iMATE. We then prove the law of increasing entropy: for any initial state representing iMATE, this entropy density cannot be decreased by any macroscopic operations with the operation times independent of the system size, 
followed by a relaxation process governed by a time-independent Hamiltonian.
Our theory thus proves two different forms of the second law, which are quantum mechanically inequivalent to each other, and demonstrates how thermodynamics emerges from quantum mechanics by adopting macroscopically reasonable classes of observables, equilibrium states, and operations.

\end{abstract}

\maketitle

\section{\label{sec:introduction}Introduction}

The second law of thermodynamics~\cite{Kittel1980,Landau1980,Callen1985,Lieb1999,Pusz1978,Lenard1978,Gorecki1980,Daniels1981,Tasaki2000,Baba2023,Brandao2015SecondLaw,Tasaki2000Stat,Santos2011,Ikeda2015Second,Tasaki2016SecondLaw,Kaneko2019,Munoz-Arias2022,Hokkyo2025,Meier2025} is the origin of irreversibility of many macroscopic phenomena and provides a strong constraint on the possible state transitions in such systems. 
State transitions occur due to various causes, such as a sudden quench
that induces relaxation 
(governed by a \emph{time-independent} Hamiltonian)
in an \emph{isolated system}.
Among them, transitions due to adiabatic thermodynamic operations
on \emph{closed systems} 
(governed by a \emph{time-dependent} Hamiltonian)
are particularly important, since, e.g., thermodynamics can be rigorously constructed from axioms about adiabatic accessibility and equilibrium states~\cite{Lieb1999}.

While there are many different forms of the second law, the representative
forms that correspond to adiabatic operations are
\emph{Planck's principle}~\cite{Lieb1999,Tasaki2016SecondLaw,Hokkyo2025} and \emph{the law of increasing entropy}~\cite{Kittel1980,Landau1980,Callen1985,Lieb1999}.
Here, 
Planck's principle states that the energy of a closed system cannot be decreased by any adiabatic thermodynamic operation in which the initial and final values of the control parameters, such as volume and magnetic field, coincide. 
The law of increasing entropy states that thermodynamic entropy of a closed system cannot be decreased by any adiabatic operation.
These laws are well established in thermodynamics. 
However, their understanding from microscopic physical laws is still insufficient.

Planck's principle has motivated researchers to introduce the notion of \emph{passivity}~\cite{Pusz1978,Lenard1978,Gorecki1980,Daniels1981}.
A quantum state is said to be passive for a given Hamiltonian if the expectation value of the Hamiltonian cannot be decreased by applying any unitary time evolution.
It has been shown~\cite{Lenard1978} that the necessary and sufficient condition for passivity is that the quantum state can be written as a classical mixture of energy eigenstates with probabilities that are in descending order with respect to the energy eigenvalues. 
This implies that 
any mixed quantum states with nondescending-order probabilities, 
including pure quantum states other than the ground states,
are not passive.

On another front, in the study of thermalization of isolated quantum systems~\cite{Neumann1929,Deutsch1991,Srednicki1994,Rigol2008,
Kim2014,Beugeling2014,Biroli2010,Mori2016,Iyoda2017,Srednicki1996,Arad2016,Brandao2019,Essler2024,Tasaki1998,Reimann2008,Linden2009,Short2012,Reimann2012,Farrelly2017,Eisert2015,D'Alessio2016,Gogolin2016,Mori2018,Kinoshita2006,Gring2012,Trotzky2012,Kaufman2016,
Goldstein2015,Tasaki2016,Goldstein2017,Goldstein2006,Popescu2006,Sugita2006,Reimann2007,Sugiura2012,Sugiura2013,Rigol2007,Sotiriadis2014,Wouters2014,Pozsgay2014,Ilievski2015,Mierzejewski2015,Ilievski2015a,Ilievski2016,Doyon2017,Kuwahara2020ETH,Reimann2015,Sugimoto2021,Sugimoto2022,sugimoto2023bounds,Serbyn2026}, it is widely believed that not only Gibbs ensembles but also other quantum states, including pure quantum states, can represent thermal equilibrium. For instance, the eigenstate thermalization hypothesis (ETH)~\cite{Neumann1929,Deutsch1991,Srednicki1994,Rigol2008} states that every energy eigenstate located in the bulk of the energy spectrum represents thermal equilibrium, and is believed to hold for typical nonintegrable systems~\cite{D'Alessio2016,Neumann1929,Deutsch1991,Srednicki1994,Rigol2008,Kim2014,Beugeling2014,Reimann2015,Sugimoto2021,Sugimoto2022,sugimoto2023bounds}.
However, given that general pure quantum states do not satisfy the condition for passivity, Planck's principle does not seem to hold for thermal equilibrium represented by pure quantum states.

One reason for the above inconsistency is due to the fact that some unitaries with which the energy is decreased from nonpassive states do not correspond to adiabatic thermodynamic operations, i.e., thermodynamic operations in systems that exchange no heat with the environment (not to be confused with the adiabatic theorem in quantum mechanics).  If the system is isolated from the environment, the dynamics is governed by a unitary time evolution generated by a time-dependent Hamiltonian. However, the Hamiltonian of thermodynamic systems is usually assumed to consist only of local interactions, and thermodynamic operations must be completed in a usual timescale that should not exceed, e.g., 
the lifetimes of constituent particles
or the age of the universe. 
Under such restrictions, 
it will be difficult to generate all unitaries in macroscopic systems. In other words, it is not obvious what unitaries correspond to adiabatic thermodynamic operations.

Furthermore, there is the possibility that, to resolve the above inconsistency, we need to adopt the appropriate definition of quantum states representing thermal equilibrium. In the context of thermalization, two definitions are often used: macroscopic thermal equilibrium (MATE)~\cite{Neumann1929,Goldstein2015,Tasaki2016,Goldstein2017,Mori2018} and microscopic thermal equilibrium (MITE)~\cite{Goldstein2015,Goldstein2017,Mori2018}. 
In the notion of MATE, a quantum state is said to be in MATE if, for a \emph{finite} number of fixed \emph{macroscopic} observables, the measurement outcomes almost surely agree with their equilibrium values~\footnote{For noncommutative macroscopic observables, the measurement outcomes are not defined naively. To overcome this difficulty, one considers their approximation by commuting observables. For more details, see Sec.~\ref{sec:OrdinaryMATE}}.
On the other hand, a quantum state is said to be in MITE if the expectation values of all \emph{local} observables agree with those in the Gibbs state.
Note that the Kubo–Martin–Schwinger (KMS) condition  \cite{Haag1996,Bratteli2002,Bratteli2002a} corresponds to the thermodynamic limit of MITE (Sec.~\ref{sec:MITE}).
Since the standard thermodynamics focuses on macroscopic quantities~\cite{Callen1985}, MATE is considered to be more compatible with thermodynamics than MITE. 
However, we stress that both definitions only require that the value of each observable in that state agrees with the equilibrium value described by statistical mechanics; it is nontrivial whether  \textit{state transitions} starting from that state are consistent with transitions allowed in thermodynamics.

Moreover, it is rather nontrivial how the law of increasing entropy is reconciled with the nonpassive nature of pure quantum states representing thermal equilibrium. To address this question, it is desirable to define entropy for each quantum state
such that it agrees
with thermodynamic entropy whenever the state represents thermal equilibrium. 
[Unfortunately, 
the von Neumann entropy does not work in general since 
it is zero for a pure quantum state even if the state represents thermal equilibrium
and does not agree with thermodynamic entropy.] 
It seems challenging to define such an entropy purely quantum mechanically.
In addition, even if such a definition is possible, it is not clear whether the law of increasing entropy is satisfied.

In fact, there are several studies that considered similar questions on the second law of thermodynamics via unitary dynamics~\footnote{ Some readers would be concerned about the relation to results by Iyoda, Kaneko, and Sagawa~\cite{Iyoda2017}. They showed $\Delta S-\beta Q\ge 0$, a form of the second law for \emph{isothermal} operations, in a system attached to a heat reservoir. By contrast, we are interested in the second law for \emph{adiabatic} operations}.
The passivity in closed quantum many-body systems has been addressed in a pioneering work by Hokkyo and Ueda~\cite{Hokkyo2025} 
employing  
MITE as thermal equilibrium states; however, many of the problems remain unsolved (see Sec.~\ref{sec:Hokkyo}).
Regarding the law of increasing entropy, it
has been discussed based on certain forms of entropy, such as extensions of the Boltzmann entropy~\cite{Tasaki2000Stat}, diagonal entropy~\cite{Santos2011,Ikeda2015Second,Munoz-Arias2022}, and observable entropy~\cite{Neumann1929,Meier2025}. However, they face at least one of the following two difficulties in addressing the above 
questions. One is that they do not apply to pure quantum states, particularly energy eigenstates. For instance, diagonal entropy~\cite{Santos2011,Ikeda2015Second} becomes zero when applied to energy eigenstates. 
In addition, they only consider the initial state satisfying the condition for passivity~\cite{Tasaki2000Stat} or having a large effective dimension~\cite{Meier2025} (see also Sec.~\ref{sec:Meier}), neither of which is satisfied by energy eigenstates. 
The other difficulty is that they do not consider 
the case of \emph{closed} quantum systems where the time evolution is generated by a \emph{time-dependent} Hamiltonian; 
rather, they consider the case of isolated quantum systems where the time evolution is generated by a time-independent Hamiltonian alone. 
This distinction is thermodynamically crucial because general adiabatic thermodynamic operations are often time dependent.

In this paper,
we address the above issues by resolving the  following fundamental questions:
(Q1) How should quantum states representing thermal equilibrium be defined, so that they are consistent with thermodynamics, including state transitions?
(Q2) How should adiabatic thermodynamic operations be expressed quantum mechanically?
(Q3) Can we resolve the inconsistency between 
thermodynamics and the fact that any pure quantum states 
can break the passivity even when they represent thermal equilibrium states?
(Q4) Is there a formula for entropy that can be determined for each quantum state and agrees with thermodynamic entropy for any states representing thermal equilibrium, including pure states? 
(Q5) Can we derive the law of increasing entropy 
from quantum mechanics
for any quantum state representing thermal equilibrium?

Our answers to these questions are summarized as follows:
(A1) We introduce a wide class of quantum state representing thermal equilibrium,
called \emph{infinite-observable macroscopic thermal equilibrium} (iMATE).
They are indistinguishable from the Gibbs state 
by the expectation value of \textit{any} additive observable,
 which correspond to additive quantities in thermodynamics.
(A2) We represent adiabatic thermodynamic operations
by unitary time evolutions generated by time-dependent Hamiltonians that are 
restricted to additive observables. 
We call such operations \emph{macroscopic operations}.
(A3) We prove a macroscopic version of passivity for any quantum states representing iMATE, including pure quantum states, under 
any macroscopic operation.
(A4) We introduce a quantum-mechanical entropy density by using a characterization of quantum states in terms of additive observables. 
We call it \emph{quantum macroscopic entropy density}.
It agrees with thermodynamic entropy density for any states representing iMATE, including pure quantum states.
(A5) We study the time evolution in which the initial quantum state is 
prepared in iMATE, and a macroscopic operation is applied to it, 
followed by a relaxation process induced by 
a time-independent final Hamiltonian. Then we prove that the quantum macroscopic entropy density does 
not decrease by any macroscopic operation.
Overall, (A3) and (A5) demonstrate that the second law of thermodynamics emerges in closed quantum many-body systems, given macroscopically reasonable notions of equilibrium states (A1), operations (A2), and entropy (A4).

This paper is structured as follows.
In Sec.~\ref{sec:SummaryResults}, 
we summarize 
our main results 
for the simplified case where states and macroscopic operations are macroscopically uniform.
In Sec.~\ref{sec:Setup}, we introduce additive observables. In Sec.~\ref{sec:MacroEquiv}, we define macroscopic states and their equivalence relation. Using this, we introduce iMATE in Sec.~\ref{sec:MacroEquilibrium}. In Sec.~\ref{sec:MacroOp_MacroEquiv}, we formulate unitary time evolution corresponding to adiabatic thermodynamic operations and provide a critical result stating that the equivalence relation between macroscopic states is preserved by such a time evolution.
Using this, we show the macroscopic version of passivity in Sec.~\ref{sec:MacroPassivity}.
In Sec.~\ref{sec:reduced_rho}, we introduce reduced density matrices that characterize the values of additive observables in macroscopic states, which are basic elements of our entropy formula given in Sec.~\ref{sec:Entropy_s^mac}. 
In Sec.~\ref{sec:Entropy_s^mac_eq}, we prove that this entropy formula gives thermodynamic entropy for any quantum state representing iMATE. In Sec.~\ref{sec:EntropyIncrease}, we show the law of increasing entropy applicable to both our entropy and thermodynamic entropy. In Sec.~\ref{sec:ComparisonStudies}, we compare our results with the existing studies, i.e., MITE, MATE, and the results by Hokkyo and Ueda \cite{Hokkyo2025} and those by Meier et al.~\cite{Meier2025}. In Sec.~\ref{sec:Counterexamples_Timescale}, we provide pivotal counterexamples to the second law when operation time diverges in the thermodynamic limit, which indicates the optimality of our results. In Sec.~\ref{sec:Discussion}, we discuss possible candidates for extending our results while avoiding such counterexamples.
Sec.~\ref{sec:summary} summarizes the paper.

\section{\label{sec:SummaryResults}Summary of results for macroscopically uniform case}

The results of this paper reduce to simpler ones in the case where states are macroscopically uniform and operations are induced by 
external fields (or control parameters) that are 
uniform throughout the system, 
which we refer to as the \emph{macroscopically uniform case}.
In this section, we provide an informal but physically-intuitive summary of our main results in that case. The mathematically rigorous formulation of results for the case where macroscopically nonuniform states and operations are considered will be given in the subsequent sections.
We take $k_{\text{B}}=\hbar=1$ throughout this paper.

\subsection{Macroscopic equivalence and infinite-observable macroscopic thermal equilibrium}

The MATE \cite{Neumann1929,Goldstein2015,Tasaki2016,Goldstein2017,Mori2018}
is one of the ordinary notions of quantum states representing thermal 
equilibrium. However, 
as will be pointed out in Sec.~\ref{sec:OrdinaryMATE}, 
it is not fully consistent with thermodynamics. 
We first resolve this problem by introducing its 
improved 
version that is expected to be consistent with thermodynamics.

We consider 
quantum spin 
systems on a $d$-dimensional hypercubic lattice $\Lambda_{L}$ with $N=L^d$ sites.
Following the philosophy of thermodynamics, we focus on ``additive observables''.
For simplicity, this section focuses only on additive observables on the whole system, while the following sections focus also on those on macroscopic subsystems, which are $\Theta(N)$-sized~\cite{LandauSymbols} subsets of $\Lambda_L$.
Then, additive observables are simply defined as follows.
Let $\aLocal$ be an observable whose support is contained in the hypercube $\Cell$ of side length $\ell$ centered at site $\boldsymbol{0}\in\Lambda_L$. Such an observable and its translations are called \emph{$\ell$-local observables}
(see Fig.~\ref{fig:Summary_Additive} (a)).
We call  $A_L$  an \emph{additive observable} composed of $\ell$-local observables
if 
\begin{align}
    A_L=\sum_{\boldsymbol{r}\in\Lambda_{L}}\aLocal_{\boldsymbol{r}},
    \label{eq:Summary_Additive}
\end{align}
where $\aLocal_{\boldsymbol{r}}$ is the translation of 
$\aLocal$ by $\boldsymbol{r}$
(see Fig.~\ref{fig:Summary_Additive} (b)).
\begin{figure*}
    \centering
    \includegraphics[width=0.9\linewidth]{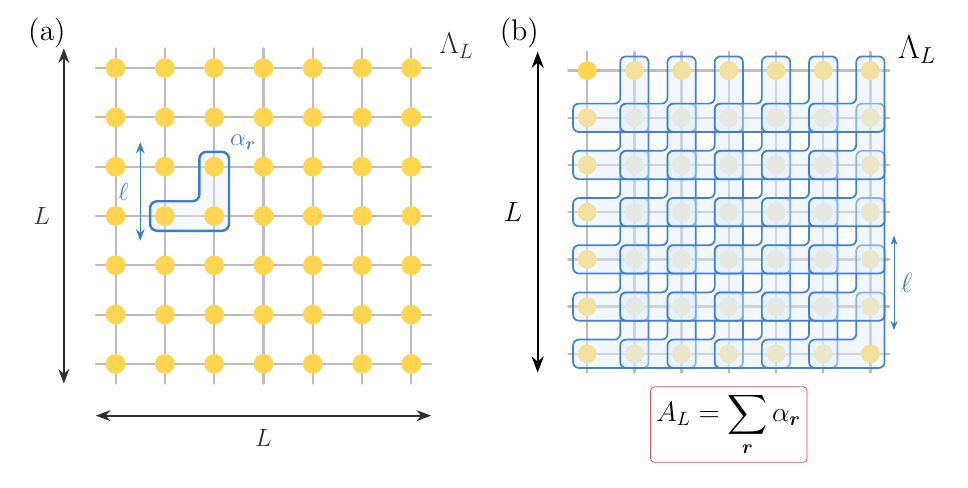}
    \caption{Schematic illustrations of (a) an $\ell$-local observable $\aLocal_{\bm{r}}$ and 
    (b) an additive observable composed of $\ell$-local observables~$A_L$, Eq.~\eqref{eq:Summary_Additive},
    for the macroscopically uniform case. The case with $\ell=2$ and $L=7$ is shown.}
    \label{fig:Summary_Additive}
\end{figure*}

In spin-$1/2$ chains, for instance,
$\aLocal=\sigma_{0}^x$ and $\aLocal=\sigma^{z}_{-1}\sigma^{x}_{0}\sigma^{z}_{2}$ are examples of $1$- and $3$-local observables, and hence $A_L=\sum_{j=1}^{L}\sigma_{j}^x$ and $A_L=\sum_{j=1}^{L}\sigma^{z}_{j-1}\sigma^{x}_{j}\sigma^{z}_{j+2}$ are additive observables composed of $1$- and $3$-local observables, respectively.
Note that as $\ell$ is increased, the number of independent $\ell$-local observables increases, and so does the number of independent additive observables $A_L$.

By utilizing \emph{all} additive observables, 
we introduce a new way of characterizing a quantum state macroscopically:
we say that two quantum states $\rho_{L}$ and $\sigma_{L}$ are \emph{macroscopically equivalent}, denoted by 
\begin{align}
    \rho_L\maceq\sigma_L,
    \label{eq:Summary_MacroEquiv}
\end{align}
if, for any positive integer $\ell\in \mathbb{N}$, it holds that
\begin{align}
    \label{eq:Summary_CondMacroEquiv}\lim_{L\to\infty}\mathrm{Tr}\Bigl[\rho_{L}\frac{A_L}{N}\Bigr]=\lim_{L\to\infty}\mathrm{Tr}\Bigl[\sigma_{L}\frac{A_L}{N}\Bigr]
\end{align}
for every additive observable $A_L$ composed of $\ell$-local observables.
This means that we focus on all additive observables composed of $\ell$-local observables where $\ell$ is taken arbitrarily large but independent of $L$.
See Fig.~\ref{fig:MacroEquiv} for a schematic illustration.
For a rigorous and generalized definition, see Definition~\ref{definition:MacroEquiv}. 

\begin{figure}
    \centering
    \includegraphics[width=\linewidth]{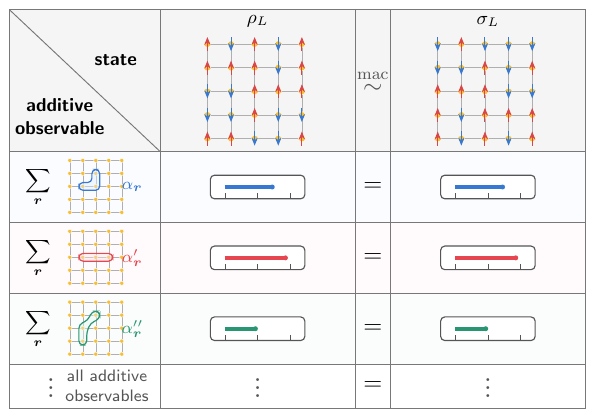}
    \caption{Schematic illustration of macroscopic equivalence, Eq.~\eqref{eq:Summary_MacroEquiv} and Definition~\ref{definition:MacroEquiv}.
    For all additive observables given in Eq.~\eqref{eq:Summary_Additive}, two states $\rho_L$ and $\sigma_L$ give the same expectation values of their densities in the thermodynamic limit, as described in Eq.~\eqref{eq:Summary_CondMacroEquiv}.} 
    \label{fig:MacroEquiv}
\end{figure}

Using this notion of macroscopic equivalence,
we define an \emph{infinite-observable macroscopic thermal equilibrium} (iMATE), which will be shown to be consistent with the second law of thermodynamics, as follows.
Suppose that the system is described by a Hamiltonian $H_L$ that is an additive observable defined above.
We say that a state $\rho_{L}$ represents an iMATE
if $\rho_{L}$ is macroscopically equivalent to the canonical Gibbs state 
\begin{align}
    \rho_{L}^{\mathrm{can}}(\beta|H_L)=e^{-\beta H_L}/Z
    \label{eq:DEF_rho^can}
\end{align}
with some finite inverse temperature $-\infty<\beta<\infty$, where $Z$ is the partition function.
Table~\ref{tbl:Comparison_Equilibrium} shows a quick comparison between 
our notion of thermal equilibrium and those in previous literature.
\begin{table*}
\caption[Equilibrium]{\label{tbl:Comparison_Equilibrium}Comparison of the notions of thermal equilibrium when the system is described by a translation-invariant Hamiltonian. Lower rows include 
more states than the upper ones, indicating that our iMATE includes 
more states than the statistical ensembles and MITE. 
}
\begin{ruledtabular}
\begin{tabular}{llll}
    Notion\quad &Characterization\quad &Examples\quad &Spatial structure\\
    \colrule
    Statistical ensembles\quad & -\quad &Gibbs states\quad & Translation invariant~\eqref{eq:rho_TranslationInv}\\
    MITE (Def.~\ref{definition:MITE})\quad & Arbitrary local observables\quad &Typicality~\cite{Goldstein2006,Popescu2006,Sugita2006}, TPQ states~\cite{Sugiura2012,Sugiura2013} & Locally uniform (Def.~\ref{definition:LocallyUniform})\\
    Our iMATE (Def.~\ref{definition:MacroEqState})\quad & Arbitrary additive observables\quad &METTS (Example~\ref{example:TypicalMETTS_represents_iMATE}), 
    Examples~\ref{example:NonMITE_MacroEqState}, \ref{example:NonMITE_MacroEqState_FiniteTemp}\quad & Macroscopically uniform (Def.~\ref{definition:MacroUniform})\\
    Ordinary MATE\quad & Finite macro-observables\quad & Example~\ref{example:ProblemFiniteObs}\quad & -
\end{tabular}
\end{ruledtabular}
\end{table*}

We will generalize the macroscopic equivalence and iMATE in the following sections 
by focusing also on additive observables on macroscopic subsystems. Such a more general definition reduces to the above one if the states $\rho_L$ and $\sigma_L$ are macroscopically uniform, 
meaning that the expectation values of the density of each additive observable on any macroscopic subsystems coincide.
For the precise definition (Definition~\ref{definition:MacroUniform})
and physical motivations behind considering all additive observables,
see Sec.~\ref{sec:MacroEquilibrium_iMATE}.

\subsection{Macroscopic operations}
As another important ingredient to discuss the second law, 
we next explain what type of quantum-mechanical operations should be considered.
The second law 
concerns transitions of thermal equilibrium states caused by thermodynamic operations, which are macroscopic in a certain 
sense. Although the quantum-mechanical representations of such thermodynamic operations are nontrivial in general, 
we adopt the following quantum-mechanical operations.

Let $m$ be some positive integer independent of $L$.
Suppose that we can control $m$ external fields (or control parameters) 
in a time-dependent manner as $f^{1}(t)$, $f^{2}(t)$, ..., $f^{m}(t)\in\mathbb{R}$,
which couple to $m$ additive observables $B_L^{1}$, $B_L^{2}$, ..., $B_L^{m}$
of the system, respectively. Hence, the
system obeys a unitary time evolution $U_{L}(t,0)$
generated by the time-dependent Hamiltonian
\begin{align}
    H_L(t)=H_L-\sum_{\mu=1}^{m}f^{\mu}(t)B_L^{\mu}.
    \label{eq:SummaryResults_H(t)}
\end{align}
We call such a unitary operation $U_{L}(t,0)$ a \emph{macroscopic operation} of operation time $t$.
A schematic illustration is given in Fig.~\ref{fig:Summary_MacroOp}.
\begin{figure}
    \centering
    \includegraphics[width=\linewidth]{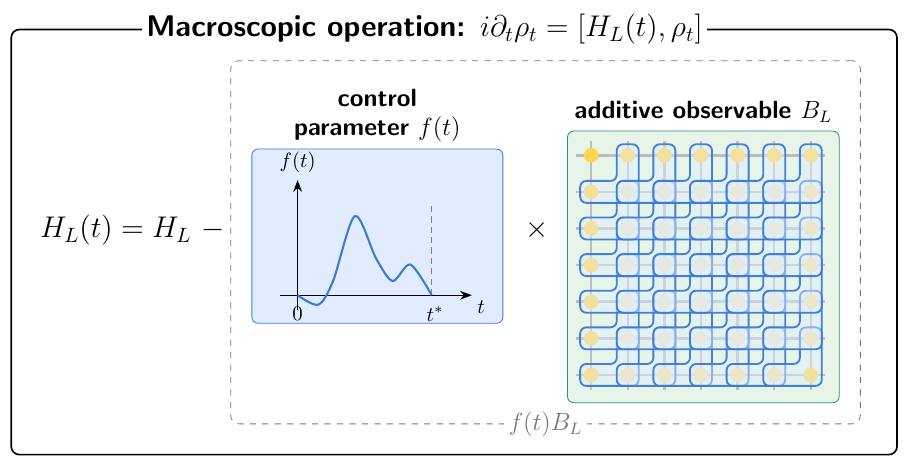}
    \caption{Schematic illustration of macroscopic operations, Eq.~\eqref{eq:SummaryResults_H(t)},
    for the macroscopically uniform case. We here show the case with $m=1$, where the Hamiltonian is driven by a time-dependent control field $f(t)$ coupled to an additive observable $B_L$.}
    \label{fig:Summary_MacroOp}
\end{figure}
See also Table~\ref{tbl:Comparison_Operation}, where our operation is compared with the operations treated in the previous literature.

\begin{table*}
\caption[Operation]{\label{tbl:Comparison_Operation}Consistency to Planck's principle for each combination of the class of initial states and of operations. Lower rows include broader states and narrower operations than the upper ones. 
Importantly, a straightforward extension of our setup given in the fourth row to broader operations (resp. states) while keeping the class of states (resp. operations) as in the third row (resp. fifth row)
causes inconsistency to Planck's principle. }
\begin{ruledtabular}
\begin{tabular}{llll}
    Reference & Initial states\quad &Operations\quad &Consistency to Planck's principle\\
    \colrule
    \cite{Lenard1978} & Passive state \quad &Arbitrary unitary operation\quad &Passivity~\cite{Lenard1978}\\
    \cite{Hokkyo2025} & MITE (Def.~\ref{definition:MITE}) \quad &Local control (Def.~\ref{definition:LocalControl}) of timescale~$O(L^0)$\quad & Thermodynamic passivity~\cite{Hokkyo2025}\\
    - & iMATE (Def.~\ref{definition:MacroEqState}) \quad &Local control (Def.~\ref{definition:LocalControl}) of timescale~$O(L^0)$\quad & Inconsistent as in Example~\ref{example:Problem_CombiningiMATEandLocalControl}\\
    This paper & iMATE (Def.~\ref{definition:MacroEqState}) \quad &Macroscopic operation (Def.~\ref{definition:macroscopic-operation}) of timescale~$O(L^0)$\quad & Macroscopic passivity~\eqref{eq:macroscopic-passivity}\\
    - & Ordinary MATE  \quad &Macroscopic operation (Def.~\ref{definition:macroscopic-operation}) of timescale~$O(L^0)$\quad & Inconsistent as in Example~\ref{example:ProblemFiniteObs}
\end{tabular}
\end{ruledtabular}
\end{table*}

Note that, in the following sections we also consider macroscopic operations whose $B_L^\mu$'s are additive observables on macroscopic subsystems, instead of the entire system. In such a case, macroscopic uniformity of states is not preserved under the operations, and therefore, we need to focus on additive observables on macroscopic subsystems to discuss macroscopic equivalence after the operation.

\subsection{Main results}

The four main results of this paper are summarized as follows in the macroscopically uniform case.

The first main result is about the consistency between macroscopic equivalence and macroscopic operations.
Using the Lieb-Robinson bound~\cite{Gong2020}, we show that macroscopic equivalence is preserved by any macroscopic operation within an operation time that is independent of $L$:
\begin{mainresult}[Macroscopic equivalence after a macroscopic operation, Theorem~\ref{theorem:JAIVTMXN_3} for the macroscopically uniform case]
Suppose that states $\rho_{L}$ and $\sigma_{L}$ are macroscopically equivalent, $\rho_L\maceq \sigma_L$.
Take these states as initial states and consider a time evolution by a macroscopic operation $U_{L}(t,0)$,
$\rho_{L}(t)=U_{L}(t,0)\rho_{L}U_{L}^{\dagger}(t,0)$ and $\sigma_{L}(t)=U_{L}(t,0)\sigma_{L}U_{L}^{\dagger}(t,0)$.
Then, for any operation time $t^*$ independent of $L$, $t^*=O(L^0)$, the states $\rho_{L}(t^*)$ and $\sigma_{L}(t^*)$ remain macroscopically equivalent,
\begin{align}
    \rho_L(t^*)\maceq \sigma_{L}(t^*).
\end{align}
\end{mainresult}
\noindent
Note that this result holds regardless of whether $\rho_L$ and $\sigma_L$ represent iMATE.
In other words, this result also applies to nonequilibrium states.

The second main result corresponds to a quantum-mechanical version 
of Planck's principle~\cite{Lieb1999,Tasaki2016SecondLaw,Hokkyo2025}, which is one of various expressions of the second law 
in closed systems. In particular, thanks to the restriction of operations to macroscopic operations, we succeed in obtaining 
the following result, 
which applies to any quantum states representing iMATE, including pure quantum states:
\begin{mainresult}[Macroscopic passivity of iMATE, Corollary~\ref{corollary:macroscopic-passivity} for the macroscopically uniform case]
Let $H_L$ and $\rho_L$ be the initial Hamiltonian and the initial state at $t=0$, respectively. 
Suppose that $\rho_{L}$ represents iMATE described by $H_L$ at a nonnegative inverse temperature $\beta\ge 0$.
After any macroscopic operation $U_{L}(t^*,0)$ with an operation time $t^*>0$ independent of $L$, 
the expectation value of $H_L$ does not decrease extensively,
\begin{align}
    \lim_{L\to\infty}\mathrm{Tr}\Bigl[\rho_{L}(t^*)\frac{H_L}{N}\Bigr]\ge \lim_{L\to\infty}\mathrm{Tr}\Bigl[\rho_{L}\frac{H_L}{N}\Bigr],
    \label{eq:Summary_MacroPassive}
\end{align}
where  $\rho_{L}(t^*)=U_{L}(t^*,0)\rho_{L}U_{L}^{\dagger}(t^*,0)$.
\end{mainresult}
\noindent
Since the system does not exchange heat with the environment, the energy change of the system $\mathrm{Tr}\bigl[\bigl(\rho_{L}(t^*)-\rho_{L}\bigr)H\bigr]$ equals 
the work done by the external world.
In other words, an extensive work cannot be extracted from iMATE by any macroscopic operation of operation time $t^*=O(L^0)$.
In the special case where
$\rho_L$ represents iMATE at $\beta=0$, inequality~\eqref{eq:Summary_MacroPassive} becomes equality.
[By contrast, at $\beta>0$, an extensive work can be done on iMATE, i.e., there exist macroscopic operations with $t^*=O(L^0)$ such that the strict inequality holds in \eqref{eq:Summary_MacroPassive}.]
We will also show that the constraint $t^*=O(L^0)$ on the operation time is optimal by providing a counterexample to Eq.~\eqref{eq:Summary_MacroPassive} at any longer timescale. See Proposition~\ref{proposition:breakdown_macroscopic-passivity} for details.

To explain the third and fourth results, we introduce an appropriate spatial average of the reduced density matrix.
Let $\Cell[\bm{r}]:=\Cell+\boldsymbol{r}$ be the translation of 
a hypercube
$\Cell$ by $\boldsymbol{r}$.
Consider the reduced density matrix $\mathrm{Tr}_{\Lambda_L\setminus\Cell[\bm{r}]}[\rho_L]$ on $\Cell[\bm{r}]$,
and move it to $\Cell$ 
(so that it is defined on the same region $\Cell$ for any $\bm{r}$),
as schematically shown in Fig.~\ref{fig:Summary_Entropy} (a).
We call this reduced density matrix an \emph{$\ell$-local density matrix} around $\boldsymbol{r}$ and denote as $\rho_{L|\ell}^{\bm{r}}$.
\begin{figure*}
    \centering
    \includegraphics[width=0.9\linewidth]{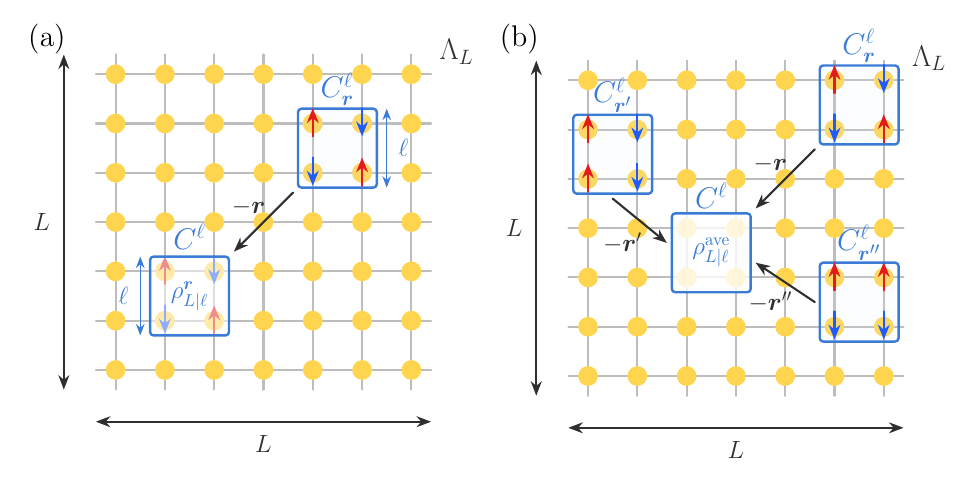}
    \caption{Schematic illustrations of (a) $\ell$-local density matrix $\rho_{L|\ell}^{\bm{r}}$ and 
    (b) its spatial average $\rho_{L|\ell}^{\mathrm{ave}}$, Eq.~\eqref{eq:Summary_rho^ave},
     for the macroscopically uniform case.
     (a) We first define $C_{\bm{r}}^\ell$, which is a region obtained from the translation of $C^\ell$ by $\bm{r}$. The density matrix $\rho_{L|\ell}^{\bm{r}}$ is obtained by the reduced density matrix of $\rho$ onto $C_{\bm r}^\ell$, followed by the translation by $-\bm{r}$. (b) The state $\rho_{L|\ell}^\mathrm{ave}$ is obtained by averaging $\rho_{L|\ell}^{\bm{r}}$  for all $\bm{r}$.
     }
    \label{fig:Summary_Entropy}
\end{figure*}
Even when $\rho_L$ is macroscopically uniform, it can be nonuniform microscopically, and thus $\rho_{L|\ell}^{\bm{r}}$ can differ for different $\bm{r}$.
To extract a thermodynamic property, we
consider its spatial average given by
(see Fig.~\ref{fig:Summary_Entropy} (b))
\begin{align}
    \rho_{L|\ell}^{\mathrm{ave}}:=
    \frac{1}{N}\sum_{\boldsymbol{r}\in\Lambda_{L}}\rho_{L|\ell}^{\boldsymbol{r}}.
    \label{eq:Summary_rho^ave}
\end{align}
This quantity characterizes the expectation values of additive observables composed of $\ell$-local observables in $\rho_L$ because it satisfies $\mathrm{Tr}[\rho_{L|\ell}^{\mathrm{ave}} \aLocal]=\mathrm{Tr}[\rho_LA_L]/N$.
Consequently, its $L\to\infty$ limit becomes a quantity common to all macroscopically equivalent states:
if $\rho_{L}$ and $\sigma_{L}$ are macroscopically equivalent, $\rho_L\maceq\sigma_L$, then 
$\lim_{L\to\infty}\rho_{L|\ell}^{\mathrm{ave}}=\lim_{L\to\infty}\sigma_{L|\ell}^{\mathrm{ave}}$ holds
for all $\ell\in\mathbb{N}$ (see Proposition~\ref{proposition:Equiv_rho^red} for details).

The third main result states that 
a suitably defined form of entropy, which we refer to as \emph{quantum macroscopic entropy density}
$s^{\mathrm{mac}}_{\ell}$ (Definition \ref{definition:MacroEntropy}),
agrees with thermodynamic entropy density $s^{\mathrm{TD}}$
if the state represents iMATE. For macroscopically uniform states, 
$s^{\mathrm{mac}}_{\ell}$ is simply 
given by the von Neumann entropy density of $\rho_{L|\ell}^{\mathrm{ave}}$, and hence the result can be stated as follows:
\begin{mainresult}[\label{mainresult:EntropyFormula}Entropy formula for iMATE, Theorem~\ref{theorem:s^red=s^TD} for the macroscopically uniform case]
Suppose that $\rho_{L}$ is in iMATE of the system described by $H$ at inverse temperature $\beta$.
Let $\rho_{L|\ell}^{\mathrm{ave}}$ be its spatical average, 
defined by \eqref{eq:Summary_rho^ave},
and $S_{\mathrm{vN}}[\rho_{L|\ell}^{\mathrm{ave}}]$ 
be the von Neumann entropy of $\rho_{L|\ell}^{\mathrm{ave}}$.
Then, the density of $S_{\mathrm{vN}}[\rho_{L|\ell}^{\mathrm{ave}}]$
approaches the thermodynamic entropy density
$s^{\mathrm{TD}}(\beta|H)$
$(:=\lim_{L\to\infty}S_{\mathrm{vN}}[\rho_{L}^{\mathrm{can}}(\beta|H)]/N)$
in the following iterated limit,
\begin{align}
    \lim_{\ell\to\infty}\lim_{L\to\infty}\frac{S_{\mathrm{vN}}[\rho_{L|\ell}^{\mathrm{ave}}]}{\ell^d}
    =s^{\mathrm{TD}}(\beta|H).
    \label{eq:Summary_EntropyFormula}
\end{align}
\end{mainresult}
\smallskip

\noindent
Importantly, this result states that Eq.~\eqref{eq:Summary_EntropyFormula} holds for any state representing iMATE.
This should be contrasted with 
 the naive von Neumann entropy density~$S_{\mathrm{vN}}[\rho_L]/N$ and the entanglement entropy density, 
which, in general, disagree with the thermodynamic entropy density
(see Secs.~\ref{sec:Entropy_s^mac_localUniform} and \ref{sec:numerical_entropy} for details). 
Furthermore, this result implies that thermodynamic entropy density can be obtained 
by measurements of additive observables, 
as will be explained in 
Sec.~\ref{sec:method.measure.sTD}. These measurements only require preparations of quantum states representing iMATE at a \emph{single} inverse temperature $\beta$, in contrast to entropy measurements in thermodynamics, which require preparations of multiple thermal equilibrium states at different inverse temperatures.

The fourth result is to derive the law of increasing entropy, another renowned formulation of the second law of thermodynamics under closed systems following adiabatic operations.
To explain this result, we introduce a process consisting of a macroscopic operation and a relaxation process, which is regarded as corresponding to the adiabatic operation in thermodynamics.
Suppose that, at time $t=0$, the system is described by the initial Hamiltonian $H_{0}$ and that the state of the system $\rho_{L}$ represents an iMATE at inverse temperature $\beta_{0}$.
Until time $t^*$, we perform an arbitrary macroscopic operation $U_{L}(t,0)$, where $t^*$ is assumed to be independent of $L$.
At time $t\ge t^*$, the Hamiltonian is taken to be a time-independent one $H_{1}$, and the system obeys the ``relaxation process'' defined by $U_{L}(t,t^*)=e^{-iH_{1}(t-t^*)}$.
That is,
the time dependence of the Hamiltonian is given by 
\begin{align}
    H(t)=\begin{cases}
        H_{0} & (t= 0)\\
        H_{0}-\sum_{\mu=1}^{m}f^{\mu}(t)B^{\mu} & (0<t<t^*)\\
        H_{1} & (t\ge t^*)
    \end{cases},
    \label{eq:Summary_EntropyIncrease_H(t)}
\end{align}
where $H(t)$ for $0<t<t^*$ takes the form of Eq.~\eqref{eq:SummaryResults_H(t)}.
We interpret the long-time average of the state at $t\ge t^*$ as the final state of the relaxation process, as is often done in the studies of thermalization,
\begin{align}
    &\overline{\rho_{L}(t>t^*)}
    \notag\\
    &\ :=\lim_{\mathcal{T}\to\infty}\frac{1}{\mathcal{T}}\int_{t^*}^{t^*+\mathcal{T}}dt \ e^{-i(t-t^*)H_{1}}\rho_{L}(t^*)e^{i(t-t^*)H_{1}},
    \label{eq:Summary_FinalState}
\end{align}
where $\rho_{L}(t^*)=U_{L}(t^*,0)\rho_{L}U_{L}^{\dagger}(t^*,0)$.

We show that the entropy density given in the LHS of Eq.~\eqref{eq:Summary_EntropyFormula} does not decrease in such a process:
\begin{mainresult}[\label{mainresult:IncreasingEntropy}Law of increasing entropy, Theorem~\ref{theorem:EntropyIncrease_WithoutThermalization} for the macroscopically uniform case]
For the process mentioned above, 
let $\sigma_L:=\overline{\rho_{L}(t>t^*)}$ be the final state evaluated by the long time average~\eqref{eq:Summary_FinalState}.
Under some mild assumption about the existence of a unique thermodynamic limit (for details, see Theorem~\ref{theorem:EntropyIncrease_WithoutThermalization}), 
the quantum macroscopic entropy density does not decrease through this process,
\begin{align}
    \lim_{\ell\to\infty}\lim_{L\to\infty}\frac{S_{\mathrm{vN}}[\rho_{L|\ell}^{\mathrm{ave}}]}{\ell^d}
    \le 
    \lim_{\ell\to\infty}\lim_{L\to\infty}\frac{S_{\mathrm{vN}}[\sigma_{L|\ell}^{\mathrm{ave}}]}{\ell^d}
    \label{eq:Summary_EntropyIncrease}
\end{align}
as long as $t^*=O(L^0)$.
\end{mainresult}
\noindent
This result should be distinguished from other results on the law of increasing entropy in \emph{isolated} quantum  systems~\cite{Neumann1929,Santos2011,Ikeda2015Second,Meier2025}, i.e., systems evolving via a time-independent Hamiltonian. 
This is because our result includes not only relaxation processes but also \textit{arbitrary} macroscopic operations described by time-dependent Hamiltonians, consistent with the setting of the second law of thermodynamics under adiabatic operations.
We can also show that the constraint $t^*=O(L^0)$ on the operation time is optimal by providing a counterexample to Eq.~\eqref{eq:Summary_EntropyIncrease} at any longer timescale. See Proposition~\ref{proposition:breakdown_entropy} for details.

We note that the assumption mentioned in Main result~\ref{mainresult:IncreasingEntropy} is about the relaxation process governed by $H_1$, and it is weaker than the assumption that thermalization occurs, 
which means that
the final state~\eqref{eq:Summary_FinalState} represents an iMATE.
If we further assume this assumption of thermalization,
Main results~\ref{mainresult:EntropyFormula} and \ref{mainresult:IncreasingEntropy} lead to 
\begin{align}
    s^{\mathrm{TD}}(\beta_{0}|H_{0})\le s^{\mathrm{TD}}(\beta_{1}|H_{1}),
\end{align}
where $\beta_1$ is the inverse temperature of iMATE represented by the final state.
This inequality indeed corresponds to the law of increasing entropy in thermodynamics,
upon which thermodynamics can be rigorously constructed~\cite{Lieb1999}.

\section{Setup} \label{sec:Setup}

In this section, we introduce our setup in a formal manner.
Note that mathematically rigorous treatment of our formulations is far from trivial because multiple length scales appear such as the sizes of local regions, subsystems, and the whole system.
We need to take the thermodynamic limit with appropriately controlling the relations among these length scales.
To clarify this point, we consider 
sequences of systems, subsystems, states and observables
rather than their members, in the following sections. This enables us to suitably define important concepts, such as iMATE.
For the reader's convenience,
we summarize the notations of our symbols in Appendix~\ref{sec:notation}.

\subsection{\label{sec:Setup_Lattice}Systems and subsystems}

We consider a quantum spin system on a $d$-dimensional hypercubic lattice 
$\Lambda_{L}=(\mathbb{Z}_{L})^d=\{-\lceil L/2\rceil+1,-\lceil L/2\rceil+2,...,\lfloor L/2\rfloor\}^d$ 
with side length $L$ and the number of sites $N=L^d$~\footnote{Generalization to other lattices is straightforward.}.
For simplicity, we impose the periodic boundary conditions, while our results presented in the following sections are not affected by the choice of boundary conditions.
Suppose that the total Hilbert space is given by a tensor product of local Hilbert spaces $\mathcal{H}_{\bm{r}}$ that are isomorphic to $\mathbb{C}^D$ and 
let $T_{\bm{r}}$ be the translation operator defined by 
\begin{align}
  T_{\bm{r}}\bigotimes_{\bm{r}^\prime\in\Lambda_{L}}\ket{j_{\bm{r}^\prime}}_{\bm{r}^\prime}
    =
    \bigotimes_{\bm{r}^\prime\in\Lambda_{L}}\ket{j_{\bm{r}^\prime}}_{\bm{r}^\prime+\bm{r}},
\end{align}
where $\{\ket{j}_{\bm{r}}\}_{j=0}^{D-1}$ 
is an orthonormal basis of $\mathcal{H}_{\bm{r}}$.
We write the Manhattan distance between two sites $\bm{r}$ and $\bm{r}^\prime$ by $|\bm{r}-\bm{r}^{\prime}|$.

In Sec.~\ref{sec:SummaryResults}, we have considered only additive observables on the whole system; however, to detect macroscopic nonuniformity of nonequilibrium states, such as that of convecting fluid in a box, and to distinguish these states from thermal equilibrium states, we should also focus on additive observables on subsystems. These subsystems should be small relative to the whole system so that 
finer spatial structures can be detected, yet sufficiently large to ensure that thermodynamics applies. To this end, we introduce \emph{macroscopic subsystems} as follows.

\begin{figure*}
    \centering
    \includegraphics[width=\linewidth]{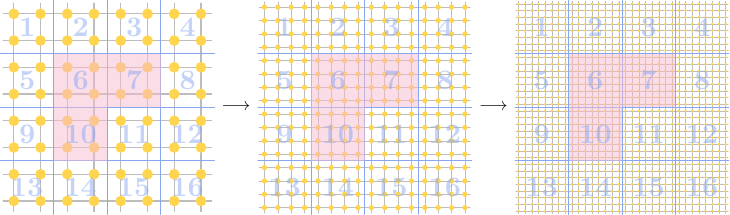}
    \caption{Illustration of how each primitive macroscopic subsystem grows as $L$ is increased ($L=8$, $16$, $32$ from left to right)
    in the case of $d=2$ and $K=4$. We can define a macroscopic subsystem $\mathcal{S}_L$ as a union of certain primitive macroscopic subsystems. For example, we can consider a proper macroscopic subsystem corresponding to the shaded region as a union of $\mathcal{S}_L^{(6)},\mathcal{S}_L^{(7)},$ and $\mathcal{S}_L^{(10)}$.} 
    \label{fig:MacroSubsystem}
\end{figure*}

We take a large integer $K$ independent of $L$~\footnote{If one would like the theory to be independent of the choice of $K$, one can take the limit $K\to\infty$ after the thermodynamic limit $L\to\infty$.
See Sec.~\ref{sec:Kdependence} for details.
}, and partition the total lattice $\Lambda_L$ into $K^d$ subsets $\mathcal{S}^{(1)}_{L}, \dots, \mathcal{S}^{(K^d)}_{L}$ that have almost equal side lengths $\lfloor L/K\rfloor$ or $\lceil L/K\rceil$.
We call each of these subsets a \emph{primitive macroscopic subsystem}. 
The union of primitive macroscopic subsystems is called a \emph{macroscopic subsystem}.
In Fig.~\ref{fig:MacroSubsystem}, we illustrate 
primitive macroscopic subsystems with $K=4$ and
a macroscopic subsystem $\mathcal{S}_L$ composed of three primitive macroscopic subsystems, $\mathcal{S}^{(6)}_{L}, \mathcal{S}^{(7)}_{L}$ and $\mathcal{S}^{(10)}_{L}$, for $L=8,16,32$.
When taking the thermodynamic limit $L\to\infty$, we consider such proper sequences of macroscopic subsystems composed of the same primitive macroscopic subsystems: 
\begin{definition}[\label{definition:ProperMacroSubsystem}Proper sequences of macroscopic subsystems]
 Primitive macroscopic subsystems $\mathcal{S}^{(k)}_{L}$ are labeled by $k\in\{1,...,K^d\}$ for each $L$ such that one in the same position has the same label~$k$, as illustrated in Fig.~\ref{fig:MacroSubsystem}.
A sequence of macroscopic subsystems $(\mathcal{S}_L)_{L\in\mathbb{N}}$ is said to be proper if each member $\mathcal{S}_L$ is composed of the same primitive macroscopic subsystems,
\begin{align}
    \mathcal{S}_L=\bigcup_{j=1}^{m}\mathcal{S}^{(k_j)}_{L}\quad\text{ for every }L\in\mathbb{N},
\end{align}
where $\{k_1,...,k_m\}\subset\{1,...,K^d\}$ is an arbitrary subset composed of $m$ labels with $m=1,...,K^d$.
\end{definition}
\noindent
Note that since $K$ is kept independent of $L$, the number of sites in $\mathcal{S}_{L}$ for any 
proper sequence~$(\mathcal{S}_L)_{L\in\mathbb{N}}$ grows in the same order as the total volume, $|\mathcal{S}_{L}|=\Theta(L^d)$.
Note also that, by taking $m=K^d$, we can see that $(\Lambda_L)_{L\in\mathbb{N}}$ is a proper sequence of macroscopic subsystem.

Since we will only consider proper sequences when discussing macroscopic subsystems, 
we often omit the term ``proper sequences'' hereafter.

\subsection{Additive observables}\label{sec:additive_obs}

In the spirit of thermodynamics, we are interested in observables that are spatially averaged in a macroscopic subsystem. 
Such observables correspond to additive observables defined below. 

To introduce them, we first define local observables.
\begin{definition}[\label{definition:LocalObs}Local observables]
Let $\Cell$ be a $d$-dimensional hypercube with side length $\ell$, $\{-\lceil \ell/2\rceil+1,-\lceil \ell/2\rceil+2,...,\lfloor \ell/2\rfloor\}^d$, composed of $\ell^d$ sites, and $\Cell[\bm{r}]:=\Cell+\bm{r}$ be its translation by $\bm{r}$. 
An observable supported on $\Cell[\bm{r}]$ for some $\bm{r}\in\Lambda_{L}$ is called an $\ell$-local observable.
\end{definition}
\noindent 
In this paper, we are interested in $\ell$-local observables whose $\ell$ is independent of $L$, $\ell=O(L^0)$, because we would like to regard an $\ell$-local observable $\aLocal$ as defined independently of the thermodynamic limit.

Using this, we introduce an additive observable on a macroscopic subsystem as follows
(see Fig.~\ref{fig:Additive_Subsystem}):

\begin{definition}[\label{definition:Additive}Additive observables]
Let $\mathcal{S}_{L}$ be a macroscopic subsystem and $\aLocal$ be an $\ell$-local observable.
The following observable is said to be the additive observable on $\mathcal{S}_{L}$ obtained from $\aLocal$,
\begin{align}
    A_{\mathcal{S}_{L}}(\aLocal)
    := \sumst{\bm{r}}{\supp(\aLocal_{\bm{r}}) \subset \mathcal{S}_{L}} \aLocal_{\bm{r}},
    \label{eq:Additive_S}
\end{align}
where $\aLocal_{\bm{r}}=T_{\bm{r}} \aLocal T_{\bm{r}}^{\dagger}$ is the translation of $\aLocal$ by $\bm{r}$ and
$\supp(\aLocal_{\bm{r}})$ means the support of the operator $\aLocal_{\bm{r}}$. 
An observable $A_L$ is said to be an additive observable composed of $\ell$-local observables if there are an $\ell$-local observable $\aLocal$ and a macroscopic subsystem $\mathcal{S}_{L}$ such that $A_L=A_{\mathcal{S}_{L}}(\aLocal)$.
\end{definition}
\noindent 
Physically, the density of the additive observable on $\mathcal{S}_{L}$ obtained from $\aLocal$, $A_{\mathcal{S}_{L}}(\aLocal)/|\mathcal{S}_{L}|$, corresponds to the spatial average of $\aLocal$ on the macroscopic subsystem $\mathcal{S}_{L}$.

\begin{figure}
    \centering
    \includegraphics[width=0.8\linewidth]{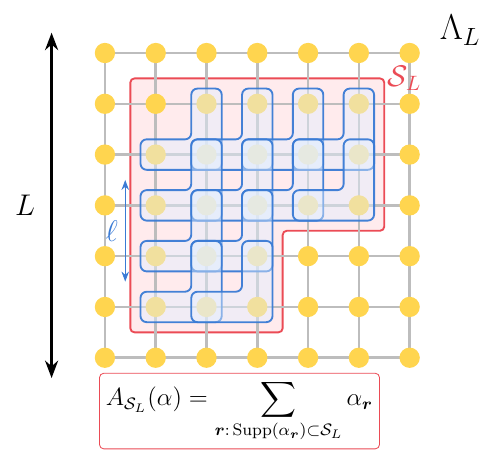}
    \caption{Schematic illustration of additive observables
on $\mathcal{S}_{L}$ obtained from $\aLocal$,
Definition~\ref{definition:Additive}.
The special case where $\mathcal{S}_{L} = \Lambda_{L}$
was given in Fig.~\ref{fig:Summary_Additive} (b).}
    \label{fig:Additive_Subsystem}
\end{figure}

When taking the thermodynamic limit $L\to\infty$, we consider proper sequences of additive observables, each of which is characterized by an $\ell$-local observable $\aLocal$ and a proper sequence of macroscopic subsystems as follows: 
\begin{definition}[\label{definition:Proper_Additive}Proper sequences of additive observables]
A sequence of observables $(A_L)_{L\in\mathbb{N}}$ is said to be a proper sequence of additive observables if there are a positive integer $\ell\in\mathbb{N}$, an $\ell$-local observable $\aLocal$, and a proper sequence of macroscopic subsystems $(\mathcal{S}_{L})_{L\in\mathbb{N}}$ such that each element $A_L$ is given by the additive observable on $\mathcal{S}_L$ obtained from $\aLocal$,
\begin{align}
    A_L=A_{\mathcal{S}_{L}}(\aLocal)\quad\text{ for every }L\in\mathbb{N}.
    \label{eq:Proper_Additive}
\end{align}
The set of all proper sequences of additive observables is denoted by $\SetAdditive$.
\end{definition}
\noindent
Note that 
$\ell$ and $\aLocal$ are independent of $L$ in this definition.
In particular, for any proper sequence $(A_{L})_{L\in\mathbb{N}}$, the operator norm~$\|\bullet\|$ of $A_{L}$ satisfies $\| A_{L}\|\le |\mathcal{S}_{L}|\times\|\aLocal\|=O(L^d)$, which indicates the extensive property of such a sequence.
Since we will only consider proper sequences when discussing additive observables, 
we often omit the term ``proper sequences'' hereafter.

Let us clarify the meaning of $\mathcal{A}$, i.e., the set of \emph{all} (proper sequences of) additive observables.
Here, ``\emph{all}'' refers to the following arbitrarinesses of their choices: 
\begin{enumerate}[label={(\roman*)},ref={\roman*}]
    \item\label{enum:choice_ell} $\ell$ is an arbitrary positive integer, $\ell\in\mathbb{N}$;
    \item\label{enum:choice_a} $\aLocal$ is an arbitrary $\ell$-local observable on $\Cell$~\footnote{Because a translation of $\aLocal$ causes no change in $A_{\mathcal{S}_L}(\aLocal)$, we can restrict the support of $\aLocal$ to $\Cell$ without loss of generality.};
    \item\label{enum:choice_S} $(\mathcal{S}_{L})_{L\in\mathbb{N}}$ is any
    proper sequence of macroscopic subsystems.
\end{enumerate}
Note that, since arbitrariness~(\ref{enum:choice_ell}) states that $\ell$ can be taken arbitrarily large, $\mathcal{A}$ is an infinite set.
However,
for each $\ell$, the number of independent 
 additive observables $(A_L)_{L\in\mathbb{N}}$
is bounded from above by 
the total number of choices~(\ref{enum:choice_a}) and (\ref{enum:choice_S}), $D^{2\ell^d}\times 2^{K^d}$ 
(recall that $D$ is the dimension of the local Hilbert space), which increases with $\ell$ but is independent of $L$.
This means that $\mathcal{A}$ is much narrower than the set of sequences of arbitrary observables, as the number of independent observables on $\Lambda_L$ grows as $D^{2L^d}$ with respect to $L$.

Now, we give an easy example showing the difference between $\mathcal{A}$ and the set of sequences of arbitrary observables.
Let $U_{L}$ be the translation-invariant unitary operator in spin-$1/2$ systems defined by $U_{L}=\prod_{\bm{r}\in\Lambda_{L}}\sigma^{z}_{\bm{r}}$, where $\sigma_{\bm{r}}^{x,y,z}$ is the Pauli operator on site $\bm{r}$. 
To represent it in the form of Eq.~\eqref{eq:Additive_S}, we need to take $\ell=L$ and $\mathrm{Supp}(\aLocal)=\Lambda_{L}$. 
This means that the sequence of 
these 
unitaries $(U_{L})_{L\in\mathbb{N}}$ cannot be written in the form of Eq.~\eqref{eq:Proper_Additive}
with $\ell=O(L^0)$ 
and is not a proper sequence.

The additive observables defined above are important because they satisfy the additivity property, Eq.~\eqref{eq:Additive_S_approximation} below, and hence, they are a natural candidate for quantum-mechanical representations of additive quantities in thermodynamics~\footnote{
Here, `additive quantities in thermodynamics' mean additive \emph{mechanical} quantities in thermodynamics such as the total magnetization. 
That is, we exclude \emph{genuine} thermodynamic quantities
such as the free energy.}.
Suppose that 
a macroscopic subsystem $\mathcal{S}_L$ is composed of $m$ $(=O(L^0))$ primitive macroscopic subsystems $\{\mathcal{S}^{(k_j)}_L\}_{j=1}^{m}$ ($k_{j}\in\{1,2,...,K^d\}$).
An additive observable on $\mathcal{S}_{L}$ can be well approximated by the sum of additive observables on these primitive macroscopic subsystems because it satisfies
\begin{align}
    \Bigl\| A_{\mathcal{S}_{L}}(\aLocal) - \sum_{j=1}^{m} A_{\mathcal{S}^{(k_j)}_{L}}(\aLocal) \Bigr\| = O(L^{d-1}).
    \label{eq:Additive_S_approximation}
\end{align}
Here, the $O(L^{d-1})$ term in the right-hand side (RHS) corresponds to the contributions from the boundaries of 
$\mathcal{S}^{(k_{j})}_{L}$'s.
This property~\eqref{eq:Additive_S_approximation} corresponds to the additivity property in thermodynamics because it indicates that the expectation values of densities, $A_{\mathcal{S}_{L}}(\aLocal)/|\mathcal{S}_{L}|$ and $\sum_{j=1}^{m} A_{\mathcal{S}^{(k_j)}_{L}}(\aLocal)/|\mathcal{S}_{L}|$, coincide in the thermodynamic limit.
Therefore, we regard that additive observables defined by Eq.~\eqref{eq:Additive_S} correspond to additive quantities in thermodynamics.

\section{\label{sec:MacroEquiv}Macroscopic equivalence}

In this section, we introduce the notion of `macroscopic equivalence'~\footnote{A similar concept has been investigated in Ref.~\cite{Mori2016MacroEquiv} and references therein.}.
It is applicable to both equilibrium and nonequilibrium states.
We will utilize it to define macroscopic thermal equilibrium in the next section.

Let $\rho_{L}$ be a density matrix of the system of size $L$.
To take the thermodynamic limit, it should be considered as a member of an appropriate sequence $(\rho_{L})_{L\in\mathbb{N}}$ of density matrices.
Since we are interested in additive observables, we characterize appropriate sequences of density matrices in terms of additive observables as follows:
\begin{definition}[Macroscopic state]\label{definition:MacroState}
A sequence $(\rho_{L})_{L\in\mathbb{N}}$ of density matrices
$\rho_{L}$ defined on $\Lambda_{L}$
is said to represent a macroscopic state if, 
for any proper sequence of additive observables $(A_L)_{L\in\mathbb{N}}$, the expectation value of $A_{L}$ divided by the total number of sites converges in the limit $L\to\infty$, i.e.,
\begin{align}
    \lim_{L\to\infty} \mathrm{Tr} \bigl[ \rho_L  A_{L} / N \bigr]\text{ exists for every }(A_{L})_{L\in\mathbb{N}}\in \SetAdditive.
    \label{eq:MacroState}
\end{align}
Here, $\SetAdditive$ is the set of all proper sequences of additive observables defined in Definition~\ref{definition:Proper_Additive}.
\end{definition}
\noindent

In the special case where 
$\rho_{L}$ is translation invariant, 
$\mathrm{Tr} \bigl[ \rho_L  A_L\bigr] / N=\mathrm{Tr}[\rho_{L} \aLocal] \, |\mathcal{S}_{L}|/N$ holds for any $A_L=A_{\mathcal{S}_{L}}(\aLocal)$,
and hence, 
the above condition~\eqref{eq:MacroState} 
reduces to 
the existence of 
the $L \to \infty$ limit of $\mathrm{Tr}[\rho_{L} \aLocal]$ 
for every local observable~$\aLocal$.

Furthermore, we define an equivalence relation between sequences of density matrices as follows:
\begin{definition}[\label{definition:MacroEquiv}Macroscopic equivalence]
Let $(\rho_{L})_{L\in\mathbb{N}}$ and $(\sigma_{L})_{L\in\mathbb{N}}$ 
represent macroscopic states. We say that they are macroscopically equivalent and denote 
\begin{align}
    (\rho_{L})_{L\in\mathbb{N}}\maceq (\sigma_{L})_{L\in\mathbb{N}},
\label{eq:maceq}\end{align}
if
\begin{align}
    \lim_{L\to\infty} \mathrm{Tr} \Bigl[ \rho_L \frac{A_{L}}{N} \Bigr]
    = \lim_{L\to\infty} \mathrm{Tr} \Bigl[ \sigma_L \frac{A_{L}}{N} \Bigr]
    \notag\\
    \text{for all }(A_{L})_{L\in\mathbb{N}}\in \SetAdditive.
    \label{eq:MacroscopicEquiv}
\end{align}
For macroscopically equivalent states
$(\rho_{L})_{L\in\mathbb{N}}$ and $(\sigma_{L})_{L\in\mathbb{N}}$, we also say that they represent the same macroscopic state.
\end{definition}
\noindent
The relation \eqref{eq:maceq} indeed satisfies the definition of an equivalence relation.
See Fig.~\ref{fig:MacroEquiv} for a schematic illustration.

As a simple example, let $\rho_{L}$ be a pure product state of the form $(\ket{0}\bra{0})^{\otimes \Lambda_{L}}$ in a spin-$1/2$ system, and $\sigma_{L}$ be its spin flip at site $\bm{0}$,
$\sigma_{L}=\sigma_{\bm{0}}^{x}\rho_{L}\sigma_{\bm{0}}^{x}$.
Then both sequences of density matrices $(\rho_{L})_{L\in\mathbb{N}}$ and $(\sigma_{L})_{L\in\mathbb{N}}$ represent macroscopic states and they are macroscopically equivalent.
Hence, they represent the same macroscopic state.

In addition, it is usually considered
in thermodynamics that
any additive quantity should take a macroscopically definite value, and hence a single-shot measurement suffices to obtain its equilibrium value. 
In other words, variances of additive quantities should be macroscopically negligible.
When a representation of a macroscopic state satisfies this property for all additive observables, we call it a representation of a \emph{normal} macroscopic state:
\begin{definition}[Normal macroscopic state]\label{definition:MacroState_Normal}
Let $(\rho_{L})_{L\in\mathbb{N}}$ represent a macroscopic state. It is said to represent a normal macroscopic state if 
the variances of 
additive observables in all proper sequences of additive observables
are macroscopically negligible, that is,
\begin{align}
    \mathrm{Tr} \bigl[ \rho_L (A_{L} /N)^2 \bigr]
    - \bigl( \mathrm{Tr} \bigl[ \rho_L A_{L} /N \bigr] \bigr)^2
    = o(L^0)
    \notag\\
    \text{for all }(A_{L})_{L\in\mathbb{N}}\in \SetAdditive.
    \label{eq:MacroscopicallyNormal}
\end{align}
\end{definition}

\section{\label{sec:MacroEquilibrium}Infinite-observable macroscopic thermal equilibrium}

In this section, we introduce the notion of thermal equilibrium used in this paper, which is characterized by all additive observables.

\subsection{\label{sec:Setup_Hamiltonian}Hamiltonian}

Thermodynamics 
treats \emph{simple systems} as the fundamental building blocks of a macroscopic system~\cite{Callen1985,Lieb1999}.
In equilibrium statistical mechanics, such a simple system corresponds to a many-body system described by a translation-invariant Hamiltonian.
Therefore, when investigating equilibrium states (e.g., before or after thermodynamic operations), we consider a system described by a Hamiltonian~$H$ that is an additive observable on the whole system~$\Lambda_{L}$, unless otherwise stated.
That is, the Hamiltonian is translation invariant and can be written as
\begin{align}
    H_L =A_{\Lambda_{L}}(h)= \sum_{\bm{r}\in\Lambda_{L}} h_{\bm{r}}
    \label{eq:Hamiltonian}
\end{align}
where $h$ is some $\ell$-local observable, with $\ell$ independent of $L$, and $h_{\bm{r}}=T_{\bm{r}} h T_{\bm{r}}^{\dagger}$.

For example, if we 
choose $d = 1$ and $h = J \sigma_{0}^z \sigma_{1}^z - g \sigma_{0}^x$
then Eq.~\eqref{eq:Hamiltonian} defines the following additive observable:
\begin{align}
    H_L= J \sum_{j=1}^L \sigma_{j}^z \sigma_{j+1}^z - g \sum_{j=1}^L \sigma_{j}^x.
\label{eq:H_transverse_Ising}
\end{align}
This is nothing but the Hamiltonian of a one-dimensional transverse-field Ising model with the periodic boundary condition. 
As is seen in this example, Eq.~\eqref{eq:Hamiltonian} defines general translation-invariant systems with short-range interactions.

\subsection{Gibbs states}\label{sec:Gibbs_states}

In statistical mechanics, thermal equilibrium states are often described by statistical mechanical ensembles, such as the microcanonical and the canonical Gibbs states.
In this paper, we employ 
the canonical Gibbs state \eqref{eq:DEF_rho^can}
as a reference state for a general representation of thermal equilibrium states.
This implies that we consider thermal equilibrium states specified 
(labeled) only by inverse temperature $\beta$ and $N$.
It will be straightforward to extend our theory 
to more general cases where additional variables are necessary to specify equilibrium states.
Furthermore, we exclude systems that exhibit phase coexistence.
Although the extension to such systems does not seem difficult, 
it is out of the scope of this paper.

In order to utilize the canonical Gibbs state as a reference state, 
we assume that its sequence
represents a macroscopic state in the sense defined in Sec.~\ref{sec:MacroEquiv}:
\begin{assumpGibbs}\label{assumption:GibbsState}
The sequence of canonical Gibbs states $\bigl(\rho_{L}^{\mathrm{can}}(\beta|H_L)\bigr)_{L\in\mathbb{N}}$ represents a  macroscopic state, that is, it satisfies Eq.~\eqref{eq:MacroState}.
\end{assumpGibbs}
\noindent
We consider that this assumption will be reasonable aside from the first-order phase transition point for the following reason.
For any additive observable $A$, its expectation value in $\rho_{L}^{\mathrm{can}}(\beta|H_L)$ is given by the derivative of a generalized free energy,
\begin{align}
    \mathrm{Tr}\bigl[\rho_{L}^{\mathrm{can}}(\beta|H_L) A\bigr]=\Bigl(\frac{1}{\beta}\frac{\partial}{\partial h}\log \mathrm{Tr}\bigl[e^{-\beta(H_L-hA)}\bigr]\Bigr)\Bigm|_{h=0}.
\end{align}
It was shown that the thermodynamic limit of this free energy exists~\cite{Ruelle1999,Tasaki2018}.
Furthermore, it is usually considered that this derivative and the thermodynamic limit can be interchanged if $\beta$ is not at the first-order phase transition point \footnote{
This was proven in the Supplemental Material of \cite{Fujikura2016} under the same assumption as the 
quantum central limit theorem (QCLT) of the characteristic function.}.
This indicates that, for such a value of $\beta$, the thermodynamic limit of the expectation value of $A$ should coincide with the derivative of the thermodynamic limit of the above free energy, meaning that Assumption~\ref{assumption:GibbsState} is satisfied.

In some parts of this paper, where the fluctuations of additive observables are crucial for our analysis, we will also use the following assumption:
\begin{assumpGibbs}\label{assumption:GibbsState_Normal}
The sequence of canonical Gibbs states $\bigl(\rho_{L}^{\mathrm{can}}(\beta|H_L)\bigr)_{L\in\mathbb{N}}$ represents a normal macroscopic state, that is, it satisfies both Eqs.~\eqref{eq:MacroState} and \eqref{eq:MacroscopicallyNormal}.
\end{assumpGibbs}
\noindent
The second assumption~\eqref{eq:MacroscopicallyNormal} is 
ensured by
the assumption of the decay of correlations between distant local observables.
Therefore, it will be reasonable except when $\beta$ is at the first-order phase transition point.
Note that this assumption will hold even at the second-order phase transition point because an arbitrarily slow decay of correlations is sufficient for this assumption.

\subsection{\label{sec:MacroEquilibrium_iMATE}Thermal equilibrium in this paper}

In recent 
studies of the foundation of statistical mechanics, two notions of thermal equilibrium are well known: ``macroscopic thermal equilibrium (MATE)'' and ``microscopic thermal equilibrium (MITE)''~\cite{Goldstein2015,Goldstein2017,Mori2018}.
The notion of MATE (resp. MITE) is defined focusing on macroscopic (resp. microscopic) observables.
The notion of MATE applies not only to quantum systems but also to classical systems; MATE has a classical counterpart~\cite{Goldstein2017,Mori2018}, and we can discuss whether each point in the classical phase space satisfies it, whereas MITE does not work well in such situations.
For example, consider a classical gas in thermal equilibrium.
Most of the snapshots are obviously in the same equilibrium state.
However, since such a snapshot is not uniform microscopically
(though uniform macroscopically), 
each of them is not in MITE but in MATE.
The situation is the same in quantum systems;
equilibrium states can be non-uniform microscopically, which is 
not in MITE but in MATE.
In addition, MATE will have a stronger connection to thermodynamics than MITE, as macroscopic observables used in MATE will correspond to macroscopic physical quantities in thermodynamics.
Considering its generality and consistency with thermodynamics, 
we adopt the approach based on MATE to discuss a notion of thermal equilibrium.

In the conventional formulation of MATE, whether a state represents MATE is determined solely from the probability distributions of a \emph{finite} number of given macroscopic observables~\cite{Goldstein2015,Tasaki2016,Goldstein2017}, as will be explained in Sec.~\ref{sec:OrdinaryMATE}.
However, for a certain choice of a finite number of macroscopic observables, we can provide a simple counterexample to the second law of thermodynamics, as will be explained in Example~\ref{example:ProblemFiniteObs}.
Furthermore, we expect that even if we focused on more generic but a \emph{finite} number of observables~\cite{sugimoto2023bounds}, we would encounter almost the same problem.

To resolve such 
inconsistency with the second law, we here
characterize 
thermal equilibrium states by focusing on all 
(thus, as discussed in Sec.~\ref{sec:additive_obs}, an \emph{infinite} number of) 
additive observables. We call such a notion of thermal equilibrium 
\emph{infinite-observable macroscopic thermal equilibrium} (iMATE):
\begin{definition}[iMATE] \label{definition:MacroEqState}
Let $(\rho_{L})_{L\in\mathbb{N}}$ represent a macroscopic state (cf. Definition~\ref{definition:MacroState}).
It is said to represent iMATE of a system described by a translation-invariant Hamiltonian $H_L$ 
if there exists a finite inverse temperature $-\infty<\beta<+\infty$
such that
$(\rho_{L})_{L\in\mathbb{N}}$ is macroscopically equivalent to $\bigl(\rho_{L}^{\mathrm{can}}(\beta|H_L)\bigr)_{L\in\mathbb{N}}$,
\begin{align}
    \bigl(\rho_{L}\bigr)_{L\in\mathbb{N}}
    \maceq
    \bigl(\rho_{L}^{\mathrm{can}}(\beta|H_L)\bigr)_{L\in\mathbb{N}},
    \label{eq:MacroEqState}
\end{align}
where $\maceq$ is defined by Eq.~\eqref{eq:MacroscopicEquiv}.
\end{definition}
\noindent
In other words, we say that $(\rho_{L})_{L\in\mathbb{N}}$ represents an iMATE 
if the expectation value of the density of any additive observable coincides with its equilibrium value in the thermodynamic limit.

Furthermore, in thermodynamics, it is usually considered that additive physical quantities do not have macroscopically large fluctuations.
When we impose this behavior in addition to Definition~\ref{definition:MacroEqState}, we call such an equilibrium state a \emph{normal iMATE} as follows:
\begin{definition}[Normal iMATE] \label{definition:MacroEqState_Normal}
A sequence of density matrices $(\rho_{L})_{L\in\mathbb{N}}$ is said to represent normal iMATE 
if it represents a normal macroscopic state [Definition~\ref{definition:MacroState_Normal}], and at the same time iMATE [Definition~\ref{definition:MacroEqState}].
\end{definition}

We discuss physical motivations behind 
defining thermal equilibrium by iMATE.
As an example, let us consider a classical dilute gas.
Its thermal equilibrium states are specified only by 
the internal energy $U$, volume $V$, and particle number $N$.
This means that, within the \emph{space of thermal equilibrium states}, each thermal equilibrium state is labeled by $(U,V,N)$.
Suppose that an experimenter prepares a state of the gas in such a way that 
$(U,V,N)$ take some specified values.
In order to judge whether the gas is in thermal equilibrium or not,
he/she should also measure other additive observables \footnote{Any thermodynamic quantities can be measured by measuring additive quantities, 
since they are functions of additive observables.} to explore the 
\emph{space of all macroscopic states}, 
since there are various nonequilibrium states 
that have the same values of $(U,V,N)$ as the equilibrium state.
For instance, 
the total kinetic energy 
will take the unique value 
satisfying equipartition in the equilibrium state, 
whereas it can take other values in nonequilibrium states.
More generally, if, in some state, some additive observable 
deviates from the equilibrium value, we can measure this observable, at least in principle, and can detect a difference from the thermal equilibrium state. Reflecting this arbitrariness of additive observables measured in real experiments, 
we have defined iMATE, focusing on all additive observables.

Some readers might still be mystified by the fact that iMATE 
thus defined focuses 
on an infinite number of additive observables.
We point out that iMATE requires focusing on 
\emph{fewer} observables than MITE, 
a commonly used notion of thermal equilibrium in the study of thermalization in isolated quantum systems~\cite{Goldstein2015,Goldstein2017,Mori2018}.
For instance, in spin-$1/2$ chains, the number of additive observables composed of $\ell$-local observables that are used to define iMATE scales as $4^\ell$,
which is independent of $L$.
In comparison, the number of $\ell$-local observables 
that are used to define MITE
scales as $4^\ell \times L$ (see Sec.~\ref{sec:MITE} for details).
Hence, the number of observables considered in iMATE is much smaller than that in MITE.
As a result, 
iMATE includes all states that represent MITE
(see Sec.~\ref{sec:MITE}).
It also includes equilibrium states that are {\em not}
included in MITE,
as explained in Table~\ref{tbl:Comparison_Equilibrium} and in Sec.~\ref{sec:Example_iMATE}.

\subsection{\label{sec:Example_iMATE}Examples of iMATE}

Let us provide examples of a sequence of states that represents iMATE, but that is not in MITE.

The first example is the 
minimally entangled typical thermal states (METTS)~\cite{White2009,Stoudenmire2010},
which provides a natural illustration.
Let $\ket{i} \ (i=0,1, \dots, D^N-1)$ 
denote a basis vector of a 
computational basis, which can be written as
\begin{align}
    \bigotimes_{\bm{r}\in\Lambda_{L}} \ket{j_{\bm{r}}}_{\bm{r}},
\end{align}
where $\{\ket{j_{\bm{r}}}_{\bm{r}}\}_{j_{\bm{r}}=0}^{D-1}$ is an orthonormal basis of the local Hilbert space $\mathcal{H}_{\bm{r}}$.
The METTS are a family of states obtained by imaginary-time evolution of these computational basis vectors,
\begin{align}
    \ket{\phi(i)} = \frac{1}{\sqrt{P(i)}} e^{- \frac{1}{2} \beta H_L} \ket{i},
    \label{eq:METTS}
\end{align}
where $P(i) = \braket{i| e^{- \beta H_L} |i}$. 
The canonical Gibbs state can be decomposed as a classical mixture of METTS:
\begin{align}
    \rho_{L}^{\mathrm{can}}(\beta|H_L)
    = \sum_{i} \frac{P(i)}{Z} \ket{\phi(i)}\bra{\phi(i)}.
    \label{eq:METTS_canonical}
\end{align}
Therefore, the average of the expectation value of any observable in $\ket{\phi(i)}$'s for all $i$ according to the probability $P(i)/Z$
agrees with that in the canonical Gibbs state.

Since METTS are obtained by evolving a product state containing no spatial entanglement for an imaginary time of order $O(N^0)$, they 
retain only a small amount of entanglement~\cite{Kusuki2024} and exhibit short-range correlations.
As a consequence, the self-averaging property is expected to hold, and for any additive observable, the expectation value of its density in a single typical METTS converges in the thermodynamic limit to that in the canonical Gibbs state, without ensemble averaging.
Therefore, a sequence of typical METTS sampled according to the probability distribution $\frac{P(i)}{Z}$ is macroscopically equivalent to the canonical Gibbs state and hence represents iMATE.
In fact, we can rigorously show the following
(for the precise statement and the rigorous proof, see Propositions~\ref{proposition:METTS_represent_iMATE}--\ref{proposition:METTS_Justification_Decay_pL} in Appendix~\ref{sec:Proof_METTS}).

\begin{example}[\label{example:TypicalMETTS_represents_iMATE}Typical sequence of METTS represents iMATE, informal version of Proposition~\ref{proposition:METTS_Justification_Decay_pL}]
We choose $i_L\in\{0,1,...,D^N-1\}$ according to the probability distribution $P(i_L)/Z$ and define $\rho_L=\ket{\phi(i_L)}\bra{\phi(i_L)}$ for each $L\in\mathbb{N}$.
Under a reasonable 
assumption for the canonical Gibbs state $\rho_{L}^{\mathrm{can}}(\beta|H_L)$ (See Assumption~\ref{assumption:GibbsState_Concentration}.), we can show that the sequence of states $(\rho_L)_{L\in\mathbb{N}}$ satisfies $(\rho_L)_{L\in\mathbb{N}}\maceq\bigl(\rho_{L}^{\mathrm{can}}(\beta|H_L)\bigr)_{L\in\mathbb{N}}$ with probability $1$.

\end{example}
\noindent
On the other hand, METTS are in general not locally uniform (see Def.~\ref{definition:LocallyUniform} for a precise definition) and thus excluded from MITE.

The following second and third examples of iMATE contain
no random variables (in contrast to METTS, which contains $i_L$ as a random variable).
Since the finite temperature case is technically cumbersome, we first 
give an example at infinite temperature, $\beta = 0$:
\begin{example}[\label{example:NonMITE_MacroEqState}Infinite temperature iMATE without random variables, informal version of Proposition~\ref{proposition:NonMITE_MacroEqState}]
For simplicity, we consider a one-dimensional spin-$1/2$ system ($d=1,D=2$) with $L=n2^n$ for some $n\in\mathbb{N}$.
We construct a state $\ket{\psi_L}$ by taking the tensor product of the computational basis vectors on consecutive $n$ sites, $\ket{i}$ $(i=0,1,...,2^n-1)$. This structure of $\ket{\psi_L}$ guarantees that the spatial average of the expectation value of any $\ell$-local observable in $\ket{\psi_L}$ is proportional to the trace of that observable because 
 $\ell\ll n$.
Then, the sequence of states $(\rho_L)_{L\in\mathbb{N}}$ given by $\rho_L=\ket{\psi_L}\bra{\psi_L}$ is macroscopically equivalent to the sequence of maximally mixed state, $(\rho_L)_{L\in\mathbb{N}}\maceq (I_{\Lambda_L}/2^N)_{L\in\mathbb{N}}$, which means it represents iMATE at inverse temperature $\beta=0$ for any Hamiltonian.
\end{example}
\noindent
Note that the expectation values of Pauli-$z$ operators $\sigma_{j}^{z}$ in $\ket{\psi_L}$ depend on the site, and hence, this state does not represent MITE described by a translation-invariant Hamiltonian.

Finally, we provide an example of iMATE at finite temperature that contains no random variables:
\begin{example}[\label{example:NonMITE_MacroEqState_FiniteTemp}Finite temperature iMATE without random variables, informal version of Proposition~\ref{proposition:NonMITE_MacroEqState_FiniteTemp}]
For simplicity, we consider a one-dimensional spin-$1/2$ system $(d=1, D=2)$ with $L=n2^n$ for some $n\in\mathbb{N}$. Let $H_n$ be the sum of local-Hamiltonians $h_{\bm{r}}$ restricted to consecutive $n$ sites, $\{1,2,...,n\}$. We can construct a sequence of states representing iMATE at inverse temperature $\beta$ by the following steps: 
(1)~Take the eigenstates of $H_{n}$, denoted by $\ket{E_\nu}$ 
$(\nu\in\{1,2,...,2^n\})$. 
(2)~Perform the discrete Fourier transform, $\ket{q}\propto \sum_{\nu}e^{2\pi i\nu/2^n}\ket{E_\nu}$ ($q\in\{0,1,...,2^n-1\}$). 
(3)~Evolve them for an imaginary time $\beta/2$ by $H_n$, $\ket{\varphi(q)}= e^{-\beta H_n/2}\ket{q}/\sqrt{Z_n}$ with $Z_n=\mathrm{Tr}[e^{-\beta H_n}]$. Importantly, the normalization constants take the same value (i.e., $\sqrt{Z_n}$) for all $q$. 
(4)~Take a tensor product of $2^n$ number of them, $\ket{\psi_L}=\bigotimes_{q}\ket{\varphi(q)}$. 

Then, the sequence of states $(\rho_L)_{L\in\mathbb{N}}$ with $\rho_L=\ket{\psi_L}\bra{\psi_L}$ is macroscopically equivalent to the sequence of the tensor product of $2^n$ number of the canonical Gibbs states on $n$ sites, $(e^{-\beta H_n}/Z_n)^{\otimes 2^n}$. The latter sequence can be shown to be macroscopically equivalent to $\bigl(\rho_{L}^{\mathrm{can}}(\beta|H_L)\bigr)_{L\in\mathbb{N}}$.
\end{example}
\noindent
Note that, since the state in this example
is a tensor product of $2^n$ number of $n$-qubit states, it has no correlation between local observables on two adjacent $n$-qubit blocks. By contrast, the Gibbs state at $\beta>0$ usually has nonzero correlation. These imply that the state in 
this example
is not in MITE.

\section{\label{sec:MacroOp_MacroEquiv}Macroscopic operations and equivalence after such operations}

\subsection{Macroscopic operations}

In this section, we introduce macroscopic operations, which are unitary operators regarded as corresponding to adiabatic thermodynamic operations.
After that, we provide our crucial theorem stating that macroscopic equivalence defined in the previous section is preserved after a macroscopic operation.

As discussed in the Introduction, we should not consider arbitrary operations when discussing consistency with thermodynamics; rather, we should restrict them to those feasible within the framework of thermodynamics. 
Although quantum-mechanical operations corresponding to thermodynamic operations have not been fully identified yet (to the 
authors' best knowledge), we here 
consider the following class of quantum operations, 
which seems most interesting in many physical situations.

A closed system is manipulated through some fields (or control parameters), such as a magnetic field. 
In thermodynamic systems, such fields are usually conjugate to additive quantities.
Therefore, we define macroscopic operations as unitary time evolutions induced by the external fields coupled to additive observables, including those on macroscopic subsystems~\footnote{In this paper, we only consider the case where the back-reaction effects from the system to the fields, such as electric or magnetic field, are negligible.},
as shown schematically in Fig.~\ref{fig:MacroOp}.
More precisely:

\begin{figure}
    \centering
    \includegraphics[width=\linewidth]{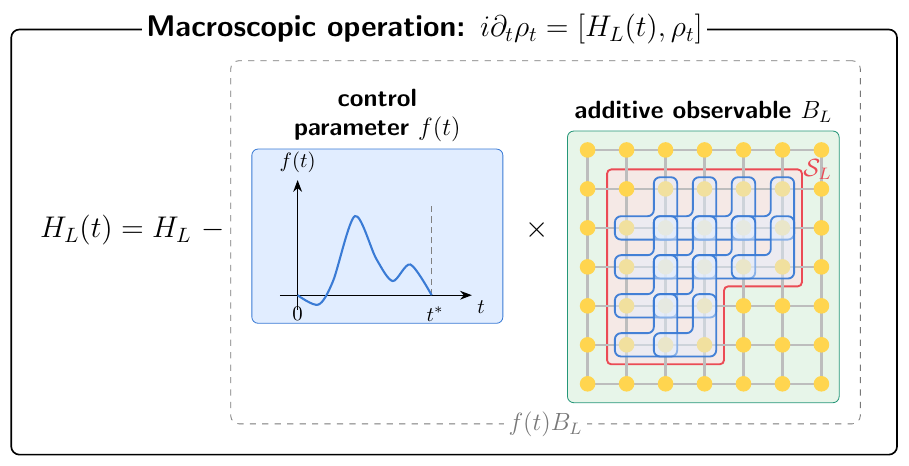}
    \caption{Schematic illustration of macroscopic operations, Definition~\ref{definition:macroscopic-operation}, where we consider the case with $m=1$.
    The special case where $\mathcal{S}_{L} = \Lambda_{L}$
was given in Fig.~\ref{fig:Summary_MacroOp}.
}
    \label{fig:MacroOp}
\end{figure}

\begin{definition}[Macroscopic operation] \label{definition:macroscopic-operation}
Let $m$ be a positive integer independent of $L$,
$(B^1_L,B^2_L,\dots,B^m_L)$ be an arbitrary set of 
proper sequences of
additive observables on some macroscopic subsystems, and $(f^1(t),f^2(t),\dots,f^m(t))$ be an arbitrary set of bounded real functions independent of $L$~\footnote{Our results also hold even when $f^\mu(t)$'s depend on $L$ but their magnitudes are bounded by a constant independent of $L$. }.
We define a `macroscopic operation' starting from $t=0$
as a sequence of unitary-operator-valued functions $(U_L(\bullet, 0))_{L\in\mathbb{N}}$, where $U_L(t, 0)$ is the time evolution generated by the Hamiltonian~\footnote{When an external field $\mathcal{F}(t)$ couples to an additive observable $B$
as $-g(\mathcal{F}(t)) B$ in $H_L(t)$, where $g(\mathcal{F})$ is a nonlinear function of $\mathcal{F}$, take $g(\mathcal{F}(t))$ as $f(t)$.}
\begin{align}
    H_L(t) = \initH - \sum_{\mu=1}^m f^\mu(t) B^\mu_L \quad (t>0).
    \label{eq:general_H}
\end{align}
Then, the macroscopic operation $U_L(t, 0)$ is determined by
\begin{align}
    i \frac{\partial}{\partial t} U_L(t, 0)
    = H_L(t) U_L(t, 0), \quad U_L(0,0) = 1,
\end{align}
and changes a quantum state $\rho_L$ as
\begin{align}
    \rho_L(0) = \rho_L
    \longmapsto \rho_L(t) = U_L(t, 0) \rho_L {U_L(t, 0)}^\dagger
    \quad (t>0).
\end{align}
\end{definition}
\noindent

As expected from thermodynamics, we can show that 
any representation of a macroscopic state remains a representation of some macroscopic state
under macroscopic operations, when the operation time is $L$-independent:
\begin{proposition}[State after a macroscopic operation also represents a macroscopic state] \label{proposition:JAIVTMXN_1}
Let $(\rho_L)_{L\in\mathbb{N}}$
represent 
a macroscopic state. Then, for any macroscopic operation $(U_L(\bullet, 0))_{L\in\mathbb{N}}$ and operation time $t^* \in \mathbb{R}_{>0}$ independent of $L$, $(\rho_L(t^*))_{L\in\mathbb{N}}$
also represents a macroscopic state, 
where $\rho_L(t^*) = U_L(t^*, 0) \rho_L {U_L(t^*, 0)}^\dagger$.
\end{proposition}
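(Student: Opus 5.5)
We have to show that for every $(A_L)_{L\in\mathbb{N}}\in\SetAdditive$ the limit $\lim_{L\to\infty}\mathrm{Tr}[\rho_L(t^*)A_L/N]$ exists. Write it in the Heisenberg picture as $\mathrm{Tr}[\rho_L\,A_L(t^*)]/N$ with $A_L(t^*)=U_L^\dagger(t^*,0)A_LU_L(t^*,0)$. Decompose $A_L=A_{\mathcal{S}_L}(\aLocal)=\sum_{\bm r}\aLocal_{\bm r}$ and evolve each term separately. The Hamiltonian $H_L(t)=\initH-\sum_\mu f^\mu(t)B^\mu_L$ is a sum of uniformly bounded, finite-range terms: $h_{\bm r}$ and $f^\mu(t)\beta^\mu_{\bm r}$ restricted to $\mathcal{S}^\mu_L$. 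The Lieb--Robinson bound~\cite{Gong2020} therefore applies with $L$-independent constants. For any $R$ it gives a strictly local operator $\aLocal_{\bm r}^{R}(t^*)$, supported on the ball of radius $R$ around $\mathrm{Supp}(\aLocal_{\bm r})$, such that $\|\aLocal_{\bm r}(t^*)-\aLocal_{\bm r}^{R}(t^*)\|\le \epsilon(R)$. Here $\epsilon(R)\to 0$ as $R\to\infty$, uniformly in $\bm r$ and $L$, and $t^*$ is fixed. Summing over $\bm r$ costs at most $\epsilon(R)$ in density.

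Next I identify the truncated sum as a finite combination of additive observables. The key observation is that $\aLocal_{\bm r}^{R}(t^*)$ depends only on the Hamiltonian terms within distance $R$ of $\bm r$. Fix $R$. Call a site $\bm r$ ``deep'' if its $(R+\ell)$-neighborhood lies inside a single primitive macroscopic subsystem $\mathcal{S}^{(k)}_L$ and does not meet $\partial\mathcal{S}_L$. For deep sites the local Hamiltonian near $\bm r$ is translation invariant: it equals $h$ plus $\sum_\mu f^\mu(t)\beta^\mu$ with the indicator $\chi_\mu(k)\in\{0,1\}$, according to whether $\mathcal{S}^{(k)}_L\subset\mathcal{S}^\mu_L$. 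Hence $\aLocal_{\bm r}^{R}(t^*)=T_{\bm r}\,\tilde\aLocal^{(k)}\,T_{\bm r}^\dagger$ for a fixed $(2R+\ell)$-local operator $\tilde\aLocal^{(k)}$ independent of $L$ and $\bm r$. I will define it concretely as the evolution of $\aLocal$ under the finite-box Hamiltonian with that configuration. If instead $\mathcal{S}^{(k)}_L\not\subset\mathcal{S}_L$, deep sites in $\mathcal{S}^{(k)}_L$ do not contribute at all.

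The non-deep sites number $O(RL^{d-1})$, and each term has norm at most $\|\aLocal\|$, so their density contribution is $O(R/L)$. Therefore
\begin{align}
\Bigl\|\frac{A_L(t^*)}{N}-\frac{1}{N}\sum_{k:\,\mathcal{S}^{(k)}_L\subset\mathcal{S}_L}A_{\mathcal{S}^{(k)}_L}(\tilde\aLocal^{(k)})\Bigr\|\le \epsilon(R)+O(R/L).
\end{align}
Here I again use Eq.~\eqref{eq:Additive_S_approximation}-type boundary estimates to replace the sum over deep sites by the full additive observable on $\mathcal{S}^{(k)}_L$. Each $(A_{\mathcal{S}^{(k)}_L}(\tilde\aLocal^{(k)}))_L$ is a proper sequence of additive observables. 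The operator $\tilde\aLocal^{(k)}$ may not be Hermitian after truncation, but it can be split into Hermitian and anti-Hermitian parts, and by linearity each part falls under Definition~\ref{definition:MacroState}. Since $(\rho_L)$ represents a macroscopic state, $\mathrm{Tr}[\rho_L\,\cdot\,]/N$ of the finite sum converges, say to $c_R$.

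Finally I conclude by a Cauchy argument. The bound above gives $\limsup_L$ and $\liminf_L$ of $\mathrm{Tr}[\rho_L(t^*)A_L]/N$ both within $\epsilon(R)$ of $c_R$. Letting $R\to\infty$ shows that limsup equals liminf, so the limit exists. The main obstacle is making the local-approximation step rigorous and uniform. It requires a Lieb--Robinson bound valid for time-dependent Hamiltonians with $L$-independent velocity, and the truncated operator must be chosen canonically, as the finite-region evolution, so that it is genuinely translation covariant inside each primitive subsystem. The standard construction handles this: compare the full evolution with the evolution generated by the terms supported in the ball, via the Duhamel formula.
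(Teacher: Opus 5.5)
Your proposal is correct and is essentially the paper's proof. It has the same three steps: a Lieb--Robinson/Duhamel truncation of each $\alpha_{\bm r}^H(t^*)$ to its finite-region evolution; translation covariance of these truncated operators deep inside each primitive subsystem, which yields an approximating additive observable with error $\epsilon(R)+O(R/L)$ uniformly in $L$; and a Cauchy argument, which your limsup--liminf step is.

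The differences are cosmetic. You treat a general $\mathcal{S}_L$ directly rather than first reducing to primitive subsystems. Your Hermiticity caveat is unnecessary, since the finite-region evolution is a unitary conjugation of a Hermitian operator. Your deep-site margin should also include the Hamiltonian's interaction range, as the paper's $\ell+\delta$ does.
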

\begin{proof}
See Appendix~\ref{sec:JAIVTMXN}.
\end{proof}
\noindent
Moreover, if the initial state
represents 
a normal macroscopic state, the normality is also preserved under macroscopic operations:
\begin{proposition}[State after a macroscopic operation also represents a normal macroscopic state] \label{proposition:JAIVTMXN_2}
Let $(\rho_L)_{L\in\mathbb{N}}$
represent 
a normal macroscopic state. Then, for any macroscopic operation $(U_L(\bullet, 0))_{L\in\mathbb{N}}$ and operation time $t^* \in \mathbb{R}_{>0}$ independent of $L$, $(\rho_L(t^*))_{L\in\mathbb{N}}$
also represents a normal macroscopic state, 
where $\rho_L(t^*) = U_L(t^*, 0) \rho_L {U_L(t^*, 0)}^\dagger$.
\end{proposition}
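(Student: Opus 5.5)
Proposition~\ref{proposition:JAIVTMXN_1} already gives that $(\rho_L(t^*))_{L\in\mathbb{N}}$ represents a macroscopic state, so only the variance condition~\eqref{eq:MacroscopicallyNormal} remains. The strategy is to move to the Heisenberg picture and approximate the evolved additive observable by additive observables of the initial state, using the Lieb--Robinson bound for the time-dependent local Hamiltonian $H_L(t)$.

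Fix $(A_L)_{L\in\mathbb{N}}\in\SetAdditive$ with $A_L=A_{\mathcal{S}_L}(\aLocal)$ and set $A_L^H:=U_L(t^*,0)^\dagger A_L U_L(t^*,0)$. The variance of $A_L/N$ in $\rho_L(t^*)$ equals the variance of $A_L^H/N$ in $\rho_L$. For each $\aLocal_{\bm{r}}$, the evolved operator $U^\dagger\aLocal_{\bm{r}}U$ is approximated by a local-restricted evolution supported on $\Cell[\bm{r}]$ enlarged by a radius $R$. The error is $\epsilon(R)$, uniform in $L$ and $\bm{r}$, and $\epsilon(R)\to0$ as $R\to\infty$ because $t^*$ and the $f^\mu$ bounds are $L$-independent. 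Since $H_L(t)$ is translation invariant only inside each primitive subsystem, this restricted operator is one and the same operator $\tilde\aLocal^{(k)}$, translated, for all $\bm{r}$ whose $R$-neighborhood lies inside the interior of a fixed combination of primitive subsystems. Sites within distance $R+\ell$ of primitive-subsystem boundaries number $O(RL^{d-1})$. Hence
\begin{align}
\Bigl\|\frac{A_L^H}{N}-\sum_{k}\frac{\tilde A^{(k)}_L}{N}\Bigr\|\le \epsilon(R)\|\aLocal\|+O(R L^{-1}),
\end{align}
where each $\tilde A^{(k)}_L=A_{\mathcal{S}^{(k)}_L}(\tilde\aLocal^{(k)})$ is an additive observable composed of $(\ell+2R)$-local observables on a proper sequence of primitive macroscopic subsystems. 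Up to $O(L^{d-1})$ corrections by~\eqref{eq:Additive_S_approximation}, the sum runs over at most $K^d$ terms. I expect this approximation, which presumably mirrors the proof of Proposition~\ref{proposition:JAIVTMXN_1} in Appendix~\ref{sec:JAIVTMXN}, to be the main technical step: the restricted evolutions must be arranged so that they are genuinely translates of a single $L$-independent local operator within each primitive subsystem.

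Given this, write $X=A_L^H/N$ and $Y=\sum_k\tilde A_L^{(k)}/N$, with $\|X-Y\|\le\delta$ and $\|X\|,\|Y\|=O(1)$. The map $\mathrm{Var}_\rho(Z)=\mathrm{Tr}[\rho Z^2]-(\mathrm{Tr}[\rho Z])^2$ is Lipschitz in operator norm on bounded sets, so $|\mathrm{Var}_{\rho_L}(X)-\mathrm{Var}_{\rho_L}(Y)|\le C\delta$. Moreover, $\mathrm{Var}(Y)\le K^d\sum_k\mathrm{Var}(\tilde A^{(k)}_L/N)$ by Cauchy--Schwarz on the covariances, and each term is $o(L^0)$ by normality of $(\rho_L)$. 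Therefore
\begin{align}
\limsup_{L\to\infty}\mathrm{Var}_{\rho_L(t^*)}(A_L/N)\le C\epsilon(R)
\end{align}
for every $R$. Letting $R\to\infty$ gives the claim. The limit order is important: first take $L\to\infty$ at fixed $R$, which kills the $O(R/L)$ boundary term, and then take $R\to\infty$.
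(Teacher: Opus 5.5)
Your proposal is correct and follows essentially the same route as the paper: you approximate the Heisenberg-evolved additive observable by an additive observable built from translates of a single restricted-evolution operator inside each primitive subsystem (the paper's $B_{t,\ell}$ and $C_{t,\ell}$ via Lieb--Robinson), bound cross-covariances between primitive subsystems by Cauchy--Schwarz, and invoke normality of $(\rho_L)$. Your $\limsup$ formulation, which takes $L\to\infty$ at fixed $R$ before $R\to\infty$ using Lipschitz continuity of the variance, is slightly cleaner than the paper's, since the paper writes the variance as an iterated limit whose inner limit is not guaranteed to exist for second moments.
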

\begin{proof}
See Appendix~\ref{sec:JAIVTMXN}.
\end{proof}

\subsection{Macroscopic equivalence after macroscopic operations}
We now discuss the macroscopic equivalence of states after a macroscopic operation.
Suppose that the same macroscopic operation is applied
to two different quantum states that are macroscopically equivalent. 
Since the resulting states obtained by applying the macroscopic operation
also represent macroscopic states 
according to Proposition~\ref{proposition:JAIVTMXN_1}, we can ask whether they are macroscopically equivalent.
To this question, we obtain the following theorem:
\begin{theorem}[Macroscopic equivalence after a macroscopic operation] \label{theorem:JAIVTMXN_3}
Let $(\rho_L)_{L\in\mathbb{N}}$ and $(\sigma_L)_{L\in\mathbb{N}}$
represent macroscopic states. Suppose that $(\rho_L)_{L\in\mathbb{N}}$ and $(\sigma_L)_{L\in\mathbb{N}}$ are macroscopically equivalent,
$(\rho_L)_{L\in\mathbb{N}} \maceq (\sigma_L)_{L\in\mathbb{N}}$.
Then, for any macroscopic operation $(U_L(\bullet, 0))_{L\in\mathbb{N}}$ and operation time $t^* \in \mathbb{R}_{>0}$ independent of $L$,
it holds that
\begin{align}
    (\rho_L(t^*))_{L\in\mathbb{N}} \maceq (\sigma_L(t^*))_{L\in\mathbb{N}},
    \label{eq:Preserve_MacroEquiv}
\end{align}
where $\rho_L(t^*) = U_L(t^*, 0) \rho_L {U_L(t^*, 0)}^\dagger$ and $\sigma_L(t^*) = U_L(t^*, 0) \sigma_L {U_L(t^*, 0)}^\dagger$.
\end{theorem}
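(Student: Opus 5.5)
We will work in the Heisenberg picture and reduce the claim to the preservation of additivity under time evolution. Fix an arbitrary proper sequence $(A_L)_{L\in\mathbb{N}}\in\SetAdditive$ with $A_L=A_{\mathcal{S}_L}(\aLocal)$ for an $\ell$-local $\aLocal$. We need
\begin{align}
\lim_{L\to\infty}\mathrm{Tr}\bigl[\rho_L U_L^\dagger(t^*,0)A_LU_L(t^*,0)\bigr]/N=\lim_{L\to\infty}\mathrm{Tr}\bigl[\sigma_L U_L^\dagger(t^*,0)A_LU_L(t^*,0)\bigr]/N .
\end{align}
Write $A_L(t^*):=U_L^\dagger(t^*,0)A_LU_L(t^*,0)=\sum_{\bm{r}}U_L^\dagger\aLocal_{\bm{r}}U_L$, where the sum runs over $\bm{r}$ with $\supp(\aLocal_{\bm{r}})\subset\mathcal{S}_L$. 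The strategy is to approximate $A_L(t^*)/N$, uniformly in $L$ and to arbitrary precision $\epsilon$, by a finite linear combination of (densities of) proper additive observables. Since $\rho_L\maceq\sigma_L$, both states give the same limit on each such observable. Proposition~\ref{proposition:JAIVTMXN_1} then ensures that the limits on the left and right exist, and an $\epsilon/3$ argument finishes the proof.

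The key steps, in order, are as follows.

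(1) \emph{Lieb--Robinson localization.} $H_L(t)$ is a sum of $\ell_0$-local terms with norms bounded uniformly in $L$ and $t\in[0,t^*]$, since the $f^\mu$ are bounded. The Lieb--Robinson bound~\cite{Gong2020} therefore gives, for any radius $R$, an operator $\aLocal_{\bm{r}}^{(R)}(t^*)$ supported on $\Cell[\bm{r}]$ with side length $\ell+2R$ such that $\|U_L^\dagger\aLocal_{\bm{r}}U_L-\aLocal_{\bm{r}}^{(R)}(t^*)\|\le \epsilon_R$. Here $\epsilon_R\to0$ as $R\to\infty$, independently of $L$ and $\bm{r}$. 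The operator is obtained, for instance, as the partial-trace projection, or as the evolution under the Hamiltonian truncated to the ball. Summing over $\bm{r}$ yields $\|A_L(t^*)-\sum_{\bm{r}}\aLocal^{(R)}_{\bm{r}}(t^*)\|\le N\epsilon_R$.

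(2) \emph{Translation structure.} The local Heisenberg operator $\aLocal^{(R)}_{\bm{r}}(t^*)$ depends on $\bm{r}$ only through which terms of $H_L(t)$ act inside its ball. Because $H_L$ is translation invariant and each $B^\mu_L$ is translation invariant inside its subsystem, there are only finitely many cases. If the ball of radius $R+\ell$ around $\bm{r}$ lies in the interior of a single primitive subsystem $\mathcal{S}^{(k)}_L$, then the relevant truncated Hamiltonian is $H_L-\sum_\mu f^\mu(t)\chi_\mu(k)B^\mu$, where $\chi_\mu(k)\in\{0,1\}$ indicates whether $\mathcal{S}^{(k)}_L\subset$ the support subsystem of $B^\mu_L$. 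Hence $\aLocal^{(R)}_{\bm{r}}(t^*)=T_{\bm{r}}\,\beta_k\,T_{\bm{r}}^\dagger$ for a fixed $(\ell+2R)$-local operator $\beta_k$ independent of $L$. The remaining $\bm{r}$, near boundaries of primitive subsystems, number $O(RL^{d-1})$ and contribute $o(N)$ in norm. Consequently,
\begin{align}
\frac{A_L(t^*)}{N}=\sum_{k}\frac{A_{\mathcal{S}_L\cap\mathcal{S}^{(k)}_L}(\beta_k)}{N}+O(\epsilon_R)+o(L^0),
\end{align}
where $k$ ranges over the primitive subsystems contained in $\mathcal{S}_L$. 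By Eq.~\eqref{eq:Additive_S_approximation}, the difference between summing over $\supp\subset\mathcal{S}^{(k)}_L$ and over the full domain is also $O(L^{d-1})$. Each $(A_{\mathcal{S}^{(k)}_L}(\beta_k))_L$ is a proper additive sequence.

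(3) \emph{Conclusion.} Apply the hypothesis $\rho_L\maceq\sigma_L$ to each of the finitely many observables $A_{\mathcal{S}^{(k)}_L}(\beta_k)$. The two limsup and liminf values of $\mathrm{Tr}[\rho_LA_L(t^*)]/N$ and $\mathrm{Tr}[\sigma_LA_L(t^*)]/N$ then differ by at most $2\|\cdot\|$-bounded errors of order $\epsilon_R$. Letting $R\to\infty$ gives equality, and together with existence from Proposition~\ref{proposition:JAIVTMXN_1} this proves Eq.~\eqref{eq:Preserve_MacroEquiv}.

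The main obstacle is step (2): showing that the approximating local operator really is a translate of a single $L$-independent $\beta_k$. This requires choosing the truncation canonically, namely evolving with the Hamiltonian restricted to the ball, so that translation covariance and $L$-independence are exact (the ball fits in the torus once $L$ is large). It also requires bounding the Lieb--Robinson error for this truncated dynamics uniformly for the time-dependent generator. We expect to handle the latter by a standard Duhamel/interaction-picture estimate, which relies on $t^*=O(L^0)$ so that the light cone stays of radius $O(1)$.
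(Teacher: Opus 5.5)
Your proposal is correct and follows essentially the same route as the paper's proof in Appendix~\ref{sec:JAIVTMXN}. The paper approximates each $\aLocal^H_{\bm r}(t^*)$ via Lieb--Robinson by evolution under the Hamiltonian truncated to a neighborhood (its $\aLocal^{\tilde H}_{\bm r,\ell}(t)$), observes that away from primitive-subsystem boundaries this is a translate of a single $L$-independent operator (its $c_{t,\ell}$, your $\beta_k$), bounds the boundary contribution by $O(\ell/L)$, applies macroscopic equivalence to the resulting additive observables, and only then sends the localization radius to infinity. The one cosmetic difference is that you treat a general $\mathcal S_L$ directly by summing over its primitive pieces, whereas the paper first reduces to a single primitive subsystem via Eq.~\eqref{eq:Additive_S_approximation}.
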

\noindent
Therefore, if every additive observable initially takes  
the macroscopically same value
in two states $\rho_L$ and $\sigma_L$ [as \eqref{eq:MacroscopicEquiv}], 
then this macroscopic coincidence is kept for arbitrary operation time $t^*=O(L^0)$, while its value depends on $t^*$.
\begin{proof}[Proof outline]
Consider the time evolution of an additive observable $A_L$ by macroscopic operation $U_L(t,0)$ in the Heisenberg picture, $A_L^H(t)$. Using the Lieb-Robinson bound, for $t^*=O(L^0)$, $A_L^H(t^*)$ can be well approximated by a certain additive observable~\cite{Gong2020,Gong2020L,Gong2022}. Combining this with the definition of macroscopic equivalence, Definition~\ref{definition:MacroEquiv}, we can show that the expectation value of $A_L^H(t^*)$ in $\rho_L$ coincides with that in $\sigma_L$. Since this holds for any additive observable $A_L$, we have Eq.~\eqref{eq:Preserve_MacroEquiv}.
For the details, see Appendix~\ref{sec:JAIVTMXN}.
\end{proof}

Specifically, by taking the canonical Gibbs state $\rho_L^\mathrm{can}(\beta|\initH)$ as $\sigma_L$, we obtain the following corollary:
\begin{corollary}[\label{corollary:macroscopically-equivalence_equilibrium-state}Response of iMATE]
Let $(\rho_L)_{L\in\mathbb{N}}$
represent 
iMATE described by the Hamiltonian $\initH$ at the inverse temperature $\beta$.
Then, for any macroscopic operation $(U_L(\bullet, 0))_{L\in\mathbb{N}}$ and operation time $t^* \in \mathbb{R}_{>0}$ independent of $L$,
it holds that
\begin{align}
    (\rho_L(t^*))_{L\in\mathbb{N}}
    \maceq (U_L(t^*, 0) \rho_L^\mathrm{can}(\beta|\initH) {U_L(t^*, 0)}^\dagger)_{L\in\mathbb{N}},
\end{align}
where $\rho_L(t^*) = U_L(t^*, 0) \rho_L {U_L(t^*, 0)}^\dagger$.
\end{corollary}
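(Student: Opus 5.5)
The plan is to deduce the corollary directly from Theorem~\ref{theorem:JAIVTMXN_3}, taking $\sigma_L:=\rho_L^{\mathrm{can}}(\beta|\initH)$. The work is to confirm that the hypotheses of that theorem hold; after that the conclusion is exactly the asserted equivalence.

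First, I check that both sequences represent macroscopic states. For $(\rho_L)_{L\in\mathbb{N}}$ this is part of Definition~\ref{definition:MacroEqState}, since iMATE is defined only for sequences that already represent a macroscopic state. For $\bigl(\rho_L^{\mathrm{can}}(\beta|\initH)\bigr)_{L\in\mathbb{N}}$ it is Assumption~\ref{assumption:GibbsState}.

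Second, I note that $(\rho_L)_{L\in\mathbb{N}}$ represents iMATE at inverse temperature $\beta$. By Eq.~\eqref{eq:MacroEqState}, this means precisely $(\rho_L)_{L\in\mathbb{N}}\maceq\bigl(\rho_L^{\mathrm{can}}(\beta|\initH)\bigr)_{L\in\mathbb{N}}$.

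Third, I apply Theorem~\ref{theorem:JAIVTMXN_3} with the given macroscopic operation $(U_L(\bullet,0))_{L\in\mathbb{N}}$ and the $L$-independent time $t^*$. This gives $(\rho_L(t^*))_{L\in\mathbb{N}}\maceq\bigl(U_L(t^*,0)\rho_L^{\mathrm{can}}(\beta|\initH)U_L(t^*,0)^\dagger\bigr)_{L\in\mathbb{N}}$. The statement of $\maceq$ presupposes that both sides represent macroscopic states, and Proposition~\ref{proposition:JAIVTMXN_1} guarantees this for both evolved sequences.

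There is no real obstacle here. The one point needing care is the role of Assumption~\ref{assumption:GibbsState}: without it, the Gibbs sequence would not be eligible as the $\sigma_L$ in Theorem~\ref{theorem:JAIVTMXN_3}. All the substantive difficulty, namely the Lieb--Robinson approximation of Heisenberg-evolved additive observables by additive observables, is already contained in that theorem.
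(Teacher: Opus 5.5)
Your proposal is correct and follows the paper's proof, which obtains the corollary by applying Theorem~\ref{theorem:JAIVTMXN_3} with $\sigma_L=\rho_L^{\mathrm{can}}(\beta|\initH)$. Your explicit checks are the right ones: Assumption~\ref{assumption:GibbsState} makes the Gibbs sequence eligible, and Proposition~\ref{proposition:JAIVTMXN_1} guarantees that both evolved sequences represent macroscopic states.
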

\noindent
In other words, on a timescale independent of $L$, the response of additive observables in a system prepared in iMATE to macroscopic operations does not depend on the choice of the representation of iMATE.
This is consistent with the experimental fact that macroscopic responses of 
macroscopic systems in thermal equilibrium are insensitive to microscopic details of the 
states of the systems.

\subsection{Lower bound on the timescale of thermalization}

By using Corollary~\ref{corollary:macroscopically-equivalence_equilibrium-state}, we can readily obtain the following corollary, 
which states that 
the timescale of thermalization is bounded from below.
\begin{corollary}[Lower bound on the timescale of thermalization]
Let the Hamiltonian $H_L$ be an additive observable on the whole system
and $(\rho_L)_{L\in\mathbb{N}}$ be an arbitrary representation of a macroscopic state  that does not represent iMATE of the system described by $H_L$.
Now, consider the time evolution generated by $H_L$, i.e., $\rho_L(t):=e^{-i H_L t}\rho_L e^{i H_L t}$.
Then, for any time $t^*$ independent of $L$, the sequence $\bigl(\rho_L(t^*)\bigr)_{L\in\mathbb{N}}$ does not represent iMATE.
\end{corollary}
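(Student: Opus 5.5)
We argue by contradiction, using time-reversal together with Theorem~\ref{theorem:JAIVTMXN_3}. Suppose that for some $L$-independent $t^*>0$ the sequence $\bigl(\rho_L(t^*)\bigr)_{L\in\mathbb{N}}$ represents iMATE at some inverse temperature $\beta$, i.e.,
\begin{align}
    \bigl(\rho_L(t^*)\bigr)_{L\in\mathbb{N}}\maceq\bigl(\rho_L^{\mathrm{can}}(\beta|H_L)\bigr)_{L\in\mathbb{N}}.
\end{align}
Proposition~\ref{proposition:JAIVTMXN_1} guarantees that $\bigl(\rho_L(t^*)\bigr)_{L\in\mathbb{N}}$ represents a macroscopic state, so this relation is meaningful.

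We then apply to both sides the backward evolution $e^{iH_L t^*}$. This is a macroscopic operation in the sense of Definition~\ref{definition:macroscopic-operation}. To see this, run time forward for a duration $t^*$ with the Hamiltonian $-H_L = H_L - f^1(t) B^1_L$, where $m=1$, $B^1_L=H_L$ and $f^1(t)\equiv 2$. Here $H_L$ is an additive observable on the whole system $\Lambda_L$, which is a proper macroscopic subsystem, and $f^1$ is bounded and $L$-independent. The corresponding unitary at time $t^*$ is $e^{iH_L t^*}$, and $t^*$ is independent of $L$.

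Theorem~\ref{theorem:JAIVTMXN_3} then gives
\begin{align}
    \bigl(e^{iH_Lt^*}\rho_L(t^*)e^{-iH_Lt^*}\bigr)_{L\in\mathbb{N}}\maceq\bigl(e^{iH_Lt^*}\rho_L^{\mathrm{can}}(\beta|H_L)e^{-iH_Lt^*}\bigr)_{L\in\mathbb{N}}.
\end{align}
The left-hand side equals $\rho_L$. On the right-hand side, the Gibbs state commutes with $H_L$, so it is unchanged. Hence $(\rho_L)\maceq(\rho_L^{\mathrm{can}}(\beta|H_L))$. This says $(\rho_L)$ represents iMATE, contradicting the hypothesis.

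The only delicate point is checking that the reversed evolution is genuinely an admissible macroscopic operation. Definition~\ref{definition:macroscopic-operation} allows arbitrary bounded fields coupled to additive observables, and here the field couples to $H_L$ itself. If one insists that $B^\mu_L$ differ from $H_L$, one can instead use $f^1\equiv 2$ with $B^1_L$ equal to the additive observable built from the same local term $h$ on $\Lambda_L$. This is the same operator and is a proper sequence. Assumption~\ref{assumption:GibbsState}, that the Gibbs sequence represents a macroscopic state, is implicit in the definition of iMATE, and nothing further is needed.
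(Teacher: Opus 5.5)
Your proof is correct and follows the paper's own argument: assume the contrary, note that the time-reversed evolution $e^{iH_Lt^*}$ is itself a macroscopic operation, and apply Theorem~\ref{theorem:JAIVTMXN_3} (equivalently Corollary~\ref{corollary:macroscopically-equivalence_equilibrium-state}) together with the invariance of $\rho_L^{\mathrm{can}}(\beta|H_L)$ under $e^{iH_Lt^*}$. Your explicit realization of the reversed evolution via $B^1_L=H_L$ and $f^1\equiv 2$ spells out a step the paper leaves implicit.
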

\noindent
This means that, for any initial nonequilibrium states, relaxation to thermal equilibrium in the sense of Definition~\ref{definition:MacroEqState} can occur only at the timescale of $\omega(L^0)$~\cite{LandauSymbols}.
This is also consistent with experiments on macroscopic systems in thermal equilibrium.
Note that this timescale is much longer than that derived by using typicality arguments~\cite{Goldstein2015Extremely,Reimann2016}. This difference arises because our Hamiltonian consists of local interactions, 
and we focus on \emph{all} additive observables, 
both of which respect locality,
in contrast to the typicality arguments.
We also note that the above result is distinct from the $\omega(L^0)$-bound on timescale of relaxation provided in Ref.~\cite{Hamazaki2022}, which considered locality of dynamics but basically focused on a restricted class of observables corresponding to, e.g., transport of particles.

\begin{proof}
We prove this corollary by contradiction.
Assume that $\bigl(\rho_L(t^*)\bigr)_{L\in\mathbb{N}}$ represents iMATE.
This means that there is some inverse temperature~$\beta$ such that $\bigl(\rho_L(t^*)\bigr)_{L\in\mathbb{N}}\maceq \bigl(\rho_L^{\mathrm{can}}(\beta|H_L)\bigr)_{L\in\mathbb{N}}$.
Because the inverse $U_L(t,0)=e^{iHt}$ of the original time evolution is a macroscopic operation, Corollary~\ref{corollary:macroscopically-equivalence_equilibrium-state} implies 
\begin{align}
    (\rho_L)_{L\in\mathbb{N}}&=\bigl(U_L(t^*,0)\rho_L(t^*)U_L^\dagger(t^*,0)\bigr)_{L\in\mathbb{N}}\notag\\
    &\maceq \bigl(U_L(t^*,0)\rho_L^{\mathrm{can}}(\beta|H_L)U_L^\dagger(t^*,0)\bigr)_{L\in\mathbb{N}}\notag\\
    &=\bigl(\rho_L^{\mathrm{can}}(\beta|H_L)\bigr)_{L\in\mathbb{N}},
\end{align}
which contradicts the premise that $(\rho_L)_{L\in\mathbb{N}}$ does not represent iMATE.
\end{proof}

\section{\label{sec:MacroPassivity}Macroscopic passivity}

\subsection{\label{sec:MacroPassivity_beta>=0}Macroscopic passivity at \texorpdfstring{$\beta\ge 0$}{β≥0}}

According to thermodynamics, one cannot extract work from thermal equilibrium states by any adiabatic operation in which the final values of control parameters coincide with the initial values, a form of the second law known as Planck's principle~\cite{Lieb1999}.
By combining Corollary~\ref{corollary:macroscopically-equivalence_equilibrium-state} and the passivity of the canonical Gibbs state~\cite{Pusz1978,Lenard1978}, we immediately obtain the following result, which is a quantum-mechanical expression of Planck's principle on the timescale independent of $L$
and is valid for any (possibly pure) quantum states representing iMATE. 
\begin{corollary}[Macroscopic passivity of iMATE] \label{corollary:macroscopic-passivity}
Let $(\rho_L)_{L\in\mathbb{N}}$
represent 
iMATE described by a translation-invariant Hamiltonian $\initH$ at a nonnegative inverse temperature $\beta \geq 0$.
Then, for any macroscopic operation $(U_L(\bullet, 0))_{L\in\mathbb{N}}$ and operation time $t^* \in \mathbb{R}_{>0}$ independent of $L$, it holds that
\begin{align}
    \lim_{L \to \infty} \mathrm{Tr} \left[ \rho_L(t^*) \initH / N \right]
    \geq \lim_{L \to \infty} \mathrm{Tr} \left[ \rho_L(0) \initH / N \right],
    \label{eq:macroscopic-passivity}
\end{align}
where $\rho_L(t^*) = U_L(t^*, 0) \rho_L {U_L(t^*, 0)}^\dagger$.
\end{corollary}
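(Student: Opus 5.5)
The plan is to reduce both sides of \eqref{eq:macroscopic-passivity} to quantities computed in the canonical Gibbs state, and then invoke ordinary passivity of $\rho_L^{\mathrm{can}}(\beta|\initH)$ for $\beta\ge 0$. Write $\rho_L^{\mathrm{can}}:=\rho_L^{\mathrm{can}}(\beta|\initH)$ and $U:=U_L(t^*,0)$.

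\emph{Step 1 (right-hand side).} Since $\initH=A_{\Lambda_L}(h)$ is itself a proper sequence of additive observables, the iMATE assumption $(\rho_L)_{L\in\mathbb{N}}\maceq(\rho_L^{\mathrm{can}})_{L\in\mathbb{N}}$ gives directly
\begin{align}
\lim_{L\to\infty}\mathrm{Tr}[\rho_L\initH/N]=\lim_{L\to\infty}\mathrm{Tr}[\rho_L^{\mathrm{can}}\initH/N].
\end{align}
Both limits exist by Definition~\ref{definition:MacroState} and Assumption~\ref{assumption:GibbsState}.

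\emph{Step 2 (left-hand side).} We apply Corollary~\ref{corollary:macroscopically-equivalence_equilibrium-state}, which states
\begin{align}
(\rho_L(t^*))_{L\in\mathbb{N}}\maceq(U\rho_L^{\mathrm{can}}U^\dagger)_{L\in\mathbb{N}}.
\end{align}
By Proposition~\ref{proposition:JAIVTMXN_1}, both sequences represent macroscopic states, so the limits are well defined. Evaluating the equivalence on the additive sequence $(\initH)_{L\in\mathbb{N}}$ gives
\begin{align}
\lim_{L\to\infty}\mathrm{Tr}[\rho_L(t^*)\initH/N]=\lim_{L\to\infty}\mathrm{Tr}[U\rho_L^{\mathrm{can}}U^\dagger\initH/N].
\end{align}

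\emph{Step 3 (finite-size passivity).} For each fixed $L$ and $\beta\ge0$, the Gibbs state is a mixture of eigenstates of $\initH$ whose weights $e^{-\beta E}/Z$ are nonincreasing in $E$. By the Lenard characterization~\cite{Lenard1978}, it is therefore passive:
\begin{align}
\mathrm{Tr}[U\rho_L^{\mathrm{can}}U^\dagger\initH]\ge\mathrm{Tr}[\rho_L^{\mathrm{can}}\initH]
\end{align}
for every unitary $U$, and in particular for $U=U_L(t^*,0)$. We divide by $N$ and take $L\to\infty$; both limits exist by Steps~1--2, so the non-strict inequality survives the limit. Chaining Steps~1--3 yields \eqref{eq:macroscopic-passivity}.

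The genuinely nontrivial input is Step~2, namely Theorem~\ref{theorem:JAIVTMXN_3} via the Lieb--Robinson approximation of the Heisenberg-evolved additive observable. Since that theorem is already available, the main point requiring care here is bookkeeping. We must check that $\initH$ qualifies as an element of $\SetAdditive$, with $\mathcal{S}_L=\Lambda_L$ and $h$ an $\ell$-local observable, and that existence of every limit is secured before the inequality is passed to the limit. The sign condition $\beta\ge0$ enters only through passivity in Step~3. For $\beta<0$ the Gibbs state is active and the inequality reverses.
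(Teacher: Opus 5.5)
Your proposal is correct and follows the paper's proof essentially step for step. Corollary~\ref{corollary:macroscopically-equivalence_equilibrium-state} transfers the left-hand side to the evolved Gibbs state. Passivity of $\rho_L^{\mathrm{can}}(\beta|\initH)$ for $\beta\ge0$ then gives the inequality at each $L$, and the iMATE assumption, evaluated on the additive sequence $(\initH)_{L\in\mathbb{N}}$, converts the Gibbs-state energy back to that of $\rho_L(0)$. Your extra bookkeeping on the existence of the limits, via Proposition~\ref{proposition:JAIVTMXN_1} and Assumption~\ref{assumption:GibbsState}, makes explicit what the paper leaves implicit.
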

\noindent
Note that the energy change through a macroscopic operation corresponds to the total amount of work done on the system by the external fields during the operation. 
Therefore, Corollary~\ref{corollary:macroscopic-passivity} states that it is impossible to extract an extensive work from iMATE through a macroscopic operation with a finite operation time.
This corresponds to Planck's principle in thermodynamics (see Appendix~\ref{sec:PlanckPrinciple} on the differences from Planck's principle given in Ref.~\cite{Lieb1999}).
In this way, our definitions of thermal equilibrium and macroscopic operations are consistent with thermodynamics.

\begin{proof}[Proof of Corollary~\ref{corollary:macroscopic-passivity}.]
Since $(\rho_L)_{L\in\mathbb{N}}$
represents 
iMATE at the inverse temperature $\beta$, by applying Corollary~\ref{corollary:macroscopically-equivalence_equilibrium-state}, we get $(\rho_L(t^*))_{L\in\mathbb{N}} \maceq (U_L(t^*, 0) \rho_L^\mathrm{can}(\beta|\initH) {U_L(t^*, 0)}^\dagger)_{L\in\mathbb{N}}$. Therefore, we have
\begin{align}
    &\lim_{L\to\infty} \mathrm {Tr} \left[ \rho_L(t^*) \initH / N \right] \nonumber\\
    &= \lim_{L\to\infty} \mathrm{Tr} \left[ U_L(t^*, 0) \rho_L^\mathrm{can}(\beta|\initH) {U_L(t^*, 0)}^\dagger \initH / N \right].
\end{align}
Combining this with the passivity of the canonical Gibbs state with nonnegative inverse temperature~\cite{Pusz1978,Lenard1978}, i.e.,
\begin{align}
    \mathrm{Tr} \left[ V \rho_L^\mathrm{can}(\beta|\initH) V^\dagger \initH / N \right]
    \geq \mathrm{Tr} \left[ \rho_L^\mathrm{can}(\beta|\initH) \initH / N \right]
\end{align}
for an arbitrary unitary operator $V$,
we have
\begin{align}
    \lim_{L\to\infty} \mathrm {Tr} \left[ \rho_L(t^*) \initH / N \right]
    \geq \lim_{L\to\infty} \mathrm{Tr} \left[ \rho_L^\mathrm{can}(\beta|\initH) \initH / N \right].
\end{align}
Again, using the assumption that $(\rho_L)_{L\in\mathbb{N}}$
represents 
iMATE, we obtain
\begin{align}
    \lim_{L\to\infty} \mathrm {Tr} \left[ \rho_L(t^*) \initH / N \right]
    \geq \lim_{L\to\infty} \mathrm{Tr} \left[ \rho_L(0) \initH / N \right].
\end{align}
\end{proof}
Some readers might be concerned that allowing only finite-time operations could leave the final state out of equilibrium, potentially undermining the physical relevance of our results to thermodynamics. Originally, Planck's principle, which motivates the concept of passivity, states that one can extract no work through any adiabatic operation in which the final values of control parameters coincide with the initial values (See Appendix~\ref{sec:PlanckPrinciple} for details). In our quantum-mechanical setup, this corresponds to the setting that the final Hamiltonian is the same as the initial Hamiltonian, $\initH$. 
Thus, in this setting, the time evolution generated by the final Hamiltonian conserves $\initH$. Consequently, even if we add a relaxation process following macroscopic operations and wait for an arbitrarily long time
to reach thermal equilibrium, the expectation value of $\initH$ remains invariant, and Eq.~\eqref{eq:macroscopic-passivity} holds exactly the same.

\subsection{Macroscopic passivity at \texorpdfstring{$\beta=0$}{β=0} and difference from \texorpdfstring{$\beta>0$}{β>0}}

Interestingly, 
in the special case 
$\beta=0$, Corollary~\ref{corollary:macroscopic-passivity} additionally prohibits an extensive \emph{increase} of energy. This is because if $(\rho_L)_{L\in\mathbb{N}}$ represents iMATE described by $H_L$ at $\beta=0$, it also represents iMATE described by $-H_L$, which implies an inequality opposite to Eq.~\eqref{eq:macroscopic-passivity}. In other words, not only can no extensive work be extracted from such a state, but also no extensive work can be done on such a state through any macroscopic operation with a finite operation time. In addition, since such a $(\rho_L)_{L\in\mathbb{N}}$ represents iMATE described by $A_L$ at $\beta=0$ for any additive observable $A_L$, the same argument implies that the expectation value of $A_L$ cannot be increased nor decreased extensively through any finite time macroscopic operation. Thus, we obtain the following corollary:
\begin{corollary}[\label{corollary:MacroPassivity_beta0}Stability of iMATE at $\beta=0$ to any macroscopic operation]
Let $(\rho_L)_{L\in\mathbb{N}}$ represent iMATE (described by some translation-invariant Hamiltonian) at zero inverse temperature $\beta =0$.
Then, for any macroscopic operation $(U_L(\bullet, 0))_{L\in\mathbb{N}}$ and operation time $t^* \in \mathbb{R}_{>0}$ independent of $L$, 
the sequence $\bigl(\rho_L(t^*)\bigr)_{L\in\mathbb{N}}$ given by $\rho_L(t^*) = U_L(t^*, 0) \rho_L {U_L(t^*, 0)}^\dagger$ remains to represent iMATE at $\beta=0$, $\bigl(\rho_L(t^*)\bigr)_{L\in\mathbb{N}}\maceq (\rho_L)_{L\in\mathbb{N}}$.
In other words, it holds that
\begin{align}
    \lim_{L \to \infty} \mathrm{Tr} \left[ \rho_L(t^*) A_L / N \right]
    = \lim_{L \to \infty} \mathrm{Tr} \left[ \rho_L(0) A_L / N \right]\notag\\
    \text{for all }(A_L)_{L\in\mathbb{N}}\in\mathcal{A}.
    \label{eq:macroscopic-passivity_beta0}
\end{align}

\end{corollary}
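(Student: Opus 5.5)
The plan is to reduce everything to Corollary~\ref{corollary:macroscopically-equivalence_equilibrium-state} together with the invariance of the $\beta=0$ Gibbs state under every unitary. Since $(\rho_L)_{L\in\mathbb{N}}$ represents iMATE at $\beta=0$, Definition~\ref{definition:MacroEqState} gives $(\rho_L)_{L\in\mathbb{N}}\maceq(\rho_L^{\mathrm{can}}(0|\initH))_{L\in\mathbb{N}}$. Note that $\rho_L^{\mathrm{can}}(0|\initH)=I_{\Lambda_L}/D^N$ does not depend on the Hamiltonian. By Assumption~\ref{assumption:GibbsState}, or directly because $\mathrm{Tr}[A_{\mathcal{S}_L}(\aLocal)]/(ND^N)=\mathrm{Tr}[\aLocal]\,|\{\bm r:\supp(\aLocal_{\bm r})\subset\mathcal{S}_L\}|/(N D^{\ell^d})$ converges, this sequence represents a macroscopic state.

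First, I would apply Corollary~\ref{corollary:macroscopically-equivalence_equilibrium-state} (equivalently, Theorem~\ref{theorem:JAIVTMXN_3} with $\sigma_L=I/D^N$) to the given macroscopic operation with $L$-independent $t^*$. This yields
\begin{align}
(\rho_L(t^*))_{L\in\mathbb{N}}\maceq\bigl(U_L(t^*,0)\,(I/D^N)\,U_L(t^*,0)^\dagger\bigr)_{L\in\mathbb{N}}=(I/D^N)_{L\in\mathbb{N}},
\end{align}
because the maximally mixed state commutes with every unitary. Transitivity of $\maceq$, which is an equivalence relation, then gives $(\rho_L(t^*))_{L\in\mathbb{N}}\maceq(I/D^N)_{L\in\mathbb{N}}\maceq(\rho_L)_{L\in\mathbb{N}}$. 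That is exactly Eq.~\eqref{eq:macroscopic-passivity_beta0}.

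Second, I would note that $(\rho_L(t^*))_{L\in\mathbb{N}}$ represents a macroscopic state by Proposition~\ref{proposition:JAIVTMXN_1}, so the limits in Eq.~\eqref{eq:macroscopic-passivity_beta0} exist. Since $I/D^N=\rho_L^{\mathrm{can}}(0|H'_L)$ for any translation-invariant $H'_L$, the state $(\rho_L(t^*))_{L\in\mathbb{N}}$ represents iMATE at $\beta=0$ for any such Hamiltonian.

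The only real input is Theorem~\ref{theorem:JAIVTMXN_3}, which is already established, so I expect no serious obstacle. The one point to check is that the macroscopic operation in the hypothesis is built on some translation-invariant $\initH$, which may differ from the Hamiltonian with respect to which $\rho_L$ is in iMATE. This does not matter: Theorem~\ref{theorem:JAIVTMXN_3} applies to any pair of macroscopically equivalent states, irrespective of which Hamiltonian generates the operation. The heuristic argument in the text, which applies Corollary~\ref{corollary:macroscopic-passivity} to $\pm A_L$, is therefore unnecessary.
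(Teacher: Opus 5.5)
Your proposal is correct and follows essentially the paper's own proof outline: apply Corollary~\ref{corollary:macroscopically-equivalence_equilibrium-state} with the reference state $I_{\Lambda_L}/D^N$, use its invariance under $U_L(t^*,0)$, and conclude by transitivity of $\maceq$. Your side checks are also sound: the $\beta=0$ Gibbs state does not depend on the Hamiltonian and is itself a macroscopic state, and Theorem~\ref{theorem:JAIVTMXN_3} does not care which Hamiltonian generates the operation.
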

\begin{proof}[Proof outline]
This corollary directly follows from Corollary~\ref{corollary:macroscopically-equivalence_equilibrium-state} and the stability of the maximally mixed state $I_{\Lambda_L}/D^N$ to unitary operations, $U_L(t^*, 0) (I_{\Lambda_L}/D^N) {U_L(t^*, 0)}^\dagger=(I_{\Lambda_L}/D^N)$.
\end{proof}
By contrast, at positive inverse temperature $\beta>0$, an extensive work can be done on iMATE at $\beta$ even within a finite operation time. In other words, the strict inequality in \eqref{eq:macroscopic-passivity} naturally holds for typical macroscopic operations with $t^*=O(L^0)$.
An example is the case where 
$H_L=M^z$, $\beta>0$, and $U_L(t,0)=e^{-\pi iM^x t}$ for $t\notin\mathbb{Z}$, where $M^{x,y,z}=\sum_{\bm{r}\in\Lambda_L}\sigma^{x,y,z}_{\bm{r}}$.

\subsection{Straightforward extension of macroscopic passivity is impossible}

In this subsection, we explain that we cannot straightforwardly extend macroscopic passivity to a longer operation time, to a broader class of initial thermal equilibrium states, or to a broader class of operations.

First, we briefly comment on the timescale of operation time in Corollaries~\ref{corollary:macroscopic-passivity} and \ref{corollary:MacroPassivity_beta0}.
We will show in Sec.~\ref{sec:Counterexamples_Timescale} that the constraint on the operation time, $t^*=O(L^0)$, is optimal by constructing a counterexample of the macroscopic passivity for any longer operation time $t^*=\omega(L^0)$. In other words, the timescale of $O(L^0)$ is the longest one in that macroscopic passivity can be proven under the setting considered in this paper.

Next, we comment on the extension to a broader class of initial thermal equilibrium states. We emphasize that
it is crucial in Corollaries~\ref{corollary:macroscopic-passivity} and \ref{corollary:MacroPassivity_beta0} to focus on 
\emph{all} additive observables for characterizing thermal equilibrium.
If we focused on a \emph{finite} number of additive observables instead, we would encounter an inconsistency with the second law,
as the following example clearly shows:
\begin{example}[\label{example:ProblemFiniteObs}Contradiction to Planck's principle when thermal equilibrium were characterized by only a few additive observables, informal version of Proposition~\ref{proposition:ProblemFiniteObs}]
Let us consider the XY model on the square lattice ($d=2$),
\begin{align}
    H_L&=H^{XY}_{L}-h^xM_L^x-h^yM_L^y,
    \label{eq:Counterex_XYmodel}
\end{align}
where
\begin{align}
    H^{XY}_{L}&=-\sum_{j,k=1}^{L}\Bigl(J^x(\sigma_{j,k}^{x}\sigma_{j+1,k}^{x}+\sigma_{j,k}^{x}\sigma_{j,k+1}^{x})\notag\\
    &\hspace{40pt}+J^y(\sigma_{j,k}^{y}\sigma_{j+1,k}^{y}+\sigma_{j,k}^{y}\sigma_{j,k+1}^{y})\Bigr),\\
    M_L^{x,y}&=\sum_{j,k=1}^{L}\sigma_{j,k}^{x,y}.
\end{align}
We assume $J^{x},J^{y}> 0$ and $J^x\neq J^y$, while $h^x$ and $h^y$ can be taken arbitrarily including zero.
In this case, local conserved quantities, i.e., additive observables commuting with $H_L$, are restricted only to $H_L$ itself~\cite{ShiraishiTasaki2024} and symmetry breaking is absent at sufficiently high temperature.
Then, thermal equilibrium state at such temperature can be described by the canonical Gibbs state $\rho_{L}^{\mathrm{can}}(\beta|H_L)$, which is expected to satisfy Assumption~\ref{assumption:GibbsState} and \ref{assumption:GibbsState_Normal}.

Suppose that we only focus on three additive observables~$\tilde{\mathcal{A}}=\{M_L^x, M_L^y, H^{XY}_{L}\}$, which seem to be a natural choice because they appear in Eq.~\eqref{eq:Counterex_XYmodel}. 
We take $h^x$ as a control parameter and consider a macroscopic operation of operation time $O(L^0)$.
Then, there is a state $\rho_L$ that is indistinguishable from the canonical Gibbs state $\rho_L^{\mathrm{can}}(\beta|H_L)$ with respect to $\tilde{\mathcal{A}}$ and represents the ordinary MATE~\cite{Tasaki2016} but violates Eq.~\eqref{eq:macroscopic-passivity}.
\end{example}

In other words, the choice $\tilde{\mathcal{A}}$ is inappropriate in the sense that it cannot distinguish a macroscopically-nonpassive state $\rho_L$ from $\rho_{L}^{\mathrm{can}}(\beta|H_L)$.
This example suggests the difficulty of characterizing thermal equilibrium using only a finite number of additive observables,
as anticipated from the physical motivations discussed in
Sec.~\ref{sec:MacroEquilibrium_iMATE}.
Instead, by focusing on all additive observables through iMATE, we 
have successfully proved 
the  
consistency with the second law.

Finally, we comment on the extension to a broader class of operations.
In Corollaries~\ref{corollary:macroscopic-passivity} and \ref{corollary:MacroPassivity_beta0}, the restriction of the class of operations to macroscopic operations is 
crucial.
This can be seen from the following example, which shows a contradiction to Planck's principle when 
allowed to apply
local controls --- a class of operations broader than macroscopic operations
--- to iMATE.
\begin{example}[\label{example:Problem_CombiningiMATEandLocalControl}Contradiction to Planck's principle when combining iMATE and local control of $O(L^0)$ time]
Local controls are defined as unitary time evolution in which parameters conjugate to local observables are controlled 
(see Definition~\ref{definition:LocalControl} in Appendix~\ref{sec:Local_control}
for the precise definition),
and are broader operations than macroscopic operations.
Take the initial Hamiltonian as
\begin{align}
    H_L&=M_L^z=\sum_{j=1}^{L}\sigma_{j}^z,
    \label{eq:Problem_CombiningiMATEandLocalControl_initH}
\end{align}
and the initial state $\rho_L$ as that given in Example~\ref{example:NonMITE_MacroEqState}, where the sequence $(\rho_L)_{L\in\mathbb{N}}$ represents iMATE at $\beta=0$.
There exists
a local control $\bigl(U_L(t^*,0)\bigr)_{L\in\mathbb{N}}$ of operation time $t^*=O(L^0)$ such that all up spins in $\rho_L$ are flipped to downward.
As a result, Planck's principle is violated as
\begin{align}
    \lim_{L \to \infty} \mathrm{Tr} \left[ \rho_L(t^*) \initH / N \right]
    < \lim_{L \to \infty} \mathrm{Tr} \left[ \rho_L \initH / N \right],
    \label{eq:Problem_CombiningiMATEandLocalControl_nonPassive}
\end{align}
where $\rho_L(t^*) = U_L(t^*, 0) \rho_L {U_L(t^*, 0)}^\dagger$.
\end{example}

\begin{proof}[Proof of Eq.~\eqref{eq:Problem_CombiningiMATEandLocalControl_nonPassive}]
See Appendix~\ref{sec:Proof_Problem_CombiningiMATEandLocalControl}.
\end{proof}

This means that the setup of ``iMATE $+$ local control'' violates Planck's principle even for operation time independent of $L$.
By contrast, as mentioned in Table~\ref{tbl:Comparison_Operation}, 
the setups of ``iMATE $+$ macroscopic operation'' and ``MITE $+$ local control'' for such an operation timescale are consistent with Planck's principle.
These facts suggest a trade-off between the notion of thermal equilibrium and the class of operations consistent with the second law of thermodynamics.
For more details, see also Appendix~\ref{sec:Local_control}.

\section{\label{sec:reduced_rho}Characterization of macroscopic states by reduced density matrices}

In this section, we introduce a reduced density matrix that 
completely characterizes macroscopically equivalent states.
We then use it to clearly define macroscopic uniformity.
Furthermore, in the next section, we will use it to define an entropy 
that is consistent with thermodynamics.
For readers' convenience, we summarize in Table \ref{tbl:entropy} the definitions and symbols used in this paper.
\begin{table}[h]
\caption{A quick list of symbols used in Sec.~\ref{sec:reduced_rho} and the following sections.}
\label{tbl:entropy}
\centering
\begin{ruledtabular}
\begin{tabular}{lll}
    Symbol & Definition & Meaning\\\hline
    $\Lambda_L$ & Sec.~\ref{sec:Setup_Lattice} & Total lattice with side length $L$\\
    $\mathcal{S}_L^{(k)}$ & Def.~\ref{definition:ProperMacroSubsystem} & $k$-th primitive macroscopic subsystem\\
    $\Cell$ & Def.~\ref{definition:LocalObs} & Hypercube of side length $\ell$ at $\bm{0}$\\
    $\Cell[\bm{r}]$ & Def.~\ref{definition:LocalObs} & Translation of $\Cell$ by $\bm{r}$\\
    $\rho_L$ & - & Density matrix on $\Lambda_L$\\
    $\rho_{L|\ell}^{\boldsymbol{r}}$ & Eq.~\eqref{eq:ReducedState_rho^r} & Reduced density matrix of $\rho_L$ on $\Cell[\bm{r}]$ \\
    $\rho_{L|\ell}^{(k)}$ & Eq.~\eqref{eq:DEF_rho_L^red} & Spatial average of $\rho_{L|\ell}^{\boldsymbol{r}}$ around $\mathcal{S}_L^{(k)}$\\
    $s^{\mathrm{mac}}_{\ell}[(\rho_{L})_{L\in\mathbb{N}}]$ & Def.~\ref{definition:MacroEntropy} &Quantum macroscopic entropy density\\
    $\tilde{s}_{\ell}[(\rho_{L})_{L\in\mathbb{N}}]$ & Eq.~\eqref{eq:tilde_s} &An upper bound of $s^{\mathrm{mac}}_{\ell}[(\rho_{L})_{L\in\mathbb{N}}]$\\
    $s^{\mathrm{TD}}(\beta|H_L)$ & Eq.~\eqref{eq:DEF_s^TD} &Thermodynamic entropy
\end{tabular}
\end{ruledtabular}
\end{table}

\subsection{\label{sec:Entropy_CGLS}\texorpdfstring{$\ell$-local}{l-local} density matrices and their spatial average}

Recall that $\Cell$ is a $d$-dimensional hypercube with an $L$-independent side length $\ell$ centered at site $\boldsymbol{0}\in\Lambda_{L}$.
For an arbitrary quantum state $\rho_{L}$ on $\Lambda_{L}$, 
consider its reduced density matrix on $\Cell[\bm{r}]=\Cell+\boldsymbol{r}$, 
and move it to $\Cell$ (so that it has the same support for any $\boldsymbol{r}$).
The density matrix obtained is 
called an \emph{$\ell$-local density matrix} 
around $\boldsymbol{r}$
and denoted as
\begin{align}
    \rho_{L|\ell}^{\boldsymbol{r}}:=\mathrm{Tr}_{\Lambda_{L}\setminus \Cell}\bigl[T_{\boldsymbol{r}}^{\dagger}\rho_{L} T_{\boldsymbol{r}}\bigr].
    \label{eq:ReducedState_rho^r}
\end{align}

To extract macroscopic properties of the system, we introduce 
the spatial average of $\rho_{L|\ell}^{\boldsymbol{r}}$ on
a primitive macroscopic subsystem $\mathcal{S}^{(k)}$, 
\begin{align}
    \rho_{L|\ell}^{(k)}:=
    \frac{1}{|\mathcal{S}^{(k)}|}\sum_{\boldsymbol{r}\in\mathcal{S}^{(k)}}
    \rho_{L|\ell}^{\boldsymbol{r}},
    \label{eq:DEF_rho_L^red}
\end{align}
and its $L\to\infty$ limit:
\begin{definition}[Spatial average of $\ell$-local density matrices]
\label{def:sp.av.ell.dens.mat}
Let $(\rho_{L})_{L\in\mathbb{N}}$ represent a macroscopic state.
For any positive integer $\ell\in\mathbb{N}$, 
and for a primitive macroscopic subsystem $\mathcal{S}^{(k)}$, 
we define
\begin{align}
    \rho_{\infty|\ell}^{(k)}:=
    \lim_{L\to\infty}\rho_{L|\ell}^{(k)}.
    \label{eq:DEF_rho_infty^red}
\end{align}
We call it the 
spatial average of $\ell$-local density matrices around $\mathcal{S}^{(k)}$.
\end{definition}
\noindent
We can show that the limit~\eqref{eq:DEF_rho_infty^red} exists as follows.
\begin{proof}[Proof of well-definedness of Eq.~\eqref{eq:DEF_rho_infty^red}.]
For any observable $\aLocal$ on $\Cell$, we have 
\begin{align}
    \mathrm{Tr}\bigl[\rho_{L|\ell}^{(k)} \aLocal\bigr]
    =\mathrm{Tr}\Bigl[\rho_{L}\frac{A_{\mathcal{S}^{(k)}}(\aLocal)}{|\mathcal{S}^{(k)}|}\Bigr]+O(1/L),
    \label{eq:<a>^red=<A_Sigma>/Sigma}
\end{align}
where $A_{\mathcal{S}^{(k)}}(\aLocal)$ is the additive observable on $\mathcal{S}^{(k)}$ obtained from 
$\aLocal$
[which is local since $\ell=O(L^0)$]
and the last term of $O(1/L)$ corresponds to the contributions from the boundary of $\mathcal{S}^{(k)}$. 
Since $(\rho_{L})_{L\in\mathbb{N}}$ is a macroscopic state, Eq.~\eqref{eq:MacroState} guarantees that there exists the $L\to\infty$ limit of the first term of the RHS of the above equation. 
Thus there exists $\lim_{L\to\infty}\mathrm{Tr}\bigl[\rho_{L|\ell}^{(k)} \aLocal\bigr]$.
By taking $\aLocal$ as a member of a basis of operators on $\Cell$, the above statement implies the existence of the $L\to\infty$ limit of $\rho_{L|\ell}^{(k)}$.
\end{proof}

Importantly, the spatial average 
$(\rho_{\infty|\ell}^{(k)})_{\ell\in\mathbb{N}, k\in\{1,...,K^d\}}$
completely characterizes macroscopically equivalent states as follows:
\begin{proposition}[\label{proposition:Equiv_rho^red}Characterization of macroscopically equivalent states]

Any two representations of macroscopic states $(\rho_{L})_{L\in\mathbb{N}}$ and $(\sigma_{L})_{L\in\mathbb{N}}$ are macroscopically equivalent, $(\rho_{L})_{L\in\mathbb{N}}\maceq (\sigma_{L})_{L\in\mathbb{N}}$, if and only if
\begin{align}
    \rho_{\infty|\ell}^{(k)}=\sigma_{\infty|\ell}^{(k)}
    \label{eq:Equiv_rho^red}
\end{align}
holds for all $k\in\{1,...,K^d\}$ and for all $\ell\in\mathbb{N}$.
\end{proposition}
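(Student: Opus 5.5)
The plan is to prove both directions through the identity \eqref{eq:<a>^red=<A_Sigma>/Sigma}, which links the averaged local density matrices to the densities of additive observables on primitive macroscopic subsystems, together with the additivity property \eqref{eq:Additive_S_approximation}, which reduces additive observables on general macroscopic subsystems to sums over primitive ones.

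\emph{Only if.} Assume $(\rho_L)_{L\in\mathbb{N}}\maceq(\sigma_L)_{L\in\mathbb{N}}$ and fix $\ell$, $k$, and an observable $\aLocal$ on $\Cell$. The sequence $(A_{\mathcal{S}^{(k)}_L}(\aLocal))_{L\in\mathbb{N}}$ is a proper sequence of additive observables, taking $m=1$ in Definition~\ref{definition:ProperMacroSubsystem}. Since $|\mathcal{S}^{(k)}_L|/N\to K^{-d}$, Eq.~\eqref{eq:MacroscopicEquiv} gives
\[
\lim_{L\to\infty}\mathrm{Tr}\bigl[\rho_L A_{\mathcal{S}^{(k)}_L}(\aLocal)\bigr]/|\mathcal{S}^{(k)}_L|=\lim_{L\to\infty}\mathrm{Tr}\bigl[\sigma_L A_{\mathcal{S}^{(k)}_L}(\aLocal)\bigr]/|\mathcal{S}^{(k)}_L|.
\]
By \eqref{eq:<a>^red=<A_Sigma>/Sigma} the $O(1/L)$ boundary term vanishes in the limit, so $\mathrm{Tr}[\rho^{(k)}_{\infty|\ell}\aLocal]=\mathrm{Tr}[\sigma^{(k)}_{\infty|\ell}\aLocal]$. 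Because $\aLocal$ ranges over a basis of operators on $\Cell$, this yields \eqref{eq:Equiv_rho^red}.

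\emph{If.} Take an arbitrary proper sequence $A_L=A_{\mathcal{S}_L}(\aLocal)$ with $\mathcal{S}_L=\bigcup_{j=1}^m\mathcal{S}^{(k_j)}_L$ and $\aLocal$ an $\ell$-local observable on $\Cell$. By \eqref{eq:Additive_S_approximation},
\[
\mathrm{Tr}[\rho_L A_L]/N=\sum_{j=1}^m \frac{|\mathcal{S}^{(k_j)}_L|}{N}\,\mathrm{Tr}\bigl[\rho_L A_{\mathcal{S}^{(k_j)}_L}(\aLocal)\bigr]/|\mathcal{S}^{(k_j)}_L|+O(1/L),
\]
and the same expansion holds for $\sigma_L$. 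Applying \eqref{eq:<a>^red=<A_Sigma>/Sigma} to each term and letting $L\to\infty$ gives
\[
\lim_{L\to\infty}\mathrm{Tr}[\rho_L A_L]/N=K^{-d}\sum_{j=1}^m\mathrm{Tr}\bigl[\rho^{(k_j)}_{\infty|\ell}\aLocal\bigr].
\]
The same formula holds for $\sigma_L$, so the hypothesis gives equality for every $(A_L)_{L\in\mathbb{N}}\in\SetAdditive$, which is Eq.~\eqref{eq:MacroscopicEquiv}.

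The only real bookkeeping is the error estimates. The primitive subsystems have side lengths $\lfloor L/K\rfloor$ or $\lceil L/K\rceil$, so $|\mathcal{S}^{(k)}_L|/N\to K^{-d}$. The boundary terms in \eqref{eq:Additive_S_approximation} and \eqref{eq:<a>^red=<A_Sigma>/Sigma} are $O(L^{d-1})\|\aLocal\|$, which is $o(N)$, and $\ell$ is fixed while $L\to\infty$. A small technical point is that $\rho^{(k)}_{L|\ell}$ in \eqref{eq:DEF_rho_L^red} averages over all $\bm r\in\mathcal{S}^{(k)}_L$, including those whose cells cross the boundary. This mismatch is exactly the $O(1/L)$ term already accounted for in \eqref{eq:<a>^red=<A_Sigma>/Sigma}.
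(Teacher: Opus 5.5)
Your proof is correct and matches the paper's argument: necessity comes from applying Eq.~\eqref{eq:<a>^red=<A_Sigma>/Sigma} to each primitive subsystem after rescaling by $N/|\mathcal{S}^{(k)}_L|$, and sufficiency from combining that identity with the additivity property \eqref{eq:Additive_S_approximation}. Your explicit treatment of the boundary mismatch in $\rho^{(k)}_{L|\ell}$ and of $|\mathcal{S}^{(k)}_L|/N\to K^{-d}$ fills in details the paper leaves implicit, but the route is the same.
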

\noindent
This means that all macroscopically equivalent representations of a macroscopic state are characterized only by 
a set of spatial averages of $\ell$-local density matrices,
$\rho_{\infty|\ell}^{(k)}$ for $k\in\{1,...,K^d\}$ and $\ell\in\mathbb{N}$.
\begin{proof}[Proof outline]
From Eq.~\eqref{eq:<a>^red=<A_Sigma>/Sigma}, it holds that 
\begin{align}
    \mathrm{Tr}\bigl[\rho_{\infty|\ell}^{(k)}\aLocal\bigr]
    =\lim_{L\to\infty}\mathrm{Tr}\Bigl[\rho_{L}\frac{A_{\mathcal{S}^{(k)}}(\aLocal)}{|\mathcal{S}^{(k)}|}\Bigr]
\end{align}
for all primitive macroscopic subsystems $\mathcal{S}^{(k)}$ and an $\ell$-local observable $\aLocal$ on $\Cell$.
This means that the equality between $\mathrm{Tr}\bigl[\rho_{\infty|\ell}^{(k)}\aLocal\bigr]$ and $\mathrm{Tr}\bigl[\sigma_{\infty|\ell}^{(k)}\aLocal\bigr]$ is equivalent to the equality of the expectation value of $A_{\mathcal{S}^{(k)}}(\aLocal)/|\mathcal{S}^{(k)}|$ in $\rho_L$ and $\sigma_L$. 
Combining this with Eq.~\eqref{eq:Additive_S_approximation}, we can show that Eq.~\eqref{eq:Equiv_rho^red} is equivalent to macroscopic equivalence~\eqref{eq:MacroscopicEquiv}, $(\rho_{L})_{L\in\mathbb{N}}\maceq (\sigma_{L})_{L\in\mathbb{N}}$.
For the details, see Appendix~\ref{sec:Proof_Equiv_rho^k}.
\end{proof}

\subsection{\label{sec:Entropy_MacroUniform}Macroscopic uniformity}

The term ``macroscopically uniform" is often used ambiguously in the literature. We here define it clearly by using the spatial average of $\ell$-local density matrices.

In the above discussion, Eq.~\eqref{eq:<a>^red=<A_Sigma>/Sigma} tells us that the expectation value 
with respect to 
$\rho_{\infty|\ell}^{(k)}$ (the spatial average of $\ell$-local density matrices around $\mathcal{S}^{(k)}$)
corresponds to the coarse graining 
in a primitive macroscopic subsystem $\mathcal{S}^{(k)}$. 
From the viewpoint of macroscopic theory, when such 
coarse-grained values do 
not depend on the choice of $\mathcal{S}^{(k)}$, we should regard the state as macroscopically uniform. Therefore, using $\rho_{\infty|\ell}^{(k)}$, we can formulate \emph{macroscopically uniform states} as follows:
\begin{definition}[Macroscopically uniform state]\label{definition:MacroUniform}
A representation of a macroscopic state $(\rho_{L})_{L\in\mathbb{N}}$ is said to represent a macroscopically uniform state when 
\begin{align}
    \rho_{\infty|\ell}^{(k)}=\rho_{\infty|\ell}^{(k^\prime)}
    \label{eq:MacroUniform}
\end{align}
holds for all $k,k^\prime\in\{1,...,K^d\}$ and for all $\ell\in\mathbb{N}$.
\end{definition}
\noindent
In other words, if $(\rho_L)_{L\in\mathbb{N}}$ represents a macroscopically uniform state, 
the density of the additive observable obtained from an $\ell$-local observable $\aLocal$ on any macroscopic subsystem $(\mathcal{S}_L)_{L\in\mathbb{N}}$ coincides with that on the total system,
\begin{align}
    \lim_{L\to\infty}\mathrm{Tr}\Bigl[\rho_{L}\frac{A_{\mathcal{S}_L}(\aLocal)}{|\mathcal{S}_L|}\Bigr]
    =\lim_{L\to\infty}\mathrm{Tr}\Bigl[\rho_{L}\frac{A_{\Lambda_L}(\aLocal)}{N}\Bigr],
\end{align}
for any $\aLocal$ and $\ell\in\mathbb{N}$.

Furthermore, from Proposition~\ref{proposition:Equiv_rho^red}, we can say that whether a macroscopic state is macroscopically uniform is independent of the choice of its representations in the following sense:
\begin{corollary}[Macroscopic uniformity depends only on a macroscopic state but not on its representation]\label{cor:mac.un.dep.only.on.mcrostate}
For any two macroscopically equivalent representations of a macroscopic state,
$(\rho_{L})_{L\in\mathbb{N}}\maceq (\sigma_{L})_{L\in\mathbb{N}}$,
one of them $(\rho_{L})_{L\in\mathbb{N}}$ represents a macroscopically uniform state
if and only if 
the other $(\sigma_{L})_{L\in\mathbb{N}}$ represents a macroscopically uniform state.
\end{corollary}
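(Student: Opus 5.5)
The proof is a direct application of Proposition~\ref{proposition:Equiv_rho^red} combined with the definition of macroscopic uniformity, Definition~\ref{definition:MacroUniform}. Both properties are phrased through the same family of objects $\rho_{\infty|\ell}^{(k)}$. The plan is to show that this family is identical for any two macroscopically equivalent representations. Uniformity is a condition only on that family, so it then transfers automatically.

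Concretely, assume $(\rho_{L})_{L\in\mathbb{N}}\maceq(\sigma_{L})_{L\in\mathbb{N}}$. Both sequences represent macroscopic states, since that is part of the hypothesis of Definition~\ref{definition:MacroEquiv}. Hence the limits $\rho_{\infty|\ell}^{(k)}$ and $\sigma_{\infty|\ell}^{(k)}$ exist by the well-definedness argument following Definition~\ref{def:sp.av.ell.dens.mat}.

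By the ``only if'' direction of Proposition~\ref{proposition:Equiv_rho^red}, $\rho_{\infty|\ell}^{(k)}=\sigma_{\infty|\ell}^{(k)}$ for every $k\in\{1,\dots,K^d\}$ and every $\ell\in\mathbb{N}$.

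Now suppose $(\rho_L)_{L\in\mathbb{N}}$ is macroscopically uniform. Then for any $k,k'$ and $\ell$ we have the chain
\begin{align}
\sigma_{\infty|\ell}^{(k)}=\rho_{\infty|\ell}^{(k)}=\rho_{\infty|\ell}^{(k')}=\sigma_{\infty|\ell}^{(k')},
\end{align}
which is exactly Eq.~\eqref{eq:MacroUniform} for $(\sigma_L)_{L\in\mathbb{N}}$. The converse follows by exchanging the roles of $\rho$ and $\sigma$, using the symmetry of the equivalence relation $\maceq$.

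There is no real obstacle here; all the substantive work sits in Proposition~\ref{proposition:Equiv_rho^red}. The only points needing care are bookkeeping. First, both sequences must be macroscopic states so that the limits in Eq.~\eqref{eq:DEF_rho_infty^red} exist, and this is guaranteed by the hypotheses. Second, the partition into primitive subsystems, with the same $K$ and labels $k$, must be the same for both sequences. It is, because the partition is fixed independently of the state.
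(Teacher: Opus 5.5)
Your proposal is correct and is the paper's argument: the corollary follows directly from Proposition~\ref{proposition:Equiv_rho^red}. That proposition gives $\rho_{\infty|\ell}^{(k)}=\sigma_{\infty|\ell}^{(k)}$ for all $k$ and $\ell$, so the uniformity condition Eq.~\eqref{eq:MacroUniform} transfers between the two representations exactly as in your chain of equalities.
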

\noindent
In particular, since the canonical Gibbs state $\bigl(\rho_{L}^{\mathrm{can}}(\beta|H_L)\bigr)_{L\in\mathbb{N}}$ of a system described by 
a translation invariant Hamiltonian $H_L$ is translation invariant [in the sense of Eq.~\eqref{eq:rho_TranslationInv} in Sec.~\ref{sec:Entropy_LocalUniform}],
\emph{any representation of iMATE, Definition~\ref{definition:MacroEqState}, represents a macroscopically uniform state}:
\begin{corollary}[\label{corollary:MacroscopicallyUniform_iMATE}Macroscopic uniformity of iMATE]
Any representation of an iMATE represents a macroscopically uniform state.
\end{corollary}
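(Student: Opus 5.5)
The plan is to combine Corollary~\ref{cor:mac.un.dep.only.on.mcrostate} with the translation invariance of the Gibbs state. Let $(\rho_L)_{L\in\mathbb{N}}$ represent iMATE of the system described by $H_L$ at inverse temperature $\beta$. By Definition~\ref{definition:MacroEqState}, $(\rho_L)_{L\in\mathbb{N}}\maceq\bigl(\rho_{L}^{\mathrm{can}}(\beta|H_L)\bigr)_{L\in\mathbb{N}}$, and by Assumption~\ref{assumption:GibbsState} the Gibbs sequence represents a macroscopic state. So by Corollary~\ref{cor:mac.un.dep.only.on.mcrostate}, it suffices to show that the Gibbs sequence itself represents a macroscopically uniform state in the sense of Definition~\ref{definition:MacroUniform}.

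To do this I would first establish translation invariance. Since $H_L=\sum_{\bm{r}}h_{\bm{r}}$ with periodic boundary conditions, we have $T_{\bm{r}}H_LT_{\bm{r}}^\dagger=H_L$ for every $\bm{r}\in\Lambda_L$. Hence $T_{\bm{r}}^\dagger\rho_{L}^{\mathrm{can}}T_{\bm{r}}=\rho_{L}^{\mathrm{can}}$, because $T_{\bm{r}}$ is unitary and the exponential is a function of $H_L$.

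Plugging this into Eq.~\eqref{eq:ReducedState_rho^r} shows that the $\ell$-local density matrix $\rho_{L|\ell}^{\bm{r}}=\mathrm{Tr}_{\Lambda_L\setminus\Cell}[\rho_{L}^{\mathrm{can}}]$ is independent of $\bm{r}$. The spatial average \eqref{eq:DEF_rho_L^red} over $\mathcal{S}^{(k)}$ is then exactly this common matrix, for every $k$ and every finite $L$ (with $L$ large enough that $\Cell$ fits in $\Lambda_L$). Taking $L\to\infty$, whose existence is guaranteed by the well-definedness argument after Definition~\ref{def:sp.av.ell.dens.mat}, gives $\rho_{\infty|\ell}^{(k)}=\rho_{\infty|\ell}^{(k')}$ for all $k,k'$ and all $\ell$. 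This is Eq.~\eqref{eq:MacroUniform}.

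There is no real obstacle. The only point needing care is that the translation $T_{\bm{r}}$ is well defined on the torus and that the partial trace in \eqref{eq:ReducedState_rho^r} commutes with the identification of $\Cell[\bm{r}]$ with $\Cell$. Both are immediate from the definitions in Sec.~\ref{sec:Setup_Lattice}.
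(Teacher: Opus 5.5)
Your proof is correct and follows the paper's own argument. You reduce to the canonical Gibbs sequence via Corollary~\ref{cor:mac.un.dep.only.on.mcrostate} (with Assumption~\ref{assumption:GibbsState} guaranteeing it is a macroscopic state) and then use its translation invariance \eqref{eq:rho_TranslationInv}; the paper routes this through local uniformity (translation invariant $\Rightarrow$ locally uniform $\Rightarrow$ macroscopically uniform), whereas you read off the $k$-independence of $\rho_{L|\ell}^{(k)}$ directly, which is a purely cosmetic difference.
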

\noindent
This is consistent with the fact that we exclude first-order phase transition points.
We also note that iMATE needs not to be locally uniform (defined below), unlike MITE.
See Table~\ref{tbl:Comparison_Equilibrium} for a quick comparison.

\subsection{\label{sec:Entropy_LocalUniform}Local uniformity}

We also introduce a sense of uniformity stronger than the macroscopic uniformity.
We call it \emph{local uniformity}:
\begin{definition}[\label{definition:LocallyUniform}Locally uniform]
A sequence of quantum states $(\rho_L)_{L\in\mathbb{N}}$ is said to represent a locally uniform macroscopic state if it satisfies the following two conditions:
(i) For any positive integer $\ell\in\mathbb{N}$, there exists the thermodynamic limit of the $\ell$-local density matrix around $\boldsymbol{r}=\boldsymbol{0}$,
\begin{align}
    \lim_{L\to\infty}\rho_{L|\ell}^{\boldsymbol{r}=\boldsymbol{0}}.
    \label{eq:ExistLocalState}
\end{align}
(ii) For any positive integer $\ell\in\mathbb{N}$, the $\ell$-local density matrix around any site $\boldsymbol{r}$ becomes the same in the thermodynamic limit,
\begin{align}
    \limsup_{L\to\infty}\max_{\boldsymbol{r},\boldsymbol{r}^\prime\in\Lambda_L}\|\rho_{L|\ell}^{\boldsymbol{r}}-\rho_{L|\ell}^{\boldsymbol{r}^\prime}\|_{\mathrm{tr}}=0\quad \text{ for all }\ell\in\mathbb{N},
    \label{eq:LocallyUniform}
\end{align}
where $\|\bullet\|_{\mathrm{tr}}$ is the trace norm.
\end{definition}
\noindent 
Importantly, the class of locally uniform macroscopic states is narrower than the class of macroscopically uniform states.
In other words, if a sequence of quantum states $(\rho_L)_{L\in\mathbb{N}}$ represents a locally uniform macroscopic state, then it represents a macroscopically uniform state. 

It is obvious that if a sequence of quantum states $(\rho_L)_{L\in\mathbb{N}}$ 
representing a macroscopic state is translation invariant in a microscopic scale,
i.e., if
\begin{align}
    T_{\boldsymbol{r}}\rho_{L}T_{\boldsymbol{r}}^{\dagger}=\rho_{L}\quad \text{ for all }\boldsymbol{r}\in\Lambda_{L},
    \label{eq:rho_TranslationInv}
\end{align}
then it represents a locally uniform macroscopic state.
However, the converse is not necessarily true, 
as exemplified by 
typical state vectors on an energy shell~\cite{Goldstein2006,Popescu2006,Sugita2006} and thermal pure quantum (TPQ) states~\cite{Sugiura2012,Sugiura2013}.
Note that 
locally uniform iMATE is equivalent to MITE in  Definition~\ref{definition:MITE}.

\section{\label{sec:Entropy_s^mac}Quantum macroscopic entropy density}

In this section, 
we introduce an entropy density 
that takes the same value for any macroscopically equivalent states.
In particular, 
we will show in the next section
that it agrees with the thermodynamic entropy density
for any macroscopic states representing iMATE.
Given these significant features, 
we call it the \emph{quantum macroscopic entropy density}.

\subsection{Definition and basic property}

\begin{definition}[\label{definition:MacroEntropy}Quantum macroscopic entropy density]
For any representation of a macroscopic state $(\rho_{L})_{L\in\mathbb{N}}$, the quantum macroscopic entropy density is defined by
\begin{align}
    s^{\mathrm{mac}}_{\ell}[(\rho_{L})_{L\in\mathbb{N}}]
    &:=\frac{1}{K^d}\sum_{k=1}^{K^d}
    \frac{S_{\mathrm{vN}}[\rho_{\infty|\ell}^{(k)}]}{\ell^d}
    \label{eq:DEF_s^red_ell}\\
    &=\lim_{L\to\infty}\frac{1}{K^d}\sum_{k=1}^{K^d}
    \frac{S_{\mathrm{vN}}[\rho_{L|\ell}^{(k)}]}{\ell^d}
\end{align}
for any $\ell\in\mathbb{N}$.

\end{definition}
\noindent
In Eq.~\eqref{eq:DEF_s^red_ell}, 
the term 
$S_{\mathrm{vN}}[\rho_{\infty|\ell}^{(k)}]/\ell^d$ 
is the average of the 
von Neumann entropy density of the average of the $\ell$-local density matrices
over the primitive macroscopic subsystem~$\mathcal{S}^{(k)}$, 
and $s^{\mathrm{mac}}_{\ell}[(\rho_{L})_{L\in\mathbb{N}}]$ 
is its average
over the whole system.
Here, recall that the whole system contains $K^d$ primitive macroscopic subsystems with almost equal sizes.

Since $s^{\mathrm{mac}}_{\ell}[\bullet]$ 
depends only on the spatial average of 
$\rho_{\infty|\ell}^{(k)}$,
Proposition~\ref{proposition:Equiv_rho^red} immediately gives the following:
\begin{corollary}[Equivalent macroscopic states have the same value of $s^{\mathrm{mac}}_{\ell}$]\label{corollary:Equiv_s^red}
For two representations of macroscopic states $(\rho_{L})_{L\in\mathbb{N}}$ and $(\sigma_{L})_{L\in\mathbb{N}}$, it holds that 
\begin{align}
    &(\rho_{L})_{L\in\mathbb{N}}\maceq (\sigma_{L})_{L\in\mathbb{N}}
    \notag\\
    &\quad\Rightarrow\quad
    s^{\mathrm{mac}}_{\ell}[(\rho_{L})_{L\in\mathbb{N}}]=s^{\mathrm{mac}}_{\ell}[(\sigma_{L})_{L\in\mathbb{N}}]\quad \text{for all }\ell\in\mathbb{N}.
    \label{eq:Equiv_s^red}
\end{align}
\end{corollary}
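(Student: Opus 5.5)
The plan is to reduce the statement directly to Proposition~\ref{proposition:Equiv_rho^red}. Definition~\ref{definition:MacroEntropy} shows that $s^{\mathrm{mac}}_{\ell}[(\rho_{L})_{L\in\mathbb{N}}]$ is a fixed function of the finite collection of matrices $\rho_{\infty|\ell}^{(1)},\dots,\rho_{\infty|\ell}^{(K^d)}$. Concretely, it is the average over $k$ of $S_{\mathrm{vN}}[\rho_{\infty|\ell}^{(k)}]/\ell^d$. It therefore suffices to show that macroscopic equivalence forces these matrices to coincide for $\rho$ and $\sigma$.

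First, I fix $\ell\in\mathbb{N}$. I then note that, since both sequences represent macroscopic states, the limits $\rho_{\infty|\ell}^{(k)}$ and $\sigma_{\infty|\ell}^{(k)}$ exist by the well-definedness argument following Definition~\ref{def:sp.av.ell.dens.mat}. Hence the quantity in \eqref{eq:DEF_s^red_ell} is defined for both sequences.

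Second, I assume $(\rho_{L})_{L\in\mathbb{N}}\maceq(\sigma_{L})_{L\in\mathbb{N}}$. The ``only if'' direction of Proposition~\ref{proposition:Equiv_rho^red} then gives $\rho_{\infty|\ell}^{(k)}=\sigma_{\infty|\ell}^{(k)}$ for every $k\in\{1,\dots,K^d\}$ and every $\ell$. Substituting into \eqref{eq:DEF_s^red_ell} termwise yields equal von Neumann entropies, hence equal averages, which proves \eqref{eq:Equiv_s^red} for all $\ell$.

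There is one point to check. Definition~\ref{definition:MacroEntropy} also writes $s^{\mathrm{mac}}_\ell$ as $\lim_{L\to\infty}$ of the finite-$L$ entropies. If one wanted to use that form, one would need continuity of $S_{\mathrm{vN}}$ on the fixed finite-dimensional space of density matrices on $\Cell$, of dimension $D^{\ell^d}$. This follows, for example, from the Fannes--Audenaert inequality, which lets the limit pass inside the entropy. Working directly with the first expression \eqref{eq:DEF_s^red_ell} avoids this step entirely. So there is no real obstacle: the substantive content already lies in Proposition~\ref{proposition:Equiv_rho^red}, and the corollary is a one-line substitution.
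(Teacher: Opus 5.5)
Your proposal is correct and follows the paper's argument: $s^{\mathrm{mac}}_{\ell}$ depends only on the matrices $\rho_{\infty|\ell}^{(k)}$, and Proposition~\ref{proposition:Equiv_rho^red} shows that macroscopic equivalence makes these matrices coincide. Your side remark on continuity of $S_{\mathrm{vN}}$ is not needed if you work with \eqref{eq:DEF_s^red_ell} directly, as you note yourself.
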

\noindent
That is, $s^{\mathrm{mac}}_{\ell}[\bullet]$
takes a unique value for all macroscopically equivalent states.
Such a strong universality has been obtained by taking the spatial average and the limit $L\to\infty$ in the definition of $\rho_{\infty|\ell}^{(k)}$.

Since $s^{\mathrm{mac}}_{\ell}[\bullet]$ depends only on a macroscopic state,
it is independent of
other information, such as the Hamiltonian of the system.
Therefore, 
when two systems described by different Hamiltonians $H_{1}$ and $H_{2}$ are in the same quantum state $\rho_{L}$ (which does not represent thermal equilibrium in at least one of the systems), the quantum macroscopic entropy density for these two cases coincides. By contrast, other forms of entropy often depend on the details of the Hamiltonian. 
For instance, the Boltzmann entropy at energy coinciding with the energy expectation value in $\rho_{L}$ will generally differ between these two systems described by $H_{1}$ and $H_{2}$.

\subsection{\label{sec:Entropy_s^mac_Uniform}Quantum macroscopic entropy density of macroscopically uniform states}

In this subsection, we explain how the quantum macroscopic entropy density can be simplified when the macroscopic state of interest is macroscopically 
uniform in the sense explained in subsection~\ref{sec:Entropy_MacroUniform}. 

First, we give a useful upper bound for 
$s^{\mathrm{mac}}_{\ell}[\bullet]$.
\begin{corollary}[Upper bound for the quantum macroscopic entropy density]
For any representation of a macroscopic state $(\rho_{L})_{L\in\mathbb{N}}$,
the quantum macroscopic entropy density satisfies
\begin{align}
    &s^{\mathrm{mac}}_{\ell}[(\rho_{L})_{L\in\mathbb{N}}]\le \tilde{s}_{\ell}[(\rho_{L})_{L\in\mathbb{N}}]
    \label{eq:s^red<=tilde_s}
\end{align}
where 
\begin{align}
    &\tilde{s}_{\ell}[(\rho_{L})_{L\in\mathbb{N}}]:=
    \frac{S_{\mathrm{vN}}[\rho_{\infty|\ell}^{\mathrm{ave}}]}{\ell^d}
    =\lim_{L\to\infty}
    \frac{S_{\mathrm{vN}}[\rho_{L|\ell}^{\mathrm{ave}}]}{\ell^d},
    \label{eq:tilde_s}
\end{align}
and
$\rho_{\infty|\ell}^{\mathrm{ave}}$ is the density matrix on $\Cell$ defined by 
the $L\to\infty$ of 
\begin{align}
    \rho_{L|\ell}^{\mathrm{ave}}
    &:=\frac{1}{N}\sum_{\boldsymbol{r}\in\Lambda_{L}}\mathrm{Tr}_{\Lambda_{L}\setminus \Cell}\bigl[T_{\boldsymbol{r}}^{\dagger}\rho_{L} T_{\boldsymbol{r}}\bigr]
    \label{eq:DEF_rho_infty^tot}\\
    &=\frac{1}{K^d}\sum_{k=1}^{K^d}
    \rho_{L|\ell}^{(k)}.
    \label{eq:DEF_rho_infty^tot_1}
\end{align}
\end{corollary}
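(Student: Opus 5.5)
The argument rests on the concavity of the von Neumann entropy, together with a check that the two expressions given for $\rho_{L|\ell}^{\mathrm{ave}}$ agree and that its $L\to\infty$ limit exists.

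\textbf{Step 1: the two expressions for $\rho_{L|\ell}^{\mathrm{ave}}$.} The primitive macroscopic subsystems partition $\Lambda_L$. Hence
\begin{align}
\frac{1}{N}\sum_{\bm{r}\in\Lambda_L}\rho_{L|\ell}^{\bm{r}}=\sum_{k=1}^{K^d}\frac{|\mathcal{S}^{(k)}_L|}{N}\,\rho_{L|\ell}^{(k)} .
\end{align}
Each $\mathcal{S}^{(k)}_L$ has side lengths $\lfloor L/K\rfloor$ or $\lceil L/K\rceil$, so $|\mathcal{S}^{(k)}_L|/N = K^{-d}+O(1/L)$. Equation~\eqref{eq:DEF_rho_infty^tot_1} therefore holds up to an $O(1/L)$ correction in trace norm. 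This correction vanishes in the limit, and it is exact whenever $K$ divides $L$. Since each $\rho_{L|\ell}^{(k)}$ converges to $\rho_{\infty|\ell}^{(k)}$ (Definition~\ref{def:sp.av.ell.dens.mat}), the limit $\rho_{\infty|\ell}^{\mathrm{ave}}=K^{-d}\sum_k\rho_{\infty|\ell}^{(k)}$ exists.

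\textbf{Step 2: continuity on a fixed finite-dimensional space.} Each $\rho^{(k)}_{L|\ell}$ and $\rho^{\mathrm{ave}}_{L|\ell}$ acts on the fixed space of dimension $D^{\ell^d}$, which does not depend on $L$. The von Neumann entropy is continuous there (Fannes--Audenaert inequality). This justifies the second equality in Eq.~\eqref{eq:tilde_s}, and likewise the second equality in Definition~\ref{definition:MacroEntropy}, since $S_{\mathrm{vN}}$ can be passed through the limit $L\to\infty$.

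\textbf{Step 3: concavity.} Concavity of $S_{\mathrm{vN}}$, applied to the uniform mixture $\rho_{\infty|\ell}^{\mathrm{ave}}=K^{-d}\sum_k\rho_{\infty|\ell}^{(k)}$, gives
\begin{align}
S_{\mathrm{vN}}[\rho_{\infty|\ell}^{\mathrm{ave}}]\ge \frac{1}{K^d}\sum_{k=1}^{K^d}S_{\mathrm{vN}}[\rho_{\infty|\ell}^{(k)}].
\end{align}
Dividing both sides by $\ell^d$ yields Eq.~\eqref{eq:s^red<=tilde_s}.

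No step is a genuine obstacle. The only point needing care is Step 1, where the subsystem sizes are unequal: the weights $|\mathcal{S}^{(k)}_L|/N$ are not exactly $K^{-d}$. This is why the concavity argument is carried out at the level of the limits $\rho_{\infty|\ell}^{(k)}$, where the weights are exactly uniform.
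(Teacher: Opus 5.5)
Your proposal is correct and follows the paper's proof: the paper also obtains Eq.~\eqref{eq:DEF_rho_infty^tot_1} from the definitions and then applies concavity of the von Neumann entropy to the uniform mixture of the $\rho_{\infty|\ell}^{(k)}$. Your remark that the weights $|\mathcal{S}^{(k)}_L|/N$ equal $K^{-d}$ only up to $O(1/L)$ (exactly when $K\mid L$) is a correct refinement of the paper's ``follows from the definition''; the discrepancy is harmless because concavity is applied at the level of the $L\to\infty$ limits.
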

\begin{proof}
First, Eq.~\eqref{eq:DEF_rho_infty^tot_1} follows from the definition. 
The inequality~\eqref{eq:s^red<=tilde_s} follows from Eq.~\eqref{eq:DEF_rho_infty^tot_1} and the concavity of the von Neumann entropy~\cite{NielsenChuang2010}. 
\end{proof}
\noindent
This upper bound $\tilde{s}_{\ell}[(\rho_{L})_{L\in\mathbb{N}}]$ will be easier to calculate than the quantum macroscopic entropy density because $\tilde{s}_{\ell}[(\rho_{L})_{L\in\mathbb{N}}]$ does not depend on the definition of primitive macroscopic subsystems $\mathcal{S}^{(k)}$.
Roughly speaking, 
$\tilde{s}_{\ell}$ is $S_{\mathrm{vN}}/\ell^d$ of the 
average of $\rho_{L|\ell}^{(k)}$'s over primitive macroscopic subsystems,
whereas 
$s^{\mathrm{mac}}_{\ell}$
is the average of $S_{\mathrm{vN}}[\rho_{L|\ell}^{(k)}]/\ell^d$'s 
over primitive macroscopic subsystems.

Furthermore, the following corollary readily follows from Eqs.~\eqref{eq:MacroUniform} and \eqref{eq:DEF_rho_infty^tot_1} and Corollary~\ref{corollary:MacroscopicallyUniform_iMATE}:
\begin{corollary}[\label{corollary:Entropy_MacroUniform}Quantum macroscopic entropy density of a macroscopically uniform state]
For any representation of a macroscopically uniform state $(\rho_{L})_{L\in\mathbb{N}}$, its quantum macroscopic entropy density coincides with the upper bound~\eqref{eq:tilde_s},
\begin{align}
    &s^{\mathrm{mac}}_{\ell}[(\rho_{L})_{L\in\mathbb{N}}]=\tilde{s}_{\ell}[(\rho_{L})_{L\in\mathbb{N}}].
\end{align}
In particular, any representation of iMATE satisfies the above equality.
\end{corollary}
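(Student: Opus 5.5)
The statement follows almost directly from the definitions, so the plan is to compare the two entropy expressions at the level of the limiting $\ell$-local density matrices. By Definition~\ref{definition:MacroUniform}, a macroscopically uniform $(\rho_L)_{L\in\mathbb{N}}$ satisfies $\rho_{\infty|\ell}^{(k)}=\rho_{\infty|\ell}^{(1)}$ for every $k\in\{1,\dots,K^d\}$ and every $\ell$. I would first pass to the limit $L\to\infty$ in Eq.~\eqref{eq:DEF_rho_infty^tot_1}. This is legitimate because each $\rho_{L|\ell}^{(k)}$ converges (Definition~\ref{def:sp.av.ell.dens.mat}), the sum is finite with $K^d$ terms independent of $L$, and the operators act on a fixed finite-dimensional space $(\mathbb{C}^D)^{\otimes \ell^d}$. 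The limit gives
\begin{align}
\rho_{\infty|\ell}^{\mathrm{ave}}=\frac{1}{K^d}\sum_{k=1}^{K^d}\rho_{\infty|\ell}^{(k)}=\rho_{\infty|\ell}^{(1)}.
\end{align}

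Next I would substitute this into Eq.~\eqref{eq:tilde_s}, which yields $\tilde{s}_\ell=S_{\mathrm{vN}}[\rho_{\infty|\ell}^{(1)}]/\ell^d$. Turning to Eq.~\eqref{eq:DEF_s^red_ell}, all $K^d$ summands are equal to this same value, so their average coincides with $\tilde{s}_\ell$. The same conclusion can be seen another way: the concavity inequality~\eqref{eq:s^red<=tilde_s} is saturated when all the mixed states are identical.

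The only point that deserves care is identifying the two forms in Eq.~\eqref{eq:tilde_s}, that is, $S_{\mathrm{vN}}[\rho^{\mathrm{ave}}_{\infty|\ell}]=\lim_{L\to\infty}S_{\mathrm{vN}}[\rho^{\mathrm{ave}}_{L|\ell}]$, together with the analogous identity for $s^{\mathrm{mac}}_\ell$. Both follow from the continuity of the von Neumann entropy in finite dimension (Fannes' inequality). I expect this to be the only mildly technical step, and it is routine.

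For the final sentence of the statement, I would invoke Corollary~\ref{corollary:MacroscopicallyUniform_iMATE}, which says that every representation of iMATE is macroscopically uniform. The equality just proved then applies to it.
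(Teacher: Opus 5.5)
Your proposal is correct and follows the paper's own argument: the paper also obtains the corollary directly from macroscopic uniformity \eqref{eq:MacroUniform}, the decomposition \eqref{eq:DEF_rho_infty^tot_1} of $\rho^{\mathrm{ave}}_{L|\ell}$ into the $\rho^{(k)}_{L|\ell}$, and Corollary~\ref{corollary:MacroscopicallyUniform_iMATE} for the iMATE case. Taking $L\to\infty$ first, as you do, also absorbs the $O(1/L)$ error in \eqref{eq:DEF_rho_infty^tot_1}, which appears when $K$ does not divide $L$ and the primitive subsystems therefore have slightly unequal sizes.
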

\noindent
This corollary states that if $(\rho_{L})_{L\in\mathbb{N}}$ represents a macroscopically uniform state, 
the quantum macroscopic entropy density can be calculated from the simpler quantity $\tilde{s}_{\ell}[(\rho_{L})_{L\in\mathbb{N}}]$.
In addition, the corollary also states that the quantum macroscopic entropy density of $(\rho_{L})_{L\in\mathbb{N}}$ representing iMATE can be calculated from $\tilde{s}_{\ell}[(\rho_{L})_{L\in\mathbb{N}}]$.

\subsection{\label{sec:Entropy_s^mac_localUniform}Quantum macroscopic entropy density of locally uniform macroscopic states}

If $(\rho_{L})_{L\in\mathbb{N}}$ is locally uniform
in the sense explained in subsection~\ref{sec:Entropy_LocalUniform},
then $\rho_{L|\ell}^{(k)}=\rho_{L|\ell}^{\boldsymbol{r}=\boldsymbol{0}}$ follows from
Definition~\ref{definition:LocallyUniform}.
Combining this with the continuity of the von Neumann entropy~\cite{NielsenChuang2010}, 
the quantum macroscopic entropy density can be calculated without spatial average as follows:
\begin{corollary}[\label{corollary:Entropy_LocallyUniform}Quantum macroscopic entropy density of a locally uniform macroscopic state]
For any representation of 
a locally uniform macroscopic state (see Definition~\ref{definition:LocallyUniform}),
the quantum macroscopic entropy density satisfies
\begin{align}
    s^{\mathrm{mac}}_{\ell}[(\rho_{L})_{L\in\mathbb{N}}]=
    \lim_{L\to\infty}\frac{S_{\mathrm{vN}}\bigl[
    \rho_{L|\ell}^{\boldsymbol{r}=\boldsymbol{0}}
    \bigr]}{\ell^d}.
    \label{eq:s^red_TransInv}
\end{align}
\end{corollary}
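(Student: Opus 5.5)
The plan is to show that, for each fixed $\ell$ and each $k$, the spatial average $\rho_{L|\ell}^{(k)}$ becomes close in trace norm to the single local density matrix $\rho_{L|\ell}^{\boldsymbol{r}=\boldsymbol{0}}$. Both sides of Eq.~\eqref{eq:s^red_TransInv} then agree by continuity of the von Neumann entropy on the fixed finite-dimensional space over $\Cell$.

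First, fix $\ell$ and write $R_L:=\rho_{L|\ell}^{\boldsymbol{r}=\boldsymbol{0}}$. Condition (i) of Definition~\ref{definition:LocallyUniform} gives a limit $R_\infty:=\lim_{L\to\infty}R_L$. Next, use the triangle inequality on the definition~\eqref{eq:DEF_rho_L^red}:
\begin{align}
    \bigl\|\rho_{L|\ell}^{(k)}-R_L\bigr\|_{\mathrm{tr}}
    \le\frac{1}{|\mathcal{S}^{(k)}|}\sum_{\boldsymbol{r}\in\mathcal{S}^{(k)}}\bigl\|\rho_{L|\ell}^{\boldsymbol{r}}-\rho_{L|\ell}^{\boldsymbol{0}}\bigr\|_{\mathrm{tr}}
    \le\max_{\boldsymbol{r},\boldsymbol{r}'\in\Lambda_L}\bigl\|\rho_{L|\ell}^{\boldsymbol{r}}-\rho_{L|\ell}^{\boldsymbol{r}'}\bigr\|_{\mathrm{tr}}.
\end{align}
By condition (ii), Eq.~\eqref{eq:LocallyUniform}, the right-hand side tends to zero. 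Hence $\rho_{\infty|\ell}^{(k)}=\lim_{L\to\infty}\rho_{L|\ell}^{(k)}=R_\infty$ for every $k$. In particular the limits in Definition~\ref{def:sp.av.ell.dens.mat} exist and coincide, which is also consistent with the remark that locally uniform states are macroscopically uniform.

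Finally, all the matrices act on the $D^{\ell^d}$-dimensional space over $\Cell$, with $\ell$ fixed. The von Neumann entropy is therefore uniformly continuous there; the Fannes--Audenaert inequality gives a modulus depending only on the trace distance and on $\ell^d\log D$. This yields $S_{\mathrm{vN}}[\rho_{\infty|\ell}^{(k)}]=S_{\mathrm{vN}}[R_\infty]=\lim_{L\to\infty}S_{\mathrm{vN}}[R_L]$. Substituting into Definition~\ref{definition:MacroEntropy}, the average over $k$ of $K^d$ identical terms is just $S_{\mathrm{vN}}[R_\infty]/\ell^d$, which is Eq.~\eqref{eq:s^red_TransInv}.

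The only step needing care is the exchange of the limit $L\to\infty$ with the entropy. It is harmless precisely because $\ell$, and hence the dimension, is held fixed before $L\to\infty$. The continuity modulus is therefore $L$-independent, and no uniformity in $\ell$ is required for this corollary.
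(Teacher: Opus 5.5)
Your proposal is correct and follows the same route as the paper. Local uniformity forces $\rho_{L|\ell}^{(k)}$ to agree with $\rho_{L|\ell}^{\boldsymbol{r}=\boldsymbol{0}}$ as $L\to\infty$, and continuity of the von Neumann entropy on the fixed $D^{\ell^d}$-dimensional space over $C^{\ell}$ transfers this to the entropies; your triangle-inequality bound via condition (ii) makes rigorous what the paper states loosely as the exact identity $\rho_{L|\ell}^{(k)}=\rho_{L|\ell}^{\boldsymbol{r}=\boldsymbol{0}}$, which in fact holds only in the limit.
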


Now we explain the relation between the quantum macroscopic entropy density and a well-studied quantity.
When distinguishing whether energy eigenstates are thermal (which means in thermal equilibrium) or not, many existing studies investigate entanglement entropy~\cite{Turner2018a,Moudgalya2018a}. In such studies, an energy eigenstate is often regarded as nonthermal
if its entanglement entropy between a subsystem of side length~$\ell$ and the rest deviates from the thermodynamic entropy for $1\ll \ell \ll L$. From the formula~\eqref{eq:s^red_TransInv}, we can see that, when $(\rho_{L})_{L\in\mathbb{N}}$ is a sequence of translation-invariant energy eigenstates, its quantum macroscopic entropy density coincides with its entanglement entropy 
density in an intermediate size subsystem, $1\ll\ell\ll L$.
Therefore, 
for such states
we can justify the above-mentioned usage of entanglement entropy by showing that the quantum macroscopic entropy density of iMATE coincides with thermodynamic entropy, which will be given later as Theorem~\ref{theorem:s^red=s^TD}.

Although the half-chain entanglement entropy is also well used to characterize the thermality of given pure quantum states, our results do not guarantee the correctness of such a usage. In fact, we can find examples of representations of iMATE whose half-chain entanglement entropy does not agree with thermodynamic entropy
even for locally uniform macroscopic states
while the quantum macroscopic entropy density does:
\begin{example}[Thermal pure quantum state with zero half-chain entanglement]\label{example:HalfChainEntanglement}
For simplicity, we here consider one-dimensional spin-$1/2$ systems with an even number of sites, $d=1$, $D=2$, and even $L$.
Let $\{M_{L}\}_{L\in\mathbb{N}}$ be a sequence of integers that diverges as $L\to\infty$ with the rate of $o(L)$, e.g., $M_{L}=\lfloor\sqrt{L}\rfloor$.
Take $\rho_L = \ket{\psi_L}\bra{\psi_L}$, where
\begin{align}
    \ket{\psi_L} &= \bigotimes_{j=1}^{L/2} \ket{\Phi}_{2j-1-2M_L,\ 2j+2M_L}.
\end{align}
Here, 
$\ket{\Phi}_{r,r'}$ is one of the Bell states on sites $r$ and $r'$,
such as $\ket{\Phi}_{r,r'} = \frac{1}{\sqrt{2}} (\ket{0}_r \ket{1}_{r'} -\ket{1}_r \ket{0}_{r'})$, 
and the addition of sites, $2j-1-2M_L$ and $2j+2M_L$, is defined modulo $L$.

The reduced density matrix of $\rho_L$ on any subset with a diameter less than $4M_{L}$ is maximally mixed, and hence $\rho_{\infty|\ell}^{(k)}=(I/2)^{\otimes \Cell}$ holds because of $\ell \ll 4M_{L}$.
From Proposition~\ref{proposition:Equiv_rho^red}, this implies that $(\rho_L)_{L\in\mathbb{N}}$
represents
iMATE at infinite temperature $\beta=0$ (for any Hamiltonian).
In addition, 
it is a locally uniform macroscopic state, and 
its quantum macroscopic entropy density is given by $\log 2$, which coincides with thermodynamic entropy density.
On the other hand, the half-chain entanglement entropy density of $\rho_L$ is given by $(8M_{L}/L)\log 2$, which becomes zero as $L\to\infty$ and does not agree with thermodynamic entropy density.
\end{example}
\noindent
This example and the above discussion show that
for locally uniform macroscopic states
the entanglement entropy for $1\ll \ell\ll L$ can detect nonthermality of a pure state, while the half-chain entanglement cannot in general.

\subsection{\label{sec:measurement_s^macro}Quantum macroscopic entropy density is measurable with
macroscopic measurements}

Interestingly,
the quantum macroscopic entropy density 
can be obtained by measuring additive observables
(without measuring other observables)
as follows:
\begin{corollary}[\label{corollary:Measurement_s^mac}Measurement of 
quantum macroscopic entropy density]
Let $D$ be the dimension of the local Hilbert space on each site and $\{\ket{j}_{\boldsymbol{r}}\}_{j=0}^{D-1}$ be an orthonormal basis on each site $\boldsymbol{r}$.
Given an arbitrary positive integer $\ell\in\mathbb{N}$.
The basis vector on $\Cell$ is written as $\ket{\boldsymbol{j}}_{\Cell}:=\bigotimes_{s=1}^{\ell^d}\ket{j_{s}}_{\boldsymbol{r}_s}$, where $j_s\in\{1,...,D\}$, $\boldsymbol{j}=(j_1,...,j_{\ell^d})$, and the symbol $s$ labels the sites in $\Cell$ as $\Cell=\{\boldsymbol{r}_s\}_{s=1}^{\ell^d}$.

Prepare the system in a state $\rho_L$ whose sequence $(\rho_L)_{L\in\mathbb{N}}$ represents a macroscopic state. Measure the additive observable on a primitive macroscopic subsystem $\mathcal{S}^{(k)}$ obtained from an $\ell$-local observable $\ket{\bm{j}}\bra{\bm{j}^\prime}$, defined by Eq.~\eqref{eq:Additive_S},
\begin{align}
    A_{\mathcal{S}^{(k)}}(\ket{\bm{j}}\bra{\bm{j}^\prime})=\sum_{\bm{r}\text{ s.t. }\Cell+\bm{r}\subset \mathcal{S}^{(k)}}
    \ket{\boldsymbol{j}}_{\Cell+\boldsymbol{r}}
    \bra{\boldsymbol{j}^\prime}_{\Cell+\boldsymbol{r}}
\end{align}
where $\ket{\boldsymbol{j}}_{\Cell+\boldsymbol{r}}=\bigotimes_{s=1}^{\ell^d}\ket{j_{s}}_{\boldsymbol{r}_s+\bm{r}}$.
Repeating this measurement many times and taking the sample average, we obtain the expectation value of the density $A_{\mathcal{S}^{(k)}}(\ket{\bm{j}}\bra{\bm{j}^\prime})/|\mathcal{S}^{(k)}|$.
Here, note that if $(\rho_L)_{L\in\mathbb{N}}$ represents a normal macroscopic state, a single measurement suffices to evaluate the expectation value in a macroscopically negligible precision.
Suppose that we can 
reasonably extrapolate the sample average to $L\to\infty$.
Then, we obtain a matrix element of $\rho_{\infty|\ell}^{(k)}$ through Eq.~\eqref{eq:<a>^red=<A_Sigma>/Sigma},
\begin{align}
    \lim_{L\to\infty}\mathrm{Tr}\Bigl[\rho_L \frac{A_{\mathcal{S}^{(k)}}(\ket{\bm{j}}\bra{\bm{j}^\prime})}{|\mathcal{S}^{(k)}|}\Bigr]
    =\bra{\boldsymbol{j}^\prime}_{\Cell}
    \rho_{\infty|\ell}^{(k)}
    \ket{\boldsymbol{j}}_{\Cell}.
    \label{eq:Measurement_s^mac_Add=rho}
\end{align}
By performing this measurement for all $\bm{j},\bm{j}^\prime$, we obtain $\rho_{\infty|\ell}^{(k)}$ for each $k\in\{1,...,K^d\}$. 
By substituting the obtained $\rho_{\infty|\ell}^{(k)}$'s into Eq.~\eqref{eq:DEF_s^red_ell}, we have $s^{\mathrm{mac}}_{\ell}[(\rho_{L})_{L\in\mathbb{N}}]$.
\end{corollary}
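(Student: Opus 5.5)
The plan is to verify the two claims of the corollary in sequence. First, the measured quantity converges to the asserted matrix element, Eq.~\eqref{eq:Measurement_s^mac_Add=rho}. Second, knowing these matrix elements for all $\bm{j},\bm{j}^\prime$ and all $k$ determines $s^{\mathrm{mac}}_\ell$. A side remark concerns single-shot sufficiency for normal macroscopic states.

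For the first step, I would apply Eq.~\eqref{eq:<a>^red=<A_Sigma>/Sigma} with $\aLocal=\ket{\bm{j}}\bra{\bm{j}^\prime}$ supported on $\Cell$. This operator is not Hermitian, so I would write it as $X+iY$ with $X,Y$ Hermitian $\ell$-local observables, $X=(\aLocal+\aLocal^\dagger)/2$ and $Y=(\aLocal-\aLocal^\dagger)/(2i)$. I would apply Definition~\ref{definition:MacroState} to each part via the proper sequences $A_{\mathcal{S}^{(k)}}(X)$ and $A_{\mathcal{S}^{(k)}}(Y)$, and then use linearity. The ratio $|\mathcal{S}^{(k)}|/N\to K^{-d}$ converts the $1/N$ normalization of Definition~\ref{definition:MacroState} into the $1/|\mathcal{S}^{(k)}|$ normalization. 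Combining this with the $O(1/L)$ boundary estimate, the limit exists and equals $\mathrm{Tr}[\rho_{\infty|\ell}^{(k)}\ket{\bm{j}}\bra{\bm{j}^\prime}]=\bra{\bm{j}^\prime}\rho_{\infty|\ell}^{(k)}\ket{\bm{j}}$, using the well-definedness of $\rho_{\infty|\ell}^{(k)}$ established after Definition~\ref{def:sp.av.ell.dens.mat}. Operationally, "measuring" the non-Hermitian $\ket{\bm{j}}\bra{\bm{j}^\prime}$ means measuring the Hermitian additive observables $A_{\mathcal{S}^{(k)}}(X)$ and $A_{\mathcal{S}^{(k)}}(Y)$ and recombining their sample averages.

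For the second step, the operators $\ket{\bm{j}}\bra{\bm{j}^\prime}$ with $\bm{j},\bm{j}^\prime$ ranging over the $D^{\ell^d}$ basis labels form a basis of operators on $\Cell$. Hence the numbers above are exactly all matrix elements of $\rho_{\infty|\ell}^{(k)}$, and the matrix is reconstructed completely. Since $S_{\mathrm{vN}}$ is a function of the matrix alone, substituting into Eq.~\eqref{eq:DEF_s^red_ell} yields $s^{\mathrm{mac}}_\ell$. This step is purely definitional.

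For the normality remark, Eq.~\eqref{eq:MacroscopicallyNormal} applied to $X$ and $Y$ gives a vanishing variance of each density. Chebyshev's inequality then shows that a single outcome lies within $o(1)$ of the mean with probability tending to one. The main subtlety I expect is handling the non-Hermitian local operator and the extrapolation to $L\to\infty$. The extrapolation is stated as an assumption in the corollary, so it requires no proof.
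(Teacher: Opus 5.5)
Your proposal is correct and follows the paper's reasoning. You apply Eq.~\eqref{eq:<a>^red=<A_Sigma>/Sigma} with $\aLocal=\ket{\bm{j}}\bra{\bm{j}^\prime}$, pass to $L\to\infty$ using the existence of $\rho_{\infty|\ell}^{(k)}$, and note that these operators form a basis on $\Cell$, so the full matrix, and hence $s^{\mathrm{mac}}_{\ell}$ via Eq.~\eqref{eq:DEF_s^red_ell}, is recovered. Your splitting of the non-Hermitian $\ket{\bm{j}}\bra{\bm{j}^\prime}$ into Hermitian parts and your Chebyshev argument for single-shot sufficiency are valid elaborations of points the paper leaves implicit; the single-shot point is handled in its appendix on measurement deviations.
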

\noindent
We give several comments on this result: first, 
this measurement protocol 
might look 
similar to quantum state tomography; however, 
only the additive observables need to be measured here,
and their number  is independent of $L$ (while it increases with $\ell$).
Second, this protocol enables us to measure not only quantum macroscopic entropy density, $s^{\mathrm{mac}}_{\ell}[(\rho_{L})_{L\in\mathbb{N}}]$, but also the spatial average of $\ell$-local density matrices, $\rho_{\infty|\ell}^{(k)}$. In this sense, $\rho_{\infty|\ell}^{(k)}$ and $s^{\mathrm{mac}}_{\ell}[(\rho_{L})_{L\in\mathbb{N}}]$ are macroscopic quantities that can be obtained directly by macroscopic measurements.
Third, even 
in a practical experimental situation where
the states prepared for repeated measurements, $\rho_L^1$, $\rho_L^2$, $\rho_L^3,...$, are not the same each time, $\rho_L^i\neq \rho_L^{i^\prime}$, the above measurement protocol works, as long as all sequences of these states represent the same macroscopic state, $(\rho_L^i)_{L\in\mathbb{N}}\maceq (\rho_L^{i^\prime})_{L\in\mathbb{N}}$. 
This is because, for all $i$, the expectation value of $A_{\mathcal{S}^{(k)}}(\ket{\bm{j}}\bra{\bm{j}^\prime})/|\mathcal{S}^{(k)}|$ in $\rho_{L}^{i}$ is the same, and hence, the law of large numbers applies to the sample average with a large sample size.

In Corollary~\ref{corollary:Measurement_s^mac}, we have assumed that 
the expectation value of the density of additive observables can be reasonably extrapolated to $L\to\infty$. 
However, the measurement outcomes inevitably deviate from the $L\to\infty$ limit due to fluctuations of additive observables, measurement errors, and finite-size effects. 
For completeness, we will show in Appendix~\ref{sec:MeasurementDeviation} that
such deviations do not affect the value of the quantum macroscopic entropy density.

\section{Quantum macroscopic entropy density of thermal equilibrium}
\label{sec:Entropy_s^mac_eq}

In this section, we show that 
the quantum macroscopic entropy density agrees 
with the thermodynamic entropy density for all representations of iMATE.
We also discuss methods of calculating it and measuring it,
and present a numerical demonstration.

\subsection{\label{sec:agree_with_tde}Agreement with thermodynamic entropy density}

We have shown that the quantum macroscopic entropy density
takes the same value among equivalent macroscopic states, irrespective
of their quantum-mechanical representations. This means 
that it extracts only a macroscopic property from the density matrices.
Hence, we expect that it should capture some thermodynamic property
when macroscopic states represent thermal equilibrium states.
By employing iMATE for thermal equilibrium states,
we can show that this is indeed the case.
That is, the quantum macroscopic entropy density agrees 
with the thermodynamic entropy density for all representations of iMATE:

\begin{theorem}[quantum macroscopic entropy density of iMATE]\label{theorem:s^red=s^TD}
For any representation of an iMATE, $(\rho_{L})_{L\in\mathbb{N}}$, 
there exists the $\ell\to\infty$ limit of $s^{\mathrm{mac}}_{\ell}[(\rho_{L})_{L\in\mathbb{N}}]$, and it agrees 
with the thermodynamic entropy density $s^{\mathrm{TD}}(\beta|H)$, 
\begin{align}
    \lim_{\ell\to\infty}s^{\mathrm{mac}}_{\ell}[(\rho_{L})_{L\in\mathbb{N}}]=s^{\mathrm{TD}}(\beta|H),
    \label{eq:s^red_ell=s^TD}
\end{align}
where $s^{\mathrm{TD}}(\beta|H)$ is given by
\begin{align}
    s^{\mathrm{TD}}(\beta|H):=\lim_{L\to\infty}\frac{S_{\mathrm{vN}}[\rho^{\mathrm{can}}_{L}(\beta|H)]}{N}.
    \label{eq:DEF_s^TD}
\end{align}
\end{theorem}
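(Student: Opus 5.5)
By Corollary~\ref{corollary:Equiv_s^red}, $s^{\mathrm{mac}}_{\ell}$ depends only on the macroscopic state. Since $(\rho_L)_{L\in\mathbb{N}}\maceq\bigl(\rho_L^{\mathrm{can}}(\beta|H)\bigr)_{L\in\mathbb{N}}$, we therefore have $s^{\mathrm{mac}}_{\ell}[(\rho_L)]=s^{\mathrm{mac}}_{\ell}[(\rho_L^{\mathrm{can}})]$ for every $\ell$. This reduces the statement to the canonical Gibbs state itself.

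Because $\rho_L^{\mathrm{can}}$ is translation invariant, it is locally uniform. Corollary~\ref{corollary:Entropy_LocallyUniform} then gives
\begin{align}
s^{\mathrm{mac}}_{\ell}=\frac{S_{\mathrm{vN}}[\omega_\ell]}{\ell^d},\qquad \omega_\ell:=\lim_{L\to\infty}\mathrm{Tr}_{\Lambda_L\setminus\Cell}\bigl[\rho_L^{\mathrm{can}}(\beta|H)\bigr].
\end{align}
The limit exists by Assumption~\ref{assumption:GibbsState}, and the family $(\omega_\ell)_\ell$ is consistent under partial traces, so it defines a translation-invariant state $\omega$ of the infinite lattice. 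What must be shown is that the mean entropy $\lim_{\ell}S_{\mathrm{vN}}[\omega_\ell]/\ell^d$ exists and equals $\lim_{L}S_{\mathrm{vN}}[\rho_L^{\mathrm{can}}]/N$.

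Existence of the limit is the standard mean-entropy theorem. It follows from strong subadditivity and translation invariance of the family $\omega_\ell$ on hypercubes, via a subadditivity/Fekete argument with boundary corrections of order $\ell^{d-1}$. The statement then splits into two inequalities.

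For the upper bound $\lim_\ell S_{\mathrm{vN}}[\omega_\ell]/\ell^d\le s^{\mathrm{TD}}$, I would go through the free energy and avoid estimating $S_{\mathrm{vN}}[\rho^{\mathrm{can}}_L]$ directly. Write $F_\ell$ for $-\beta^{-1}\log\mathrm{Tr}\,e^{-\beta H_{\Cell}}$, the free energy of the Hamiltonian restricted to $\Cell$. Gibbs' variational principle gives $\mathrm{Tr}[\omega_\ell H_{\Cell}]-\beta^{-1}S_{\mathrm{vN}}[\omega_\ell]\ge F_\ell$. The energy density of $\omega_\ell$ converges to $u=\lim_L\mathrm{Tr}[\rho_L^{\mathrm{can}}H_L]/N$, since the local Hamiltonian terms fully inside $\Cell$ have expectation $\mathrm{Tr}[\omega h]$ and boundary terms are $O(\ell^{d-1})$. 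The free energy density $F_\ell/\ell^d$ converges to $f(\beta)$ by the existence of the thermodynamic limit~\cite{Ruelle1999,Tasaki2018}. Dividing by $\ell^d$ and taking $\ell\to\infty$ yields $\beta u-\lim_\ell S_{\mathrm{vN}}[\omega_\ell]/\ell^d\ge\beta f$ for $\beta>0$, and the identity $s^{\mathrm{TD}}=\beta(u-f)$ then gives the upper bound. For $\beta<0$ the same argument runs with $-H$. For $\beta=0$ the state $\omega_\ell$ is maximally mixed and the claim is trivial. This step needs $\lim_L \mathrm{Tr}[\rho_L^{\mathrm{can}}H_L]/N=u$ to coincide with the thermodynamic relation $s^{\mathrm{TD}}=\beta(u-f)$, which holds because $S_{\mathrm{vN}}[\rho^{\mathrm{can}}_L]=\beta(\langle H_L\rangle-F_L)$ exactly at finite $L$.

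The reverse inequality is the main obstacle. I would use subadditivity of entropy on the finite system: partition $\Lambda_L$ into $(L/\ell)^d$ translated hypercubes, which gives
\begin{align}
S_{\mathrm{vN}}[\rho_L^{\mathrm{can}}]\le (L/\ell)^d\,S_{\mathrm{vN}}\bigl[\mathrm{Tr}_{\Lambda_L\setminus\Cell}\rho_L^{\mathrm{can}}\bigr]
\end{align}
by translation invariance, when $\ell\mid L$. Dividing by $N$ and letting $L\to\infty$ along multiples of $\ell$, continuity of $S_{\mathrm{vN}}$ on the fixed finite-dimensional space of $\Cell$ gives $s^{\mathrm{TD}}\le S_{\mathrm{vN}}[\omega_\ell]/\ell^d$ for every $\ell$. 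The delicate points are that the limit defining $s^{\mathrm{TD}}$ is taken along all $L$, so passing to a subsequence is harmless only once its existence is granted (it exists as $\beta(u-f)$), and that $\ell\nmid L$ must be handled by padding with a remainder layer whose entropy is bounded by $O(\ell L^{d-1}\log D)$. Combining the two bounds proves Eq.~\eqref{eq:s^red_ell=s^TD}, with existence of the $\ell$-limit coming either from the mean-entropy theorem or directly from the sandwich.
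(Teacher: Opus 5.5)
Your proposal is correct and follows essentially the same route as the paper. Both arguments reduce to the Gibbs state via Corollary~\ref{corollary:Equiv_s^red}, obtain the upper bound from nonnegativity of the relative entropy with respect to a Gibbs state on $\Cell$, and obtain the lower bound from subadditivity over a hypercube partition (the paper's Proposition~\ref{proposition:s^red>=SvN} with the identity map); the only difference is that the paper takes the periodic Hamiltonian on the cube, so the comparison entropy density tends to $s^{\mathrm{TD}}$ by definition and no free-energy limit is needed, and your appeal to the mean-entropy theorem is superfluous since the sandwich already gives existence of the $\ell\to\infty$ limit.
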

\noindent
Since $s^{\mathrm{mac}}_{\ell}[(\rho_{L})_{L\in\mathbb{N}}]$ is defined 
in the $L\to\infty$ limit, 
Eq.~\eqref{eq:s^red_ell=s^TD} 
takes the $L\to\infty$ limit before the $\ell\to\infty$ limit. 
This 
corresponds to the iterated limit in Eq.~\eqref{eq:Summary_EntropyFormula}.

\begin{proof}
First, we remark that the existence of the thermodynamic limit~\eqref{eq:DEF_s^TD} can be verified from that of free energy and of the expectation value of energy density. 
The former is proved, for instance, in Refs.~\cite{Ruelle1999,Tasaki2018}. The latter follows from Assumption~\ref{assumption:GibbsState}.
From Corollary~\ref{corollary:Equiv_s^red}, it is sufficient to prove 
the above theorem for the canonical Gibbs state $\bigl(\rho^{\mathrm{can}}_{L}(\beta|H)\bigr)_{L\in\mathbb{N}}$,
that is, there exists the $\ell\to\infty$ limit of 
$s^{\mathrm{mac}}_{\ell}\bigl[\bigl(\rho^{\mathrm{can}}_{L}(\beta|H)\bigr)_{L\in\mathbb{N}}\bigr]$, and it agrees with thermodynamic entropy,
\begin{align}
    \lim_{\ell\to\infty}s^{\mathrm{mac}}_{\ell}\bigl[\bigl(\rho^{\mathrm{can}}_{L}(\beta|H)\bigr)_{L\in\mathbb{N}}\bigr]=s^{\mathrm{TD}}(\beta|H).
    \label{eq:s^red_rho^can=s^TD}
\end{align}
For the proof of this equality, see Appendix~\ref{sec:Proof_s^red=s^TD}.

\end{proof}

This result should be contrasted with the
von Neumann entropy density~$\lim_{L\to\infty}S_{\mathrm{vN}}[\rho_{L}]/N$
of the total system.
In general, it does not agree
with the thermodynamic entropy density $s^{\mathrm{TD}}$. 
For instance, it takes zero if $(\rho_{L})_{L\in\mathbb{N}}$ is a sequence of pure quantum states representing iMATE, 
such as thermal energy eigenstates~\cite{Neumann1929,Deutsch1991,Srednicki1994,Rigol2008}, the thermal pure quantum (TPQ) states~\cite{Sugiura2012,Sugiura2013}, and the imaginary-time-evolved entangled antipodal pair states~\cite{Chiba2024a,Yoneta2024}.

Note that the von Neumann entropy density of
a reduced density matrix 
(for a pure quantum state, this is nothing but the entanglement entropy density)
of a state representing iMATE
does not necessarily agree with $s^{\mathrm{TD}}$, either.
In fact, that of
the $\ell$-local density matrix,  
$S_{\mathrm{vN}}[\rho_{L|\ell}^{\boldsymbol{r}}]/\ell^d$,
of iMATE 
does not necessarily agree with $s^{\mathrm{TD}}$, 
even if we take the $\ell \to \infty$ limit.
We will give a vivid illustration of this fact 
taking the METTS as an example
in subsection~\ref{sec:numerical_entropy}.

While the METTS in that example does not represent locally uniform iMATE,
one might expect that 
if $\rho_L$ represents locally uniform iMATE
(which is then 
equivalent to MITE of Definition~\ref{definition:MITE})
$S_{\mathrm{vN}}[\rho_{L|\ell}^{\boldsymbol{r}}]/\ell^d$
would agree with $s^{\mathrm{TD}}$.  For example, it
has been often argued~\cite{Goldstein2006,Iwaki2021,Meier2025} that the reduced density matrix of a translation-invariant representation of thermal equilibrium on a small hypercube $\rho_{L|\ell}^{\bm{r}=\bm{0}}$ is ``close'' to the canonical Gibbs state on the same hypercube $ \rho^{\mathrm{can}}_{\ell}(\beta|H_\ell)$ [not to be confused with the reduced density matrix of $\rho^{\mathrm{can}}_{L}(\beta|H_L)$ on $\Cell$].
Combining this argument with Eq.~\eqref{eq:s^red_TransInv}, 
one might 
expect that 
Theorem~\ref{theorem:s^red=s^TD} would readily follow at least for representations of iMATE that are locally uniform. 
However, it is unclear that in what sense $\rho_{L|\ell}^{\bm{r}=\bm{0}}$ is close to $\rho^{\mathrm{can}}_{\ell}(\beta|H_\ell)$ in such an argument and whether the deviation between them does not affect the value of the von Neumann entropy.
For instance, the expectation values of operators near the boundary of $\Cell$ in the state $\rho_{L|\ell}^{\bm{r}=\bm{0}}$ differ from those in the state $\rho^{\mathrm{can}}_{\ell}(\beta|H_{\ell})$, as manifested in difference between the Gibbs states under periodic boundary and open boundary.
This means that $\rho_{L|\ell}^{\bm{r}=\bm{0}}$ and $\rho^{\mathrm{can}}_{\ell}(\beta|H_{\ell})$ are not close in the sense of trace norm, 
and hence closeness of $S_{\mathrm{vN}}\bigl[\rho_{L|\ell}^{\bm{r}=\bm{0}}\bigr]$ and $S_{\mathrm{vN}}[\rho^{\mathrm{can}}_{\ell}(\beta|H_{\ell})]$ cannot be proved by naively applying the Fannes-Audenaert inequality~\cite{Audenaert2007}.
Therefore, the proof (Appendix~\ref{sec:Proof_s^red=s^TD}) of the above Theorem is nontrivial 
even for the locally uniform case.

\subsection{Remarks on calculating thermodynamic entropy}

Equation \eqref{eq:s^red_ell=s^TD} gives a method 
for calculating thermodynamic entropy. 
It has two notable features among various methods:
it depends only on the representation of a macroscopic state $(\rho_L)_{L\in\mathbb{N}}$, and 
it provides thermodynamic entropy correctly for all representations of iMATE. 

For the first point, other methods 
(including the Boltzmann entropy) often require information about the Hamiltonian of the system, while our quantum macroscopic entropy density does not require any information other than $(\rho_L)_{L\in\mathbb{N}}$, as explained in Sec.~\ref{sec:Entropy_s^mac}. 

For the second point, 
the ordinary von Neumann entropy does not work for all representations of iMATE, 
as mentioned in subsection~\ref{sec:agree_with_tde}.
The same can be said for 
the entanglement-entropy-like quantity in the RHS of formula~\eqref{eq:s^red_TransInv}.
In fact, it provides thermodynamic entropy only for representations of iMATE that are locally uniform, but not for other representations of iMATE. As an example, let us consider the state in Example~\ref{example:NonMITE_MacroEqState}. Since it is not locally uniform, formula~\eqref{eq:s^red_TransInv} does not hold, and the RHS of Eq.~\eqref{eq:s^red_TransInv} becomes zero, 
and hence it does not agree with thermodynamic entropy. By contrast,
our macroscopic entropy density predicts $\log 2$, 
which agrees with 
thermodynamic entropy density.

\subsection{Numerical illustration}
\label{sec:numerical_entropy}

Now we numerically demonstrate that our quantum macroscopic entropy density 
$s_{\ell}^{\mathrm{mac}}[(\rho_L)_{L\in\mathbb{N}}]$
approaches the thermodynamic entropy density
$s^{\mathrm{TD}}(\beta|H)$ with increasing  $\ell$,
as Eq.~\eqref{eq:s^red_ell=s^TD},
whereas the density of entanglement entropy does not, for a pure quantum state that represents iMATE (but not MITE).

We consider the one-dimensional $(d=1)$ transverse-field Ising model,
whose Hamiltonian is given by Eq.~\eqref{eq:H_transverse_Ising}.
We impose the open boundary condition for ease of calculation, and take $J=g=1$ and the inverse temperature as $\beta=1$. 

Using the matrix product state representation,
we generate an ensemble of METTS samples $\rho_L = \ket{\phi(i_n)}\bra{\phi(i_n)}$ ($n=1,2,\dots$),
defined by Eq.~\eqref{eq:METTS}, which
are sampled according to the probability distribution $P(i)/Z$.
Although $(\rho_L)_{L\in\mathbb{N}}$ is not MITE, it typically represents iMATE, 
as discussed immediately after Eq.~\eqref{eq:METTS_canonical},
Therefore, from Corollary~\ref{corollary:Entropy_MacroUniform},
its quantum macroscopic entropy density 
$s_{\ell}^{\mathrm{mac}}[(\rho_L)_{L\in\mathbb{N}}]$
should agree with a simpler quantity $\tilde{s}_{\ell}[(\rho_L)_{L\in\mathbb{N}}] = \lim_{L\to\infty} S_{\mathrm{vN}}[\rho_{L|\ell}^{\mathrm{ave}}]/\ell^d$,
for each $\ell=1,2,\cdots$.
Hence, we calculate 
the von Neumann entropy density $S_{\mathrm{vN}}[\rho_{L|\ell}^{\mathrm{ave}}]/\ell^d$ for these METTS samples.
We then compare their average with thermodynamic entropy density $s^{\mathrm{TD}}(\beta|H)$ defined by Eq.~\eqref{eq:DEF_s^TD},
which is here calculated using exact analytical solutions.
The length of the total lattice is set to $L=512$, 
which is sufficiently large to ensure that finite-size effects are negligible.

In Fig.~\ref{fig:MacroscopicEntropydensity}, we plot $S_{\mathrm{vN}}[\rho_{L|\ell}^{\mathrm{ave}}]/\ell^d$ (green filled circles) averaged over $10^5$ METTS,
where the error bars indicate the standard deviation.
It is confirmed that $S_{\mathrm{vN}}[\rho_{L|\ell}^{\mathrm{ave}}]/\ell^d$ 
approaches $s^{\mathrm{TD}}(\beta|H)$ (light-blue line) as $\ell$ is increased.

\begin{figure}
    \centering
    \includegraphics[width=\linewidth]{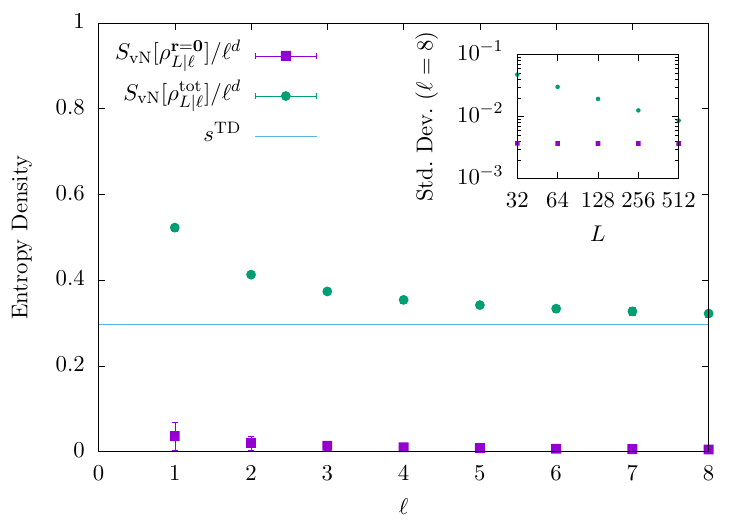}
    \caption{
        Locality~$(\ell)$ dependence of 
        $S_{\mathrm{vN}}[\rho_{L|\ell}^{\mathrm{ave}}]/\ell^d$ (green filled circles) 
        and the entanglement entropy density 
        $S_{\mathrm{vN}}[\rho_{L|\ell}^{\bm{r}=\bm{0}}]/\ell^d$ (purple square)
        in the one-dimensional transverse-field Ising model at $J=g=1$ and $\beta=1$
        with $L=512$ (see text for details). 
        Both quantities are obtained from 
        $10^5$ METTS samples, where the error bars represent the standard deviation.
        The light-blue line indicates the exact thermodynamic entropy density in the thermodynamic limit.
        The inset shows the $L$ dependence of the standard deviations at fixed $\ell=8$ for both entropy densities.
        Imaginary-time evolution is carried out using a second-order Trotter decomposition with step size $\delta\beta=0.01$ and truncation threshold $10^{-14}$.
    }
    \label{fig:MacroscopicEntropydensity}
\end{figure}

To justify that the behavior of the sample average reflects each typical METTS at the thermodynamic limit, we also examine the $L$ dependence of sample-to-sample fluctuations for $L=32$-$512$.
The inset shows the standard deviation as a function of $L$ at fixed $\ell=8$.
It can be seen that the standard deviation of $S_{\mathrm{vN}}[\rho_{L|\ell}^{\mathrm{ave}}]/\ell^d$ decreases as a power law in $L$ toward zero, indicating self-averaging. Therefore, $S_{\mathrm{vN}}[\rho_{L|\ell}^{\mathrm{ave}}]/\ell^d$ becomes independent of the METTS sample as $L\to\infty$ for fixed $\ell$.
These results demonstrate Eq.~\eqref{eq:s^red_ell=s^TD}
in Theorem~\ref{theorem:s^red=s^TD},
which takes the iterated limit, 
$L\to\infty$ followed by $\ell\to\infty$,
as in Eq.~\eqref{eq:Summary_EntropyFormula}.

For comparison, we also plot 
in Fig.~\ref{fig:MacroscopicEntropydensity}
the density of the entanglement entropy,
$S_{\mathrm{vN}}[\rho_{L|\ell}^{\bm{r}=\bm{0}}]/\ell^d$ (purple square), 
which is a measure of entanglement
between $\Cell$ taken from the center of the chain and the rest.
We observe that this quantity is well separated from $S_{\mathrm{vN}}[\rho_{L|\ell}^{\mathrm{ave}}]/\ell^d$ (green filled circles) and
$s^{\mathrm{TD}}(\beta|H)$ (blue line), 
and approaches $0$ with increasing $\ell$, which is consistent with the result in Ref.~\cite{Kusuki2024}.
Furthermore, its standard deviation (inset) approaches a constant independent of $L$, in contrast to that of $S_{\mathrm{vN}}[\rho_{L|\ell}^{\mathrm{ave}}]/\ell^d$ (also in the inset), which decays as a power law. 
These results clearly demonstrate that 
the entanglement entropy density for METTS does not agree with the thermodynamic entropy density.
This is because $\rho_L$ is not locally uniform, and consequently, Eq.~\eqref{eq:s^red_TransInv} in Corollary~\ref{corollary:Entropy_LocallyUniform} does not hold for METTS, although it represents 
thermal equilibrium.

It is worth mentioning that Fig.~\ref{fig:MacroscopicEntropydensity} also demonstrates that 
thermal equilibrium does not necessarily imply the volume law of the entanglement entropy,
or equivalently, the subvolume law does not necessarily imply nonequilibrium.

\subsection{\label{sec:method.measure.sTD}Method for measuring thermodynamic entropy}

Our results provide a novel method for measuring thermodynamic entropy.
In thermodynamics, 
entropy can only be measured through a certain thermodynamic process, 
which involves multiple different thermal equilibrium states.
A typical example of such a process is a heating process, 
in which the system traverses multiple 
thermal equilibrium states at different temperatures.
By measuring the specific heat at many points in such a process, 
one obtains the (difference of) entropy using thermodynamic relations.

By contrast, Theorem~\ref{theorem:s^red=s^TD} and Corollary~\ref{corollary:Measurement_s^mac} show that thermodynamic entropy density can be obtained by the measurement protocol in Corollary~\ref{corollary:Measurement_s^mac}, which only requires preparations of quantum states representing iMATE at a \emph{single} inverse temperature $\beta$.
Although it is safe to repeatedly prepare the states in order to eliminate the effect of the backactions of quantum measurement~\footnote{According to Refs.~\cite{Fujikura2016,Shimizu2017}, 
if the measurement is performed appropriately, the backactions on the densities of additive observables are macroscopically negligible. If so, it is sufficient to prepare only a single quantum state representing iMATE.}, our protocol contains only iMATE at a single $\beta$, not multiple $\beta$, in contrast to the conventional thermodynamic protocol.

Furthermore, to obtain thermodynamic entropy, we do not need to measure additive observables on primitive subsystems $\mathcal{S}^{(k)}$ for all $k\in\{1,...,K^d\}$; instead, measurements of additive observables on the total system $\Lambda_L$ suffice.
This is because, from Corollary~\ref{corollary:Entropy_MacroUniform}, quantum macroscopic entropy density coincides with $S_{\mathrm{vN}}[\rho_{\infty|\ell}^{\mathrm{ave}}]/\ell^d$ for states representing iMATE. Here, we can obtain $S_{\mathrm{vN}}[\rho_{\infty|\ell}^{\mathrm{ave}}]$ by measurements of additive observables on the total system since the density matrix $\rho_{L|\ell}^{\mathrm{ave}}$ is defined by the spatial average of reduced density matrices over the total lattice, and hence, it is characterized by the additive observables on the total system.

\section{\label{sec:EntropyIncrease}Law of increasing entropy}

In Sec.~\ref{sec:MacroPassivity}, we have given Corollary~\ref{corollary:macroscopic-passivity},
which corresponds to a quantum-mechanical derivation of Planck's principle.
It is one 
of many forms of the second law of thermodynamics 
and applies to any adiabatic operation in which the initial and final values of 
$(f^1,..., f^{m})$ coincide.
In thermodynamics, it is sometimes said to be equivalent to different forms of the second law; however, other laws of thermodynamics were assumed 
when the equivalence was proved, 
and therefore, they are not shown to be equivalent to each other from quantum mechanics alone. 

In this section, we 
address
this problem and will give a 
quantum-mechanical
proof of another form of the second law of thermodynamics. 
It is the law of increasing entropy, which  
asserts that an adiabatic thermodynamic operation cannot decrease the thermodynamic entropy of the system.
This law applies to any adiabatic operation 
including the case where the initial and the final values of $(f^1,..., f^{m})$ are different.
Lieb and Yngvason~\cite{Lieb1999} showed that thermodynamics
can be rigorously constructed from the adiabatic accessibility that 
corresponds to this law.

\subsection{Time evolution representing adiabatic thermodynamic operation}

In order to prove 
the law of increasing entropy 
by quantum mechanics, we need to formulate thermodynamic entropy and adiabatic thermodynamic operations in terms of quantum mechanics. 
For entropy, we will use the quantum macroscopic entropy density introduced 
in Sec.~\ref{sec:Entropy_s^mac} because it
corresponds to thermodynamic entropy as shown in Sec.~\ref{sec:Entropy_s^mac_eq}. 
For adiabatic thermodynamic operations, 
we 
refine 
the macroscopic operations introduced in Sec.~\ref{sec:MacroOp_MacroEquiv}
as follows.

In thermodynamics, an adiabatic thermodynamic operation generally consists of three parts:
(i) preparation of a thermal equilibrium state, 
(ii) operation 
by changing the values of 
the control parameters 
$(f^1,...,f^m)$ in a time-dependent manner,
and 
(iii)
waiting for relaxation to another thermal equilibrium state to occur.

In Sec.~\ref{sec:MacroOp_MacroEquiv}, macroscopic operations have been formulated as consisting of only parts~(i) and (ii), but excluding part~(iii). This is because part~(iii) does not affect the macroscopic passivity~\eqref{eq:macroscopic-passivity}, as mentioned in the last paragraph of Sec.~\ref{sec:MacroPassivity_beta>=0}, and can be omitted. 
However, in thermodynamics, the law of increasing entropy requires part~(iii) because the state after part~(ii) is generally a nonequilibrium state and its thermodynamic entropy is not well-defined.
Therefore, in this section, we consider the time evolution containing part~(iii) in addition to macroscopic operations, i.e., parts~(i) and (ii).

Precisely, 
we consider the time evolution of the following form (see also Table~\ref{tbl:Operation_EntropyIncrease}):
(i)
At $t= 0$, the system is prepared in the state $\rho_{L}(0)$, whose sequence 
$\bigl(\rho_L(0)\bigr)_{L\in\mathbb{N}}$ represents iMATE of the system described by 
the initial Hamiltonian~$H_{0}$.
We take $H_{0}$ as an additive observable on the whole system~$\Lambda_{L}$, which is translation invariant on $\Lambda_{L}$.
(ii) For $0<t<t^*$, where $t^*$ is the operation time, 
$(f^1,...,f^m)$ are changed in a time-dependent manner. 
Consequently, the Hamiltonian becomes a time-dependent one, $H(t)$, which induces the unitary time evolution. We take $H(t)$
as a general one of the form~\eqref{eq:general_H}, which may not be translation invariant on $\Lambda_{L}$ 
(recall that $B^\mu_L$'s are 
additive observables on some macroscopic subsystems, not necessarily on the whole system $\Lambda_{L}$).
Then, the time evolution in this operation part is a general macroscopic operation $\bigl(U_{L}(\bullet,0)\bigr)_{L\in\mathbb{N}}$ given in Definition~\ref{definition:macroscopic-operation}. 
(iii) For $t\ge t^*$, we stop the operation 
and wait for relaxation. 
We take
the Hamiltonian~$H_1$ in this final stage
as an additive observable on the whole system~$\Lambda_{L}$ again
(but can be different from $H_0$).
The system relaxes due to the unitary evolution with 
the time evolution operator $e^{-iH_{1}(t-t^*)}$.
\begin{table}[h]
\caption{\label{tbl:Operation_EntropyIncrease}Unitary time evolution corresponding to adiabatic thermodynamic operations.}
\centering
\begin{ruledtabular}
\begin{tabular}{lcc}
time regime & $H(t)$ & $\rho_{L}(t)$ \\[1pt]\hline
$t= 0$& $H_{0}$ & $\rho_{L}(0)$ $\maceq \rho_{L}^{\mathrm{can}}(\beta_0|H_0)$\\[1pt]
$0<t< t^*$ (operation)& $H(t)$ & $U_{L}(t,0)\rho_{L}(0)U_{L}^{\dagger}(t,0)$\\[1pt]
$t\ge t^*$ (relaxation)& $H_1$ & $e^{-iH_1 (t-t^*)}\rho_{L}(t^*)e^{iH_1 (t-t^*)}$\\[1pt]
$t\to\infty$ (thermalization)& $H_1$ & $\overline{\rho_{L}(t>t^*)}\maceq \rho_{L}^{\mathrm{can}}(\beta_1|H_1)$
\end{tabular}
\end{ruledtabular}
\end{table}

In summary, the time evolution can be written as follows.
\begin{definition}[Time evolution 
regarded as corresponding
to adiabatic thermodynamic operations]
Unitary time evolution that we adopt to describe adiabatic thermodynamic operations is generated by the following time-dependent Hamiltonian,
\begin{align}
    H(t)=\begin{cases}
        H_{0} & (t= 0);\\
        H_0 - \sum_{\mu=1}^m f^\mu(t) B^\mu & (0<t<t^*);\\
        H_{1} & (t\ge t^*),
    \end{cases}
    \label{eq:EntropyIncrease_H(t)}
\end{align}
where $H_0$ and $H_1$ are additive observables on the whole system $\Lambda_L$ while $H(t)$ $(0<t<t^*)$ takes the form of Eq.~\eqref{eq:general_H}.
For time $0<t<t^*$, the time evolution is given by a macroscopic operation $U_{L}(t,0)$, while the time evolution for $t>t^*$ is written as a relaxation process $e^{-iH_1 (t-t^*)}$.
\end{definition}

In this section, we only consider the case where the operation time $t^*$ is independent of $L$. 
A longer operation time will be discussed in Sec.~\ref{sec:Counterexamples_Timescale}.

\subsection{Law of increasing entropy under an assumption of thermalization}

Let $\rho_{L}(t)$ be the state at time~$t$.
We are interested in $\rho_{L}(t)$ in the relaxation part long after $t^*$.
The relaxation part starts from the state $\rho_{L}(t^*)=U_{L}(t^*,0)\rho_{L}(0)U_{L}^{\dagger}(t^*,0)$.
In the studies of thermalization, it is well established that, under a reasonable assumption, the state evolved by a time-independent Hamiltonian is macroscopically equivalent to its long time average,
\begin{align}
    \overline{\rho_{L}(t>t^*)} 
    :=
    \lim_{\mathcal{T}\to\infty}\frac{1}{\mathcal{T}}\int_{0}^{\mathcal{T}} dt\ e^{-iH_{1}t}\rho_{L}(t^*)e^{iH_{1}t},
    \label{eq:LongTimeAve}
\end{align}
at almost all times~\cite{Tasaki1998,Reimann2008,Linden2009,Short2012,Reimann2012}. 
Therefore we can regard the long time average $\overline{\rho_{L}(t>t^*)}$ as the final state of the relaxation process.

Furthermore, previous 
studies of thermalization have uncovered that realistic systems usually exhibit thermalization, that is, the long-time-averaged state represents thermal equilibrium~\cite{Deutsch1991,Srednicki1994,Rigol2008,D'Alessio2016}.
Note that, when discussing thermalization, the initial state of the relaxation process $\rho_{L}(t^*)$ 
is restricted to states whose energy fluctuation is macroscopically negligible
\begin{align}
    \mathrm{Tr}[\rho_{L}(t^*) (H_1)^2]-\bigl(\mathrm{Tr}[\rho_{L}(t^*) H_1]\bigr)^2=o(N^2).
    \label{eq:NegligibleVarH_1}
\end{align}
Following these, 
we assume that if the initial state satisfies 
this condition,
then the final state $\bigl(\overline{\rho_{L}(t>t^*)}\bigr)_{L\in\mathbb{N}}$ represents iMATE:
\begin{assumpThermalization}[Thermalization]\label{assumption:thermalization}
If the sequence of the initial state of the relaxation process $\bigl(\rho_{L}(t^*)\bigr)_{L\in\mathbb{N}}$ is a macroscopic state that satisfies 
Eq.~\eqref{eq:NegligibleVarH_1},
then 
the sequence of states $\bigl(\overline{\rho_{L}(t>t^*)}\bigr)_{L\in\mathbb{N}}$ defined by Eq.~\eqref{eq:LongTimeAve} represents an iMATE~\eqref{eq:MacroEqState}. 
\end{assumpThermalization}
\noindent
In other words, this assumption says that 
there exists an inverse temperature~$\beta_{1}\in\mathbb{R}$ such that
\begin{align}
    \bigl(\overline{\rho_{L}(t>t^*)}\bigr)_{L\in\mathbb{N}}\maceq \bigl(\rho_{L}^{\mathrm{can}}(\beta_{1}|H_1)\bigr)_{L\in\mathbb{N}}.
    \label{eq:Assump_Thermalization}
\end{align}
This inverse temperature $\beta_{1}\in\mathbb{R}$ can be determined from the energy conservation, 
\begin{align}
    \lim_{L\to\infty}\frac{\mathrm{Tr}[\rho_{L}(t^*) H_1]}{N}
    =\lim_{L\to\infty}\frac{\mathrm{Tr}[\rho_{L}^{\mathrm{can}}(\beta_{1}|H_1) H_1]}{N}
    \label{eq:FiniteEnergyDensity}.
\end{align}
Note that, in the study of thermalization, it is known that such an assumption can be justified by the ETH~\cite{Deutsch1991,Srednicki1994,Rigol2008}.

Now, we give a quantum-mechanical version of the law of increasing entropy
(which is a weaker version of Theorem~\ref{theorem:EntropyIncrease_WithoutThermalization} given in Sec.~\ref{sec:EntropyIncrease_WithoutThermalization}).
\begin{theorem}[Law of increasing quantum macroscopic entropy density
and thermodynamic entropy density]\label{theorem:EntropyIncrease}
Let $t^*$ be an $L$-independent time, 
$\bigl(\rho_{L}(0)\bigr)_{L\in\mathbb{N}}$ be a representation of normal iMATE at an inverse temperature~$\beta_0\in\mathbb{R}$,
and $\rho_{L}(t)$ be the state evolved from $\rho_{L}(0)$ by the Hamiltonian explained immediately after 
Eq.~\eqref{eq:EntropyIncrease_H(t)}. 
Suppose that Assumption~\ref{assumption:thermalization} 
is
satisfied.
Then the sequence of states $\bigl(\overline{\rho_{L}(t>t^*)}\bigr)_{L\in\mathbb{N}}$ defined by Eq.~\eqref{eq:LongTimeAve} represents iMATE~\eqref{eq:MacroEqState} at the inverse temperature~$\beta_1$ determined by Eq.~\eqref{eq:FiniteEnergyDensity}, and satisfies
\begin{align}
    \lim_{\ell\to\infty}s^{\mathrm{mac}}_{\ell}\bigl[\bigl(\rho_{L}(0)\bigr)_{L\in\mathbb{N}}\bigr]\le \lim_{\ell\to\infty}s^{\mathrm{mac}}_{\ell}\bigl[\bigl(\overline{\rho_{L}(t>t^*)}\bigr)_{L\in\mathbb{N}}\bigr]
    \label{eq:Increasing_s^red_Thermalization}
\end{align}
for any macroscopic operation $\bigl(U_L(t^*,0)\bigr)_{L\in\mathbb{N}}$ of operation time $t^*=O(L^0)$.
Here, the existence of the limit $\ell\to\infty$ follows from Theorem~\ref{theorem:s^red=s^TD}.
Furthermore, it holds that
\begin{align}
    s^{\mathrm{TD}}(\beta_{0}|H_{0})\le s^{\mathrm{TD}}(\beta_{1}|H_{1}).
    \label{eq:Increasing_s^TD}
\end{align}
\end{theorem}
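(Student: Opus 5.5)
The plan is to reduce both inequalities to a single variational inequality for von Neumann entropies. This relies on three ingredients: the preservation results of Sec.~\ref{sec:MacroOp_MacroEquiv}, Theorem~\ref{theorem:s^red=s^TD}, and the Gibbs maximum-entropy principle. The first step verifies the hypothesis of Assumption~\ref{assumption:thermalization}. Since $(\rho_L(0))_{L\in\mathbb{N}}$ represents a normal iMATE, Propositions~\ref{proposition:JAIVTMXN_1} and \ref{proposition:JAIVTMXN_2} show that $(\rho_L(t^*))_{L\in\mathbb{N}}$ represents a normal macroscopic state. Because $H_1$ is an additive observable on $\Lambda_L$, the constant sequence $(H_1)_{L\in\mathbb{N}}$ belongs to $\SetAdditive$, so Eq.~\eqref{eq:MacroscopicallyNormal} is exactly Eq.~\eqref{eq:NegligibleVarH_1}. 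Assumption~\ref{assumption:thermalization} then yields some $\beta_1$ with $(\overline{\rho_L(t>t^*)})\maceq(\rho^{\mathrm{can}}_L(\beta_1|H_1))$. Since $e^{-iH_1t}$ commutes with $H_1$, the long-time average preserves $\mathrm{Tr}[\,\cdot\,H_1]$. Hence this $\beta_1$ satisfies Eq.~\eqref{eq:FiniteEnergyDensity}.

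The second step applies Theorem~\ref{theorem:s^red=s^TD} to both iMATE sequences. This gives $\lim_\ell s^{\mathrm{mac}}_\ell[(\rho_L(0))]=s^{\mathrm{TD}}(\beta_0|H_0)$ and $\lim_\ell s^{\mathrm{mac}}_\ell[(\overline{\rho_L(t>t^*)})]=s^{\mathrm{TD}}(\beta_1|H_1)$. So Eq.~\eqref{eq:Increasing_s^red_Thermalization} is equivalent to Eq.~\eqref{eq:Increasing_s^TD}, and it suffices to prove the latter.

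For Eq.~\eqref{eq:Increasing_s^TD}, introduce the auxiliary state $\tau_L:=U_L(t^*,0)\rho^{\mathrm{can}}_L(\beta_0|H_0)U_L^\dagger(t^*,0)$. By Corollary~\ref{corollary:macroscopically-equivalence_equilibrium-state}, $(\rho_L(t^*))\maceq(\tau_L)$. In particular, $\lim_L\mathrm{Tr}[\tau_L H_1]/N=\lim_L\mathrm{Tr}[\rho_L(t^*)H_1]/N$, and this equals $\lim_L\mathrm{Tr}[\rho^{\mathrm{can}}_L(\beta_1|H_1)H_1]/N$ by Eq.~\eqref{eq:FiniteEnergyDensity}. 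Unitary invariance gives $S_{\mathrm{vN}}[\tau_L]=S_{\mathrm{vN}}[\rho^{\mathrm{can}}_L(\beta_0|H_0)]$. Nonnegativity of the relative entropy $D(\tau_L\|\rho^{\mathrm{can}}_L(\beta_1|H_1))\ge0$ then gives, for any real $\beta_1$ (no sign condition is needed),
\begin{align}
S_{\mathrm{vN}}[\tau_L]\le S_{\mathrm{vN}}[\rho^{\mathrm{can}}_L(\beta_1|H_1)]+\beta_1\bigl(\mathrm{Tr}[\tau_L H_1]-\mathrm{Tr}[\rho^{\mathrm{can}}_L(\beta_1|H_1)H_1]\bigr).
\end{align}
Dividing by $N$ and letting $L\to\infty$, the bracket vanishes by the energy matching above. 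The remaining limits exist by Eq.~\eqref{eq:DEF_s^TD} together with Assumption~\ref{assumption:GibbsState}. This yields $s^{\mathrm{TD}}(\beta_0|H_0)\le s^{\mathrm{TD}}(\beta_1|H_1)$.

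The main obstacle is bookkeeping rather than an estimate. One must justify that the energy density of the auxiliary state $\tau_L$ matches that of $\rho_L(t^*)$ and, in turn, that of the final Gibbs state. This matching is exactly where the $O(L^0)$ operation time enters: Theorem~\ref{theorem:JAIVTMXN_3}, proved via Lieb--Robinson bounds, is what transfers the iMATE property of $\rho_L(0)$ to the Gibbs representative $\tau_L$, to which the variational principle can then be applied despite $\rho_L(0)$ possibly being pure.
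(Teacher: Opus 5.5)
Your proposal is correct and follows essentially the same route as the paper's direct proof (Appendix~\ref{sec:DirectProof_EntropyIncrease}). In both, Proposition~\ref{proposition:JAIVTMXN_2} propagates normality to verify Eq.~\eqref{eq:NegligibleVarH_1} and invoke Assumption~\ref{assumption:thermalization}. Then nonnegativity of $D\bigl(U_L(t^*,0)\rho^{\mathrm{can}}_L(\beta_0|H_0)U_L^\dagger(t^*,0)\Vert\rho^{\mathrm{can}}_L(\beta_1|H_1)\bigr)$, together with the energy matching from Theorem~\ref{theorem:JAIVTMXN_3}, gives Eq.~\eqref{eq:Increasing_s^TD}, and Theorem~\ref{theorem:s^red=s^TD} converts this into Eq.~\eqref{eq:Increasing_s^red_Thermalization}. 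Your explicit check that the long-time average conserves $\mathrm{Tr}[\,\cdot\,H_1]$, so that the $\beta_1$ supplied by the assumption actually satisfies Eq.~\eqref{eq:FiniteEnergyDensity}, fills in a step the paper only asserts.
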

\noindent
This theorem 
says that the quantum macroscopic entropy density 
and thermodynamic entropy density 
cannot decrease through \textit{any} finite-time macroscopic operation followed by the relaxation process~\footnote{Note that the inverse temperatures $\beta_0$ and $\beta_1$ in Eq.~\eqref{eq:Increasing_s^TD} are not necessarily positive. In other words, this result applies to transitions between positive and negative inverse temperature states~\cite{Purcell1951,Hakonen1990,Hakonen1992,Braun2013}, 
which are beyond standard thermodynamics.},
and Eq.~\eqref{eq:Increasing_s^TD} is precisely the law of increasing entropy.

Note 
that the long time average in the RHS of Eq.~\eqref{eq:Increasing_s^red_Thermalization} corresponds to focusing on a state at a typical time in a sense similar to the study of thermalization, as explained 
for 
Eq.~\eqref{eq:LongTimeAve}.
Note also that, in Sec.~\ref{sec:Counterexamples_Timescale}, we will show that the constraint on the operation time, $t^*=O(L^0)$, is optimal by constructing a counterexample for any longer timescale $t^*=\omega(L^0)$.

\begin{proof}

There are two ways; one is to use Theorem~\ref{theorem:EntropyIncrease_WithoutThermalization} given in Sec.~\ref{sec:EntropyIncrease_WithoutThermalization}, and the other is to show it directly.
For the former, see Appendix~\ref{sec:Proof_Th3_from_Th4}.
For the latter, see Appendix~\ref{sec:DirectProof_EntropyIncrease}, which provides a much simpler proof than the proof of Theorem~\ref{theorem:EntropyIncrease_WithoutThermalization}.

\end{proof}

\subsection{\label{sec:EntropyIncrease_WithoutThermalization}Law of increasing entropy under assumption weaker than thermalization}

In the previous subsection, we have proved that the quantum macroscopic entropy density does not decrease through finite-time macroscopic operation and infinite-time relaxation process by assuming that the final Hamiltonian induces thermalization.
In this section, we will show the same proposition under a weaker assumption.

First, we introduce the weaker assumption. Consider two initial states $\rho_L$ and $\sigma_L$ whose sequences represent the same iMATE, 
\begin{align}
    (\rho_{L})_{L\in\mathbb{N}}
    \maceq (\sigma_{L})_{L\in\mathbb{N}}
    \maceq \bigl(\rho_{L}^{\mathrm{can}}(\beta_{0}|H_{0})\bigr)_{L\in\mathbb{N}}.
    \label{eq:UniqueFinalState_InitialStates}
\end{align}
From Theorem~\ref{theorem:JAIVTMXN_3}, the states after any macroscopic operation of operation time $t^*=O(L^0)$ remain to represent the same macroscopic state,
\begin{align}
    \bigl(\rho_L(t^*)\bigr)_{L\in\mathbb{N}}\maceq \bigl(\sigma_L(t^*)\bigr)_{L\in\mathbb{N}}.\label{eq:UniqueFinalState_StatesAfterOperation}
\end{align}
We consider the relaxation process induced by $H_1$ starting from these states.
We assume that, by this process, these states relax to the same macroscopic state. Precisely, this assumption is stated as follows:
\begin{assumpThermalization}[Existence of a unique final macroscopic state]\label{assumption:UniqueFinalState}

If two sequences of the initial states 
$\bigl(\rho_L(t^*)\bigr)_{L\in\mathbb{N}}$ and $\bigl(\sigma_L(t^*)\bigr)_{L\in\mathbb{N}}$ 
are macroscopically equivalent as in Eq.~\eqref{eq:UniqueFinalState_StatesAfterOperation}, then
the sequences 
$\bigl(\overline{\rho_{L}(t>t^*)}\bigr)_{L\in\mathbb{N}}$ and 
$\bigl(\overline{\sigma_{L}(t>t^*)}\bigr)_{L\in\mathbb{N}}$ given by the long time average~\eqref{eq:LongTimeAve} of the relaxation process also represent some macroscopic states, and they are macroscopically equivalent,
\begin{align}
    \bigl(\overline{\rho_{L}(t>t^*)}\bigr)_{L\in\mathbb{N}}
    \maceq \bigl(\overline{\sigma_{L}(t>t^*)}\bigr)_{L\in\mathbb{N}}.
    \label{eq:UniqueFinalState_FinalState}
\end{align}
\end{assumpThermalization}
In other words, this assumption requires that the relaxation process induced by $H_1$ has a unique stationary macroscopic state for any macroscopically equivalent representations of a macroscopic state, $\bigl(\rho_{L}(t^*)\bigr)_{L\in\mathbb{N}}$ and $\bigl(\sigma_{L}(t^*)\bigr)_{L\in\mathbb{N}}$. 
This assumption is weaker than Assumption~\ref{assumption:thermalization} because it does not assume that the unique stationary macroscopic state is macroscopically equivalent to the canonical Gibbs state, whereas Assumption~\ref{assumption:thermalization} does.
Furthermore, even in integrable systems,  if relaxation to a unique stationary state described by the generalized Gibbs ensemble~\cite{Rigol2007,Sotiriadis2014,Wouters2014,Pozsgay2014,Ilievski2015,Mierzejewski2015,Ilievski2015a,Ilievski2016,Doyon2017} occurs, then Assumption~\ref{assumption:UniqueFinalState} will be satisfied while Assumption~\ref{assumption:thermalization} is not.

Note that if the operation time is taken as $t^*=\omega(L^0)$, Eq.~\eqref{eq:UniqueFinalState_StatesAfterOperation} is no longer satisfied in general.
This means that such a case is out of the scope of Assumption~\ref{assumption:UniqueFinalState}, and hence, Eq.~\eqref{eq:UniqueFinalState_FinalState} can be violated.

Under such a weaker assumption, we again obtain the statement that prohibits the decrease of the quantum macroscopic entropy density.
\begin{theorem}[Law of increasing quantum macroscopic entropy density without assuming thermalization]\label{theorem:EntropyIncrease_WithoutThermalization}
Let $t^*$ be an $L$-independent time, 
$\bigl(\rho_{L}\bigr)_{L\in\mathbb{N}}$ represent an iMATE,
and $\rho_{L}(t)$ be the density matrix evolved according to Eq.~\eqref{eq:EntropyIncrease_H(t)} starting from $\rho_{L}$. 
Suppose that Assumption~\ref{assumption:UniqueFinalState} is satisfied.
Then, the representation $\bigl(\overline{\rho_{L}(t>t^*)}\bigr)_{L\in\mathbb{N}}$ of the final macroscopic state, defined by Eq.~\eqref{eq:LongTimeAve}, satisfies
\begin{align}
    \lim_{\ell\to\infty}s^{\mathrm{mac}}_{\ell}[(\rho_{L})_{L\in\mathbb{N}}]
    \le \liminf_{\ell\to\infty}s^{\mathrm{mac}}_{\ell}\bigl[\bigl(\overline{\rho_{L}(t>t^*)}\bigr)_{L\in\mathbb{N}}\bigr].
    \label{eq:Increasing_s^red_NoThermalization}
\end{align}
\end{theorem}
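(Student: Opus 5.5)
The plan is to reduce the statement to the canonical Gibbs state, and then to bound the final quantum macroscopic entropy density from below by the von Neumann entropy density of the whole system, which time evolution does not decrease. First, let $\sigma_L:=\rho_L^{\mathrm{can}}(\beta_0|H_0)$. By Assumption~\ref{assumption:GibbsState}, $(\sigma_L)_{L\in\mathbb{N}}$ represents a macroscopic state. Since $(\rho_L)_{L\in\mathbb{N}}\maceq(\sigma_L)_{L\in\mathbb{N}}$, Theorem~\ref{theorem:JAIVTMXN_3} gives $(\rho_L(t^*))_{L\in\mathbb{N}}\maceq(\sigma_L(t^*))_{L\in\mathbb{N}}$. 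Assumption~\ref{assumption:UniqueFinalState} then gives $\bigl(\overline{\rho_L(t>t^*)}\bigr)_{L\in\mathbb{N}}\maceq\bigl(\overline{\sigma_L(t>t^*)}\bigr)_{L\in\mathbb{N}}$, both representing macroscopic states. By Corollary~\ref{corollary:Equiv_s^red}, the right-hand side of \eqref{eq:Increasing_s^red_NoThermalization} is unchanged if we replace $\rho_L$ by $\sigma_L$. By Theorem~\ref{theorem:s^red=s^TD}, the left-hand side equals $s^{\mathrm{TD}}(\beta_0|H_0)$. It therefore suffices to show
\begin{align}
    s^{\mathrm{TD}}(\beta_0|H_0)\le s^{\mathrm{mac}}_{\ell}\bigl[\bigl(\overline{\sigma_L(t>t^*)}\bigr)_{L\in\mathbb{N}}\bigr]\quad\text{for every }\ell\in\mathbb{N}.
\end{align}

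Second, I lower-bound the global entropy of $\bar\sigma_L:=\overline{\sigma_L(t>t^*)}$. The unitary $U_L(t^*,0)$ preserves $S_{\mathrm{vN}}$. The long-time average under $H_1$ is the pinching onto the eigenspaces of $H_1$, i.e.\ a convex combination of unitary conjugations, so by concavity it does not decrease $S_{\mathrm{vN}}$. Hence $S_{\mathrm{vN}}[\bar\sigma_L]\ge S_{\mathrm{vN}}[\sigma_L]$, and by \eqref{eq:DEF_s^TD} we get $\liminf_{L\to\infty}S_{\mathrm{vN}}[\bar\sigma_L]/N\ge s^{\mathrm{TD}}(\beta_0|H_0)$.

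Third, I upper-bound $S_{\mathrm{vN}}[\bar\sigma_L]/N$ by local quantities, using subadditivity and concavity. For each of the $\ell^d$ shifts $\bm{a}\in\Cell$, tile $\Lambda_L$ by translates $\Cell[\bm{r}]$ with $\bm{r}\in\bm{a}+\ell\mathbb{Z}^d$. Cubes that do not fit because of the boundary of $\Lambda_L$ can be split into single sites. Subadditivity then gives $S_{\mathrm{vN}}[\bar\sigma_L]\le\sum_{\bm{r}}S_{\mathrm{vN}}[\bar\sigma^{\bm{r}}_{L|\ell}]+O(L^{d-1}\ell\log D)$. Averaging over the $\ell^d$ shifts yields
\begin{align}
    S_{\mathrm{vN}}[\bar\sigma_L]\le\frac{1}{\ell^d}\sum_{\bm{r}\in\Lambda_L}S_{\mathrm{vN}}[\bar\sigma^{\bm{r}}_{L|\ell}]+O(L^{d-1}).
\end{align}
I then group the $\bm{r}$ by the primitive macroscopic subsystem $\mathcal{S}^{(k)}$ that contains them and apply concavity within each group: $\sum_{\bm{r}\in\mathcal{S}^{(k)}}S_{\mathrm{vN}}[\bar\sigma^{\bm{r}}_{L|\ell}]\le|\mathcal{S}^{(k)}|\,S_{\mathrm{vN}}[\bar\sigma^{(k)}_{L|\ell}]$. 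Dividing by $N$ and using $|\mathcal{S}^{(k)}|/N\to K^{-d}$, I obtain
\begin{align}
    \frac{S_{\mathrm{vN}}[\bar\sigma_L]}{N}\le\frac{1}{K^d}\sum_{k=1}^{K^d}\frac{S_{\mathrm{vN}}[\bar\sigma^{(k)}_{L|\ell}]}{\ell^d}+o(1).
\end{align}
Since $\bar\sigma^{(k)}_{L|\ell}\to\bar\sigma^{(k)}_{\infty|\ell}$ in a fixed finite dimension (Definition~\ref{def:sp.av.ell.dens.mat}), continuity of $S_{\mathrm{vN}}$ shows that the right-hand side tends to $s^{\mathrm{mac}}_{\ell}[(\bar\sigma_L)_{L\in\mathbb{N}}]$. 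Combining this with the second step gives the displayed inequality for every $\ell$, and taking $\liminf_{\ell\to\infty}$ completes the proof.

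I expect the main obstacle to be the bookkeeping in the third step rather than any conceptual gap. One must check that cubes straddling primitive subsystems or the periodic boundary contribute only $O(L^{d-1})$ entropy, uniformly for fixed $\ell$ and $K$. One must also confirm that the long-time average is genuinely a pinching, including in the presence of degeneracies of $H_1$, so that the entropy monotonicity in the second step is legitimate. It is the reduction in the first step that makes the result independent of whether $\rho_L$ itself is pure, because all of the entropy is carried by the Gibbs representative.
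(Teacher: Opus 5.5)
Your proposal is correct and follows essentially the same route as the paper. You reduce to the Gibbs representative via Theorem~\ref{theorem:JAIVTMXN_3}, Assumption~\ref{assumption:UniqueFinalState} and Corollary~\ref{corollary:Equiv_s^red}, and then combine three ingredients:
\begin{itemize}
\item monotonicity of the global von Neumann entropy under the composite unital map;
\item the subadditivity--concavity--continuity bound, which you re-derive inline and which is Proposition~\ref{proposition:s^red>=SvN} in the paper;
\item Theorem~\ref{theorem:s^red=s^TD}.
\end{itemize}

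The differences are cosmetic. You justify entropy monotonicity by writing the long-time average as a pinching, i.e.\ a mixture of unitary conjugations, whereas the paper uses monotonicity of relative entropy under unital CPTP maps. You also tile the whole lattice rather than each primitive subsystem. Your worry about cubes straddling primitive subsystems is in fact moot: $\rho^{(k)}_{L|\ell}$ averages $\rho^{\bm r}_{L|\ell}$ over labels $\bm r\in\mathcal{S}^{(k)}$ regardless of where the cube sits.
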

\begin{proof}
See Appendix~\ref{sec:Proof_EntropyIncrease}.
\end{proof}
We mention that Eq.~\eqref{eq:Increasing_s^red_Thermalization} of Theorem~\ref{theorem:EntropyIncrease} readily follows from this theorem, as explained in Appendix~\ref{sec:Proof_Th3_from_Th4}.

\section{\label{sec:ComparisonStudies}Comparison with the existing studies}

In this section, we compare our notion of iMATE and our results on the second law with those of the existing studies.

\subsection{\label{sec:MITE}iMATE vs. MITE and KMS}

In this subsection, we will show that our notion of thermal equilibrium, iMATE, is a broader class of states than the well-known notion of thermal equilibrium, MITE. We also explain that, when discussing the connection to thermodynamics, iMATE is more appropriate than MITE.

We start by rewriting the definition of MITE in Ref.~\cite{Mori2018} in terms of our notation.
The condition for MITE on the length scale $\ell$ in Ref.~\cite{Mori2018} can be rephrased as
\begin{align}
    \max_{\bm{r}\in\Lambda_{L}}\|\rho_{L|\ell}^{\bm{r}}- \tau_{L|\ell}^{\bm{r}}\|_{1}\ll 1,
\end{align}
where $\tau_{L}$ is the microcanonical Gibbs state on some energy shell, and  $\rho_{L|\ell}^{\bm{r}}$ (resp. $\tau_{L|\ell}^{\bm{r}}$)  is the reduced density matrix of $\rho_L$ (resp. $\tau_L$) on 
$\Cell+\bm{r}$ 
moved onto $\Cell$,  as defined in Eq.~\eqref{eq:ReducedState_rho^r}.
We interpret the symbol ``$\ll 1$'' in the above condition to mean that the larger $L$ is, the smaller the LHS becomes.
Then
the above condition can be rewritten as
\begin{align}
    \lim_{L\to\infty}\max_{\bm{r}\in\Lambda_{L}}\|\rho_{L|\ell}^{\bm{r}}- \tau_{L|\ell}^{\bm{r}}\|_{1}=0.
\end{align}

It is usually considered that, outside the first-order phase transition points, the microcanonical Gibbs state 
$\tau_L$
in the above condition can be replaced by the canonical Gibbs state 
$\sigma_L=\rho_{L}^{\mathrm{can}}(\beta|H_L)$ with the corresponding inverse temperature~$\beta$
because of the local equivalence of these two Gibbs states (which is rigorously shown under reasonable assumptions~\cite{Brandao2015Equivalence,Tasaki2018}).
Thus, we obtain the following:
\begin{definition}[\label{definition:MITE}MITE]
For any $\ell\in\mathbb{N}$, a sequence of density matrices $(\rho_{L})_{L\in\mathbb{N}}$ is said to represent $\ell$-local MITE of a system described by a Hamiltonian $H_L$ if 
\begin{align}
    \lim_{L\to\infty}\max_{\bm{r}\in\Lambda_{L}}\bigl\|\rho_{L|\ell}^{\bm{r}}-\sigma^{\bm{r}}_{L|\ell}\bigr\|_{1}=0,
    \label{eq:condition_MITE}
\end{align}
where $\sigma_L=\rho_{L}^{\mathrm{can}}(\beta|H_L)$ is the canonical Gibbs state with some inverse temperature $-\infty<\beta<\infty$.
Furthermore, it is said to represent $O(L^0)$-local MITE 
if it represents $\ell$-local MITE 
for all $\ell\in\mathbb{N}$.
\end{definition}

It is worth mentioning that a KMS state in infinite systems will correspond to the thermodynamic limit of a state in $O(L^0)$-local MITE, as it corresponds to the thermodynamic limit of the Gibbs state~\cite{Haag1996,Bratteli2002,Bratteli2002a}.

Our iMATE includes the above-defined $O(L^0)$-local MITE:
\begin{proposition}[MITE $\subset$ iMATE]
If a sequence of density matrices $(\rho_{L})_{L\in\mathbb{N}}$ represents $O(L^0)$-local MITE 
then it represents iMATE.
\end{proposition}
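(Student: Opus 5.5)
We will verify Definition~\ref{definition:MacroEqState} directly: with $\sigma_L=\rho_L^{\mathrm{can}}(\beta|H_L)$ taken as the Gibbs state appearing in Eq.~\eqref{eq:condition_MITE}, we show two things. First, $(\rho_L)_{L\in\mathbb{N}}$ represents a macroscopic state. Second, Eq.~\eqref{eq:MacroscopicEquiv} holds for this $\sigma_L$.

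Fix a proper sequence $(A_L)_{L\in\mathbb{N}}\in\SetAdditive$, so that $A_L=A_{\mathcal{S}_L}(\aLocal)$ for an $\ell$-local $\aLocal$ supported on $\Cell$. By the definition \eqref{eq:ReducedState_rho^r} of the $\ell$-local density matrices,
\begin{align}
    \mathrm{Tr}\Bigl[\rho_L\frac{A_L}{N}\Bigr]
    =\frac{1}{N}\sumst{\bm{r}}{\supp(\aLocal_{\bm{r}})\subset\mathcal{S}_L}\mathrm{Tr}\bigl[\rho_{L|\ell}^{\bm{r}}\aLocal\bigr],
\end{align}
and the same identity holds with $\rho$ replaced by $\sigma$. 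Subtracting the two and using $|\mathrm{Tr}[X\aLocal]|\le\|X\|_1\|\aLocal\|$ gives
\begin{align}
    \Bigl|\mathrm{Tr}\Bigl[(\rho_L-\sigma_L)\frac{A_L}{N}\Bigr]\Bigr|
    \le\frac{|\mathcal{S}_L|}{N}\|\aLocal\|\max_{\bm{r}\in\Lambda_L}\bigl\|\rho_{L|\ell}^{\bm{r}}-\sigma_{L|\ell}^{\bm{r}}\bigr\|_1 .
\end{align}
Since $|\mathcal{S}_L|\le N$, the prefactor is bounded by $\|\aLocal\|$. The $\ell$-local MITE condition \eqref{eq:condition_MITE}, which holds for every $\ell$ because the state is $O(L^0)$-local MITE, then makes the right-hand side tend to zero.

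By Assumption~\ref{assumption:GibbsState}, $\lim_{L\to\infty}\mathrm{Tr}[\sigma_L A_L/N]$ exists. The vanishing difference therefore implies that $\lim_{L\to\infty}\mathrm{Tr}[\rho_L A_L/N]$ exists and equals the same value. Because $(A_L)_{L\in\mathbb{N}}$ was arbitrary, this establishes both Eq.~\eqref{eq:MacroState} for $(\rho_L)_{L\in\mathbb{N}}$ and the equivalence $(\rho_L)_{L\in\mathbb{N}}\maceq(\sigma_L)_{L\in\mathbb{N}}$ with finite $\beta$, which is exactly iMATE.

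There is no serious obstacle here. The only points needing care are, first, to use the \emph{uniform} (max over $\bm{r}$) convergence in Eq.~\eqref{eq:condition_MITE}; pointwise convergence would not control the sum over $\Theta(N)$ sites. Second, the existence of the limit for $\rho_L$ has to be borrowed from Assumption~\ref{assumption:GibbsState} rather than proved independently. Translation invariance of $\sigma_L$ is not even needed for this argument.
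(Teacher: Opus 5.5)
Your proof is correct, and it reaches the conclusion by a more direct route than the paper. The paper's proof is a two-line reduction. It averages the MITE condition over each primitive subsystem to get $\lim_{L\to\infty}\|\rho^{(k)}_{L|\ell}-\sigma^{(k)}_{L|\ell}\|_1=0$ for all $k$ and $\ell$. It then invokes Proposition~\ref{proposition:Equiv_rho^red}, whose proof uses the additivity approximation Eq.~\eqref{eq:Additive_S_approximation} to pass from primitive to general macroscopic subsystems.

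You instead bound the expectation value of an arbitrary additive observable on an arbitrary macroscopic subsystem directly, via Hölder's inequality. This bypasses both the characterization proposition and the boundary-error bookkeeping, and it yields the explicit bound $\|\aLocal\|\max_{\bm r}\|\rho^{\bm r}_{L|\ell}-\sigma^{\bm r}_{L|\ell}\|_1$, uniform over all $\mathcal{S}_L$.

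Your version also makes explicit a point the paper leaves tacit. Proposition~\ref{proposition:Equiv_rho^red}, and even the definition of $\rho^{(k)}_{\infty|\ell}$, presuppose that $(\rho_L)_{L\in\mathbb{N}}$ already represents a macroscopic state. You derive this from Assumption~\ref{assumption:GibbsState} together with the vanishing difference, rather than assuming it.

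The paper's route is shorter. It also records the trace-norm convergence of the spatially averaged local density matrices, which is the object it reuses later for the quantum macroscopic entropy density.
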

\begin{proof}
From Eq.~\eqref{eq:condition_MITE}, we have
\begin{align}
    \lim_{L\to\infty}\bigl\|\rho_{L|\ell}^{(k)}-\sigma_{L|\ell}^{(k)} \bigr\|_{1}=0
\end{align}
for all $\ell\in \mathbb{N}$.
From Proposition~\ref{proposition:Equiv_rho^red}, this means $(\rho_{L})_{L\in\mathbb{N}}\maceq \bigl(\rho^\mathrm{can}_{L}(\beta|H_L)\bigr)_{L\in\mathbb{N}}$, indicating that $(\rho_{L})_{L\in\mathbb{N}}$ represents iMATE.
\end{proof}
\noindent
On the other hand, 
we have seen in Sec.~\ref{sec:MacroEquilibrium} that there are thermal equilibrium states that are iMATE but not MITE.
Therefore, MITE serves as a narrower definition of thermal equilibrium than the iMATE. 
We have also seen there that
iMATE has a stronger connection to thermodynamics than MITE, as iMATE is characterized by additive observables, which correspond to additive quantities in thermodynamics, while MITE is characterized by local observables.
Moreover, MITE crucially depends on the choice of boundary conditions or the existence of only a single impurity; the Gibbs states under different boundary conditions or with different positions of few impurities have different expectation values of local observables near the boundary or the impurities, meaning that they represent different MITE.
By contrast, they represent the same iMATE. This means that iMATE reflects the insensitivity of thermodynamics and equilibrium statistical mechanics to the choice of boundary conditions and the position of few impurities, while MITE does not.
For these reasons, iMATE is more appropriate than MITE
when discussing the consistency with thermodynamics.

Note that 
iMATE is characterized by 
an $L$-independent number of observables, in contrast to MITE.
To explain this point, let us compare the number of independent
additive observables composed of $\ell$-local observables with the number of independent $\ell$-local observables. 
The former
is bounded from above by $K^dD^{2\ell^d}$, which is independent of $L$, while 
the latter
is bounded from below by $(D^2-1)^{\ell^d}N$. 
Since we take the thermodynamic limit $L\to\infty$ before taking $\ell$ large, 
the latter
grows much faster than 
the former.
In this sense, verification of iMATE requires investigating a much
smaller number of observables than $O(L^0)$-local MITE. 

We mention that
several studies consider the following class of states 
that is even narrower than the $O(L^0)$-local MITE:
Let $(\ell_{L})_{L\in\mathbb{N}}$ be a sequence of positive integers that diverges slower than $\Theta(L)$, that is, it satisfies $\ell_{L}=\omega(L^0)$ and $\ell_{L}=o(L)$, such as $\ell_L=\lfloor\sqrt{L}\rfloor$~\footnote{We do not know exactly how fast the divergence of $\ell_{L}$ can be. However, at least when $\ell_{L}=\Theta(L)$, the local equivalence between the microcanonical and the canonical Gibbs states breaks down~\cite{Garrison2018}, which is outside the scope of our interest. Therefore, $\ell_{L}=o(L)$ should be imposed. }. 
We say that a sequence of density matrices $(\rho_{L})_{L\in\mathbb{N}}$ represents $\ell_L$-local MITE if it satisfies
\begin{align}
    \lim_{L\to\infty}\max_{\bm{r}\in\Lambda_{L}}\bigl\|\rho_{L|\ell_L}^{\bm{r}}-\sigma^{\bm{r}}_{L|\ell_L}\bigr\|_{1}=0,
    \label{eq:condition_MITE_LargerLength}
\end{align}
where $\sigma_L=\rho_{L}^{\mathrm{can}}(\beta|H_L)$ is the canonical Gibbs state with some inverse temperature.
This condition is stronger than Eq.~\eqref{eq:condition_MITE} for any $\ell\in\mathbb{N}$, and hence $(\rho_L)_{L\in\mathbb{N}}$ also satisfies the condition for $O(L^0)$-local MITE.
Although there are some studies~\cite{Brandao2015Equivalence,Tasaki2018,Garrison2018,sugimoto2023bounds} investigating observables with $\omega(L^0)$ locality, which would be related to Eq.~\eqref{eq:condition_MITE_LargerLength}, most of the existing studies~\cite{Kim2014,Beugeling2014} explore $\ell$-local observables with $\ell=O(L^0)$, which is 
consistent with Eq.~\eqref{eq:condition_MITE}. 
In addition, as mentioned above, the KMS condition corresponds to $O(L^0)$-local MITE. This also suggests that $\ell_L$-local MITE is less established than $O(L^0)$-MITE.

\subsection{\label{sec:OrdinaryMATE}iMATE vs. the ordinary MATE}

In this subsection, we briefly review the ordinary notion of MATE~\cite{Neumann1929,Goldstein2015,Goldstein2017,Tasaki2016,Mori2018} and compare it with our iMATE.

The notion of MATE was first introduced by von Neumann~\cite{Neumann1929}; it was later termed MATE by Goldstein \textit{et al.}~\cite{Goldstein2015}. Here, we explain the notion of MATE, following the review article by Mori \textit{et al.}~\cite{Mori2018}. First, they take a fixed set of ``macroobservables'' $\{M^{1}_{L},...,M^{m}_{L}\}$. As examples, they enumerated particle number, energy, momentum, and magnetization on each macroscopic subsystem. Next, they introduced approximation of these macroobservables $\{\tilde{M}_{L}^{1},...,\tilde{M}_{L}^{m}\}$ that commute with each other. Note that although the existence of such approximations is quite nontrivial, when ``macroobservables'' $M_{L}^{1},...,M_{L}^{m}$ (whose definition is not exactly specified in
Refs.~\cite{Neumann1929,Goldstein2015,Goldstein2017,Tasaki2016,Mori2018}) are taken as additive observables and $m$ is fixed to a finite value, the existence is proved mathematically by Ogata~\cite{Ogata2013}. Lastly, they say a quantum state represents MATE if, in the joint distribution of $\tilde{M}_{L}^{1},...,\tilde{M}_{L}^{m}$ in that state, $\tilde{M}_{L}^{\mu}$'s are sufficiently concentrated at the equilibrium values for all $\mu=1,...,m$.

Our iMATE shares a key motivation with the ordinary MATE to focus on observables that are macroscopic in some sense; however, there is a significant difference in the number of observables of interest. As Ref.~\cite{Mori2018} says that $\{M^{1}_{L},...,M^{m}_{L}\}$ is fixed, the number of macroobservables $m$ is fixed 
independently of $L$
in the ordinary MATE. Another formulation of MATE by Tasaki~\cite{Tasaki2016} also requires that ``$m$ is not too large, and is independent of $N$''. Furthermore, these are consistent with the condition for applying the mathematical result by Ogata~\cite{Ogata2013} that $m$ is fixed to a finite value. By contrast, in our formulation of iMATE, the set of (sequences of) additive observables $\SetAdditive$, which we focus on, is an infinite set. Moreover, as shown by Example~\ref{example:ProblemFiniteObs}, focusing only on a finite number of additive observables will cause a contradiction to the second law of thermodynamics.

We also comment on
another difference:
The ordinary MATE requires the concentration in the distribution of macroobservables, while our iMATE focuses only on the expectation values of additive observables. However, we consider this difference noncritical, as our normal iMATE assumes small variances for additive observables, which has a meaning close to the concentration.

\subsection{\label{sec:Hokkyo}Comparison with the results by Hokkyo and Ueda}

Hokkyo and Ueda~\cite{Hokkyo2025} introduced ``thermodynamic passivity," 
which can be compared to our macroscopic passivity (Corollary~\ref{corollary:macroscopic-passivity}).

They showed thermodynamic passivity for states representing MITE.
[More precisely, they assumed the $\ell_L$-local MITE, Eq.~\eqref{eq:condition_MITE_LargerLength}, for some $\ell_L$ satisfying $\ell_{L}=\omega(L^0)$; this is  a stronger assumption than the $O(L^0)$-local MITE, Eq.~\eqref{eq:condition_MITE}, as explained in Sec.~\ref{sec:MITE}.]
Our macroscopic passivity (Corollary~\ref{corollary:macroscopic-passivity}) is also valid for states representing iMATE, including typical METTS~\eqref{eq:METTS} and Examples~\ref{example:NonMITE_MacroEqState} and \ref{example:NonMITE_MacroEqState_FiniteTemp}, to which their results cannot be applied. We consider this difference crucial because, as mentioned in Sec.~\ref{sec:MITE}, MITE has less connection to thermodynamics than iMATE. 
Therefore, 
applicability for iMATE is not a naive extension of MITE but a significant step toward deriving thermodynamic laws solely from quantum mechanics.

We also point out that
they showed thermodynamic passivity by assuming that the inverse temperature of the initial state is greater than some positive value independent of the system size.
On the other hand, our proof based on the Lieb-Robinson bound applies to any nonnegative inverse temperature, including $\beta=0$.
In particular, at $\beta=0$, we have found that macroscopic passivity can be extended to a more general form: the expectation value of not only Hamiltonian but any additive observable cannot be increased nor decreased extensively through any macroscopic operation with a finite operation time (Corollary~\ref{corollary:MacroPassivity_beta0}). 

Note that our contributions 
cover far beyond the passivity;
we have proved the law of increasing entropy (Theorems~\ref{theorem:EntropyIncrease} and \ref{theorem:EntropyIncrease_WithoutThermalization}),
which is quantum-mechanically inequivalent to the macroscopic passivity.
Furthermore, we have introduced many key concepts, including macroscopic equivalence, iMATE, macroscopic operations, and quantum macroscopic entropy density, and obtained rich results,
such as
preservation of macroscopic equivalence (Theorem~\ref{theorem:JAIVTMXN_3}), 
and the
entropy formula (Theorem~\ref{theorem:s^red=s^TD}).
We also give plenty of examples,
including critical examples in Sec.~\ref{sec:Counterexamples_Timescale},
which
show that 
the operation times in Ref.~\cite{Hokkyo2025} and ours are impossible to extend.

\subsection{\label{sec:Meier}Comparison with the results by Meier, \textit{et al.}}

Meier \textit{et al.} recently studied equilibration of the ``Shannon observable entropy'' $S_{\mathrm{Sh}}^{O}$ in isolated quantum systems \cite{Meier2025}. Here, $S_{\mathrm{Sh}}^{O}$ is defined as the Shannon entropy of 
the probability distribution of the measurement outcome of a single observable~$O$.
Under certain assumptions,
they 
found that 
the deviation of $S_{\mathrm{Sh}}^{O}$ from its value in the stationary state
is sufficiently small for typical time $t\in[0,T]$ when $T$ is sufficiently large~\footnote{Note that, as discussed in Refs.~\cite{Short2012,deOliveira2018}, $T$ should be taken exponentially large with respect to $N$ to make their upper bound meaningful.}.
This implies that $S_{\mathrm{Sh}}^{O}$ increases for typical time.
We raise several crucial differences between their results and ours.

First, we are interested in resolving the apparent inconsistency of quantum mechanics and thermodynamics 
that occurs when the thermal equilibrium state is represented not by Gibbs states but by a general quantum state.
Therefore, our quantum macroscopic entropy has been introduced such that it agrees with thermodynamic entropy for any quantum state representing iMATE. In contrast, $S_{\mathrm{Sh}}^{O}$ has no general connection to thermodynamic entropy except when the state is taken as a Gibbs state (with $O$ taken as the Hamiltonian).

Second, 
while our results (Theorems~\ref{theorem:EntropyIncrease} and \ref{theorem:EntropyIncrease_WithoutThermalization}) apply to any initial state representing thermal equilibrium,
they focused on the initial state whose
effective dimension is sufficiently 
larger than the number of possible measurement outcomes of $O$
(which typically grows as $\Omega(N)$~\cite{LandauSymbols} when $O$ is taken as an additive observable such as 
the magnetization or the Hamiltonian).
This means that 
energy eigenstates,
which are one of our primary interests as discussed in Sec.~\ref{sec:introduction},
were excluded
even if they represent thermal equilibrium.

Third, 
they considered isolated systems while we consider closed systems: their time evolution is governed by a \emph{time-independent} Hamiltonian in contrast to ours. In other words, they focused solely on 
relaxation 
in the absence of operations,
whereas we are interested in the 
more general case where the systems are
subject to 
\textit{time-dependent} adiabatic operations.

Finally, 
$S_{\mathrm{Sh}}^{O}$ was defined for a single observable $O$;
extending this to multiple noncommutative observables appears nontrivial,
since the spectral decomposition of $O$ was 
used to define $S_{\mathrm{Sh}}^{O}$.
By contrast, our quantum macroscopic entropy density $s_{\ell}^{\mathrm{mac}}[\bullet]$ is defined using $\rho_{L|\ell}^{(k)}$, which contains complete information about the expectation values of \textit{all} additive observables composed of $\ell$-local observables.

\section{\label{sec:Counterexamples_Timescale}Counterexamples by 
longer-timescale operations}

We have proved two forms of the second law of thermodynamics, i.e., the macroscopic passivity (Corollary~\ref{corollary:macroscopic-passivity}) and the law of increasing entropy (Theorems~\ref{theorem:EntropyIncrease} and \ref{theorem:EntropyIncrease_WithoutThermalization}).
They are proved for any macroscopic operations with the operation 
time $t^*$ of $O(L^0)$.
In this section, we show that this timescale of $t^*$ is the longest 
for these general results, by showing that there exist counterexamples when $t^*$ is longer.

\subsection{Counterexample to the macroscopic passivity}

We first show that a naive extension 
of the macroscopic passivity (Corollary~\ref{corollary:macroscopic-passivity}) 
to longer operation times is impossible:
\begin{proposition}[Counterexample to macroscopic passivity at any longer timescale] \label{proposition:breakdown_macroscopic-passivity}
Consider a system on a lattice of arbitrary spatial dimension with local Hilbert spaces $\cong \mathbb{C}^4$.
For any operation time $T_L$
satisfying both $T_L=\omega(L^0)$ and $T_L=O(L)$,
there exist a Hamiltonian $\initH$,
a sequence of microscopic states $(\rho_L)_{L\in\mathbb{N}}$ representing iMATE described by $\initH$ at the inverse temperature $\beta=0$, and a macroscopic operation $(U_L(\bullet,0))_{L\in\mathbb{N}}$ such that
\begin{align}
    \lim_{L \to\infty} \mathrm{Tr} \left[ \rho_L(T_L) \initH / N \right]
    < \lim_{L \to \infty} \mathrm{Tr} \left[ \rho_L(0) \initH / N \right],
    \label{eq:breakdown_macroscopic-passivity}
\end{align}
where $\rho_L(t) = U_L(t, 0) \rho_L {U_L(t, 0)}^\dagger$.
\end{proposition}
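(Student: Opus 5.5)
The plan is to build an explicit counterexample in which the initial state is locally featureless, so that it represents iMATE at $\beta=0$, but carries entanglement between pairs of sites a distance of order $T_L$ apart. A translation-invariant macroscopic operation that runs for time $\Theta(T_L)$ then moves each pair's partners onto a common site, after which they can be used to lower the energy. Concretely, we regard each local space $\mathbb{C}^4$ as two qubits $A_{\bm r}$ and $B_{\bm r}$, and we fix a lattice direction $\bm e$. We set $M_L:=\lfloor T_L/c\rfloor$ for a constant $c$ fixed below, so that $M_L=\omega(L^0)$ and $M_L=O(L)$. The initial state is the pure state
\begin{align}
\rho_L=\bigotimes_{\bm r\in\Lambda_L}\ket{\Phi}\bra{\Phi}_{A_{\bm r},\,B_{\bm r+2M_L\bm e}},
\end{align}
where $\ket\Phi$ is the singlet and site labels are taken modulo $L$. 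This is a perfect matching of all qubits, so $\rho_L$ is well defined. We take the Hamiltonian $H_L=\sum_{\bm r}h_{\bm r}$ with $h=-P^{\mathrm{sing}}_{A_{\bm 0}B_{\bm 0}}$, which is $1$-local and translation invariant.

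The first step is to show that $(\rho_L)$ represents iMATE at $\beta=0$. As in Example~\ref{example:HalfChainEntanglement}, for fixed $\ell$ and $L$ large enough that $2M_L>\ell$, no singlet has both of its qubits inside $\Cell[\bm r]$. Hence every $\rho_{L|\ell}^{\bm r}$ equals $(I/4)^{\otimes\Cell}$, so $\rho_{\infty|\ell}^{(k)}=(I/4)^{\otimes\Cell}$ for all $k$ and $\ell$. Proposition~\ref{proposition:Equiv_rho^red} then gives $(\rho_L)\maceq(I/4^N)$, which is the canonical state at $\beta=0$. In particular the initial energy density is $\mathrm{Tr}[h]/4=-1/4$.

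The second step is the operation, a Floquet conveyor built from two additive observables whose local terms commute. The first is $B^1=\sum_{\bm r}\mathrm{SWAP}_{A_{\bm r}B_{\bm r}}$. The second is $B^2=\sum_{\bm r}\mathrm{SWAP}_{B_{\bm r}A_{\bm r+\bm e}}$, obtained from a $2$-local observable. Within each of $B^1$ and $B^2$ the terms act on disjoint pairs of qubits and therefore commute. Since $\mathrm{SWAP}^2=1$, the evolution $e^{i(\pi/2)B^\mu}$ implements the product of all the corresponding swaps, up to a global phase. We also include $B^3=H_L$ with $f^3\equiv1$, so that the drift term $H_L$ in Eq.~\eqref{eq:general_H} is cancelled during the operation. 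We then alternate $f^1=-1$ for a time $\pi/2$ with $f^2=-1$ for a time $\pi/2$. One period, of duration $\pi$, moves every $A$ qubit by $+\bm e$ and every $B$ qubit by $-\bm e$, which one checks by tracking a single qubit through the on-site swap and then the bond swap. After $M_L$ periods, the partners $A_{\bm r}$ and $B_{\bm r+2M_L\bm e}$ both sit at site $\bm r+M_L\bm e$, so the state has become $\bigotimes_{\bm s}\ket\Phi_{A_{\bm s}B_{\bm s}}$. The corresponding energy density is $-1<-1/4$, which gives Eq.~\eqref{eq:breakdown_macroscopic-passivity}. Taking $c=\pi$, the operation time is $\pi M_L\le T_L$. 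To end exactly at $T_L$, I would pad the remaining time, which is at most $\pi$, with $f^1=f^2=0$ and $f^3=1$, so that the Hamiltonian vanishes and nothing evolves.

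The main obstacle is checking that this is a legitimate macroscopic operation under Definition~\ref{definition:macroscopic-operation}. The control functions $f^\mu$ are bounded, but their switching times depend on $L$ through $M_L$. This falls under the footnote allowing $L$-dependent $f^\mu$ with uniformly bounded magnitude, and I would state this explicitly. I also need to confirm that no translation-invariance issue arises. Because the conveyor uses two species per site, no even or odd sublattice is required, so each $B^\mu$ is a genuine additive observable $A_{\Lambda_L}(\aLocal)$. Finally, the construction works in any dimension, since only the direction $\bm e$ is used, and it is consistent with Theorem~\ref{theorem:JAIVTMXN_3}: that theorem is not contradicted because $T_L=\omega(L^0)$.
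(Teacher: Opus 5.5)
There are two gaps in your proposal. The overall idea matches the paper's: singlets stretched over a distance $\Theta(T_L)$, a two-step SWAP conveyor that brings partners together, and an energy that rewards local singlets. Your conveyor bookkeeping, the cancellation via $B^3=H_L$, and the final energy density $-1<-1/4$ are correct.

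First, the iMATE check fails for part of the allowed range of $T_L$. You need the \emph{periodic} distance $\min\{2M_L\bmod L,\ L-(2M_L\bmod L)\}$ between partners to diverge; $2M_L>\ell$ is not enough. With unit fields and $c=\pi$ this fails for admissible $T_L$. For $T_L=\pi L/2$ and even $L$ one has $2M_L=L\equiv 0$, so every singlet sits on a single site. Then $\rho_L$ is the ground state of $H_L$, with energy density $-1$, and is not iMATE at $\beta=0$. The paper avoids this by first rescaling time by an $L$-independent factor so that $T_L<L/8$. You need the same step, e.g.\ fields of magnitude $\epsilon$ applied for times $\pi/(2\epsilon)$, with $\epsilon$ fixed by the constant in $T_L=O(L)$.

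Second, switching the conveyor off after $M_L$ periods makes $f^\mu$ depend on $L$. Definition~\ref{definition:macroscopic-operation} takes $f^\mu$ independent of $L$. The footnote you invoke only says the \emph{positive} results survive $L$-dependent bounded fields. So your counterexample lives in an enlarged class and does not show that Corollary~\ref{corollary:macroscopic-passivity} is optimal for the class as defined.

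Your $h$ also does not survive making the controls $L$-independent and periodic. Suppose $T_L$ falls in a bond-swap half-period after the partners have met. The partial evolution $\prod_{\bm s}(\cos\tau-i\sin\tau\,\mathrm{SWAP}_{B_{\bm s}A_{\bm s+\bm e}})$ then drives the on-site singlet weight from $1$ down toward $1/4$ as $\tau\to\pi/2$. So the limiting energy density can equal the initial value $-1/4$.

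This residual phase is exactly what the paper's construction is built to handle. Its $f^1$ is an $L$-independent square wave. Its $H_L$ contains, besides $\frac{\pi}{4}\sum_j\vec\sigma_j\cdot\vec\sigma_{j+1}$, a term $\frac{\pi}{4}\sum_j\vec\sigma_j\cdot\vec\sigma_{j+3}$ that rewards the dimer pattern produced by the other half-period. The energy density is then $-\frac{3}{8}\pi$ or $-\frac{3}{8}\pi(\sin^4+\cos^4)\le-\frac{3}{16}\pi$ at every residual phase.

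You can repair your version in the same spirit. Add $-P^{\mathrm{sing}}_{B_{\bm 0}A_{\bm e}}$ to $h$, and choose the pairing offset $2M_L$ or $2M_L+1$ according to which half-period contains $T_L$. Then the singlets are stationary at time $T_L$: on-site during an on-site-swap half-period, or on bonds during a bond-swap half-period.
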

\begin{proof}[Proof outline]
We prove this proposition by an explicit construction of $(\rho_L)_{L\in\mathbb{N}}$, $H_L$, and $(U_L(T_L,0))_{L\in\mathbb{N}}$.
We take $\rho_L$ as a product of $L/2$ Bell pairs. Each pair consists of two spins separated by a distance $\simeq 2T_L$. This implies that $(\rho_L)_{L\in\mathbb{N}}$ represents iMATE at $\beta=0$ (for any $H_L$). We choose the time-dependent Hamiltonian $H_L(t)$ such that $U_L(T_L,0)$ becomes a repetition of swap gates for $\simeq T_L$ times. This maps $\rho_L(T_L)$ to a product of nearest-neighbor Bell pairs. Choosing $H_L$ appropriately, we obtain Eq.~\eqref{eq:breakdown_macroscopic-passivity}.
For the details, see Appendix~\ref{sec:breakdown_macroscopic-passivity}.
\end{proof}
\noindent
Therefore, the timescale presented in the Corollary~\ref{corollary:macroscopic-passivity} is the longest one that can be proven in the physical setting considered in this paper.

We emphasize that this upper bound of the timescale is not peculiar to
our definition of thermal equilibrium, iMATE.
In fact, the sequence of microscopic states used in the construction of the above example is in MITE,
and,
since macroscopic operation $\subset$ local control (Definition~\ref{definition:LocalControl}),
the macroscopic operation is 
also local control. 
That is, the failure of passivity at timescales beyond $O(L^0)$ already occurs within 
the setup of ``MITE $+$ local control'' considered, e.g., by Hokkyo and Ueda~\cite{Hokkyo2025},
and is not specific to
our setup of ``iMATE $+$ macroscopic operation''.

\subsection{Counterexample to the law of increasing entropy}

We then show that a naive extension 
of the law of increasing entropy (Theorems~\ref{theorem:EntropyIncrease} and \ref{theorem:EntropyIncrease_WithoutThermalization}) 
to longer operation times is impossible:

\begin{proposition}[\label{proposition:breakdown_entropy}Counterexample to the law of increasing entropy at a longer timescale]
    Consider one-dimensional spin-$1/2$ systems, i.e., the case of $D=2$ and $d=1$.
    For arbitrary integer $T_L$ that scales as $T_L=\omega(L^0)$ and $T_L=O(L)$, there exist initial and final Hamiltonians $H_0$ and $H_1$, a sequence of the initial states $(\rho_L)_{L\in\mathbb{N}}$ representing iMATE at infinite temperature $\beta=0$, and a macroscopic operation $(U_L(\bullet,0))_{L\in\mathbb{N}}$ such that
    \begin{align}
        \lim_{\ell\to\infty}s^{\mathrm{mac}}_{\ell}[(\rho_{L})_{L\in\mathbb{N}}]&=\log 2\notag\\
        &>\lim_{\ell\to\infty}s^{\mathrm{mac}}_{\ell}[(\sigma_{L})_{L\in\mathbb{N}}]=0,
        \label{eq:Counterex_IncreasingEntropy}
    \end{align}
    where $(\sigma_L)_{L\in\mathbb{N}}$ is a sequence of the final state evaluated by the long-time average~\eqref{eq:LongTimeAve},
    \begin{align}
        \sigma_{L}:=\overline{\rho_L(t>T_L)},
        \label{eq:Counterex_Entropy_FinalState}
    \end{align}
    of the state at $t=T_L$, with $\rho_{L}(T_L):=U_L(T_L,0)\rho_L U_L^\dagger(T_L,0)$.
\end{proposition}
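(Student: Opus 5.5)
We prove the proposition by an explicit construction, modeled on the counterexample to macroscopic passivity (Proposition~\ref{proposition:breakdown_macroscopic-passivity}). The key steps are: build an initial state that is iMATE at $\beta=0$ but has long-range Bell-pair structure; use a macroscopic operation of duration $T_L$ to bring the pairs close together; and then use a relaxation Hamiltonian $H_1$ whose long-time average does not destroy the resulting local purity.

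\textbf{Step 1: the initial state.} Take $\rho_L=\ket{\psi_L}\bra{\psi_L}$, a product of $L/2$ singlets $\ket{\Phi}_{r,r'}$ in which each pair is separated by a distance $\simeq 2T_L$. This is in the same spirit as Example~\ref{example:HalfChainEntanglement}, with $M_L$ replaced by a quantity of order $T_L$. Since $T_L=\omega(L^0)$, for every fixed $\ell$ and large $L$ each block $\Cell[\bm r]$ contains no complete pair. Hence every $\ell$-local density matrix equals $(I/2)^{\otimes\Cell}$. By Proposition~\ref{proposition:Equiv_rho^red}, $(\rho_L)$ represents iMATE at $\beta=0$ for any $H_0$, and $s^{\mathrm{mac}}_\ell=\log 2$ for every $\ell$. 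For convenience take $H_0=0$ or any additive observable; this is irrelevant at $\beta=0$.

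\textbf{Step 2: the macroscopic operation.} Use a time-dependent field coupled to translation-invariant additive observables, for example staggered-in-time couplings to $\sum_j(\sigma_{2j}\cdot\sigma_{2j+1})$-type swap generators on even and odd bonds. Realizing the even- and odd-bond parts as separate additive observables requires period-2 structure. The paper's additive observables have unit translation invariance, so I would instead use the sublattice trick from Appendix~\ref{sec:breakdown_macroscopic-passivity}: encode a two-site unit cell. This is the delicate point, and I would copy that appendix's construction. Pulsing $\exp(-i\frac{\pi}{4}\sum \boldsymbol\sigma\cdot\boldsymbol\sigma)$, which is a swap up to phase, alternately on the two bond sets for $\simeq T_L$ unit-time steps implements a brickwork of swaps. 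This brickwork moves the paired spins toward each other, since left- and right-moving sublattices travel in opposite directions. After time $T_L$, $\rho_L(T_L)$ is a product of nearest-neighbor singlets on a fixed dimerization, up to $O(L^0)$ boundary defects. The field magnitudes are bounded and $L$-independent, so this is a valid macroscopic operation. Only the total duration scales as $T_L=O(L)$, and the integer requirement on $T_L$ matches the discrete number of swap layers.

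\textbf{Step 3: the relaxation, which is the main obstacle.} We need $H_1$ such that the long-time average $\sigma_L$ still has zero macroscopic entropy. The simplest choice is $H_1$ commuting with the dimer state, e.g.\ $H_1=\sum_j \boldsymbol\sigma_{2j}\cdot\boldsymbol\sigma_{2j+1}$ in the two-site encoding. The nearest-neighbor singlet product is then an eigenstate, so $\sigma_L=\rho_L(T_L)$. If the sublattice encoding forces a strictly translation-invariant $H_1$, I would instead pick $H_1$ diagonal in a basis containing the dimer state, such as an Ising-type term in a rotated local frame, so that the dephasing in the long-time average leaves it invariant. Finally, compute $\sigma_{L|\ell}^{\bm r}$. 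For each $\bm r$ (of each parity), it is a tensor product of pure singlets and at most two boundary qubits that are maximally mixed. Averaging over $\bm r$ in $\mathcal S^{(k)}$ gives a mixture of two states (the two parities), each with entropy $\le 2\log 2$. Therefore $S_{\mathrm{vN}}[\sigma_{\infty|\ell}^{(k)}]\le 2\log2+\log 2$, so $s^{\mathrm{mac}}_\ell[(\sigma_L)]=O(1/\ell^d)\to 0$. This yields Eq.~\eqref{eq:Counterex_IncreasingEntropy}. The boundary defects from Step 2 occupy an $o(L)$ region and do not affect $\rho_{\infty|\ell}^{(k)}$.
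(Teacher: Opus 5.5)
This is essentially the paper's construction, with one difference in the relaxation step. The shared ingredients are:
\begin{itemize}
\item the same long-range singlet product $\bigotimes_j\ket{\Phi}_{2j-1-2M_L,\,2j+2M_L}$ with $M_L=T_L$;
\item the same alternating even/odd-bond Heisenberg pulses, forming a SWAP brickwork that maps it exactly onto $\bigotimes_j\ket{\Phi}_{2j-1,2j}$ after an integer number of periods;
\item the same final estimate: your $2\log 2+\log 2$ is the paper's $\log 8$ from $\mathrm{rank}\le 8$.
\end{itemize}

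The difference is the choice of $H_1$. The paper takes $H_1$ to be the Majumdar--Ghosh chain $\frac{\pi}{4}\sum_j\vec\sigma_j\cdot\vec\sigma_{j+1}+\frac{\pi}{8}\sum_j\vec\sigma_j\cdot\vec\sigma_{j+2}$. This is one-site translation invariant for $D=2$ and has the dimer product as an exact ground state, so $\sigma_L=\rho_L(T_L)$. That is precisely the object your fallback is looking for.

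The fallback you actually propose, an Ising-type term in a uniformly rotated frame, does not work. The dimer product is entangled across each bond $(2j,2j+1)$. For example, $\sigma^z_{2j}\sigma^z_{2j+1}$ maps $\ket{\Phi}_{2j-1,2j}\ket{\Phi}_{2j+1,2j+2}$ to an orthogonal triplet--triplet state. The dimer product is therefore not an eigenstate, and you lose control of the long-time average.

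Your main choice, an intra-cell Heisenberg coupling in the two-site encoding, is valid. But then the whole argument lives in $D=4$ units: translations are by two spins, the final $\ell$-local states are exactly pure, and the initial value is $\log 4$ per cell, i.e.\ $\log 2$ per spin.

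Your observation that a period-two coupling is not in $\mathcal{A}$ for $D=2$ is correct. It applies equally to the paper's proof, which uses $B^1=\frac{\pi}{4}\sum_j(-1)^j\vec\sigma_j\cdot\vec\sigma_{j+1}$ with $f^1(t)=\pm1$. There $H_0=\frac{\pi}{4}\sum_j\vec\sigma_j\cdot\vec\sigma_{j+1}$ is chosen so that $H_0-f^1B^1$ alternates between the two bond sets.
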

\begin{proof}[Proof outline]
We prove this proposition by an explicit construction of $(\rho_L)_{L\in\mathbb{N}}$, $H_0$, $H_1$, and $(U_L(T_L,0))_{L\in\mathbb{N}}$.
We take $\rho_L$ and $U_L(T_L,0)$ as the same ones as Proposition~\ref{proposition:breakdown_macroscopic-passivity}. Here, note that the choice of $H_0$ is unimportant. As mentioned in the proof outline of Proposition~\ref{proposition:breakdown_macroscopic-passivity}, $\rho_L(T_L)$ is a product of nearest-neighbor Bell pairs. This state is known as the ground state of the Majumdar-Ghosh model~\cite{Majumdar1969}. By choosing $H_1$ as the Majumdar-Ghosh Hamiltonian, $\rho_L(T_L)$ becomes stationary in the relaxation process governed by $H_1$, and hence $\sigma_{L}=\rho_L(T_L)$ holds. Consequently, we obtain Eq.~\eqref{eq:Counterex_IncreasingEntropy}.
For the details, see Appendix~\ref{sec:breakdown_entropy}.
\end{proof}
\noindent
In this example, through a macroscopic operation of operation time $T_L=\omega(L^0)$, the quantum macroscopic entropy density decreases from the maximum value $\log 2$ to the minimum value $0$.
This means that if we permit a macroscopic operation of operation time $\omega(L^0)$, we can construct a counterexample of Theorems~\ref{theorem:EntropyIncrease} and \ref{theorem:EntropyIncrease_WithoutThermalization} with the maximum violation.
In addition, as shown in Appendix~\ref{sec:breakdown_entropy}, $\rho_L(T_L)$ is the ground state of $H_1$, indicating that the macroscopic operation induces state transition from $\beta=0$ to $\beta\to\infty$.

\section{Discussion}\label{sec:Discussion}

\subsection{Perspectives for longer-timescale operations}\label{sec:longer_time_scale}

As shown in the previous section, 
the timescale $O(L^0)$ of the operation time 
in our results corresponding to the second law
(Corollary~\ref{corollary:macroscopic-passivity}, 
Theorems~\ref{theorem:EntropyIncrease} and \ref{theorem:EntropyIncrease_WithoutThermalization})
is optimal in the sense that the straightforward extension
to a longer timescale is impossible.

However, in thermodynamics, we sometimes encounter operations completed on longer timescales.
For instance, consider the extraction of work from a fluid system by moving a piston quasistatically. To extract an amount of energy of $\Theta(L^d)$, which remains relevant in the thermodynamic limit, the piston must be moved by a distance of $\Theta(L)$. On the other hand, since the local
relaxation time of the fluid is independent of the system size, the piston speed must be $O(L^0)$ to ensure that the process remains quasistatic. Consequently, such an operation inevitably requires a time of $\Omega(L)$~\cite{LandauSymbols} and cannot be completed within the timescale of $O(L^0)$.

To extend the consistency with thermodynamics to 
such a longer timescale, we need to introduce some additional condition(s).
We here list three candidates:
\begin{enumerate}[label={(\roman*)},ref={\roman*}]
    \item \label{enum:LongerTimescaleCandidate_PreciseState} It is impossible to precisely prepare a quantum state exactly as specified, such as the pure state~\eqref{eq:breakdown_InitialState} in our counterexample.
    \item \label{enum:LongerTimescaleCandidate_PreciseOperation} It is impossible to precisely execute a macroscopic operation exactly as specified, such as the operation generated by the Hamiltonian~\eqref{eq:breakdown_Hamiltonian} in our counterexample.
    \item \label{enum:LongerTimescaleCandidate_ComplicatedState} When discussing state transitions from thermal equilibrium, we need to restrict quantum states to a subset of iMATE. The subset consists only of quantum states with complex structures, such as the typical energy eigenstates of nonintegrable systems.
\end{enumerate}

In candidates (\ref{enum:LongerTimescaleCandidate_PreciseState}) and (\ref{enum:LongerTimescaleCandidate_PreciseOperation}), `impossible' means that the success probability $\to 0$ as $L \to \infty$.
Candidate (\ref{enum:LongerTimescaleCandidate_ComplicatedState}) sounds reasonable because state transitions 
are \emph{nonequilibrium} processes, while we have defined iMATE to be fully consistent with \emph{equilibrium} statistical mechanics.
For example, while an integrable system has an equilibrium state, it
does not thermalize after quench and hence is abnormal thermodynamically.
Therefore, it seems natural to restrict quantum states
when discussing (non-quasistatic) thermodynamic processes.
Further investigation will be a subject of future research.

Interestingly,
when a longer timescale is considered without imposing these additional
conditions, our counterexamples
provide a rigorous realization of a violation of the second law of thermodynamics.
We stress that our examples do not rely on the measurement-and-feedback approach, which has been widely employed to construct counterexamples to the second law since Maxwell's pioneering thought experiment~\cite{Maxwell1867}. Rather, they are similar to the spirit of Loschmidt's paradox, stating that the second law seems violated because of the reversibility of microscopic dynamics. 
Given the state-of-the-art controllability of many-body dynamics in quantum simulator and computers, it is interesting to experimentally observe the violation of the second law predicted by our counterexamples using these platforms.
Such experiments will also confirm the necessity and importance of the additional condition(s) mentioned above.

\subsection{$K$ dependence}\label{sec:Kdependence}

As explained in Sec.~\ref{sec:Setup_Lattice},  we have taken the partition number $K$ to be a large integer independent of $L$, $K=O(L^0)$.
Physically, our results are expected to be independent of $K$.
An easy way of seeing this independence explicitly is to take 
the $K\to\infty$ limit after taking the thermodynamic limit $L\to\infty$, which would correspond to the situation in which continuum mechanics applies.
When taking this limit, we make the following technical adjustment.

Naively, one may expect that taking $K$ larger means focusing on more observables and considering more operations, since the side length of the primitive macroscopic subsystems, $L/K$, becomes smaller.
However, if we straightforwardly took $K$ larger, the set of (sequences of) additive observables would not necessarily get broader.
For instance, additive observables with $K=2$, in general, cannot be written as additive observables with $K=3$. To avoid such complexities, we can take $K$ as \footnote{
We can also take $K$ smaller. For example, when 
$M^{\mathrm{max}}=6$ it is sufficient to take 
$K=60$ instead of $K=6 \, !$.
}
\begin{align}
K=M^{\mathrm{max}} \, !,
\label{eq:choice_of_K}
\end{align}
where $M^{\mathrm{max}}$ is a positive integer.
To achieve the $K$ independence, 
we take $M^{\mathrm{max}}\to\infty$ after taking $L\to\infty$.

The above form of $K$ 
allows us to consider additive observables on macroscopic subsystems of side length $L/M$ 
for {\emph{all} $M\le M^{\mathrm{max}}$ \footnote{
This property is the physical motivation of the adjustment~\eqref{eq:choice_of_K}. Since the same property  
is not necessarily satisfied for \emph{all} $M$ such that $M^{\mathrm{max}}<M \leq M^{\mathrm{max}} \, !$, 
we focus on the case of $M\le M^{\mathrm{max}}$.}
}.
As a result, we can deal with macroscopic operations controlling the fields conjugate to additive observables on macroscopic subsystems of side length $L/M$ for any $M\le M^{\mathrm{max}}$.
Furthermore, in iMATE with 
$K=M^{\mathrm{max}} \, !$, 
any additive observables on macroscopic subsystems of side length $L/M$ take the equilibrium values independently of $M$, as long as $M\le M^{\mathrm{max}}$.

In this way, the adjustment \eqref{eq:choice_of_K} resolves the above-mentioned complexities, and the limit $M^{\mathrm{max}}\to\infty$ (after taking $L\to\infty$) 
makes our theory independent of $K$, as physically expected.

In the above argument, the existence of 
the $M^{\mathrm{max}}\to\infty$ limit of the quantum macroscopic entropy density seems nontrivial.
We finally address this point by showing that 
the limit indeed exists for any macroscopic state.

Given a sequence of states $(\rho_{L})_{L\in\mathbb{N}}$. Suppose that it represents a macroscopic state, 
defined by Def.~\ref{definition:MacroState}, 
for any $K$ of the form~\eqref{eq:choice_of_K} and for any $M^{\mathrm{max}}\in\mathbb{N}$. This can be rephrased that $(\rho_{L})_{L\in\mathbb{N}}$ satisfies Eq.~\eqref{eq:MacroState} for any $K\in\mathbb{N}$~\footnote{This is because Eq.~\eqref{eq:MacroState} for $K=(M^{\mathrm{max}})!$ implies Eq.~\eqref{eq:MacroState} for $K=M^{\mathrm{max}}$.}.
To indicate the $K$ dependence of the quantum macroscopic entropy density $s^{\mathrm{mac}}_{\ell}[(\rho_{L})_{L\in\mathbb{N}}]$, defined by Eq.~\eqref{eq:DEF_s^red_ell}, 
we add a superscript $K$ to it and write it as $s^{\mathrm{mac},K}_{\ell}[(\rho_{L})_{L\in\mathbb{N}}]$ 
in the remainder of this subsection.
Using the concavity of the von Neumann entropy, we can show
\begin{align}
    s^{\mathrm{mac},(M^{\mathrm{max}})!}_{\ell}[(\rho_{L})_{L\in\mathbb{N}}]
    \ge s^{\mathrm{mac},(M^{\mathrm{max}}+1)!}_{\ell}[(\rho_{L})_{L\in\mathbb{N}}].
    \label{eq:Kdependence_concavity_s^macK}
\end{align}
This means that 
$s^{\mathrm{mac},(M^{\mathrm{max}})!}_{\ell}[(\rho_{L})_{L\in\mathbb{N}}]$ is monotonically decreasing with respect to $M^{\mathrm{max}}$. Furthermore, since it is nonnegative, there exists its $M^{\mathrm{max}}\to\infty$ limit, and we obtain the desired result: 
\begin{proposition}[\label{proposition:Kindependent_s^mac}The existence of $K\to\infty$ limit of quantum macroscopic entropy density]
Given a sequence of states $(\rho_{L})_{L\in\mathbb{N}}$ that represents a macroscopic state for any partition number $K\in\mathbb{N}$, which is used to define additive observables in Def.~\ref{definition:Additive}. The quantum macroscopic entropy density of $(\rho_{L})_{L\in\mathbb{N}}$ has the $M^{\mathrm{max}}\to\infty$ limit,
\begin{align}
    \lim_{M^{\mathrm{max}}\to\infty}s^{\mathrm{mac},(M^{\mathrm{max}})!}_{\ell}[(\rho_{L})_{L\in\mathbb{N}}].
\end{align}
\end{proposition}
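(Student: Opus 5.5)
The plan is to show that the sequence $a_{M} := s^{\mathrm{mac},M!}_{\ell}[(\rho_{L})_{L\in\mathbb{N}}]$ is nonincreasing in $M$ and bounded below. A nonincreasing, bounded sequence of reals converges, which gives the limit.

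\textbf{Lower bound.} Each $\rho_{\infty|\ell}^{(k)}$ is a density matrix on $\Cell$, being a trace-norm limit of density matrices. Hence $0\le S_{\mathrm{vN}}[\rho_{\infty|\ell}^{(k)}]\le \ell^d\log D$, so $a_M\in[0,\log D]$ for every $M$.

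\textbf{Monotonicity, Eq.~\eqref{eq:Kdependence_concavity_s^macK}.} Set $K=M!$ and $K'=(M+1)!=(M+1)K$. Since $K$ divides $K'$, the $K'$-partition refines the $K$-partition. Each primitive macroscopic subsystem $\mathcal{S}^{(k),K}_L$ is, up to $O(L^{d-1})$ boundary sites, the union of $(M+1)^d$ primitive subsystems $\mathcal{S}^{(k'),K'}_L$ with $k'\in I_k$. I will work at finite $L$ with Eq.~\eqref{eq:DEF_rho_L^red}. Because all the pieces have nearly equal volume $|\mathcal{S}^{(k'),K'}_L| = |\mathcal{S}^{(k),K}_L|/(M+1)^d + O(L^{d-1})$, splitting the sum over $\bm r\in\mathcal{S}^{(k),K}_L$ gives
\begin{align}
    \rho_{L|\ell}^{(k),K} = \frac{1}{(M+1)^d}\sum_{k'\in I_k}\rho_{L|\ell}^{(k'),K'} + R_L,
\end{align}
where $\|R_L\|_{\mathrm{tr}} = O(1/L)$. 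The error comes from sites lying in pieces of slightly unequal sizes (side lengths $\lfloor L/K'\rfloor$ versus $\lceil L/K'\rceil$); when $K\mid L$ the partitions align exactly. Since $(\rho_L)$ is a macroscopic state for every $K$, both sides converge as $L\to\infty$, and in the limit
\begin{align}
    \rho_{\infty|\ell}^{(k),K}=(M+1)^{-d}\sum_{k'\in I_k}\rho_{\infty|\ell}^{(k'),K'}.
\end{align}
Concavity of the von Neumann entropy then gives
\begin{align}
    S_{\mathrm{vN}}[\rho_{\infty|\ell}^{(k),K}]\ge (M+1)^{-d}\sum_{k'\in I_k}S_{\mathrm{vN}}[\rho_{\infty|\ell}^{(k'),K'}].
\end{align}
Averaging over $k\in\{1,\dots,K^d\}$ with weight $K^{-d}$, dividing by $\ell^d$, and using that the $I_k$ partition $\{1,\dots,K'^d\}$ with $K^{-d}(M+1)^{-d}=K'^{-d}$, I obtain $a_M\ge a_{M+1}$.

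The main obstacle is the bookkeeping in the decomposition step. The labels and positions of the primitive subsystems for $K$ and $K'$ must nest consistently for each $L$. The rounding of side lengths makes the nesting only approximate, so I need the mismatch to be a vanishing fraction of sites, $O(L^{d-1})$ out of $\Theta(L^d)$. I would handle this by bounding the trace norm of the difference of averaged reduced density matrices by the fraction of mismatched sites, since each $\rho^{\bm r}_{L|\ell}$ has unit trace norm. This error then disappears in the limit $L\to\infty$ before concavity is applied. With monotonicity and the bound $0\le a_M$ established, the monotone convergence theorem for real sequences yields the existence of $\lim_{M^{\mathrm{max}}\to\infty}s^{\mathrm{mac},(M^{\mathrm{max}})!}_{\ell}$.
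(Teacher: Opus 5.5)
Your proposal is correct and follows essentially the paper's argument: you write $\rho^{(k),K}_{\infty|\ell}$ as the uniform convex combination of the $(M+1)^d$ finer averages, apply concavity of the von Neumann entropy to get monotone decrease in $M^{\mathrm{max}}$, and conclude by nonnegativity and monotone convergence. Your $O(1/L)$ trace-norm error term for the rounding mismatch between the two partitions is slightly more careful than the paper, which states the finite-$L$ decomposition as an exact identity.
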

\noindent
In other words, $s^{\mathrm{mac},(M^{\mathrm{max}})!}_{\ell}[(\rho_{L})_{L\in\mathbb{N}}]$ becomes almost independent of $M^{\mathrm{max}}$ as $M^{\mathrm{max}}$ is taken large. 
\begin{proof}[Proof of Eq.~\eqref{eq:Kdependence_concavity_s^macK}]
See Appendix~\ref{sec:Proof_Kdependence}.
\end{proof}

\section{\label{sec:summary}Summary}

For closed quantum many-body systems of arbitrary spatial dimension $d$ obeying the unitary time evolution,
we have proposed and proved two forms of 
the second law of thermodynamics:
the macroscopic passivity and the law of increasing entropy,
which are quantum-mechanically inequivalent to each other.

To this end, we first 
introduced \emph{macroscopic subsystems}   
in order to detect macroscopic nonuniformity of 
nonequilibrium states (Sec.~\ref{sec:Setup}).
We then gave clear definitions of \emph{additive observables} and \emph{$\ell$-local observables}.

To formulate the thermodynamic limit rigorously, we have considered sequences of quantum states (Sec.~\ref{sec:MacroEquiv}).
We say that such a sequence 
represents a \emph{macroscopic state} if the expectation values
of densities of all additive observables converge in the thermodynamic limit.
Importantly, we introduced a notion of \emph{macroscopic equivalence};
we say that two quantum-mechanical
representations of macroscopic states are macroscopically equivalent if
densities of all additive observables coincide in the thermodynamic limit
(Definition~\ref{definition:MacroEquiv}).
This notion forms the foundation for the results in this paper.

We then introduced 
\emph{infinite-observable macroscopic thermal equilibrium (iMATE)}
as a notion of thermal equilibrium states (Definition~\ref{definition:MacroEqState}, Sec.~\ref{sec:MacroEquilibrium}).
It corresponds to an extension of an ordinary notion called macroscopic thermal equilibrium (MATE) such that all, thus an infinite number of, additive observables should be consistent with thermodynamics.
This resolves possible contradictions of MATE against thermodynamics (Example~\ref{example:ProblemFiniteObs}, Sec.~\ref{sec:MacroPassivity}). 
Furthermore, our iMATE includes 
well-known quantum states representing equilibrium states,
 such as the minimally entangled typical thermal states 
(Example~\ref{example:TypicalMETTS_represents_iMATE}, Sec.~\ref{sec:Example_iMATE}),
which are not included in another ordinary notion called microscopic 
thermal equilibrium (MITE).
Therefore, we employ iMATE as quantum-mechanical representations of equilibrium states.

We also introduced \emph{macroscopic operations} as unitary time evolutions generated by a time-dependent Hamiltonian that is additive and whose time dependence is governed by a finite number of parameters conjugate to additive observables (Definition~\ref{definition:macroscopic-operation}, Sec.~\ref{sec:MacroOp_MacroEquiv}). 
We then proved that macroscopic equivalence is preserved after any macroscopic operations 
(Theorem~\ref{theorem:JAIVTMXN_3}).
While there are many different quantum states that represent the same iMATE,
this proposition guarantees that all such states exhibit the same 
responses of additive observables to macroscopic operations,
in consistency with thermodynamics (Corollary~\ref{corollary:macroscopically-equivalence_equilibrium-state}).

We then proposed and proved our first form of quantum-mechanical expressions of the second law, which we call the \emph{macroscopic passivity}
(Corollary~\ref{corollary:macroscopic-passivity}, 
Sec.~\ref{sec:MacroPassivity}). 
It states that 
if the system is prepared in an arbitrary state representing iMATE described by an initial Hamiltonian with nonnegative inverse temperature, no macroscopic operation can decrease the expectation value of the initial Hamiltonian extensively.
In other words, extensive work cannot be extracted from states representing iMATE by any macroscopic operation. 
We also proved that iMATE at infinite temperature, $\beta=0$, 
is stable against any macroscopic operations (Corollary~\ref{corollary:MacroPassivity_beta0}).

In order to proceed to our second expression of the second law, 
we introduced the \emph{$\ell$-local density matrices} 
and their spatial average (Definition~\ref{def:sp.av.ell.dens.mat}, Sec.~\ref{sec:reduced_rho}).
We proved that all macroscopically equivalent representations of a macroscopic state are characterized only by the spatial averages of $\ell$-local density matrices.
Using this result, we unambiguously defined the notion of
\emph{macroscopically uniform} states 
(Definition~\ref{definition:MacroUniform}) and proved that 
the macroscopic uniformity depends only on
a macroscopic state but not on its representation (Corollary~\ref{cor:mac.un.dep.only.on.mcrostate}).
In particular, 
any representation of an iMATE represents a macroscopically uniform state,
under our assumption that the system is not at a
first-order phase transition point
(Corollary~\ref{corollary:MacroscopicallyUniform_iMATE}).
For comparison, we also defined \emph{locally uniform} macroscopic states
(Definition~\ref{definition:LocallyUniform}).
Many other quantum-mechanical definitions of 
thermal equilibrium (such as the Gibbs states and MITE) assumed 
locally uniform macroscopic states,
which are a strictly narrower class than macroscopically uniform states.

Using the spatial average of $\ell$-local density matrices, 
we introduced the \emph{quantum macroscopic entropy density}
(Definition~\ref{definition:MacroEntropy}, Sec.~\ref{sec:Entropy_s^mac}),
and proved that it takes the same value among equivalent macroscopic states
(Corollary~\ref{corollary:Equiv_s^red}).
Most importantly, 
we proved that it agrees with the thermodynamic entropy density for all representations of iMATE (Theorem~\ref{theorem:s^red=s^TD}, Sec.~\ref{sec:Entropy_s^mac_eq}).
This should be contrasted with many other quantum-mechanical entropies, 
such as the von Neumann entropy of the whole system and 
the 
entanglement entropy, 
which do not necessarily agree with the thermodynamic entropy
(subsection~\ref{sec:numerical_entropy} and Example~\ref{example:HalfChainEntanglement}).
Interestingly, our result means that 
the thermodynamic entropy 
can be obtained by measuring appropriate additive observables 
without individual measurements of a huge number of local observables 
or thermodynamic measurements on multiple states along a 
thermodynamic process (subsection~\ref{sec:method.measure.sTD}).

Using these results, 
we proved the law of increasing entropy, 
which is our second form of quantum-mechanical expressions of the second law
(Theorems~\ref{theorem:EntropyIncrease} and \ref{theorem:EntropyIncrease_WithoutThermalization}, Sec.~\ref{sec:EntropyIncrease}).
It states 
that if the system is prepared in a state representing iMATE described by an initial Hamiltonian then the quantum macroscopic entropy density cannot be decreased by any finite-time macroscopic operation followed by an infinite-time relaxation process governed by a final Hamiltonian.
This corresponds to the thermodynamic statement which   
asserts that an adiabatic thermodynamic operation cannot decrease the thermodynamic entropy of the system.
Although it is sometimes said to be equivalent 
to the statement corresponding to our macroscopic passivity, 
the other laws of thermodynamics were assumed in such arguments.
By contrast, we have proved both forms independently from quantum mechanics.
Comparisons with the existing studies were made in 
Sec.~\ref{sec:ComparisonStudies}.

Our two forms of quantum-mechanical expressions of the second law
are proved for any macroscopic operations with the operation 
time $t^*$ 
independent of the system size.
We showed that this timescale of $t^*$ is the longest 
for these general results, by showing that there exist counterexamples when $t^*$ is longer (Sec.~\ref{sec:Counterexamples_Timescale}).
We pointed out that this upper bound of the timescale is not peculiar to
our definition of thermal equilibrium, iMATE, but rather is common 
to the previous study that assumed MITE. 
Since this restriction on the operation time seems 
unsatisfactory from the viewpoint of thermodynamics, 
we discussed possible candidates to 
remove the restriction
in Sec.~\ref{sec:longer_time_scale}.
On the other hand, 
it is interesting to experimentally realize
our counterexamples because they 
provide  
a violation of the second law of thermodynamics in the spirit of Loschmidt's paradox without relying on the measurement-and-feedback approach
(last paragraph of Sec.~\ref{sec:longer_time_scale}).
We believe that such experiments will lead to a deeper understanding 
of the second law.
We also pointed out that our results will be 
independent of the choice of the partition number $K$
in the iterated limit, $K \to \infty$ after $L \to \infty$
(sec~\ref{sec:Kdependence}).

\begin{acknowledgments}
We appreciate Shun Umekawa, Koki Ono, Honoe Kandabashi, Wen Junxuan, Katsunori Sugiyama, and Yuto Kojima for pointing out logical errors in the first draft of this paper.
We also thank A.~Hokkyo, H.~Tasaki, T.~Mori, and K.~Takasan for fruitful discussions.
The MPS calculations in this work are performed using the ITensor library~\cite{itensor}.
This work was supported by JST ERATO Grant Number JPMJER2302, Japan. 
Y.C. and Y.Y. are supported by the Special Postdoctoral Researchers Program at RIKEN.
R.H. was supported by Japan Society for the Promotion of Science (JSPS) KAKENHI Grant No. JP24K16982.
A.S. was supported by the RIKEN TRIP initiative
and by Japan Society for the Promotion of Science (JSPS) KAKENHI Grant No. 23K22413. 
\end{acknowledgments}

\appendix
\onecolumngrid

\section{\label{sec:notation}Summary of notation and definitions}

This Appendix summarizes the notation and terminology used throughout the main text.
Table~\ref{tbl:notation_geometry} shows the notation related to the lattice, Hilbert spaces, macroscopic subsystems, and locality.
Table~\ref{tbl:notation_obs} shows the notation for translations, local/additive observables, and operator-theoretic conventions.
Table~\ref{tbl:notation_states_dynamics_entropy}
shows the notation for states, equilibrium notions, operations, reduced states, and entropies.

\begin{table*}[h]
\caption{Notation related to the lattice, Hilbert spaces, macroscopic subsystems, and locality.}
\label{tbl:notation_geometry}
\centering
\footnotesize
\begin{ruledtabular}
\begin{tabular}{p{0.25\textwidth}p{0.55\textwidth}p{0.16\textwidth}}
Symbol & Meaning / role & Where introduced \\
\hline
$d$ & Spatial dimension of the hypercubic lattice. & Sec.~\ref{sec:Setup_Lattice} \\
$L$ & Linear system size (side length). & Sec.~\ref{sec:Setup_Lattice} \\
$N=L^{d}$ & Total number of lattice sites. & Sec.~\ref{sec:Setup_Lattice} \\
$\mathbb{Z}_L=\{-\lceil L/2\rceil+1,\dots,\lfloor L/2\rfloor\}$ & Integer coordinates with periodic boundary conditions. & Sec.~\ref{sec:Setup_Lattice} \\
$\Lambda_L=(\mathbb{Z}_L)^d$ & Total lattice of size $L$. & Sec.~\ref{sec:Setup_Lattice} \\
$\bm r,\bm r',\bm 0$ & Lattice sites and the origin, $\bm r\in\Lambda_L$. & Sec.~\ref{sec:Setup_Lattice} \\
$|\bm r-\bm r'|$ & Manhattan distance on $\Lambda_L$ (with periodic identification). & Sec.~\ref{sec:Setup_Lattice} \\
$\mathcal{H}_{\bm r}$ & Local Hilbert space at site $\bm r$. & Sec.~\ref{sec:Setup_Lattice} \\
$D$ & Local Hilbert-space dimension; $\mathcal{H}_{\bm r}\cong\mathbb{C}^{D}$. & Sec.~\ref{sec:Setup_Lattice} \\
$\{\ket{j}_{\bm r}\}_{j=0}^{D-1}$ & Orthonormal basis of $\mathcal{H}_{\bm r}$. & Sec.~\ref{sec:Setup_Lattice} \\
$\mathcal{H}_{\Lambda_L}=\bigotimes_{\bm r\in\Lambda_L}\mathcal{H}_{\bm r}$ & Total Hilbert space on $\Lambda_L$. & Sec.~\ref{sec:Setup_Lattice} \\
$K$ & $L$-independent number of partitions in each 
spatial direction. & Sec.~\ref{sec:Setup_Lattice} \\
$K^{d}$ & Number of primitive macroscopic subsystems. & Sec.~\ref{sec:Setup_Lattice} \\
$\mathcal{S}^{(k)}_L$ & $k$-th primitive macroscopic subsystem ($k\in\{1,\dots,K^d\}$). & Def.~\ref{definition:ProperMacroSubsystem} \\
$\mathcal{S}_L$ & (General) macroscopic subsystem: union of primitive ones. & Def.~\ref{definition:ProperMacroSubsystem} \\
$m$ & Number of primitive macroscopic subsystems forming $\mathcal{S}_L$ ($m=1,\dots,K^d$). & Def.~\ref{definition:ProperMacroSubsystem} \\
$k,k',k_j$ & Labels of primitive macroscopic subsystems. & Def.~\ref{definition:ProperMacroSubsystem} \\
$|\mathcal{S}_L|$ & Number of sites in $\mathcal{S}_L$; scales as $\Theta(L^d)$ for proper sequences. & Def.~\ref{definition:ProperMacroSubsystem} \\
$\ell$ & Locality length (kept $L$-independent, i.e., $\ell=\Theta(L^{0})$). & Def.~\ref{definition:LocalObs} \\
$\Cell$ & $d$-dimensional hypercube (side length $\ell$) centered at $\bm 0$. & Def.~\ref{definition:LocalObs} \\
$\Cell[\bm r]=\Cell+\bm r$ & Translation of the reference cell by $\bm r$. & Def.~\ref{definition:LocalObs} \\
$O(\cdot),o(\cdot),\Theta(\cdot),\Omega(\cdot),\omega(\cdot)$ & Landau asymptotic symbols. & Ref.~\cite{LandauSymbols} \\
\end{tabular}
\end{ruledtabular}
\end{table*}

\begin{table*}[h]
\caption{Notation for translations, local/additive observables, and operator-theoretic conventions.}
\label{tbl:notation_obs}
\centering
\footnotesize
\begin{ruledtabular}
\begin{tabular}{p{0.25\textwidth}p{0.55\textwidth}p{0.16\textwidth}}
Symbol & Meaning / role & Where introduced \\
\hline
$T_{\bm r}: \bigotimes_{\bm r'}\ket{j_{\bm r'}}_{\bm r'}
\mapsto \bigotimes_{\bm r'}\ket{j_{\bm r'}}_{\bm r'+\bm r}$ & Lattice translation by $\bm r$ (unitary on $\mathcal{H}_{\Lambda_L}$). & Sec.~\ref{sec:Setup_Lattice} \\
$\aLocal$ & Local observable. & Defs.~\ref{definition:LocalObs} and \ref{definition:Additive} \\
$\aLocal_{\bm r}=T_{\bm r}\aLocal T_{\bm r}^{\dagger}$ & Translate of $\aLocal$ supported on $\Cell[\bm r]$. & Def.~\ref{definition:Additive} \\
$\supp(\cdot)$ & Support (set of sites) of an operator. & Def.~\ref{definition:Additive} \\
$A_{\mathcal{S}_L}(\aLocal)$ & Additive observable on $\mathcal{S}_L$ obtained from $\aLocal$. & Def.~\ref{definition:Additive} \\
$A_L$ & Generic additive observable (often $A_L=A_{\mathcal{S}_L}(\aLocal)$). & Def.~\ref{definition:Proper_Additive} \\
$\SetAdditive$ & Set of all proper sequences of additive observables. & Def.~\ref{definition:Proper_Additive} \\
$\|\cdot\|$ & Operator norm. & --- \\
$\|\cdot\|_{\mathrm{tr}}$ & Trace norm. & --- \\
\end{tabular}
\end{ruledtabular}
\end{table*}

\begin{table*}[h]
\caption{Notation for states, equilibrium notions, operations, reduced states, and entropies.}
\label{tbl:notation_states_dynamics_entropy}
\centering
\footnotesize
\begin{ruledtabular}
\begin{tabular}{p{0.25\textwidth}p{0.55\textwidth}p{0.16\textwidth}}
Symbol & Meaning / role & Where introduced \\
\hline
$\rho_L,\sigma_L$ & Density matrices on $\mathcal{H}_{\Lambda_L}$. & Sec.~\ref{sec:MacroEquiv} \\
$\bigl(\rho_L\bigr)_{L\in\mathbb{N}}$ & A sequence of states considered in the thermodynamic limit. & Def.~\ref{definition:MacroState} \\
Macroscopic state & Sequences for which $\mathrm{Tr}[\rho_L A_L/N]$ converges for all $(A_L)\in\SetAdditive$. & Def.~\ref{definition:MacroState}, Eq.~\eqref{eq:MacroState} \\
Macroscopic equivalence $\maceq$ & Equivalence relation between macroscopic states. & Def.~\ref{definition:MacroEquiv}, Eq.~\eqref{eq:MacroscopicEquiv} \\
Normal macroscopic state & Macroscopic state with macroscopically negligible variance for all $(A_L)\in\SetAdditive$. & Def.~\ref{definition:MacroState_Normal}, Eq.~\eqref{eq:MacroscopicallyNormal} \\
$H_L$ & Hamiltonian. & Eq.~\eqref{eq:Hamiltonian} \\
$h$ & Local interaction term of the Hamiltonian. & Eq.~\eqref{eq:Hamiltonian} \\
$\beta$ & Inverse temperature. & Def.~\ref{definition:MacroEqState} \\
$\rho_L^{\mathrm{can}}(\beta|H_L)$ & Canonical Gibbs state. & Eq.~\eqref{eq:DEF_rho^can} \\
iMATE & ``Infinite-observable macroscopic thermal equilibrium''. & Def.~\ref{definition:MacroEqState}, Eq.~\eqref{eq:MacroEqState} \\
Normal iMATE & iMATE that is also a normal macroscopic state. & Def.~\ref{definition:MacroEqState_Normal} \\
$U_L(t,0)$ & Unitary time evolution operator for a macroscopic operation. & Def.~\ref{definition:macroscopic-operation} \\
$H_L(t)=\initH-\sum_{\mu=1}^m f^\mu(t)B_L^\mu$ & Controlled Hamiltonian generating a macroscopic operation. & Def.~\ref{definition:macroscopic-operation}, Eq.~\eqref{eq:general_H} \\
$m$ & Number of external fields coupled to additive observables. & Def.~\ref{definition:macroscopic-operation} \\
$B_L^\mu$ & Additive observables coupled to external fields. & Def.~\ref{definition:macroscopic-operation} \\
$f^\mu(t)$ & Values of external fields. & Def.~\ref{definition:macroscopic-operation} \\
$t^*$ & Operation time. & Def.~\ref{definition:macroscopic-operation} \\
$\rho_L(t)=U_L(t,0)\rho_LU_L^\dagger(t,0)$ & Time-evolved state under the operation. & Def.~\ref{definition:macroscopic-operation} \\
$\overline{\rho_L(t>t^*)}$ & Long-time average in the relaxation stage. & Eq.~\eqref{eq:LongTimeAve} \\
$\mathcal{T}$ & Time-window length for defining the long-time average. & Eq.~\eqref{eq:LongTimeAve} \\
$H_0,H_1$ & Initial and final Hamiltonians in an adiabatic-operation protocol. & Eq.~\eqref{eq:EntropyIncrease_H(t)} and Table~\ref{tbl:Operation_EntropyIncrease} \\
$\rho_{L|\ell}^{\bm r}$ & $\ell$-local reduced state around $\bm r$, moved onto $\Cell$. & Eq.~\eqref{eq:ReducedState_rho^r} \\
$\rho_{L|\ell}^{(k)}$ & Spatial average of $\rho_{L|\ell}^{\bm r}$ over $\mathcal{S}^{(k)}_L$. & Eq.~\eqref{eq:DEF_rho_L^red} \\
$\rho_{\infty|\ell}^{(k)}$ & Thermodynamic limit of $\rho_{L|\ell}^{(k)}$. & Eq.~\eqref{eq:DEF_rho_infty^red} \\
$\rho_{L|\ell}^{\mathrm{ave}}$ & Spatial average of $\rho_{L|\ell}^{\bm r}$ over the full lattice $\Lambda_L$. & Eq.~\eqref{eq:DEF_rho_infty^tot} \\
$\rho_{\infty|\ell}^{\mathrm{ave}}$ & Thermodynamic limit of $\rho_{L|\ell}^{\mathrm{ave}}$. & Below Eq.~\eqref{eq:tilde_s} \\
$S_{\mathrm{vN}}[\rho]$ & von Neumann entropy of a density matrix $\rho$. & Def.~\ref{definition:MacroEntropy} \\
$s_{\ell}^{\mathrm{mac}}[(\rho_L)_L]$ & Quantum macroscopic entropy density at scale $\ell$. & Def.~\ref{definition:MacroEntropy}, Eq.~\eqref{eq:DEF_s^red_ell} \\
$\tilde{s}_\ell[(\rho_L)_L]$ & Upper bound on $s_{\ell}^{\mathrm{mac}}$ obtained from $\rho_{\infty|\ell}^{\mathrm{ave}}$. & Eq.~\eqref{eq:tilde_s} \\
$s^{\mathrm{TD}}(\beta|H)$ & Thermodynamic entropy density of the canonical Gibbs state. & Eq.~\eqref{eq:DEF_s^TD} \\
$D(\sigma\Vert\rho)$ & Quantum relative entropy. & Proof of Thm.~\ref{theorem:EntropyIncrease} \\
\end{tabular}
\end{ruledtabular}
\end{table*}

\section{\label{sec:PlanckPrinciple}Planck's principle for adiabatic operations induced by external fields}

According to Refs.~\cite{Lieb1999,Uffink2001}, Planck suggested formulating thermodynamics on the basis that `friction' is an adiabatic operation that is not reversible. Since friction increases the energy of the system, its irreversibility means the impossibility of decreasing the energy by any adiabatic operation.
Following this contribution of Planck, we call the following law Planck's principle, as in Ref.~\cite{Lieb1999}:
\begin{TDLaw}[\label{TDLaw:OriginalPlanck}Original form of Planck's principle]
Consider a simple thermodynamic system whose thermal equilibrium states are specified by internal energy $U$ and several additional additive quantities, so-called ``work coordinate'', denoted by $V^1,...,V^{n}$, such as volume and magnetization.
Consider changing the state of the system from one thermal equilibrium state $(U_0,V^1_0,...,V^{n}_0)$ to another thermal equilibrium state $(U_1,V^1_0,...,V^{n}_0)$, whose work coordinates have the same values.
If the final energy is lower than the initial energy,
\begin{align}
    U_0>U_1,
\end{align}
it is impossible to induce such a change through any adiabatic operation.
\end{TDLaw}
\noindent
Note that Planck's principle (Theorem~3.4) in Ref.~\cite{Lieb1999} contains not only the above statement but also its inverse: If $U_0\le U_1$, then it is \emph{possible} to induce the state change $(U_0,V^1_0,...,V^{n}_0)\to (U_1,V^1_0,...,V^{n}_0)$ through some adiabatic operation. However, which types of adiabatic operations are \emph{possible} crucially depend on the details of the setting, and, to our knowledge, there is no consensus on this. For this reason, we call the above statement Planck's principle and do not include its inverse, as in Refs.~\cite{Tasaki2016SecondLaw,Hokkyo2025}.

This principle easily follows from another law, the law of increasing entropy, as follows:
Let $S(U,V^1,...,V^{n})$ be the thermodynamic entropy function of a simple thermodynamic system in Planck's principle~\ref{TDLaw:OriginalPlanck}.
Since its derivative with respect to $U$ gives the inverse temperature~$\beta$, $S(U,V^1,...,V^{n})$ is monotonically increasing with respect to $U$ (at least in the region $0< \beta <\infty$, to which thermodynamics usually applies).
Furthermore, the law of increasing entropy states that adiabatic operations that decrease the entropy,
\begin{align}
    S(U_0,V^1_0,...,V^{n}_0)> S(U_1,V^1_0,...,V^{n}_0),
\end{align}
are impossible.
From the monotonicity of $S(\bullet,V^1,...,V^{n})$, this is equivalent to the statement that adiabatic operations decreasing the energy $U_0> U_1$ are impossible, which implies Planck's principle~\ref{TDLaw:OriginalPlanck}.

Planck's principle~\ref{TDLaw:OriginalPlanck} is useful when $V^1,...,V^{n}$ can be controlled, as in the case of volume enclosed by a container with a piston.
However, it is inconvenient when their conjugate parameters are controlled, as in the case where $V^\mu$ is magnetization while the magnetic field --- the variable conjugate to $V^\mu$ --- is controlled. Below, we rewrite Planck's principle~\ref{TDLaw:OriginalPlanck} in a form useful for such a case, that is, the case where adiabatic operations are induced by external fields, such as a magnetic field. 

We consider adiabatic operations induced by external fields $f^1,...,f^m$. We write additive quantities conjugate to them by $B^1,...,B^m$. We assume that thermal equilibrium states are specified by $(U,B^1,...,B^m, N)$, where $N$ is the total number of sites and has the same role as the volume of the system. This corresponds to taking $(V^1,...,V^n)$ as $(B^1,...,B^m, N)$. 
Let $U(S,B^1,...,B^m,N)$ be internal energy function,
and $E(S,f^1,...,f^m,N)$ be its Legendre transformation with respect to $B^1,...,B^m$. The latter represents the energy in the presence of external fields.
Solving $E=E(S,f^1,...,f^m,N)$ with respect to $S$, we obtain an entropy function whose arguments are $(E,f^1,...,f^m,N)$, denoted by $\tilde{S}(E,f^1,...,f^m,N)$.
The derivative of $\tilde{S}(E,f^1,...,f^m,N)$ with respect to $E$ coincides with the inverse temperature~$\beta$ because
\begin{align}
    \frac{\partial E}{\partial S}\frac{\partial \tilde{S}}{\partial E}&=1
\end{align}
follows from 
\begin{align}
    E(\tilde{S}(E,f^1,...,f^m,N),f^1,...,f^m,N)&=E.
\end{align}
This means that 
$\tilde{S}(E,f^1,...,f^{m},N)$ is monotonically increasing with respect to $E$ (at least in the region $0< \beta <\infty$, to which thermodynamics usually applies).
Combining this with the law of increasing entropy, we obtain the following:
\begin{TDLaw}[\label{TDLaw:Planck_field}Another form of Planck's principle]
Consider a simple thermodynamic system under external fields~$f^1,...,f^m$ whose thermal equilibrium states are specified by internal energy $U$, additive quantities $B^1,...,B^{m}$ conjugate to $f^1,...,f^m$, and the total number of sites $N$.
We consider thermal equilibrium states outside the first-order phase transition point; then, they can also be specified by $(E,f^1,...,f^m,N)$, where $E$ is the energy in the presence of fields 
\begin{align}
    E=U-\sum_{\mu=1}^{m}f^\mu B^\mu.
    \label{eq:PlanckPrinciple_Energy}
\end{align}

Consider changing the state of the system from one thermal equilibrium state $(E_0,f^1_0,...,f^{m}_0,N)$ to another thermal equilibrium state $(E_1,f^1_0,...,f^{m}_0,N)$, whose external fields take the same values.
If the final energy is lower than the initial energy,
\begin{align}
    E_0>E_1,
\end{align}
it is impossible to induce such a change through any adiabatic operation.
\end{TDLaw}
Note that Eq.~\eqref{eq:PlanckPrinciple_Energy} corresponds to the thermal equilibrium value of the following Hamiltonian under external fields,
\begin{align}
    H_L(f^1,...,f^m)=H_L-\sum_{\mu=1}^{m}f^\mu B^{\mu}_L,
\end{align}
which takes the same form as Eq.~\eqref{eq:general_H}.
In this sense, our definition of macroscopic operation, Definition~\ref{definition:macroscopic-operation}, is consistent with the definition of adiabatic operation in thermodynamics.

In particular, applying Planck's principle~\ref{TDLaw:Planck_field} to the case $f^1_0=...=f^m_0=0$, we obtain the following statement: any adiabatic operation induced by external fields $f^1,...,f^m$ does not decrease the energy
\begin{align}
    U_0=E_0\le E_1=U_1,
\end{align}
as long as the initial and final values of external fields are set to zero.
This indeed coincides with the inequality~\eqref{eq:macroscopic-passivity}, i.e., macroscopic passivity.

\section{\label{sec:Proof_Example_iMATE}Examples of iMATE}

In Appendix~\ref{sec:Proof_METTS}, we will show that the sequence of METTS represents iMATE almost surely.
In addition, in Appendix~\ref{sec:Proof_Example_iMATE_InfiniteTemp} and \ref{sec:Proof_Example_iMATE_FiniteTemp}, we will construct representations of iMATE that contain no random variables in contrast to METTS.

\subsection{\label{sec:Proof_METTS}METTS (details of Example~\ref{example:TypicalMETTS_represents_iMATE})}

In Appendix~\ref{sec:Proof_METTS_1}, we will show that there are appropriate sequences of METTS that represent iMATE.
In Appendix~\ref{sec:Proof_METTS_Typical}, we will reinforce this result by showing that if we construct a sequence of METTS according to the probability distribution $P(i)/Z$, the sequence represents iMATE almost surely, under a certain condition on the probability distribution.
In Appendix~\ref{sec:Proof_METTS_Justification}, we justify this condition from a reasonable physical assumption on the canonical Gibbs state.

\subsubsection{\label{sec:Proof_METTS_1}There are sequences of METTS representing iMATE}

Here, we show that there exist sequences of METTS that represent
iMATE. We are interested in METTS~\eqref{eq:METTS} corresponding to the Gibbs state described by $H_L$ at inverse temperature $\beta$, i.e., $\rho_L^{\mathrm{can}}(\beta|H_L)$.
We assume that the Gibbs state satisfies an assumption slightly stronger than Eq.~\eqref{eq:MacroscopicallyNormal}, which is a part of Assumption~\ref{assumption:GibbsState_Normal}:
\begin{assumpGibbs}[\label{assumption:GibbsState_LargerLocality}Assumption slightly stronger than Eq.~\eqref{eq:MacroscopicallyNormal}]
There is a sequence of positive integers $(\ell_L^{\mathrm{max}})_{L\in\mathbb{N}}$ that diverges in the limit $L\to\infty$ [with the rate of $o(L)$]
and in which the Gibbs state $\rho_L^{\mathrm{can}}(\beta|H_L)$ satisfies the following: 
The variance of any additive observable composed of $\ell_L^{\mathrm{max}}$-local observables is macroscopically negligible, that is,
\begin{align}
    \mathrm{Tr} \Bigl[ \rho_L^{\mathrm{can}}(\beta|H_L)  \frac{\bigl(A_{\mathcal{S}_L}(\aLocal_L)\bigr)^2}{|\mathcal{S}_L|^2} \Bigr]
    -\Bigl(\mathrm{Tr} \Bigl[ \rho_L^{\mathrm{can}}(\beta|H_L)  \frac{A_{\mathcal{S}_L}(\aLocal_L)}{|\mathcal{S}_L|} \Bigr]\Bigr)^2=o(L^0)
    \label{eq:GibbsState_Normal_LargerLocality}
\end{align}
holds for every macroscopic subsystem $(\mathcal{S}_L)_{L\in\mathbb{N}}$ and for every $\ell_{L}^{\mathrm{max}}$-local observable $\aLocal_L$ satisfying $\|\aLocal_L\|_{\infty}= 1$.
\end{assumpGibbs}
Since Eq.~\eqref{eq:MacroscopicallyNormal} refers to additive observables composed of $\ell$-local observables with $\ell=O(L^0)$, the above condition~\eqref{eq:GibbsState_Normal_LargerLocality} is stronger than Eq.~\eqref{eq:MacroscopicallyNormal}. Note that, as long as $\ell_L^{\mathrm{max}}=\omega(L^0)$, its divergence can be arbitrarily slow. In this sense, we consider that the gap between Assumption~\ref{assumption:GibbsState_LargerLocality} and Eq.~\eqref{eq:MacroscopicallyNormal} is physically unimportant.

Under this assumption, we show the following:
\begin{proposition}[\label{proposition:METTS_represent_iMATE}There are sequences of METTS representing iMATE]
Let $\phi_L(i)=\ket{\phi(i)}\bra{\phi(i)}$ with $i\in\{0,1,...,D^N-1\}$ be METTS defined by Eq.~\eqref{eq:METTS}. Suppose that the corresponding Gibbs state $\rho_L^{\mathrm{can}}(\beta|H_L)$ satisfies Assumption~\ref{assumption:GibbsState_LargerLocality}. For each system size $L\in\mathbb{N}$, we choose $i_L\in\{0,1,...,D^N-1\}$ appropriately such that Eq.~\eqref{eq:Cond_METTS_represent_iMATE} given below is satisfied. Then, the sequence of states $(\rho_L)_{L\in\mathbb{N}}$ given by $\rho_L=\phi_L(i_L)$ represents iMATE described by $H_L$ at inverse temperature $\beta$. 
Moreover, according to the probability distribution $P(i_L)/Z$, the probability of choosing $i_L\in\{0,1,...,D^N-1\}$ that satisfies Eq.~\eqref{eq:Cond_METTS_represent_iMATE} is arbitrarily close to $1$ for sufficiently large $L$.
\end{proposition}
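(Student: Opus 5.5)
The plan is to combine a second-moment (Chebyshev) bound over the METTS ensemble with a union bound over a finite but slowly growing family of additive observables. In the statement I take Eq.~\eqref{eq:Cond_METTS_represent_iMATE} to be a condition of the following form. There are sequences $\ell_L\to\infty$ with $\ell_L\le\ell_L^{\mathrm{max}}$ and $\varepsilon_L\to 0$ such that
\begin{align}
\Bigl|\mathrm{Tr}\Bigl[\phi_L(i_L)\frac{A_{\mathcal{S}_L}(\aLocal)}{|\mathcal{S}_L|}\Bigr]-\mathrm{Tr}\Bigl[\rho_L^{\mathrm{can}}(\beta|H_L)\frac{A_{\mathcal{S}_L}(\aLocal)}{|\mathcal{S}_L|}\Bigr]\Bigr|\le\varepsilon_L
\end{align}
holds for every macroscopic subsystem $\mathcal{S}_L$ (there are at most $2^{K^d}$ of them), every $\ell\le\ell_L$, and every $\aLocal$ in a fixed operator basis $\mathcal{B}_\ell$ of operators on $\Cell$ with $\|\aLocal\|\le1$. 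For $\mathcal{B}_\ell$ I would take the $D^{2\ell^d}$ generalized Pauli (Weyl) operators, or the Hermitian parts of the matrix units.

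\emph{Sufficiency.} Fix any $\ell\in\mathbb{N}$ and any $\ell$-local $\aLocal$. Expand $\aLocal$ in $\mathcal{B}_\ell$ with $L$-independent coefficients. For $L$ large we have $\ell\le\ell_L$, so the displayed bound, together with linearity of $\aLocal\mapsto A_{\mathcal{S}_L}(\aLocal)$, shows that the METTS expectation of every additive observable composed of $\ell$-local observables differs from the Gibbs expectation by at most $\varepsilon_L\sum|c_j|\to0$. By Assumption~\ref{assumption:GibbsState} the Gibbs expectations converge. Hence $(\phi_L(i_L))_{L\in\mathbb{N}}$ represents a macroscopic state that is macroscopically equivalent to $(\rho_L^{\mathrm{can}}(\beta|H_L))_{L\in\mathbb{N}}$, i.e., it represents iMATE.

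\emph{High probability.} Let $X(i)=\mathrm{Tr}[\phi_L(i)A/|\mathcal{S}_L|]$ with $i$ distributed as $P(i)/Z$. By Eq.~\eqref{eq:METTS_canonical}, the mean of $X$ equals the Gibbs value. Since each $\phi_L(i)$ is a pure state, $X(i)^2\le\mathrm{Tr}[\phi_L(i)(A/|\mathcal{S}_L|)^2]$, and averaging over $i$ gives
\begin{align}
\mathrm{Var}_i(X)\le\mathrm{Tr}\bigl[\rho_L^{\mathrm{can}}(A/|\mathcal{S}_L|)^2\bigr]-\bigl(\mathrm{Tr}[\rho_L^{\mathrm{can}}A/|\mathcal{S}_L|]\bigr)^2.
\end{align}
For non-Hermitian basis elements I would apply this to the real and imaginary parts separately. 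An $\ell$-local operator with $\ell\le\ell_L^{\mathrm{max}}$ is also $\ell_L^{\mathrm{max}}$-local, so Assumption~\ref{assumption:GibbsState_LargerLocality} bounds the right-hand side by some $\delta_L=o(1)$. Chebyshev's inequality and a union bound then give a failure probability of at most
\begin{align}
2^{K^d}\Bigl(\sum_{\ell\le\ell_L}D^{2\ell^d}\Bigr)\frac{\delta_L}{\varepsilon_L^2}.
\end{align}
I would choose $\varepsilon_L=\delta_L^{1/4}$ and let $\ell_L$ grow slowly enough that $\ell_L D^{2\ell_L^d}\le\delta_L^{-1/4}$. The failure probability is then at most $2^{K^d}\delta_L^{1/4}\to0$, which proves the probability claim in the statement.

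\emph{Main obstacle.} I expect the hardest step to be uniformity. As written, Assumption~\ref{assumption:GibbsState_LargerLocality} gives $o(L^0)$ for every individual sequence of observables, but the union bound needs a single $\delta_L$ that works simultaneously for all $\ell_L^{\mathrm{max}}$-local normalized observables. My first attempt would be to read the assumption as a statement about the supremum: if a uniform $\delta_L$ failed, one could pick a worst-case sequence $\aLocal_L$ realizing the supremum at each $L$, and the assumption for that sequence would contradict the failure. If only the finitely many basis sequences are needed, a diagonal argument would work instead: for each fixed $\ell$ the family is finite, so a common $o(1)$ bound exists, and one then picks $\ell_L\to\infty$ slowly enough that the bound for $\ell\le\ell_L$ still tends to zero. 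Either way, this step is what determines $\ell_L$ and $\varepsilon_L$ in the condition.
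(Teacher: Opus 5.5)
Your proof is correct, and its core is the same as the paper's. Averaging $\mathrm{Tr}[\phi_L(i)A]^2\le\mathrm{Tr}[\phi_L(i)A^2]$ over the decomposition \eqref{eq:METTS_canonical} bounds the METTS-ensemble variance of each additive-observable density by its Gibbs variance. A Chebyshev-type estimate over an operator basis of size $D^{2\ell_L^d}$, with $\ell_L\to\infty$ slowly, then finishes the argument.

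Where you differ is mainly in packaging:
\begin{itemize}
\item \emph{Form of the condition.} The paper's condition \eqref{eq:Cond_METTS_represent_iMATE} is a single trace-norm bound, $\max_k\|\phi^{(k)}_{L|\ell_L}(i_L)-\rho^{\mathrm{can}}_{L|\ell_L}\|_1<\delta_L$. It constrains the spatially averaged $\ell_L$-local density matrices of the primitive subsystems only.
\item \emph{Sufficiency.} The paper gets it from contractivity of the trace norm under the partial trace from $C^{\ell_L}$ to $C^{\ell}$, together with Proposition~\ref{proposition:Equiv_rho^red}. This handles all $\ell\le\ell_L$ at once through the nesting of cells. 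You instead work directly from the definition of macroscopic equivalence. You also check explicitly, via Assumption~\ref{assumption:GibbsState}, that the METTS sequence is a macroscopic state, which the paper leaves implicit.
\item \emph{Probability estimate.} In place of your union bound over scales and over all $2^{K^d}$ subsystems, the paper uses $\|X\|_1^2\le\sum_\mu|\mathrm{Tr}[u^\mu X]|^2$. Here $\{u^\mu\}$ is a Hermitian orthogonal basis on $C^{\ell_L}$ (Hermitized generalized Paulis with $\|u^\mu\|\le\sqrt2$). A Chebyshev--Markov bound on $\|X\|_1^2$ then gives failure probability $O\bigl((D^{\ell_L^d}\varepsilon_L^{\mathrm{fluc}}/\delta_L)^2\bigr)$.
\end{itemize}

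On uniformity, the paper adopts your first reading. It defines $\varepsilon_L^{\mathrm{fluc}}$ as the supremum over all normalized $\ell_L^{\mathrm{max}}$-local observables and asserts that Assumption~\ref{assumption:GibbsState_LargerLocality} makes it $o(L^0)$.

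Your diagonal alternative is a genuine, if small, gain over the paper. Only fixed, $L$-independent basis operators (and their Hermitian parts) enter your argument, so it uses only normality of the Gibbs state along proper sequences. Hence Assumption~\ref{assumption:GibbsState_Normal} already suffices, and Assumption~\ref{assumption:GibbsState_LargerLocality} is not needed for this proposition.

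One small correction: $\aLocal\mapsto A_{\mathcal{S}_L}(\aLocal)$ is linear only up to boundary terms. The summation range in \eqref{eq:Additive_S} depends on the actual support of each basis element. These terms contribute $O(\ell/L)\sum_j|c_j|$ to the density and vanish in the limit, so the argument is unaffected.
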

This means that if Assumption~\ref{assumption:GibbsState_LargerLocality} is satisfied, then there are sequences of states $(\rho_L)_{L\in\mathbb{N}}$ representing iMATE. 
Below, we prove Proposition~\ref{proposition:METTS_represent_iMATE}.

\begin{proof}
First, we define the following small quantity, which provides an upper bound on the fluctuations of additive observables composed of $\ell_L^{\mathrm{max}}$-local observables: Take one of the primitive macroscopic subsystems $\mathcal{S}_L^{(k)}$, and define
\begin{align}
    \varepsilon_L^{\mathrm{fluc}}&:=\sup_{\aLocal_L(\|\aLocal_L\|_{\infty}=1)}\sqrt{\mathrm{Tr} \Bigl[ \rho_L^{\mathrm{can}}(\beta|H_L)  \frac{\bigl(A_{\mathcal{S}_L^{(k)}}(\aLocal_L)\bigr)^2}{|\mathcal{S}_L^{(k)}|^2} \Bigr]-\Bigl(\mathrm{Tr} \Bigl[ \rho_L^{\mathrm{can}}(\beta|H_L)  \frac{A_{\mathcal{S}_L^{(k)}}(\aLocal_L)}{|\mathcal{S}_L^{(k)}|} \Bigr]\Bigr)^2},
\end{align}
where $\aLocal_L$ in the RHS runs over all $\ell_L^{\mathrm{max}}$-local observables with $\|\aLocal_L\|=1$.
Note that because of the translation invariance of the Gibbs state, it is independent of the choice of $k\in\{1,2,...,K^d\}$.
Assumption~\ref{assumption:GibbsState_LargerLocality} implies $\varepsilon_L^{\mathrm{fluc}}=o(L^0)$.

Second, we introduce sequences of positive integers $(\ell_L)_{L\in\mathbb{N}}$ and positive real values $(\delta_L)_{L\in\mathbb{N}}$ so that they satisfy the following:
\begin{align}
    &\ell_L=\omega(L^0),\quad \ell_L\le \ell_L^{\mathrm{max}}\label{eq:METTS_Large_ell}\\
    &\delta_L=o(L^0)\label{eq:METTS_Small_delta}\\
    &\frac{D^{\ell_L^d}\varepsilon_L^{\mathrm{fluc}}}{\delta_L}=o(L^0).\label{eq:METTS_Small_ell}
\end{align}
For instance, this can be achieved by choosing 
\begin{align}
    \delta_L&=(\varepsilon_L^{\mathrm{fluc}})^{1/3},\label{eq:METTS_delta_Example}\\
    \ell_L&=\min\Bigl\{\Bigl(\frac{-\log(\varepsilon_L^{\mathrm{fluc}})}{3\log D}\Bigr)^{1/d},\ell_L^{\mathrm{max}}\Bigr\}.\label{eq:METTS_ell_Example}
\end{align}

Third, we describe how to choose $i_L\in\{0,1,...,D^N-1\}$.
Let $\phi_{L|\ell_L}^{(k)}(i)$ be the spatial average~\eqref{eq:DEF_rho_L^red} of $\ell_L$-local density matrices of $\phi_{L}(i)$,
and $\rho_{L|\ell_L}^{\mathrm{can}}$ be that of the Gibbs state $\rho_{L}^{\mathrm{can}}(\beta|H_L)$. Here, we note that it is independent of the choice of $k\in\{1,2,...,K^d\}$ because of the translation invariance of the Gibbs state.
We choose $i_L\in\{0,1,...,D^N-1\}$ for each $L$ such that the following condition is satisfied:
\begin{align}
    \max_{k\in\{1,2,...,K^d\}}\|\phi_{L|\ell_L}^{(k)}(i_L)-\rho_{L|\ell_L}^{\mathrm{can}}\|_{1}<\delta_L.
    \label{eq:Cond_METTS_represent_iMATE}
\end{align}
Below, we will show that such an $i_L$ satisfying Eq.~\eqref{eq:Cond_METTS_represent_iMATE} indeed exists, and furthermore, that typical $i_L$'s satisfy Eq.~\eqref{eq:Cond_METTS_represent_iMATE}.

Fourth, we show that if Eq.~\eqref{eq:Cond_METTS_represent_iMATE} is satisfied for all $L\in\mathbb{N}$, then the obtained $(\rho_L)_{L\in\mathbb{N}}$ represents iMATE described by $H_L$ at inverse temperature $\beta$.
Take an arbitrary positive integer $\ell\in\mathbb{N}$ that is of $O(L^0)$.
Because of Eq.~\eqref{eq:METTS_Large_ell}, $\ell\le \ell_L$ holds for a sufficiently large $L$.
This implies
\begin{align}
    \|\rho_{L|\ell}^{(k)}-\rho_{L|\ell}^{\mathrm{can}}\|_{1}
    \le \|\rho_{L|\ell_L}^{(k)}-\rho_{L|\ell_L}^{\mathrm{can}}\|_{1}.
    \label{eq:METTS_Cond_Implies_iMATE_1}
\end{align}
Combining this with Eqs.~\eqref{eq:METTS_Small_delta} and \eqref{eq:Cond_METTS_represent_iMATE}, we have
\begin{align}
    \lim_{L\to\infty}\max_{k\in\{1,2,...,K^d\}}\|\rho_{L|\ell}^{(k)}-\rho_{L|\ell}^{\mathrm{can}}\|_{1}=0.
    \label{eq:METTS_Cond_Implies_iMATE_2}
\end{align}
Since $\ell$ is arbitrary, this holds for any $\ell\in\mathbb{N}$.
From Proposition~\ref{proposition:Equiv_rho^red}, this implies $(\rho_L)_{L\in\mathbb{N}}\maceq \bigl(\rho_L^{\mathrm{can}}(\beta|H_L)\bigr)_{L\in\mathbb{N}}$.

Finally, we show that typical $i_L$'s satisfy Eq.~\eqref{eq:Cond_METTS_represent_iMATE}. To this end, we evaluate the probability that  Eq.~\eqref{eq:Cond_METTS_represent_iMATE} is violated,
\begin{align}
    \mathrm{Prob}_{i}[\max_{k\in\{1,2,...,K^d\}}\|\phi_{L|\ell_L}^{(k)}(i)-\rho_{L|\ell_L}^{\mathrm{can}}\|_{1}\ge \delta_L]
    :=\sum_{i=0}^{D^N-1}\frac{P(i)}{Z}\chi[\max_{k\in\{1,2,...,K^d\}}\|\phi_{L|\ell_L}^{(k)}(i)-\rho_{L|\ell_L}^{\mathrm{can}}\|_{1}\ge \delta_L],
\end{align}
where $\chi[\bullet]$ is the indicator function that takes $1$ when $\bullet$ is true, and $0$ otherwise.
From the Chebyshev inequality, we have
\begin{align}
    \mathrm{Prob}_{i}[\max_{k\in\{1,2,...,K^d\}}\|\phi_{L|\ell_L}^{(k)}(i)-\rho_{L|\ell_L}^{\mathrm{can}}\|_{1}\ge \delta_L]
    \le\max_{k\in\{1,2,...,K^d\}}\sum_{i=0}^{D^N-1}\frac{P(i)}{Z}\frac{\|\phi_{L|\ell_L}^{(k)}(i)-\rho_{L|\ell_L}^{\mathrm{can}}\|_{1}^2}{ \delta_L^2}.
\end{align}
From the relation between the trace norm and the operator norm, we have
\begin{align}
    \|\phi_{L|\ell_L}^{(k)}(i)-\rho_{L|\ell_L}^{\mathrm{can}}\|_{1}
    =\sup_{\aLocal_L(\|\aLocal_L\|_{\infty}=1)}\bigl|\mathrm{Tr}\bigl[\aLocal_L\bigl(\phi_{L|\ell_L}^{(k)}(i)-\rho_{L|\ell_L}^{\mathrm{can}}\bigr)\bigr]\bigr|.
\end{align}
We introduce an orthogonal basis of operators on $C_{\ell_L}$, $\{u^\mu\}_{\mu=1}^{D^{2\ell_L^d}}$, satisfying the following:
\begin{align}
    (u^\mu)^\dagger&=u^\mu,\\
    \mathrm{Tr}[ u^\mu u^{\mu^\prime}]&=D^{\ell_L^d}\delta_{\mu,\mu^\prime},\\
    \|u^\mu\|_{\infty}&\le \sqrt{2}.
\end{align}
For instance, this can be satisfied by $\bigl(v^\mu+(v^\mu)^\dagger\bigr)/\sqrt{2}$ and $\bigl(v^\mu-(v^\mu)^\dagger\bigr)/\sqrt{2}i$~\footnote{Precisely, when $v^\mu$ and $(v^\mu)^\dagger$ are linearly dependent, $(v^\mu)^\dagger=e^{i\theta} v^\mu$, so  we use $e^{i\theta/2}v^\mu$ instead.}, where $v^\mu$'s are the generalized Pauli matrices on $C_{\ell_L}$.
Using this, we expand the operator $\aLocal_L$,
\begin{align}
    \aLocal_L=\sum_{\mu=1}^{D^{2\ell_L^d}}c_\mu u^\mu.
\end{align}
From the condition $\|\aLocal_L\|_{\infty}=1$, the coefficients $c_\mu$'s satisfy
\begin{align}
    \sum_{\mu=1}^{D^{2\ell_L^d}}|c_\mu|^2=\frac{\|\aLocal_L\|_{2}^2}{D^{\ell_L^d}}\le \|\aLocal_L\|_{\infty}^2=1,
\end{align}
where $\|\aLocal_L\|_{2}$ is the Hilbert-Schmidt norm.
Combining these, we have
\begin{align}
    \bigl|\mathrm{Tr}\bigl[\aLocal_L\bigl(\phi_{L|\ell_L}^{(k)}(i)-\rho_{L|\ell_L}^{\mathrm{can}}\bigr)\bigr]\bigr|^2
    &=\Bigl|\sum_{\mu=1}^{D^{2\ell_L^d}}c_\mu \mathrm{Tr}\bigl[u^\mu \bigl(\phi_{L|\ell_L}^{(k)}(i)-\rho_{L|\ell_L}^{\mathrm{can}}\bigr)\bigr]\Bigr|^2\\
    &\le \Bigl(\sum_{\mu=1}^{D^{2\ell_L^d}}|c_\mu|^2\Bigr)\Bigl(\sum_{\mu=1}^{D^{2\ell_L^d}}\bigl|\mathrm{Tr}\bigl[u^\mu \bigl(\phi_{L|\ell_L}^{(k)}(i)-\rho_{L|\ell_L}^{\mathrm{can}}\bigr)\bigr]\bigr|^2\Bigr)\\
    &\le \sum_{\mu=1}^{D^{2\ell_L^d}}\bigl|\mathrm{Tr}\bigl[u^\mu \bigl(\phi_{L|\ell_L}^{(k)}(i)-\rho_{L|\ell_L}^{\mathrm{can}}\bigr)\bigr]\bigr|^2,
\end{align}
where in the second line, we used the Cauchy-Schwarz inequality.
This implies 
\begin{align}
    \|\phi_{L|\ell_L}^{(k)}(i)-\rho_{L|\ell_L}^{\mathrm{can}}\|_{1}^2
    \le \sum_{\mu=1}^{D^{2\ell_L^d}}\bigl|\mathrm{Tr}\bigl[u^\mu \bigl(\phi_{L|\ell_L}^{(k)}(i)-\rho_{L|\ell_L}^{\mathrm{can}}\bigr)\bigr]\bigr|^2,
    \label{eq:Hilbert–Schmidt_expansion_bound}
\end{align}
which enables us to eliminate $\sup_{\aLocal_L(\|\aLocal_L\|_{\infty}=1)}$.
From Eq.~\eqref{eq:<a>^red=<A_Sigma>/Sigma}, the trace in the RHS can be rewritten as 
\begin{align}
    \mathrm{Tr}\bigl[u^\mu \bigl(\phi_{L|\ell_L}^{(k)}(i)-\rho_{L|\ell_L}^{\mathrm{can}}\bigr)\bigr]
    =\mathrm{Tr}\Bigl[\frac{A_{\mathcal{S}_{L}^{(k)}}(u^\mu)}{|\mathcal{S}_L^{(k)}|}\bigl(\phi_{L}(i)-\rho_{L}^{\mathrm{can}}(\beta|H_L)\bigr)\Bigr].
\end{align}
Combining these, we have 
\begin{align}
    \mathrm{Prob}_{i}[\max_{k\in\{1,2,...,K^d\}}\|\phi_{L|\ell_L}^{(k)}(i)-\rho_{L|\ell_L}^{\mathrm{can}}\|_{1}\ge \delta_L]
    \le\max_{k\in\{1,2,...,K^d\}}\frac{1}{\delta_L^2}\sum_{i=0}^{D^N-1}\frac{P(i)}{Z}\sum_{\mu=1}^{D^{2\ell_L^d}}\Bigl|\mathrm{Tr}\Bigl[\frac{A_{\mathcal{S}_{L}^{(k)}}(u^\mu)}{|\mathcal{S}_L^{(k)}|}\bigl(\phi_{L}(i)-\rho_{L}^{\mathrm{can}}(\beta|H_L)\bigr)\Bigr]\Bigr|^2\\
    =\max_{k\in\{1,2,...,K^d\}}\frac{1}{\delta_L^2}\sum_{\mu=1}^{D^{2\ell_L^d}}\Bigl\{\sum_{i=0}^{D^N-1}\frac{P(i)}{Z}\Bigl(\mathrm{Tr}\Bigl[\frac{A_{\mathcal{S}_{L}^{(k)}}(u^\mu)}{|\mathcal{S}_L^{(k)}|}\phi_{L}(i)\Bigr]\Bigr)^2-\Bigl(\mathrm{Tr}\Bigl[\frac{A_{\mathcal{S}_{L}^{(k)}}(u^\mu)}{|\mathcal{S}_L^{(k)}|}\rho_{L}^{\mathrm{can}}(\beta|H_L)\Bigr]\Bigr)^2\Bigr\},
\end{align}
where we used Eq.~\eqref{eq:METTS_canonical}.
Using nonnegativity of the variance of $A_{\mathcal{S}_{L}^{(k)}}(u^\mu)/|\mathcal{S}_L^{(k)}|$ in $\phi_{L}(i)$,
\begin{align}
    \Bigl(\mathrm{Tr}\Bigl[\frac{A_{\mathcal{S}_{L}^{(k)}}(u^\mu)}{|\mathcal{S}_L^{(k)}|}\phi_{L}(i)\Bigr]\Bigr)^2
    \le \mathrm{Tr}\Bigl[\frac{\bigl(A_{\mathcal{S}_{L}^{(k)}}(u^\mu)\bigr)^2}{|\mathcal{S}_L^{(k)}|^2}\phi_{L}(i)\Bigr],
\end{align}
we have
\begin{align}
    \sum_{i=0}^{D^N-1}\frac{P(i)}{Z}\Bigl(\mathrm{Tr}\Bigl[\frac{A_{\mathcal{S}_{L}^{(k)}}(u^\mu)}{|\mathcal{S}_L^{(k)}|}\phi_{L}(i)\Bigr]\Bigr)^2
    \le \sum_{i=0}^{D^N-1}\frac{P(i)}{Z} \mathrm{Tr}\Bigl[\frac{\bigl(A_{\mathcal{S}_{L}^{(k)}}(u^\mu)\bigr)^2}{|\mathcal{S}_L^{(k)}|^2}\phi_{L}(i)\Bigr]
    =\mathrm{Tr}\Bigl[\frac{\bigl(A_{\mathcal{S}_{L}^{(k)}}(u^\mu)\bigr)^2}{|\mathcal{S}_L^{(k)}|^2}\rho_{L}^{\mathrm{can}}(\beta|H_L)\Bigr]
\end{align}
Since $u^\mu$ is an $\ell_L$-local observable with $\|u^\mu\|_{\infty}\le \sqrt{2}$, the variance of $A_{\mathcal{S}_{L}^{(k)}}(u^\mu)/|\mathcal{S}_L^{(k)}|$ in the Gibbs state is bounded from above by $2(\varepsilon_L^{\mathrm{fluc}})^2$,
\begin{align}
    \mathrm{Tr}\Bigl[\frac{\bigl(A_{\mathcal{S}_{L}^{(k)}}(u^\mu)\bigr)^2}{|\mathcal{S}_L^{(k)}|^2}\rho_{L}^{\mathrm{can}}(\beta|H_L)\Bigr]-\Bigl(\mathrm{Tr}\Bigl[\frac{A_{\mathcal{S}_{L}^{(k)}}(u^\mu)}{|\mathcal{S}_L^{(k)}|}\rho_{L}^{\mathrm{can}}(\beta|H_L)\Bigr]\Bigr)^2
    \le 2(\varepsilon_L^{\mathrm{fluc}})^2.
\end{align}
Therefore, we have
\begin{align}
    \mathrm{Prob}_{i}[\max_{k\in\{1,2,...,K^d\}}\|\phi_{L|\ell_L}^{(k)}(i)-\rho_{L|\ell_L}^{\mathrm{can}}\|_{1}\ge \delta_L]
    \le\max_{k\in\{1,2,...,K^d\}}\frac{1}{\delta_L^2}\sum_{\mu=1}^{D^{2\ell_L^d}}2(\varepsilon_L^{\mathrm{fluc}})^2=2\Bigl(\frac{D^{\ell_L^d}\varepsilon_L^{\mathrm{fluc}}}{\delta_L}\Bigr)^2,
\end{align}
which vanishes in the limit $L\to\infty$ due to Eq.~\eqref{eq:METTS_Small_ell}.
This means that the probability of choosing $i_L\in\{0,1,...,D^N-1\}$ that satisfies Eq.~\eqref{eq:Cond_METTS_represent_iMATE} is arbitrarily close to $1$ for sufficiently large $L$.

\end{proof}

\subsubsection{\label{sec:Proof_METTS_Typical}Typical sequence of METTS represents iMATE}

In the previous subsubsection, we have shown that if $i_L$ is chosen appropriately such that Eq.~\eqref{eq:Cond_METTS_represent_iMATE} is satisfied for every $L$, then the sequence of METTS, $\bigl(\phi_L(i_L)\bigr)_{L\in\mathbb{N}}$, represents iMATE. Although we have also shown that the probability of satisfying Eq.~\eqref{eq:Cond_METTS_represent_iMATE} is close to $1$ for each $L$, it is unclear how high the probability is that Eq.~\eqref{eq:Cond_METTS_represent_iMATE} is satisfied for \emph{all} $L$ simultaneously. In other words, we did not evaluate the probability that the sequence $\bigl(\phi_L(i_L)\bigr)_{L\in\mathbb{N}}$ represents iMATE. Here, we show that the sequence $\bigl(\phi_L(i_L)\bigr)_{L\in\mathbb{N}}$ represents iMATE with probability $1$, assuming that the probability of violating Eq.~\eqref{eq:Cond_METTS_represent_iMATE} decays at a sufficiently fast rate with respect to $L$.

Recall that, under Assumption~\ref{assumption:GibbsState_LargerLocality}, $(\delta_L)_{L\in\mathbb{N}}$ and $(\ell_L)_{L\in\mathbb{N}}$ are defined by Eqs.~\eqref{eq:METTS_Large_ell}--\eqref{eq:METTS_Small_ell}. For simplicity of notation, let $p_L$ be the probability that Eq.~\eqref{eq:Cond_METTS_represent_iMATE} is violated,
\begin{align}
    p_L := \mathrm{Prob}_{i}[\max_{k\in\{1,2,...,K^d\}}\|\phi_{L|\ell_L}^{(k)}(i)-\rho_{L|\ell_L}^{\mathrm{can}}\|_{1}\ge \delta_L].
\end{align}
Our upper bound of $p_L$ given in Appendix~\ref{sec:Proof_METTS_1} employs the Chebyshev inequality, which is not sufficiently tight for the purpose of this subsubsection.
Instead, we here assume
\begin{align}
    \sum_{L=1}^{\infty}p_L<\infty.
    \label{eq:METTS_Decay_pL}
\end{align}
This condition is justified by a physically reasonable assumption in the next subsubsection.

Using condition~\eqref{eq:METTS_Decay_pL}, we can show the following:
\begin{proposition}[\label{proposition:TypicalMETTS_represents_iMATE}Typical sequence of METTS represents iMATE]
Suppose that Assumption~\ref{assumption:GibbsState_LargerLocality} and Eq.~\eqref{eq:METTS_Decay_pL} are satisfied. We choose $i_L\in\{0,1,...,D^N-1\}$ according to the probability distribution $P(i_L)/Z$ and define $\rho_L=\phi_L(i_L)$.
For any positive number $\epsilon>0$, we can say that the sequence of states $(\rho_L)_{L\in\mathbb{N}}$ satisfies $(\rho_L)_{L\in\mathbb{N}}\maceq\bigl(\rho_{L}^{\mathrm{can}}(\beta|H_L)\bigr)_{L\in\mathbb{N}}$ with probability larger than $1-\epsilon$.
\end{proposition}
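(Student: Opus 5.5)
The argument is a Borel--Cantelli argument combined with the deterministic step already proved in Proposition~\ref{proposition:METTS_represent_iMATE}.

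\emph{Step 1 (the probability space).} Choose the indices $i_L$ independently for different $L$, each according to $P(i_L)/Z$, so that the whole sequence $(i_L)_{L\in\mathbb{N}}$ lives on a product probability space. Let $E_L$ be the event that condition~\eqref{eq:Cond_METTS_represent_iMATE} is violated at size $L$; then $\mathrm{Prob}[E_L]=p_L$. Only the marginal probabilities $p_L$ are used below, so independence is not essential.

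\emph{Step 2 (violations occur only finitely often).} Fix $\epsilon>0$. By assumption~\eqref{eq:METTS_Decay_pL} the tail $\sum_{L\ge L_0}p_L$ tends to zero, so we can pick $L_0$ with $\sum_{L\ge L_0}p_L<\epsilon$. A union bound then gives
\begin{align}
\mathrm{Prob}\Bigl[\bigcup_{L\ge L_0}E_L\Bigr]<\epsilon .
\end{align}
Hence, with probability larger than $1-\epsilon$, condition~\eqref{eq:Cond_METTS_represent_iMATE} holds for every $L\ge L_0$. Equivalently, by Borel--Cantelli, with probability one only finitely many $E_L$ occur.

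\emph{Step 3 (deterministic conclusion on the good event).} On this good event we reuse the fourth step of the proof of Proposition~\ref{proposition:METTS_represent_iMATE}. Fix any $\ell\in\mathbb{N}$ independent of $L$. Since $\ell_L=\omega(L^0)$, we have $\ell\le\ell_L$ for all sufficiently large $L$. Partial trace is contractive in trace norm, so inequality~\eqref{eq:METTS_Cond_Implies_iMATE_1} holds. Together with $\delta_L=o(L^0)$ this yields Eq.~\eqref{eq:METTS_Cond_Implies_iMATE_2}.

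The finitely many sizes $L<L_0$ where the condition might fail do not matter, because only the limit $L\to\infty$ enters. Proposition~\ref{proposition:Equiv_rho^red} then gives $(\rho_L)_{L\in\mathbb{N}}\maceq(\rho_L^{\mathrm{can}}(\beta|H_L))_{L\in\mathbb{N}}$. This also shows that $(\rho_L)$ represents a macroscopic state: its spatially averaged local density matrices $\rho_{L|\ell}^{(k)}$ converge to those of the Gibbs state, which exist by Assumption~\ref{assumption:GibbsState}.

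\emph{Expected obstacle.} Nothing here is hard. The one point needing care is that the convergence of $\rho_{L|\ell}^{(k)}$ for each fixed $\ell$ must be derived from a single condition at the growing scale $\ell_L$, and this is exactly why that condition is imposed at $\ell_L$. Beyond that, the only subtlety is specifying the joint law of the sequence $(i_L)$ so that ``with probability'' is well defined. I would simply take the product measure and remark that the union bound needs no independence.
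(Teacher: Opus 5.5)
Your proof is correct and follows the paper's route: restrict to a tail $L\ge L_0$ on which condition~\eqref{eq:Cond_METTS_represent_iMATE} holds with probability larger than $1-\epsilon$, then rerun the deterministic contraction argument from the fourth step of the proof of Proposition~\ref{proposition:METTS_represent_iMATE}. The only difference is that you bound the failure probability by countable subadditivity, whereas the paper writes the success probability as $\prod_{L\ge L^*}(1-p_L)$ (tacitly assuming independent draws for different $L$) and bounds its logarithm; your version therefore needs no independence and gives the slightly sharper tail bound $\sum_{L\ge L_0}p_L$ in place of $2\log 2\sum_{L\ge L^*}p_L$.
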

\noindent
In other words, if Assumption~\ref{assumption:GibbsState_LargerLocality} and Eq.~\eqref{eq:METTS_Decay_pL} are satisfied, then $(\rho_L)_{L\in\mathbb{N}}$ represents iMATE almost surely. The essential idea of the following proof is almost the same as that of the Borel-Cantelli lemma.
\begin{proof}
Let $L^*\in\mathbb{N}$ be a sufficiently large integer.
We consider the probability that Eq.~\eqref{eq:Cond_METTS_represent_iMATE} is satisfied for all $L\ge L^*$,
\begin{align}
    P_{L^*}:=\prod_{L=L^*}^{\infty}(1-p_L).
\end{align}
Taking its logarithm, we have
\begin{align}
    \log P_{L^*}=\sum_{L=L^*}^{\infty}\log(1-p_L).
\end{align}
From Eq.~\eqref{eq:METTS_Decay_pL}, since $L^*$ is sufficiently large, $p_L<1/2$ holds for all $L\ge L^*$.
From the concavity of $\log(1-x)$, we have $\log(1-p_L)\ge -2p_L\log 2$ for $p_L<1/2$. Substituting this, we have
\begin{align}
    \log P_{L^*}\ge -2\log 2\sum_{L=L^*}^{\infty}p_L,
\end{align}
or equivalently
\begin{align}
    P_{L^*}\ge \exp\Bigl[-2\log 2\sum_{L=L^*}^{\infty}p_L\Bigr]\ge 1-2\log 2\sum_{L=L^*}^{\infty}p_L.
\end{align}
From Eq.~\eqref{eq:METTS_Decay_pL}, $\sum_{L=L^*}^{\infty}p_L$ becomes sufficiently small by taking $L^*$ sufficiently large.
This implies that 
\begin{align}
    P_{L^*}\ge 1-\epsilon
\end{align}
holds for sufficiently large $L^*$.

The remaining task is to show that if Eq.~\eqref{eq:Cond_METTS_represent_iMATE} is satisfied for all $L\ge L^*$, then the corresponding $(\rho_L)_{L\in\mathbb{N}}$ satisfies $(\rho_L)_{L\in\mathbb{N}}\maceq\bigl(\rho_{L}^{\mathrm{can}}(\beta|H_L)\bigr)_{L\in\mathbb{N}}$.
This readily follows from Eqs.~\eqref{eq:METTS_Cond_Implies_iMATE_1} and \eqref{eq:METTS_Cond_Implies_iMATE_2}.
\end{proof}

\subsubsection{\label{sec:Proof_METTS_Justification}Physical justification of Eq.~\eqref{eq:METTS_Decay_pL}}

Here, we justify Eq.~\eqref{eq:METTS_Decay_pL} from a physically reasonable assumption. First, to introduce a new assumption given below (Assumption~\ref{assumption:GibbsState_Concentration}), which shares the key motivation with Assumption~\ref{assumption:GibbsState_LargerLocality},
we temporarily impose Assumption~\ref{assumption:GibbsState_LargerLocality}.
Let $\aLocal_L$ be an arbitrary $\ell_L$-local observable with $\|\aLocal_L\|_{\infty}=1$, where $\ell_L$ is given by Eq.~\eqref{eq:METTS_Large_ell}. 
From Assumption~\ref{assumption:GibbsState_LargerLocality}, the variance of the additive observable $A_L=A_{\mathcal{S}_{L}^{(k)}}(\aLocal_L)$ scales as
\begin{align}
    \sigma_A^2:=\mathrm{Tr} \Bigl[ \rho_L^{\mathrm{can}}(\beta|H_L)  \bigl(\delta A_L\bigr)^2 \Bigr]
    =o(N^2),
    \label{eq:Concentration_Var_A}
\end{align}
where 
\begin{align}
    \delta A_L:=A_L
    -\mathrm{Tr} \bigl[ \rho_L^{\mathrm{can}}(\beta|H_L)  A_L \bigr].
\end{align}
As suggested from the quantum central limit theorem~\cite{Goderis1989,Goderis1990,Matsui2003,Fujikura2016,Shimizu2017}, the distribution of $A_L$ is close to the Gaussian distribution with the variance $\sigma_A^2$,
\begin{align}
    \mathrm{Tr}[ \rho_L^{\mathrm{can}}(\beta|H_L) \mathcal{P}_{m}^A ]\propto \exp\Bigl[-\frac{(\delta A_m)^2}{2\sigma_A^2}\Bigr],
\end{align}
where $\mathcal{P}_m^A$ is the projection to the eigenspace of $A_L$ with the eigenvalue $A_m$, and $\delta A_m=A_m-\mathrm{Tr} \bigl[ \rho_L^{\mathrm{can}}(\beta|H_L)  A_L \bigr]$.
This indicates that the moment generating function can be approximated by
\begin{align}
    \mathrm{Tr} \Bigl[\rho_L^\mathrm{can}(\beta|H_L) \exp[\tau \delta A_L]\Bigr]
    \simeq \exp\Bigl[\frac{\sigma_A^2\tau^2}{2}  \Bigr]=\exp[\tau^2 o(N^2)].
\end{align}
These discussions should hold for any $\aLocal_L$ and at least for sufficiently small $|\tau|$.
Therefore, the Gaussian concentration bound~\cite{Lenci2005,Ogata2010,Ogata2010a,Kuwahara2020Concentration,DePalma2025} of the following form will be a reasonable assumption for the canonical Gibbs state:
\begin{assumpGibbs}[\label{assumption:GibbsState_Concentration}Gaussian concentration bound]
There are positive real numbers $\tau_0$ and $\nu$ independent of $L$, a sequence of positive integers $(\ell_L^{\mathrm{max}})_{L\in\mathbb{N}}$ and a sequence of positive real numbers $(\chi_L)_{L\in\mathrm{N}}$ that scale as
\begin{align}
    &\chi_L=O(N^{2-\nu}),\label{eq:AssumptionConcentration_chiL=o(N^2)}\\
    &\chi_L=\Omega(N),\label{eq:AssumptionConcentration_chiL=Omega(N)}\\
    &\ell_L^{\mathrm{max}}=\omega(L^0)\label{eq:AssumptionConcentration_ellL^max}
\end{align}
and in which 
\begin{align}
    &\log \mathrm{Tr} \bigl[\rho_L^\mathrm{can}(\beta|H_L) \exp[\tau \delta A_{\mathcal{S}_{L}^{(k)}}(\aLocal_L)]\bigr]
    \leq \frac{1}{2} \tau^2 \chi_L
    \label{eq:AssumptionConcentration}
\end{align}
is satisfied for any $\tau\in(-\tau_0,\tau_0)$, $\ell_L^{\mathrm{max}}$-local observable $\aLocal_L$ with $\|\aLocal_L\|_{\infty}=1$, $L\in\mathbb{N}$, and $k\in\{1,2,...,K^d\}$.
\end{assumpGibbs}
\noindent
Note that, although there is a rigorous proof of the Gaussian concentration bound~\cite{Kuwahara2020Concentration} in the case where $\ell_L^{\mathrm{max}}$ is independent of $L$ and $\beta$ is sufficiently small, Assumption~\ref{assumption:GibbsState_Concentration} imposes more strongly than such a bound. We expect Assumption~\ref{assumption:GibbsState_Concentration} to be valid because of the discussions given above Assumption~\ref{assumption:GibbsState_Concentration}.
In addition, when $\tau_0$ is very small, Eq.~\eqref{eq:AssumptionConcentration} will be more reasonable because the LHS becomes close to $\sigma_A^2\tau^2/2$.
We remark that, though Eq.~\eqref{eq:AssumptionConcentration_chiL=o(N^2)} is slightly stronger than the scaling of $o(N^2)$ which corresponds to Eq.~\eqref{eq:Concentration_Var_A}, their difference will be unimportant for sufficiently small $\nu$. 
Note also that since Eq.~\eqref{eq:AssumptionConcentration} is an upper bound, Eq.~\eqref{eq:AssumptionConcentration_chiL=Omega(N)} is not crucial for Assumption~\ref{assumption:GibbsState_Concentration}; if one finds $\chi_L$ satisfying the above conditions other than Eq.~\eqref{eq:AssumptionConcentration_chiL=Omega(N)}, we can replace it by a larger $\chi_L$ to meet Eq.~\eqref{eq:AssumptionConcentration_chiL=Omega(N)}.

Using this assumption, we show the following:
\begin{proposition}[\label{proposition:METTS_Justification_Decay_pL}Physical justification of Eq.~\eqref{eq:METTS_Decay_pL}]
Suppose that Assumption~\ref{assumption:GibbsState_Concentration} is satisfied. We take sequences $(\delta_L)_{L\in\mathbb{N}}$ and $(\ell_L)_{L\in\mathbb{N}}$ as 
\begin{align}
    \delta_L&=\frac{1}{\log L}=o(L^0)\label{eq:Concentration_deltaL}\\
    \ell_L&=\min\{\Bigl\lfloor\Bigl(\frac{\log\log L}{\log D}\Bigr)^{1/d}\Bigr\rfloor,\ell_L^{\mathrm{max}}\}=\omega(L^0)\label{eq:Concentration_ellL}
\end{align}
instead of Eqs.~\eqref{eq:METTS_Large_ell}--\eqref{eq:METTS_Small_ell}.
Then, Eq.~\eqref{eq:METTS_Decay_pL} is satisfied.
This implies that, for any positive number $\epsilon>0$, we can say that the sequence of states $(\rho_L)_{L\in\mathbb{N}}$ defined in Proposition~\ref{proposition:TypicalMETTS_represents_iMATE} satisfies $(\rho_L)_{L\in\mathbb{N}}\maceq\bigl(\rho_{L}^{\mathrm{can}}(\beta|H_L)\bigr)_{L\in\mathbb{N}}$ with probability larger than $1-\epsilon$.
\end{proposition}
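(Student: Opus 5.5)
We will bound $p_L$ directly with Assumption~\ref{assumption:GibbsState_Concentration} in place of the Chebyshev inequality, following the structure of the proof of Proposition~\ref{proposition:METTS_represent_iMATE}. As there, we first remove the supremum over $\aLocal_L$ using Eq.~\eqref{eq:Hilbert–Schmidt_expansion_bound}. If $\|\phi_{L|\ell_L}^{(k)}(i)-\rho_{L|\ell_L}^{\mathrm{can}}\|_1\ge\delta_L$, then some basis element $u^\mu$ (there are $D^{2\ell_L^d}$ of them) satisfies
\begin{align}
\Bigl|\mathrm{Tr}\Bigl[\frac{\delta A_{\mathcal{S}_L^{(k)}}(u^\mu)}{|\mathcal{S}_L^{(k)}|}\phi_L(i)\Bigr]\Bigr|\ge \frac{\delta_L}{D^{\ell_L^d}}=:\eta_L .
\end{align}
A union bound over $k$ and $\mu$ then gives
\begin{align}
p_L\le K^dD^{2\ell_L^d}\max_{k,\mu}\mathrm{Prob}_i\bigl[|\langle \delta A\rangle_{\phi(i)}|\ge \eta_L|\mathcal{S}_L^{(k)}|\bigr].
\end{align}
To keep $\|\cdot\|_\infty\le1$, we rescale $u^\mu$ by $1/\sqrt2$ so that the assumption applies; this only changes constants.

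For each fixed $u^\mu$ we control the distribution over $i$ by an exponential Chebyshev (Chernoff) bound. Since $x\mapsto e^{\tau x}$ is convex, Jensen's inequality gives $e^{\tau\langle\delta A\rangle_{\phi(i)}}\le\langle e^{\tau\delta A}\rangle_{\phi(i)}$. Averaging over $i$ with weights $P(i)/Z$ and using Eq.~\eqref{eq:METTS_canonical}, we obtain
\begin{align}
\sum_i\frac{P(i)}{Z}e^{\tau\langle\delta A\rangle_{\phi(i)}}\le\mathrm{Tr}[\rho_L^{\mathrm{can}}e^{\tau\delta A}]\le e^{\tau^2\chi_L/2}
\end{align}
for $|\tau|<\tau_0$. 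Markov's inequality, applied with both signs of $\tau$, then yields
\begin{align}
\mathrm{Prob}_i[|\langle\delta A\rangle|\ge x]\le 2\exp[-\tau x+\tau^2\chi_L/2].
\end{align}

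The main obstacle is choosing $\tau$ inside $(-\tau_0,\tau_0)$ while still obtaining a summable bound. Here $x=\eta_L|\mathcal{S}_L^{(k)}|=\Theta(N\eta_L)$, so the unconstrained optimum $\tau=x/\chi_L$ is of order $N^{\nu-1}\eta_L$, possibly much smaller. Taking $\tau=\min\{\tau_0/2,\,x/\chi_L\}$, the exponent is at least
\begin{align}
\min\{x^2/(2\chi_L),\ \tau_0x/4\}\gtrsim\min\{N^{\nu}\eta_L^2,\ N\eta_L\},
\end{align}
where we use $\chi_L=O(N^{2-\nu})$; the lower bound $\chi_L=\Omega(N)$ is only bookkeeping. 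With the prescribed choices we have $D^{\ell_L^d}\le\log L$ and $\delta_L=1/\log L$, so $\eta_L\ge(\log L)^{-2}$ and $D^{2\ell_L^d}\le(\log L)^2$. Therefore
\begin{align}
p_L\le 2K^d(\log L)^2\exp[-cL^{d\nu}(\log L)^{-4}],
\end{align}
which is summable in $L$, establishing Eq.~\eqref{eq:METTS_Decay_pL}.

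It remains to check the hypotheses of Proposition~\ref{proposition:TypicalMETTS_represents_iMATE}. First, $\ell_L=\omega(L^0)$ and $\ell_L\le\ell_L^{\mathrm{max}}$ by construction, and $\delta_L\to0$. Second, Assumption~\ref{assumption:GibbsState_LargerLocality} follows from Assumption~\ref{assumption:GibbsState_Concentration}: expanding the cumulant generating function to second order in $\tau$ bounds the variance by $\chi_L$, so it is $O(N^{2-\nu})=o(N^2)$. Third, the fourth step of the proof of Proposition~\ref{proposition:METTS_represent_iMATE} uses only $\ell_L\to\infty$ and $\delta_L\to0$, not Eq.~\eqref{eq:METTS_Small_ell}. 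Proposition~\ref{proposition:TypicalMETTS_represents_iMATE} therefore applies and yields the probability $1-\epsilon$ statement.
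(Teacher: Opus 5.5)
Your proposal is correct and follows the paper's proof essentially step by step: the Hilbert--Schmidt reduction to the $D^{2\ell_L^d}$ basis operators $u^\mu$ with threshold $\delta_L/D^{\ell_L^d}$, a union bound over $k$ and $\mu$, Jensen's inequality with the METTS decomposition \eqref{eq:METTS_canonical} to bound the averaged moment generating function by the Gibbs one, a Chernoff bound, the same $\log L$ bookkeeping, and finally reuse of the proof of Proposition~\ref{proposition:TypicalMETTS_represents_iMATE}. The differences are cosmetic: you cap $\tau$ at $\tau_0/2$ instead of using $\chi_L=\Omega(N)$ to guarantee $\tau=o(L^0)$, and you normalize $u^\mu$. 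Your derivation of Assumption~\ref{assumption:GibbsState_LargerLocality} from Assumption~\ref{assumption:GibbsState_Concentration} is not needed, because, as the paper notes, the former only served to define $\ell_L$ and $\delta_L$; as written it would also need an extra step, since the latter assumption concerns only primitive subsystems.
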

In other words, if Assumption~\ref{assumption:GibbsState_Concentration} is satisfied, then $(\rho_L)_{L\in\mathbb{N}}$ represents iMATE almost surely.

\begin{proof}

From Jensen's inequality, for any $\tau\in\mathbb{R}$, we have
\begin{align}
    \exp [\tau \mathrm{Tr}[\phi_L(i) \delta A_{\mathcal{S}_{L}^{(k)}}(\aLocal_L)]]\le \mathrm{Tr} \Bigl[\phi_L(i) \exp[\tau \delta A_{\mathcal{S}_{L}^{(k)}}(\aLocal_L)]\Bigr],
\end{align}
which implies
\begin{align}
    \sum_i \frac{P(i)}{Z} \exp [\tau \mathrm{Tr}[\phi_L(i) \delta A_{\mathcal{S}_{L}^{(k)}}(\aLocal_L)]]
    \leq \mathrm{Tr} [\rho_L^\mathrm{can} \exp[\tau \delta A_{\mathcal{S}_{L}^{(k)}}(\aLocal_L)]].
\end{align}
Combining this with Assumption~\ref{assumption:GibbsState_Concentration}, we have
\begin{align}
    \sum_i \frac{P(i)}{Z} \exp [\tau \mathrm{Tr}[\phi_L(i) \delta A_{\mathcal{S}_{L}^{(k)}}(\aLocal_L)]]
    \leq \exp\Bigl[\frac{\chi_L\tau^2}{2} \Bigr]
\end{align}
for any $\tau\in(-\tau_0,\tau_0)$.

Next, using these inequalities, we give an upper bound for the probability that the expectation value of $A_{\mathcal{S}_{L}^{(k)}}(\aLocal_L)$ in $\phi_{L}(i)$ deviates from that in the canonical Gibbs state. For any $\tau>0$ and $\epsilon>0$, the following identity holds:
\begin{align}
    \mathrm{Prob}_i \Bigl[ \mathrm{Tr}[\phi_L(i) \frac{\delta A_{\mathcal{S}_{L}^{(k)}}(\aLocal_L)}{|\mathcal{S}_{L}^{(k)}|}] \ge \epsilon\Bigr]
    &=\mathrm{Prob}_i \Bigl[ \exp [\tau \mathrm{Tr}[\phi_L(i) \delta A_{\mathcal{S}_{L}^{(k)}}(\aLocal_L)] \ge e^{\epsilon\tau |\mathcal{S}_{L}^{(k)}|}\Bigr].
\end{align}
Applying the Markov inequality, we have
\begin{align}
    \mathrm{Prob}_i \Bigl[ \mathrm{Tr}[\phi_L(i) \frac{\delta A_{\mathcal{S}_{L}^{(k)}}(\aLocal_L)}{|\mathcal{S}_{L}^{(k)}|}] \ge \epsilon\Bigr]
    &\le e^{-\epsilon\tau |\mathcal{S}_{L}^{(k)}|}\sum_i \frac{P(i)}{Z} \exp [\tau \mathrm{Tr}[\phi_L(i) \delta A_{\mathcal{S}_{L}^{(k)}}(\aLocal_L)]]\\
    &\le e^{-\epsilon\tau |\mathcal{S}_{L}^{(k)}|} \exp\Bigl[\frac{\chi_L\tau^2}{2} \Bigr].
\end{align}
The same calculation applies to $\mathrm{Prob}_i \Bigl[ -\mathrm{Tr}[\phi_L(i) \frac{\delta A_{\mathcal{S}_{L}^{(k)}}(\aLocal_L)}{|\mathcal{S}_{L}^{(k)}|}] \ge \epsilon\Bigr]$.
Therefore, we have
\begin{align}
    \mathrm{Prob}_i \Bigl[ \Bigl|\mathrm{Tr}[\phi_L(i) \frac{\delta A_{\mathcal{S}_{L}^{(k)}}(\aLocal_L)}{|\mathcal{S}_{L}^{(k)}|}]\Bigr| \ge \epsilon\Bigr]
    &\le 2e^{-\epsilon\tau |\mathcal{S}_{L}^{(k)}|} \exp\Bigl[\frac{\chi_L\tau^2}{2} \Bigr].
\end{align}
Substituting $\epsilon=\delta_L/D^{{\ell_L}^d}$ and $\tau=(\delta_L/D^{{\ell_L}^d})\times(|\mathcal{S}_{L}^{(k)}|/\chi_L)$, we have
\begin{align}
    \mathrm{Prob}_i \Bigl[ \Bigl|\mathrm{Tr}[\phi_L(i) \frac{\delta A_{\mathcal{S}_{L}^{(k)}}(\aLocal_L)}{|\mathcal{S}_{L}^{(k)}|}]\Bigr| \ge \frac{\delta_L}{D^{{\ell_L}^d}}\Bigr]
    &\le 2\exp\Bigl[-\frac{|\mathcal{S}_L^{(k)}|^2}{2\chi_L} (\delta_L/D^{{\ell_L}^d})^2\Bigr].
\end{align}
Note that $\tau=(\delta_L/D^{2{\ell_L}^d})\times(|\mathcal{S}_{L}^{(k)}|/\chi_L)=o(L^0)$ holds from Eqs.~\eqref{eq:AssumptionConcentration_chiL=Omega(N)}--\eqref{eq:Concentration_ellL}, and hence, $\tau\in(0,\tau_0)$ is satisfied.

Now, we show Eq.~\eqref{eq:METTS_Decay_pL}.
From Eq.~\eqref{eq:Hilbert–Schmidt_expansion_bound}, we have
\begin{align}
    \mathrm{Prob}_i \left[ \left\|
         \phi_L^{(k)}(i) - \rho_L^\mathrm{can}
    \right\|_1 \geq \delta_L\right]
    \le \sum_{\mu=1}^{D^{2{\ell_L}^d}}\mathrm{Prob}_i \Bigl[ \Bigl|\mathrm{Tr}[\phi_L(i) \frac{\delta A_{\mathcal{S}_{L}^{(k)}}(u^\mu)}{|\mathcal{S}_{L}^{(k)}|}]\Bigr|^2 \ge \frac{\delta_L^2}{D^{2{\ell_L}^d}}\Bigr]
    \leq 2 D^{2 {\ell_L}^d} \exp\Bigl[-\frac{|\mathcal{S}_L^{(k)}|^2}{2\chi_L} (\delta_L/D^{{\ell_L}^d})^2\Bigr].
\end{align}
Taking a sum over all $k\in\{1,2,...,K^d\}$, we have
\begin{align}
    p_L \leq 2 K^d D^{2 {\ell_L}^d} \exp\Bigl[-\frac{|\mathcal{S}_L^{(k)}|^2}{2\chi_L} (\delta_L/D^{{\ell_L}^d})^2\Bigr].
\end{align}
From Eqs.~\eqref{eq:AssumptionConcentration_chiL=o(N^2)}, \eqref{eq:Concentration_deltaL}, \eqref{eq:Concentration_ellL}, the RHS scales as
\begin{align}
    2 K^d D^{2 {\ell_L}^d} \exp\Bigl[-\frac{|\mathcal{S}_L^{(k)}|^2}{2\chi_L} (\delta_L/D^{{\ell_L}^d})^2\Bigr]
    =O\bigl((\log L)^2\bigr)\times\exp\bigl[-\Omega(N^\nu)/(\log L)^4\bigr].
\end{align}
This implies that, at sufficiently large $L$, $p_L$ decays faster than, e.g., $O(1/L^2)$, which satisfies Eq.~\eqref{eq:METTS_Decay_pL}.
Therefore, we have
\begin{align}
    \sum_{L=1}^{\infty} p_L
    \leq \sum_{L=1}^{\infty} 2 K^d D^{2 {\ell_L}^d} \exp\Bigl[-\frac{|\mathcal{S}_L^{(k)}|^2}{2\chi_L} (\delta_L/D^{{\ell_L}^d})^2\Bigr]
    < \infty,
\end{align}
that is, Eq.~\eqref{eq:METTS_Decay_pL} is satisfied.

Finally, we briefly show that $(\rho_L)_{L\in\mathbb{N}}$ represents iMATE almost surely, imposing Assumption~\ref{assumption:GibbsState_Concentration} instead of Assumption~\ref{assumption:GibbsState_LargerLocality}. 
We emphasize that Assumption~\ref{assumption:GibbsState_LargerLocality} is not apparently used in the proof of Proposition~\ref{proposition:TypicalMETTS_represents_iMATE}.
Therefore, we can apply the proof of Proposition~\ref{proposition:TypicalMETTS_represents_iMATE} almost directly, replacing Eqs.~\eqref{eq:METTS_Large_ell}--\eqref{eq:METTS_Small_ell} by Eqs.~\eqref{eq:Concentration_deltaL} and \eqref{eq:Concentration_ellL} as the definitions for $\ell_L$ and $\delta_L$, which enable to obtain the desired result.

\end{proof}

\subsection{\label{sec:Proof_Example_iMATE_InfiniteTemp}Infinite temperature iMATE without random variables (details of Example~\ref{example:NonMITE_MacroEqState})}

In this subsection, we provide a simple example of representations of iMATE at $\beta=0$ that contains no random variables in contrast to the typical sequence of METTS.

\begin{proposition}[\label{proposition:NonMITE_MacroEqState}Infinite temperature iMATE without random variables]
For simplicity, we consider a one-dimensional spin-$1/2$ system.
The extension to systems in higher spatial dimensions is straightforward.
Let the system size $L$ be expressed as $L = 2^n n$ for an integer $n$.
We assume that the number of subdivisions $K$ used to define the primitive macroscopic subsystem $\mathcal{S}^{(k)}$ (see Sec.~\ref{sec:Setup_Lattice}) can be written as $K=2^\kappa$, where $\kappa$ can be taken arbitrarily large as long as it is independent of $L$.

We write the computational basis of a system composed of $n$ sites by $\{\ket{b_n(i)}\}_{i=0,1,\dots,2^n-1}$, where $b_n(i)$ maps each integer $0 \leq i \leq 2^n-1$ to an $n$-bit binary sequence.
Thus, the basis states are enumerated as
\begin{align}
    \ket{b_n(0)} &= \underbrace{\ket{0 \dots 0 0}}_{n},\quad
    \ket{b_n(1)} = \underbrace{\ket{0 \dots 0 1}}_{n}, \nonumber\\
    \ket{b_n(2)} &= \underbrace{\ket{0 \dots 1 0}}_{n},\quad
    \ket{b_n(3)} = \underbrace{\ket{0 \dots 1 1}}_{n},\quad
    \dots
    \label{eq:basis_binary}
\end{align}
Using this, we consider a pure state $\rho_L = \ket{\psi_L}\bra{\psi_L}$ of the system on $\Lambda_L$, where
\begin{align}
    \ket{\psi_L} = \bigotimes_{i=0}^{2^n-1} \ket{b_n(i)}_{ni+1}.
    \label{eq:Example_MacroEqState_infty}
\end{align}
Here, $\ket{b_n(i)}_{ni+1}$ is defined on $\{ni+1,...,ni+n\}$.
In this state, the expectation values of local observables, such as Pauli-z operator $\sigma_{j}^{z}$, depend on the site. Therefore, this state does not represent MITE of a system described by a translation-invariant Hamiltonian.
By contrast, 
$(\rho_L)_L$ represents iMATE at inverse temperature $\beta=0$ for any Hamiltonian.

\end{proposition}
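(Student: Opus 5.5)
By Proposition~\ref{proposition:Equiv_rho^red}, it suffices to show that for every fixed $\ell$ and every primitive macroscopic subsystem $\mathcal{S}^{(k)}$, the spatial average $\rho_{L|\ell}^{(k)}$ converges to the maximally mixed state $(I/2)^{\otimes \Cell}$. The maximally mixed state $I_{\Lambda_L}/2^N$ is translation invariant, its $\ell$-local density matrices are all $(I/2)^{\otimes\Cell}$, and it equals $\rho_L^{\mathrm{can}}(0|H_L)$ for every $H_L$. Equivalently, I will show that for every $\ell$-local observable $\aLocal$ on $\Cell$ and every $k$,
\begin{align}
    \lim_{L\to\infty}\frac{1}{|\mathcal{S}^{(k)}_L|}\sum_{\bm r\in\mathcal{S}^{(k)}_L}\mathrm{Tr}\bigl[\rho_{L|\ell}^{\bm r}\aLocal\bigr]=\frac{\mathrm{Tr}[\aLocal]}{2^{\ell}}.
\end{align}
The same argument also shows that $(\rho_L)_L$ represents a macroscopic state, since all these limits then exist. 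The restriction to $L=n2^n$ is handled by treating the sequence along this subsequence. If needed, I will fill the other $L$ arbitrarily, or define the sequence only on such $L$, as the statement implicitly does.

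The key structural step uses the choice $K=2^\kappa$. The blocks are indexed by $i=0,\dots,2^n-1$, and $i$ is written in binary as $b_n(i)$. For $n\ge\kappa$, the primitive subsystem $\mathcal{S}^{(k)}_L$ consists of the $2^{n-\kappa}$ consecutive blocks whose $\kappa$ most significant bits equal the binary representation of $k-1$. Since $L/K=n2^{n-\kappa}$ is an integer, the block boundaries align exactly with the subsystem boundaries. Within $\mathcal{S}^{(k)}_L$, the lowest $n-\kappa$ bits of $b_n(i)$ run over all $2^{n-\kappa}$ strings exactly once, while the top $\kappa$ bits are frozen.

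Next, I classify the translates $\Cell[\bm r]$ with $\bm r\in\mathcal{S}^{(k)}_L$. Some windows straddle a block boundary. There are at most $\ell$ of them per block, so their fraction is $O(\ell/n)\to0$. Among the windows lying inside a single block, those that touch the first $\kappa$ positions also form a fraction $O((\kappa+\ell)/n)\to0$. Every remaining window sits at a fixed position $p$ inside the last $n-\kappa$ bits. As $i$ ranges over the blocks of $\mathcal{S}^{(k)}_L$, the $\ell$ bits at that position are uniformly distributed over $\{0,1\}^\ell$, because the lowest $n-\kappa$ bits enumerate all strings exactly once. The reduced state on that window is therefore $2^{-\ell}\sum_{\bm j}\ket{\bm j}\bra{\bm j}=(I/2)^{\otimes\ell}$ when averaged over $i$. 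Averaging over $p$ and adding the $O((\kappa+\ell)/n)$ error, which is bounded using $\|\aLocal\|$, gives $|\mathrm{Tr}[\rho_{L|\ell}^{(k)}\aLocal]-2^{-\ell}\mathrm{Tr}\aLocal|\le C\|\aLocal\|(\ell+\kappa)/n\to0$.

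The main obstacle is this bookkeeping: the binary layout has to align with the primitive subsystems, and the frozen high bits have to be controlled. The condition $K=2^\kappa$ is exactly what makes this alignment work. The argument also relies on $n=\Theta(\log L)\to\infty$ while $\ell$ and $\kappa$ stay fixed. For the non-MITE claim, I will simply note that $\mathrm{Tr}[\rho_L\sigma^z_j]$ equals $+1$ on the first site of block $0$ and $-1$ on the first site of block $2^n-1$. Hence $\max_{\bm r,\bm r'}\|\rho^{\bm r}_{L|1}-\rho^{\bm r'}_{L|1}\|_1\not\to0$, and Definition~\ref{definition:MITE} fails because the Gibbs state is translation invariant.
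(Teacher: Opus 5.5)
Your proposal is correct and follows essentially the paper's argument. The blocks align with the $K=2^\kappa$ primitive subsystems, the $O((\kappa+\ell)/n)$ fraction of windows that straddle blocks or touch the frozen leading $\kappa$ bits is discarded, and because the trailing $n-\kappa$ bits run over all strings, the block average is maximally mixed (the paper phrases this with additive observables and Eq.~\eqref{eq:Additive_S_approximation} rather than with $\rho^{(k)}_{L|\ell}$ and Proposition~\ref{proposition:Equiv_rho^red}, which is equivalent).

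One small fix: the other $L$ cannot be filled ``arbitrarily'', since that would destroy the limits required by Definition~\ref{definition:MacroState}. Either restrict to $L=n2^n$ as the paper does, or fill those $L$ with states whose local averages also converge, e.g.\ the maximally mixed state.
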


\begin{proof}
We divide the total lattice $\Lambda_L$ into two types of disjoint subsets 
\begin{align}
    \Lambda_L&=\Bigl(\bigcup_{i=0}^{2^n-1}\mathcal{I}_{n,i}\Bigr)\cup \Bigl(\bigcup_{i=0}^{2^n-1}\mathcal{J}_{n,i}\Bigr),\\
    \mathcal{I}_{n,i}&:=\{ni+1,...,ni+\kappa\},\\
    \mathcal{J}_{n,i}&:=\{ni+\kappa+1,...,ni+n\}\quad \text{ for }i=0,...,2^n-1.
\end{align}
Because they satisfy $|\mathcal{I}_{n,i}|=\kappa$ and $|\mathcal{J}_{n,i}|=n-\kappa$, $\mathcal{I}_{n,i}$'s occupy a negligibly small fraction of the system and $\mathcal{J}_{n,i}$'s occupy the majority of the system,
\begin{align}
    \Bigl|\bigcup_{i=0}^{2^n-1}\mathcal{I}_{n,i}\Bigr|\Bigm/N=\kappa/n \to 0,
    \qquad \Bigl|\bigcup_{i=0}^{2^n-1}\mathcal{J}_{n,i}\Bigr|\Bigm/N=(n-\kappa)/n\to 1,
    \qquad \text{as } L\to\infty.
\end{align}
Using these, primitive macroscopic subsystems can be written as
\begin{align}
    &\mathcal{S}_{L}^{(k)}=\{(k-1)n2^{n-\kappa}+1, (k-1)n2^{n-\kappa}+2,...,kn2^{n-\kappa}\}
    =\mathcal{I}_{L}^{(k)}\cup\mathcal{J}_L^{(k)},
\end{align}
where
\begin{align}
    \mathcal{I}_{L}^{(k)}:=\bigcup_{i=(k-1)2^{n-\kappa}}^{k2^{n-\kappa}-1}\mathcal{I}_{n,i},
    \quad\mathcal{J}_{L}^{(k)}:=\bigcup_{i=(k-1)2^{n-\kappa}}^{k2^{n-\kappa}-1}\mathcal{J}_{n,i}\qquad\text{ for }k=1,2,...,2^\kappa.
\end{align}
This implies that the additive observable on $\mathcal{S}_{L}^{(k)}$ obtained from $\aLocal$ can be well approximated by the sum of translations of $\aLocal$ on $\mathcal{J}_{L}^{(k)}$, 
\begin{align}
    \Bigl\|A_{\mathcal{S}_{L}^{(k)}}(\aLocal)-\sum_{j\text{ s.t. }\mathrm{Supp}(\aLocal_j)\subset\mathcal{J}_L^{(k)}}\aLocal_j\Bigr\|
    \le \|\aLocal\|\sum_{i=(k-1)2^{n-\kappa}}^{k2^{n-\kappa}-1}\bigl(|\mathcal{I}_{n,i}|+2|\mathrm{Supp}(\aLocal)|\bigr)= \|\aLocal\|\bigl(\kappa+2|\mathrm{Supp}(\aLocal)|\bigr)2^{n-\kappa}=N\times O(1/n),
    \label{eq:Example_iMATE_Infinite_OpNormApprox}
\end{align}
where $\aLocal_j=T_j \aLocal T_j^{-1}$ is the translation of $\aLocal$ by $j$ sites.
By taking the expectation value in the state~\eqref{eq:Example_MacroEqState_infty}, we have 
\begin{align}
    \bra{\psi_L}A_{\mathcal{S}_{L}^{(k)}}(\aLocal)\ket{\psi_L}/N
    &=\frac{1}{N}\sum_{j\text{ s.t. }\mathrm{Supp}(\aLocal_j)\subset\mathcal{J}_L^{(k)}} \bra{\psi_L}\aLocal_j\ket{\psi_L}+O(1/n)\\
    &=\frac{1}{N}\sum_{i=(k-1)2^{n-\kappa}}^{k2^{n-\kappa}-1}\sum_{j\text{ s.t. }\mathrm{Supp}(\aLocal_j)\subset\mathcal{J}_{n,i}} \bra{\psi_L}\aLocal_j\ket{\psi_L}+O(1/n)
    \\
    &=\frac{1}{N}\sum_{i=(k-1)2^{n-\kappa}}^{k2^{n-\kappa}-1}\sum_{j\text{ s.t. }\mathrm{Supp}(\aLocal_j)\subset\mathcal{J}_{n,i}} {}_{ni+1}\bra{b_n(i)}\aLocal_j\ket{b_n(i)}_{ni+1}+O(1/n),
\end{align}
where the definition~\eqref{eq:Example_MacroEqState_infty} is used in the last equality. 
Now we introduce a symbol $\tilde{i}$ as an integer satisfying 
\begin{align}
    \tilde{i}\in\{0,1,2,...,2^{n-\kappa}-1\}, \quad \tilde{i}\equiv i\ (\mathrm{mod} \ 2^{n-\kappa}),
    \quad \text{for }i\in\{0,1,...,2^n\}.
    \label{eq:Example_iMATE_Infinite_tilde_i}
\end{align}
Using this, the basis~\eqref{eq:basis_binary} can be written as
\begin{align}
    \ket{b_n(i)}_{ni+1}=\ket{b_{\kappa}(k)}_{\mathcal{I}_{n,i}}\otimes\ket{b_{n-\kappa}(\tilde{i})}_{\mathcal{J}_{n,i}},
\end{align}
where $\ket{b_{\kappa}(k)}_{\mathcal{I}_{n,i}}$ is defined on $\mathcal{I}_{n,i}$ and $\ket{b_{n-\kappa}(\tilde{i})}_{\mathcal{J}_{n,i}}$ on $\mathcal{J}_{n,i}$.
Inserting this into the above expression, we have 
\begin{align}
    \bra{\psi_L}A_{\mathcal{S}_{L}^{(k)}}(\aLocal)\ket{\psi_L}/N
    &=\frac{1}{N}\sum_{i=(k-1)2^{n-\kappa}}^{k2^{n-\kappa}-1}\sum_{j\text{ s.t. }\mathrm{Supp}(\aLocal_j)\subset\mathcal{J}_{n,i}} {}_{\mathcal{J}_{n,i}}\bra{b_{n-\kappa}(\tilde{i})}\aLocal_j\ket{b_{n-\kappa}(\tilde{i})}_{\mathcal{J}_{n,i}}+O(1/n),
\end{align}
which follows from $\mathrm{Supp}(\aLocal_j)\subset \mathcal{J}_{n,i}$.
Since $\mathcal{J}_{n,i}$ is the translation of $\mathcal{J}_{n,0}$ by $n(i+1)$ sites, we can rewrite as
\begin{align}
    \sum_{j\text{ s.t. }\mathrm{Supp}(\aLocal_j)\subset\mathcal{J}_{n,i}} {}_{\mathcal{J}_{n,i}}\bra{b_{n-\kappa}(\tilde{i})}\aLocal_j\ket{b_{n-\kappa}(\tilde{i})}_{\mathcal{J}_{n,i}}
    =\sum_{j\text{ s.t. }\mathrm{Supp}(\aLocal_j)\subset\mathcal{J}_{n,0}} {}_{\mathcal{J}_{n,0}}\bra{b_{n-\kappa}(\tilde{i})}\aLocal_j\ket{b_{n-\kappa}(\tilde{i})}_{\mathcal{J}_{n,0}}.
\end{align}
Inserting this, we have 
\begin{align}
    \bra{\psi_L}A_{\mathcal{S}_{L}^{(k)}}(\aLocal)\ket{\psi_L}/N
    &=\frac{1}{N}\sum_{i=(k-1)2^{n-\kappa}}^{k2^{n-\kappa}-1}\sum_{j\text{ s.t. }\mathrm{Supp}(\aLocal_j)\subset\mathcal{J}_{n,0}} {}_{\mathcal{J}_{n,0}}\bra{b_{n-\kappa}(\tilde{i})}\aLocal_j\ket{b_{n-\kappa}(\tilde{i})}_{\mathcal{J}_{n,0}}+O(1/n)\\
    &=\frac{1}{N} \sum_{i^\prime=0}^{2^{n-\kappa}-1} \sum_{j\text{ s.t. }\mathrm{Supp}(\aLocal_j)\subset\mathcal{J}_{n,0}} {}_{\mathcal{J}_{n,0}}\bra{b_{n-\kappa}(i^\prime)}\aLocal_j\ket{b_{n-\kappa}(i^\prime)}_{\mathcal{J}_{n,0}}+O(1/n).
    \label{eq:Example_iMATE_Orthonormal}
\end{align}
Furthermore, since $\{\ket{b_{n-\kappa}}(i^\prime)\}_{i^\prime=0}^{2^{n-\kappa}-1}$ forms an orthonormal basis on $\mathcal{J}_{n,0}$, the last expression can be rewritten using the trace on $\mathcal{J}_{n,0}$, $\mathrm{Tr}_{\mathcal{J}_{n,0}}[\bullet]$, as 
\begin{align}
    \bra{\psi_L}A_{\mathcal{S}_{L}^{(k)}}(\aLocal)\ket{\psi_L}/N
    =\frac{1}{N} \sum_{j\text{ s.t. }\mathrm{Supp}(\aLocal_j)\subset\mathcal{J}_{n,0}} \mathrm{Tr}_{\mathcal{J}_{n,0}}[\aLocal_j]+O(1/n).
\end{align}
Since the trace does not depend on the site $j$, we have
\begin{align}
    \bra{\psi_L}A_{\mathcal{S}_{L}^{(k)}}(\aLocal)\ket{\psi_L}/N
    &=\frac{\mathrm{Tr}_{\mathcal{J}_{n,0}}[\aLocal]}{N} |\{j\in\Lambda_L|\mathrm{Supp}(\aLocal_j)\subset\mathcal{J}_{n,0}\}| +O(1/n)\\
    &=\frac{1}{K}\frac{\mathrm{Tr}_{\mathcal{J}_{n,0}}[\aLocal]}{2^{n-\kappa}} +O(1/n).
    \label{eq:Example_iMATE_ExpValue_psi}
\end{align}
On the other hand, recalling that $\rho_{L}^{\mathrm{can}}(\beta=0|H_L)=I_{\Lambda_L}/2^{N}$, we have
\begin{align}
    \mathrm{Tr}[\rho_{L}^{\mathrm{can}}(\beta=0|H_L)A_{\mathcal{S}_{L}^{(k)}}(\aLocal)]/N
    =\frac{|\mathcal{S}_{L}^{(k)}|}{N}\frac{\mathrm{Tr}[\aLocal]}{2^N}=\frac{1}{K}\frac{\mathrm{Tr}[\aLocal]}{2^N}.
    \label{eq:Example_iMATE_ExpValue_can}
\end{align}
Combining Eqs.~\eqref{eq:Example_iMATE_ExpValue_psi} and \eqref{eq:Example_iMATE_ExpValue_can}, we have 
\begin{align}
    \bra{\psi_L}A_{\mathcal{S}_{L}^{(k)}}(\aLocal)\ket{\psi_L}/N
    =\mathrm{Tr}[\rho_{L}^{\mathrm{can}}(\beta=0|H_L)A_{\mathcal{S}_{L}^{(k)}}(\aLocal)]/N+O(1/n).
\end{align}
Because of the additivity property~\eqref{eq:Additive_S_approximation}, the same equality holds for all macroscopic subsystems $\mathcal{S}_L$.
This implies that $(\rho_L)_{L\in\mathbb{N}}$ with $\rho_L=\ket{\psi_L}\bra{\psi_L}$ represents iMATE at $\beta=0$.
\end{proof}

\subsection{\label{sec:Proof_Example_iMATE_FiniteTemp}Finite temperature iMATE without random variables 
(details of Example~\ref{example:NonMITE_MacroEqState_FiniteTemp})}

In this subsection, we explicitly construct a representation of iMATE at finite temperature. Importantly, it contains no random variables in contrast to METTS.

\begin{proposition}[\label{proposition:NonMITE_MacroEqState_FiniteTemp}Finite temperature iMATE without random variables]
For simplicity, we consider a one-dimensional spin-$1/2$ system $(d=1,D=2)$; the extension to larger $d$ and $D$ is straightforward.
We restrict the system size to the form
\begin{align}
    L = 2^n n \qquad (n\in\mathbb{N}),
\end{align}
and assume that the number of subdivisions used to define the primitive macroscopic subsystems $\mathcal{S}^{(k)}$ (see Sec.~\ref{sec:Setup_Lattice}) is
\begin{align}
    K = 2^\kappa,
\end{align}
where $\kappa$ is fixed (independent of $L$) and can be chosen arbitrarily large.

We can construct pure quantum states $\{\ket{\varphi(i)}\}_{i=0}^{2^{n}-1}$ defined on $(n-\kappa)$ consecutive qubits such that the product state
\begin{align}
    \ket{\psi_L;\beta}
    = 
    \bigotimes_{i=0}^{2^n-1}\bigl(\ket{0}^{\otimes \kappa}\otimes\ket{\varphi(i)}\bigr),
    \label{eq:Example_MacroEqState_finiteT}
\end{align}
represents iMATE described by $H_L$ at inverse temperature $\beta$,
\begin{align}
    \bigl( \ket{\psi_L;\beta}\bra{\psi_L;\beta} \bigr)_L \maceq \left( \rho_{L}^{\mathrm{can}}(\beta|H_L) \right)_L.
\end{align}
\end{proposition}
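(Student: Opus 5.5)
The plan is to follow the informal construction of Example~\ref{example:NonMITE_MacroEqState_FiniteTemp}, adapted so that each block carries a $\kappa$-site ``label'' region $\mathcal{I}_{n,i}$ in the state $\ket{0}^{\otimes\kappa}$ and an $(n-\kappa)$-site region $\mathcal{J}_{n,i}$ in the state $\ket{\varphi(i)}$. This is exactly the block decomposition used in the proof of Proposition~\ref{proposition:NonMITE_MacroEqState}. Write $m:=n-\kappa$ and let $H_m$ be the sum of the terms $h_j$ supported inside $\{1,\dots,m\}$ (open boundary). Let $\{\ket{E_\nu}\}_{\nu=0}^{2^m-1}$ be its eigenbasis, and define the Fourier states
\begin{align}
\ket{q}:=2^{-m/2}\sum_{\nu}e^{2\pi i q\nu/2^m}\ket{E_\nu}
\end{align}
and $\ket{\varphi_q}:=e^{-\beta H_m/2}\ket{q}/\sqrt{Z_m}$. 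Since $|\braket{E_\nu|q}|^2=2^{-m}$, every $\ket{\varphi_q}$ is automatically normalized once $Z_m=\mathrm{Tr}\,e^{-\beta H_m}$ is absorbed into the overall normalization. We then set $\ket{\varphi(i)}:=\ket{\varphi_{\tilde i}}$, with $\tilde i$ as in Eq.~\eqref{eq:Example_iMATE_Infinite_tilde_i}. The key identity, which plays the role of Eq.~\eqref{eq:Example_iMATE_Orthonormal}, is
\begin{align}
\frac{1}{2^m}\sum_{q=0}^{2^m-1}\ket{\varphi_q}\bra{\varphi_q}=\frac{e^{-\beta H_m/2}\bigl(\sum_q\ket{q}\bra{q}\bigr)e^{-\beta H_m/2}}{2^m Z_m}\cdot 2^m\cdot 2^{-m}=\frac{e^{-\beta H_m}}{Z_m},
\end{align}
because $\{\ket{q}\}$ is an orthonormal basis. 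Each primitive subsystem $\mathcal{S}^{(k)}_L$ contains exactly the blocks $i=(k-1)2^{m},\dots,k2^{m}-1$, so $\tilde i$ runs over all $q$ exactly once within it.

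The argument then proceeds as in the proof of Proposition~\ref{proposition:NonMITE_MacroEqState}. First, by Eq.~\eqref{eq:Example_iMATE_Infinite_OpNormApprox}, we replace $A_{\mathcal{S}^{(k)}_L}(\aLocal)$ by the sum of translates of $\aLocal$ supported inside the $\mathcal{J}_{n,i}$, at cost $N\,O(1/n)$. Second, we use the product structure and translation covariance to reduce each block's contribution to the $\mathcal{J}_{n,0}$ copy. Third, the average over $q$ turns the expectation into $\frac{1}{K}\frac{1}{m}\sum_{j}\mathrm{Tr}[\rho^{\rm can}_m\aLocal_j]+O(1/n)$, where $\rho^{\rm can}_m=e^{-\beta H_m}/Z_m$ is the open-boundary Gibbs state on $m$ sites. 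This shows that $(\ket{\psi_L;\beta}\bra{\psi_L;\beta})_L$ is macroscopically equivalent to the product state $\bigotimes_i(\ket0\bra0^{\otimes\kappa}\otimes\rho^{\rm can}_m)$. Since the $\ket{0}$ regions occupy a fraction $\kappa/n\to0$, it is in turn equivalent to $(\rho^{\rm can}_m)^{\otimes 2^n}$ up to padding. Additivity, Eq.~\eqref{eq:Additive_S_approximation}, then extends the result from primitive to general macroscopic subsystems.

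The main obstacle is the last step: showing that the spatially averaged expectation $\frac1m\sum_j\mathrm{Tr}[\rho^{\rm can}_m\aLocal_j]$ in the open-boundary Gibbs state on $m$ sites converges, as $m\to\infty$, to $\lim_L\mathrm{Tr}[\rho^{\rm can}_L(\beta|H_L)A_{\Lambda_L}(\aLocal)]/N$. The approach I would try first is the free-energy argument quoted for Assumption~\ref{assumption:GibbsState}. The generalized free energy $\frac1m\log\mathrm{Tr}\,e^{-\beta(H_m-\lambda A_m(\aLocal))}$ with open boundaries differs from its periodic counterpart on $Lm$ sites by $O(1/m)$ uniformly in $\lambda$, because the boundary terms have norm $O(1)$. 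Both are convex in $\lambda$ and converge to the same limit $f(\lambda)$. If $f$ is differentiable at $\lambda=0$ (no first-order transition, which is the content of Assumption~\ref{assumption:GibbsState}), convexity implies that the derivatives converge as well. The left-hand side's derivative is exactly our averaged expectation, up to $O(1/m)$ boundary terms, and the right-hand side's is the Gibbs value. The translation $\aLocal\mapsto\aLocal_j$ is harmless since we average over $j$. Combining the three steps gives the claimed macroscopic equivalence.
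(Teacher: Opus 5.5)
Your construction and reduction are the paper's proof. You use the same $\mathcal{I}_{n,i}/\mathcal{J}_{n,i}$ split, the open-boundary Hamiltonian on each $\mathcal{J}_{n,i}$, and Fourier states built from its eigenbasis. The remaining steps also match: the block-averaging identity, the fact that each primitive subsystem consists of $2^{n-\kappa}$ whole blocks so $\tilde i$ runs over every $q$ exactly once, the $N\,O(1/n)$ replacement of $A_{\mathcal{S}^{(k)}_L}(\aLocal)$, and the final appeal to Eq.~\eqref{eq:Additive_S_approximation}.

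One slip needs fixing. With $\ket{q}$ normalized and $\ket{\varphi_q}:=e^{-\beta H_m/2}\ket{q}/\sqrt{Z_m}$, you get $\|\ket{\varphi_q}\|^2=\bra{q}e^{-\beta H_m}\ket{q}/Z_m=2^{-m}$. Your averaging identity then actually gives $e^{-\beta H_m}/(2^m Z_m)$, because the inserted factor $2^m\cdot 2^{-m}$ equals $1$. Take instead $\ket{\varphi_q}:=\sqrt{2^m/Z_m}\,e^{-\beta H_m/2}\ket{q}=\sqrt{2^m}\,(\rho^{\mathrm{can}}_m)^{1/2}\ket{q}$, as the paper does. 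The flat diagonal $\bra{q}\rho^{\mathrm{can}}_m\ket{q}=2^{-m}$ then gives unit norm, and the identity holds exactly.

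The real difference is the last step. The paper closes it by ``rewriting $n$ by $L$''. That is, it identifies the $n\to\infty$ limit of an open-boundary Gibbs density on $n-\kappa$ sites with the periodic value $\lim_{L}\mathrm{Tr}[\rho^{\mathrm{can}}_L(\beta|H_L)A_{\Lambda_L}(\aLocal)]/N$, relying on its general remark that boundary conditions do not matter. Your argument actually proves this: uniform $O(1/m)$ closeness of the generalized free energies for bounded $\lambda$, convexity, and Griffiths' lemma. So your version is more complete exactly where the paper is thin.

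There are two caveats.
\begin{enumerate}
\item Differentiability of the limiting $f$ at $\lambda=0$ is not what Assumption~\ref{assumption:GibbsState} says. That assumption only gives convergence of the periodic derivatives, which is compatible with a kink. For example, below $T_c$ the 2D Ising magnetization vanishes for every periodic $L$ by symmetry, while plus boundary conditions give $m^*$. In the present $d=1$ setting nothing is lost, since finite-range chains have a real-analytic free energy at every real $\beta$ (Araki). For the extension to $d\ge 2$, however, differentiability must be added as an explicit hypothesis; this is the paper's informal exclusion of first-order transitions.
\item With your parametrization the $\lambda$-derivative is $\beta$ times the expectation, so the argument says nothing at $\beta=0$. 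Either treat $\beta=0$ separately, which is trivial since both Gibbs states are maximally mixed, or use $\frac{1}{m}\log\mathrm{Tr}\,e^{-\beta H_m+\lambda A_m}$ instead.
\end{enumerate}
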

Note that since $\ket{\psi_L;\beta}$ is a product state, $( \ket{\psi_L;\beta}\bra{\psi_L;\beta} )_L$ does not represent MITE.

\begin{proof}
We work in one dimension and restrict the system size to $L=2^n n$ with $n\in\mathbb{N}$.
We fix $\kappa$ independent of $n$ and set $K=2^\kappa$.

We partition the chain $\Lambda_L=\{1,2,\cdots,L\}$ into $\mathcal{I}_{n,i}$'s and $\mathcal{J}_{n,i}$'s as in Sec.~\ref{sec:Proof_Example_iMATE_InfiniteTemp}.
The Hamiltonian on the subset $\mathcal{J}_{n,i}$ is given by
\begin{align}
    H_{n,i}:=\sum_{j(\mathrm{Supp}(h_j)\subset\mathcal{J}_{n,i})}h_j,
\end{align}
where $h_{j}=T_{j}hT_{j}^{-1}$. 
We define the canonical Gibbs state on $\mathcal{J}_{n,i}$ by
\begin{align}
    \rho_{n,i}^{\mathrm{can}}=\frac{e^{-\beta H_{n,i} }}{\mathrm{Tr}[e^{-\beta H_{n,i} }]}.
\end{align}
Let $\{\ket{E_\nu}_{\mathcal{J}_{n,i}}\}_{\nu=1}^{2^{n-\kappa}}$ be eigenbasis of $H_{n,i} $ with eigenvalue $E_\nu$.
The canonical Gibbs state is expanded as
\begin{align}
    \rho_{n,i}^{\mathrm{can}}
    &= \sum_{\nu=1}^{2^{n-\kappa}} \lambda_\nu \ket{E_\nu}_{\mathcal{J}_{n,i}}\bra{E_\nu}_{\mathcal{J}_{n,i}},\\
    \lambda_{\nu}&=\frac{e^{-\beta E_\nu}}{\mathrm{Tr}[e^{-\beta H_{n,i} }]}.
\end{align}
Let $\omega = e^{2\pi i/2^{n-\kappa}}$ and perform the Fourier transform,
\begin{align}
    \ket{q}_{\mathcal{J}_{n,i}}= \frac{1}{\sqrt{2^{n-\kappa}}} \sum_{\nu=1}^{2^{n-\kappa}} \omega^{q\nu} \ket{E_\nu}_{\mathcal{J}_{n,i}}
    \qquad \text{for }q\in\{0,1,...,2^{n-\kappa}-1\}.
\end{align}
Then, for every $q$,
\begin{align}
    \bra{q}_{\mathcal{J}_{n,i}} \rho_{n,i}^{\mathrm{can}} \ket{q}_{\mathcal{J}_{n,i}}
    = \sum_{\nu=1}^{2^{n-\kappa}} \lambda_\nu \left| \bra{E_\nu}_{\mathcal{J}_{n,i}}\ket{q}_{\mathcal{J}_{n,i}} \right|^2
    = \frac{1}{2^{n-\kappa}} \sum_{\nu=1}^{2^{n-\kappa}} \lambda_\nu
    = \frac{1}{2^{n-\kappa}}.
    \label{eq:flat-diagonal-element}
\end{align}

For $q\in\{0,1,...,2^{n-\kappa}-1\}$, define
\begin{align}
    \ket{\varphi(q)}_{\mathcal{J}_{n,i}} = \sqrt{2^{n-\kappa}} (\rho_{n,i}^{\mathrm{can}})^{1/2} \ket{q}_{\mathcal{J}_{n,i}}.
\end{align}
From Eq.~\eqref{eq:flat-diagonal-element}, each $\ket{\varphi(q)}$ is normalized:
\begin{align}
    \|\ket{\varphi(q)}_{\mathcal{J}_{n,i}}\|^2
    = 2^{n-\kappa} \bra{q}_{\mathcal{J}_{n,i}} \rho_{n,i}^{\mathrm{can}} \ket{q}_{\mathcal{J}_{n,i}}
    = 1.
\end{align}
Importantly, it holds that
\begin{align}
    \frac{1}{2^{n-\kappa}} \sum_{q=1}^{2^{n-\kappa}} \ket{\varphi(q)}_{\mathcal{J}_{n,i}}\bra{\varphi(q)}_{\mathcal{J}_{n,i}}
    &= (\rho_{n,i}^{\mathrm{can}})^{1/2} \left( \sum_{q=1}^{2^{n-\kappa}} \ket{q}_{\mathcal{J}_{n,i}}\bra{q}_{\mathcal{J}_{n,i}} \right) (\rho_{n,i}^{\mathrm{can}})^{1/2}
    = \rho_{n,i}^{\mathrm{can}}.
    \label{eq:block-averaging-identity}
\end{align}
Using this, we define
\begin{align}
    \ket{\psi_L;\beta}
    = \bigotimes_{i=0}^{2^{n}-1} \ket{0}^{\otimes \mathcal{I}_{n,i}}\otimes\ket{\varphi(\tilde{i})}_{\mathcal{J}_{n,i}},
\end{align}
where $\tilde{i}$ is defined as Eq.~\eqref{eq:Example_iMATE_Infinite_tilde_i}.

Let $\aLocal$ be an arbitrary $\ell$-local observable for an arbitrary $\ell\in\mathbb{N}$. 
By taking $n$ sufficiently large, $|\mathcal{J}_{n,i}|=n-\kappa$ can be taken sufficiently larger than $\ell=O(L^0)$.
By the same calculation as Eqs.~\eqref{eq:Example_iMATE_Infinite_OpNormApprox}--\eqref{eq:Example_iMATE_Orthonormal}, we have 
\begin{align}
    \bra{\psi_L;\beta}A_{\mathcal{S}_{L}^{(k)}}(\aLocal)\ket{\psi_L;\beta}/N
    &=\frac{1}{N}\sum_{j\text{ s.t. }\mathrm{Supp}(\aLocal_j)\subset\mathcal{J}_L^{(k)}} \bra{\psi_L;\beta}\aLocal_j\ket{\psi_L;\beta}+O(1/n)\\
    &=\frac{1}{N} \sum_{i^\prime=0}^{2^{n-\kappa}-1} \sum_{j\text{ s.t. }\mathrm{Supp}(\aLocal_j)\subset\mathcal{J}_{n,0}} {}_{\mathcal{J}_{n,0}}\bra{\varphi(i^\prime)}\aLocal_j\ket{\varphi(i^\prime)}_{\mathcal{J}_{n,0}}+O(1/n).
\end{align}
Since $\{\ket{q}_{\mathcal{J}_{n,0}}\}_{q=0}^{2^{n-\kappa}-1}$ forms an orthonormal basis on $\mathcal{J}_{n,0}$, it holds that 
\begin{align}
    \bra{\psi_L;\beta}A_{\mathcal{S}_{L}^{(k)}}(\aLocal)\ket{\psi_L;\beta}/N
    =\frac{2^{n-\kappa}}{n 2^{n}} \sum_{j\text{ s.t. }\mathrm{Supp}(\aLocal_j)\subset\mathcal{J}_{n,0}} \mathrm{Tr}_{\mathcal{J}_{n,0}}[\rho_{n,0}^{\mathrm{can}}\aLocal_j]+O(1/n).
\end{align}
Taking the thermodynamic limit, we have
\begin{align}
    \lim_{L\to\infty}\bra{\psi_L;\beta}A_{\mathcal{S}_{L}^{(k)}}(\aLocal)\ket{\psi_L;\beta}/N
    =\frac{1}{K} \lim_{n\to\infty}\frac{1}{n}\sum_{j\text{ s.t. }\mathrm{Supp}(\aLocal_j)\subset\mathcal{J}_{n,0}} \mathrm{Tr}_{\mathcal{J}_{n,0}}[\rho_{n,0}^{\mathrm{can}}\aLocal_j].
\end{align}
The RHS represents the expectation value of the density of the additive observable obtained from $\aLocal$ in the canonical Gibbs state in the system of size $n$. Therefore, rewriting $n$ by $L$, we have
\begin{align}
    \frac{1}{K} \lim_{n\to\infty}\frac{1}{n}\sum_{j\text{ s.t. }\mathrm{Supp}(\aLocal_j)\subset\mathcal{J}_{n,0}} \mathrm{Tr}_{\mathcal{J}_{n,0}}[\rho_{n,0}^{\mathrm{can}}\aLocal_j]
    &=\frac{1}{K} \lim_{L\to\infty} \mathrm{Tr}_{\Lambda_L}[\rho_L^{\mathrm{can}}(\beta|H_L)\frac{A_{\Lambda_L}(\aLocal)}{N}]\\
    &=\lim_{L\to\infty} \mathrm{Tr}_{\Lambda_L}[\rho_L^{\mathrm{can}}(\beta|H_L)\frac{A_{\mathcal{S}_L^{(k)}}(\aLocal)}{N}].
\end{align}
This means that 
\begin{align}
    \lim_{L\to\infty}\bra{\psi_L;\beta}\frac{A_{\mathcal{S}_{L}^{(k)}}(\aLocal)}{N}\ket{\psi_L;\beta}
    =\lim_{L\to\infty} \mathrm{Tr}_{\Lambda_L}[\rho_L^{\mathrm{can}}(\beta|H_L)\frac{A_{\mathcal{S}_L^{(k)}}(\aLocal)}{N}]
\end{align}
holds for arbitrary $k\in\{1,2,...,K^d\}$ and for an arbitrary $\ell$-local observable $\aLocal$ with $\ell=O(L^0)$.
From Eq.~\eqref{eq:Additive_S_approximation}, this implies
\begin{align}
    (\ket{\psi_L;\beta}\bra{\psi_L;\beta})_{L\in\mathbb{N}}\maceq \bigl(\rho_L^{\mathrm{can}}(\beta|H_L)\bigr)_{L\in\mathbb{N}},
\end{align}
i.e., $(\ket{\psi_L;\beta}\bra{\psi_L;\beta})_{L\in\mathbb{N}}$ represents iMATE described by $H_L$ at inverse temperature $\beta$.

Note that $\ket{\psi_L;\beta}$ is the product of states on $\mathcal{I}_{n,i}$'s and those on $\mathcal{J}_{n,i}$'s. This means that the adjacent $\mathcal{I}_{n,i}$'s and $\mathcal{J}_{n,i}$'s have no correlation. Furthermore, the states on $\mathcal{I}_{n,i}$ can be taken arbitrarily; hence, for simplicity, we took them as the tensor product of $\ket{0}$'s. These facts indicate that $(\ket{\psi_L;\beta}\bra{\psi_L;\beta})_{L\in\mathbb{N}}$ does not represent MITE.

\end{proof}

\section{Proof of Propositions~\ref{proposition:JAIVTMXN_1}, \ref{proposition:JAIVTMXN_2}, and Theorem~\ref{theorem:JAIVTMXN_3}} \label{sec:JAIVTMXN}
Our proof relies crucially on the Lieb-Robinson bound, which describes the emergent light cone that limits information propagation in locally interacting many-body systems. Before explaining this bound, we introduce some notation.
Here, we consider a general setting where the Hamiltonian $H(t)$ can be written as
\begin{align}
    H(t) = \sum_{X \subset \Lambda_L} H_X(t),
\end{align}
where $H_X(t)$ is a Hermitian operator supported on $X$.
Note that $L$ dependence is omitted for the simplicity of notation. To represent regular locally interacting systems, we impose two conditions on this Hamiltonian. One is that this Hamiltonian $H(t)$ is composed of $r_{H}$-local observables for some $r_{H}\in\mathbb{N}$ independent of $L$ and $t$, that is, 
\begin{align}
    H_{X}(t)=0\qquad \text{for }l_X>r_{H},
    \label{eq:r_H_local}
\end{align} 
where $l_X = \max_{\bm{r},\bm{r}^\prime \in X} |\bm{r}-\bm{r}^\prime|$ is the diameter of $X$.
The other is that the operator norm of $H_{X}(t)$ is bounded from above by some constant $C_H$ independent of $t,X,L$,
\begin{align}
    \|H_{X}(t)\|\le C_{H}\qquad \text{for all }X\subset\Lambda_L\text{ and all }t\in\mathbb{R}.
\end{align}
Importantly, for the time-dependent Hamiltonian~\eqref{eq:general_H}, which generates a macroscopic operation, both conditions are satisfied.
Under these conditions, the following norm~$\|H(t)\|_{\mu}$
for $\mu \geq 0$ becomes of $O(L^0)$, 
\begin{align}
    \| H(t) \|_{\mu} := \max_{\bm{r} \in \Lambda_L} \sum_{X \ni \bm{r}} \| H_X(t) \| e^{\mu l_X}=O(L^0).
    \label{eq:LRbound_norm}
\end{align}

Now we present the Lieb-Robinson bound for the time-dependent Hamiltonian.
\begin{lemma}[Theorem~2 of Ref.~\cite{Gong2020}] \label{lemma:Lieb-Robinson-bound}
Given a (time-dependent) Hamiltonian $H(t)$
with well-defined $\|H(t)\|_{\mu} < \infty$ for all $\mu < \mu^*$ and all $t \in \mathbb{R}$,
then for two arbitrary local operators $O_X$ and $O_Y$ with supports $X$ and $Y$, respectively,
and any $\kappa, \eta > 0$ with $\kappa + \eta < \mu^*$,
we have
\begin{align}
    &\| [ {U_L(t, t_0)}^\dagger O_X U_L(t, t_0), O_Y ] \|
    \leq 2 \min \{ |X|, |Y| \} \|O_X\| \|O_Y\|
    \exp\Bigl\{- \kappa [\mathrm{dist}(X,Y) - \frac{2 C_d}{\kappa} \left( \frac{d}{e \eta} \right)^d \int_{t_0}^{t} ds \| H(s) \|_{\kappa + \eta} ]\Bigr\},
\end{align}
where $U_L(t, t_0)$ is the unitary time-evolution operator generated by $H(t)$ during $[t_0, t]$,
$\mathrm{dist}(X,Y)$ is the distance between $X, Y \subset \Lambda_L$,
and $C_d$ is a constant (determined solely by the lattice geometry) that validates $|X| \leq C_d (l_X + 1)^d$ for all $X \subset \Lambda_L$.
\end{lemma}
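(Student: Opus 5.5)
The plan is to prove this Lieb--Robinson bound by the standard Hastings--Koma style argument. I would set up a differential inequality for the commutator norm, iterate it into a Dyson-type series, and control the resulting sum over chains of interaction terms with exponential weights. Time dependence enters only through time-ordered integrals, so the series resums to an exponential of $\int_{t_0}^t ds\,\|H(s)\|_{\kappa+\eta}$.

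\emph{Step 1 (differential inequality).} Fix $O_Y$ and, for any operator $A$ supported on $Z$, set
\begin{align}
C_Y(Z,t):=\sup_{A\,:\,\mathrm{Supp}(A)\subset Z,\ \|A\|=1}\bigl\|[U_L(t,t_0)^\dagger A\,U_L(t,t_0),O_Y]\bigr\|.
\end{align}
Write $U_L(t,t_0)=U_L(t,t_0)\,\cdot$ and split $H(s)=H_X^{\mathrm{in}}(s)+H_X^{\mathrm{out}}(s)$, where $H_X^{\mathrm{in}}(s):=\sum_{Z\cap X\neq\emptyset}H_Z(s)$ collects the terms touching $X$. Passing to the interaction picture with respect to $H_X^{\mathrm{out}}$, which leaves $O_X$ invariant, and using that unitary conjugation preserves norms, I expect to obtain
\begin{align}
C_Y(X,t)\le C_Y(X,t_0)+2\sum_{Z\cap X\neq\emptyset}\int_{t_0}^{t}ds\,\|H_Z(s)\|\,C_Y(Z,s),
\end{align}
with $C_Y(X,t_0)\le 2\|O_Y\|$ if $X\cap Y\neq\emptyset$ and $0$ otherwise.

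\emph{Step 2 (iteration).} Iterating Step 1 gives a series over chains $X\to Z_1\to\cdots\to Z_n$ with consecutive sets overlapping and $Z_n\cap Y\neq\emptyset$. Each chain carries the time-ordered integral $\int_{t_0<s_n<\dots<s_1<t}\prod_j 2\|H_{Z_j}(s_j)\|$. Any nonzero chain must bridge the distance, so $\sum_j l_{Z_j}\ge \mathrm{dist}(X,Y)$. I would therefore insert $e^{\kappa\sum_j l_{Z_j}}e^{-\kappa\,\mathrm{dist}(X,Y)}\ge 1$ into every chain with $n\ge1$; the $n=0$ term vanishes when $X\cap Y=\emptyset$.

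\emph{Step 3 (chain sums, main obstacle).} The key estimate is to bound $\sum_{Z\cap W\neq\emptyset}\|H_Z(s)\|e^{\kappa l_Z}$ by something proportional to $\|H(s)\|_{\kappa+\eta}$. The constant must not depend on the growing set $W$ along the chain. My first attempt: sum over sites $\bm r\in W$, so that $\sum_{Z\ni\bm r}$ gives $\|H(s)\|_{\kappa+\eta}$ after paying $e^{-\eta l_Z}$. The factor $|W|$ then has to be absorbed by the next step. I would use $|Z|\le C_d(l_Z+1)^d$ together with $\sup_{l\ge0}(l+1)^d e^{-\eta l}\le e^{\eta}(d/(e\eta))^d$, i.e.\ each link pays $C_d(d/(e\eta))^d$ up to harmless factors. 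Starting the chain from the smaller of $X$ and $Y$ (by symmetry of the commutator) produces the prefactor $\min\{|X|,|Y|\}$. Once this per-link bound holds, the time-ordered integrals resum to
\begin{align}
\sum_n \frac{1}{n!}\Bigl(2C_d\bigl(\tfrac{d}{e\eta}\bigr)^d\int_{t_0}^t ds\,\|H(s)\|_{\kappa+\eta}\Bigr)^n,
\end{align}
which gives the stated exponent. The bookkeeping of the $|Z|$ factors at each link is where the precise constant $2C_d(d/e\eta)^d$ is decided. Since this is exactly Theorem~2 of Ref.~\cite{Gong2020}, I would follow their counting if my crude absorption loses a factor. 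Finally, $\kappa+\eta<\mu^*$ guarantees that all the weighted norms used here are finite.
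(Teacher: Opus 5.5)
The paper gives no proof of this lemma; it imports it as Theorem~2 of Ref.~\cite{Gong2020}. Your architecture is the standard Hastings--Koma/Nachtergaele--Sims route to bounds of exactly this form, so the strategy is right: a differential inequality, a chain expansion, the distance factor $e^{\kappa(\sum_j l_{Z_j}-\mathrm{dist}(X,Y))}\ge 1$, weighted chain counting, and resummation of the time-ordered integrals. Two points in it need repair.

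In Step~1 the inequality you write is correct, but not for the reason you give. Write $\tau_{t,t_0}(\cdot)=U_L(t,t_0)^\dagger(\cdot)\,U_L(t,t_0)$. Passing to the interaction picture of $H^{\mathrm{out}}_X$ only reproduces $\frac{d}{dt}\tau_{t,t_0}(O_X)=i\,\tau_{t,t_0}([H^{\mathrm{in}}_X(t),O_X])$. By itself this puts $C_Y(X\cup Z,s)$ on the right-hand side, i.e.\ growing sets. What yields $C_Y(Z,s)$ is the Jacobi identity applied to $f(t)=[\tau_{t,t_0}(O_X),O_Y]$:
\[
\dot f=i\sum_{Z\cap X\neq\emptyset}[\tau_{t,t_0}(H_Z(t)),f]-i\sum_{Z\cap X\neq\emptyset}[\tau_{t,t_0}(O_X),[\tau_{t,t_0}(H_Z(t)),O_Y]].
\]
The first term generates a norm-preserving flow, so the relevant interaction picture is with respect to $\tau_{t,t_0}(H^{\mathrm{in}}_X(t))$, not $H^{\mathrm{out}}_X$. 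Similarly, ``symmetry of the commutator'' is not enough for $\min\{|X|,|Y|\}$. You also need that $U_L(t,t_0)O_YU_L(t,t_0)^\dagger$ is the Heisenberg evolution of $O_Y$ under the reversed generator $s\mapsto -H(t+t_0-s)$, which has the same value of $\int\|\cdot\|_{\kappa+\eta}$.

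The genuine gap is the constant: your final claim that the resummation ``gives the stated exponent'' does not follow from your own estimate. Each factor $|Z_j|$ can only be paid for by the spare $e^{-\eta l_{Z_j}}$, so your per-link constant is $C_d\sup_{l\in\mathbb{Z}_{\ge0}}(l+1)^de^{-\eta l}$. This is strictly larger than $C_d(d/(e\eta))^d$ for every $\eta>0$. Indeed, with $u=\lceil d/\eta\rceil$ and $l=u-1$ one has $(l+1)^de^{-\eta l}=u^de^{-\eta(u-1)}>(d/\eta)^de^{-d}=(d/(e\eta))^d$. What you have actually proved is the bound with $2C_d\sup_l(l+1)^de^{-\eta l}\ \bigl(\le 2C_de^{\eta}(d/(e\eta))^d\bigr)$ in place of $2C_d(d/(e\eta))^d$. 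Since this factor multiplies $\int_{t_0}^t ds\,\|H(s)\|_{\kappa+\eta}$ in the exponent, it is not harmless for the inequality as stated, and ``follow their counting'' is not a proof step. With only $|X|\le C_d(l_X+1)^d$ and the weight $e^{\mu l_X}$ available, this counting cannot reach the stated constant; to obtain it verbatim you would have to check the geometric conventions of Ref.~\cite{Gong2020}. For the paper's purposes this does not matter: it uses the lemma only to obtain Eq.~\eqref{eq:LRbound_LightCone}, where $R_1$, $R_2$, $v_{\mathrm{LR}}$ are unspecified, and your weaker constant suffices there.
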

\noindent
By virtue of the Lieb-Robinson bound, as we will do below, we can study the properties of the Heisenberg operator for an additive observable from the properties of additive observables.

We first observe that an additive observable on a macroscopic subsystem that  can be expressed as a composite system of macroscopic subsystems can be approximated by the sum of additive observables on each subsystem. Since we assume that any macroscopic subsystem can be expressed as a composite system of primitive macroscopic subsystems, it suffices to examine the properties of additive observables on primitive macroscopic subsystems (see Eq.~\eqref{eq:Additive_S_approximation}).
Therefore, in the following, we investigate the properties of the Heisenberg operator for an additive observable $A$ on a primitive macroscopic subsystem $\mathcal{S}^{(k)}$.
To be concrete, 
consider the additive observable $A$ on $\mathcal{S}^{(k)}$ obtained from an $\ell_A$-local observable $\aLocal$ for some $\ell_A\in\mathbb{N}$,
\begin{align}
    A
    = \sum_{\supp(\aLocal)+\bm{r} \subset \mathcal{S}^{(k)}} 
    T_{\bm{r}} \aLocal T_{\bm{r}}^\dagger
    = \sum_{\supp(\aLocal_{\bm{r}}) \subset \mathcal{S}^{(k)}} \aLocal_{\bm{r}}.
\end{align}
Here we write $T_{\bm{r}} \aLocal T_{\bm{r}}^\dagger$ as $\aLocal_{\bm{r}}$.
Then the Heisenberg operator $A_H(t)$ for $A$ can be written as
\begin{align}
    A^H(t)
    = {U_L(t, 0)}^\dagger A U_L(t, 0)
    = \sum_{\supp(\aLocal_{\bm{r}}) \subset \mathcal{S}^{(k)}} {U_L(t, 0)}^\dagger \aLocal_{\bm{r}} U_L(t, 0)
    = \sum_{\supp(\aLocal_{\bm{r}}) \subset \mathcal{S}^{(k)}} \aLocal_{\bm{r}}^H(t),
\end{align}
where $\aLocal_{\bm{r}}^H(t) = {U_L(t, 0)}^\dagger \aLocal_{\bm{r}} U_L(t, 0)$ is the Heisenberg operator for $\aLocal_{\bm{r}}$.
As a preliminary step in proving the theorems, we prove that $A^H(t)$ can be approximated by an additive observable on $\mathcal{S}^{(k)}$.

\begin{figure}
    \centering
    \includegraphics[width=0.5\linewidth]{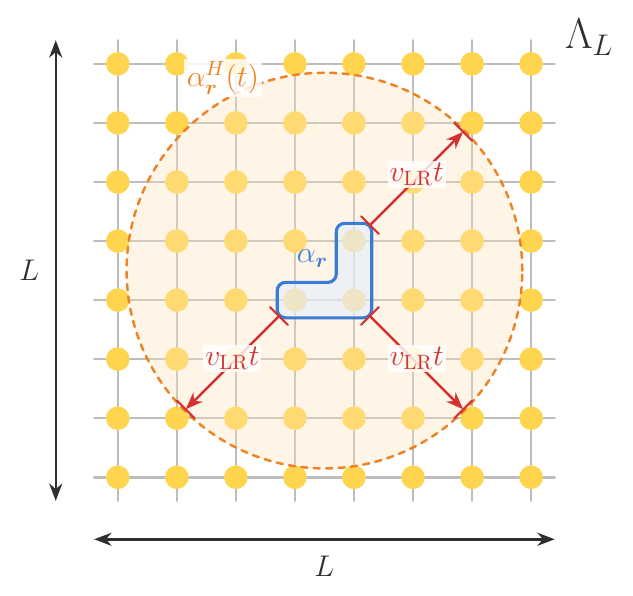}
    \caption{\label{fig:LRvelocity} Schematic illustration of the time evolution $\aLocal_{\bm{r}}^{H}(t)$ of an $\ell_A$-local observable $\aLocal_{\bm{r}}$ in the Heisenberg picture. It can be approximated by a $(2\ell+\ell_A)$-local observable $\aLocal_{\bm{r},\ell}^{\tilde{H}}(t)$ within a precision $\simeq e^{- R_2 (\ell-v_{\mathrm{LR}}t)}$, as in Eq.~\eqref{eq:LRbound_LightCone}. As a result, $\aLocal_{\bm{r}}^{H}(t)$ can be effectively considered as a local observable whose support grows in $t$ as $O(v_{\mathrm{LR}}t)+\ell_A$.
    }
\end{figure}

Since $U_L(t, 0)$ is the time evolution operator for the entire system, $\aLocal_{\bm{r}}^H(t)$ is generally nonlocal. We then introduce a local observable that approximates $\aLocal_{\bm{r}}^H(t)$.
See also Fig.~\ref{fig:LRvelocity} for the rough sketch of the time evolution of $\aLocal_{\bm{r}}^H(t)$.
Let $\ell$ be a positive integer independent of $L$, and let $\Lambda_{\bm{r},\ell}$ be a set of lattice sites within distance $\ell$ from $\supp(\aLocal_{\bm{r}})$,
\begin{align}
    \Lambda_{\bm{r},\ell} = \{ \bm{r}' \in \Lambda_L | \mathrm{dist}(\bm{r}', \supp(\aLocal_{\bm{r}})) \leq \ell \}.
    \label{eq:Lambda_r-ell}
\end{align}
We define $H_{\bm{r},\ell}^0(t)$ as the sum of all local observables that constitute $H(t)$ whose supports overlap with $\Lambda_{\bm{r},\ell}$.
Clearly, $H_{\bm{r},\ell}^0(t)$ is local, and the Heisenberg operator $\aLocal_{\bm{r},\ell}^{\tilde{H}}(t) = {U_{\bm{r},\ell}^0(t,0)}^\dagger \aLocal_{\bm{r}} U_{\bm{r},\ell}^0(t,0)$ for $\aLocal_{\bm{r}}$ obtained by applying the unitary evolution $U_{\bm{r},\ell}^0(t,0)$ generated by $H_{\bm{r},\ell}^0(t)$ is also local.

Now, let us show that $\aLocal_{\bm{r},\ell}^{\tilde{H}}(t)$ approximates $\aLocal_{\bm{r}}^H(t)$.
First, using $H_{\bm{r},\ell}^0(t)$ defined above, we decompose the Hamiltonian $H(t)$ as
\begin{align}
    H(t) = H_{\bm{r},\ell}^0(t) + H_{\bm{r},\ell}^1(t) + H_{\bm{r},\ell}^2(t),
\end{align}
where $H_{\bm{r},\ell}^1(t)$ is an operator acting only on $\Lambda_L \setminus \supp(H_{\bm{r},\ell}^0(t))$, and $H_{\bm{r},\ell}^2(t)$ is an interaction acting on both $\supp(H_{\bm{r},\ell}^0(t))$ and $\Lambda_L \setminus \supp(H_{\bm{r},\ell}^0(t))$.
We then have
\begin{align}
  &\left\| \aLocal_{\bm{r}}^{H}(t) - \aLocal_{\bm{r},\ell}^{\tilde{H}}(t) \right\| \nonumber\\
  &= \left\|
    {U_L(t, 0)}^\dagger \aLocal_{\bm{r}} U_L(t, 0)
    - {U_{\bm{r},\ell}^0(t, 0)}^\dagger \aLocal_{\bm{r}} U_{\bm{r},\ell}^0(t, 0)
  \right\| \nonumber\\
  &= \left\| \int_{0}^{t} ds \ \frac{d}{ds} \left\{
    {U_L(t-s, 0)}^\dagger {U_{\bm{r},\ell}^0(t, t-s)}^\dagger
    \aLocal_{\bm{r}}
    U_{\bm{r},\ell}^0(t, t-s) U_L(t-s, 0)
  \right\} \right\| \nonumber\\
  &= \left\| \int_{0}^{t} ds \ \left\{
    {U_L(t-s, 0)}^\dagger \left[ i H(t-s) - i H_{\bm{r},\ell}^0(t-s) \right] {U_{\bm{r},\ell}^0(t, t-s)}^\dagger
    \aLocal_{\bm{r}}
    U_{\bm{r},\ell}^0(t, t-s) U_L(t-s, 0)
  \right. \right. \nonumber\\
  &\left. \left. \hspace{2.0cm}
    + {U_L(t-s, 0)}^\dagger {U_{\bm{r},\ell}^0(t, t-s)}^\dagger
    \aLocal_{\bm{r}}
    U_{\bm{r},\ell}^0(t, t-s) \left[ i H_{\bm{r},\ell}^0(t-s) - i H(t-s) \right] U_L(t-s, 0)
  \right\} \right\| \nonumber\\
  &= \left\| \int_{0}^{t} ds \ 
    {U_L(t-s, 0)}^\dagger
    \left[
        {U_{\bm{r},\ell}^0(t, t-s)}^\dagger \aLocal_{\bm{r}} U_{\bm{r},\ell}^0(t, t-s),
        H_{\bm{r},\ell}^1(t-s) + H_{\bm{r},\ell}^2(t-s)
    \right]
    U_L(t-s, 0) \right\| \nonumber\\
  &= \left\| \int_{0}^{t} ds \ 
    {U_L(t-s, 0)}^\dagger
    \left[
        {U_{\bm{r},\ell}^0(t, t-s)}^\dagger \aLocal_{\bm{r}} U_{\bm{r},\ell}^0(t, t-s),
        H_{\bm{r},\ell}^2(t-s)
    \right]
    U_L(t-s, 0) \right\| \nonumber\\
  &\leq \int_{0}^{t} ds \ 
    \left\| {U_L(t-s, 0)}^\dagger
    \left[
        {U_{\bm{r},\ell}^0(t, t-s)}^\dagger \aLocal_{\bm{r}} U_{\bm{r},\ell}^0(t, t-s),
        H_{\bm{r},\ell}^2(t-s)
    \right]
    U_L(t-s, 0) \right\| \nonumber\\
  &= \int_{0}^{t} ds \ 
    \left\| \left[
        {U_{\bm{r},\ell}^0(t, t-s)}^\dagger \aLocal_{\bm{r}} U_{\bm{r},\ell}^0(t, t-s),
        H_{\bm{r},\ell}^2(t-s)
    \right] \right\|.
    \label{eq:Bound_TEvol_Comm}
\end{align}
By applying Lemma~\ref{lemma:Lieb-Robinson-bound}, for any $\kappa, \eta > 0$, we have
\begin{align}
  &\left\| \aLocal_{\bm{r}}^{H}(t) - \aLocal_{\bm{r},\ell}^{\tilde{H}}(t) \right\| \nonumber\\
  &\leq \int_{0}^{t} ds \
    2 \min \{ |\supp(\aLocal_{\bm{r}})|, |\supp(H_{\bm{r},\ell}^2(t-s))| \}
    \|\aLocal_{\bm{r}}\| \|H_{\bm{r},\ell}^2(t-s)\| \nonumber\\
  &\hspace{2.0cm}
    \times \exp \left[
      - \kappa \left\{
        \mathrm{dist}(\supp(\aLocal_{\bm{r}}),\supp(H_{\bm{r},\ell}^2(t-s)))
        - \frac{2 C_d}{\kappa} \left( \frac{d}{e \eta} \right)^d \int_{t-s}^{t} du \ \| H_{\bm{r},\ell}^0(u) \|_{\kappa + \eta}
      \right\}
    \right] \nonumber\\
    &\le 2 |\supp(\aLocal_{\bm{r}})| \, \|\aLocal_{\bm{r}}\| \sup_{0\le s\le t}\|H_{\bm{r},\ell}^2(s)\|\exp\Bigl[-\kappa\inf_{0\le s\le t}\mathrm{dist}(\supp(\aLocal_{\bm{r}}),\supp(H_{\bm{r},\ell}^2(s)))\Bigr] \notag\\
    &\qquad\times \int_{0}^{t}d s \ \exp\Bigl[2C_d s\left( \frac{d}{e \eta} \right)^d \sup_{0\le u\le t}\| H_{\bm{r},\ell}^0(u) \|_{\kappa + \eta} \Bigr].
\end{align}
Since $H(t)$ is a sum of $r_H$-local observables, as assumed in Eq.~\eqref{eq:r_H_local}, $H_{\bm{r},\ell}^2(t)$ satisfies $\|H_{\bm{r},\ell}^2(t)\| = O(\ell^{d-1})$  for $\ell\gg r_H$. 
Furthermore, by the definition of $H_{\bm{r},\ell}^0(t)$, $\mathrm{dist}(\supp(\aLocal_{\bm{r}}), \supp(H_{\bm{r},\ell}^2(t))) =\Theta(\ell)$ for $\ell\gg r_H$. 
In addition, from the definition~\eqref{eq:LRbound_norm} of the norm~$\|\bullet\|_{\mu}$, $\| H_{\bm{r},\ell}^0(u) \|_{\kappa+\eta}$ is bounded from above by $\|H(u)\|_{\kappa+\eta}$, which is independent of $\ell$ and is of $O(L^0)$. Using these, we have
\begin{align}
    &\left\| \aLocal_{\bm{r}}^{H}(t) - \aLocal_{\bm{r},\ell}^{\tilde{H}}(t) \right\| \notag\\
    &\le 2 |\supp(\aLocal_{\bm{r}})|\, \|\aLocal_{\bm{r}}\| \sup_{0\le s\le t}\|H_{\bm{r},\ell}^2(s)\|\exp\Bigl[-\kappa\inf_{0\le s\le t}\mathrm{dist}(\supp(\aLocal_{\bm{r}}),\supp(H_{\bm{r},\ell}^2(s)))\Bigr] \notag\\
    &\qquad\times \int_{0}^{t}d s \ \exp\Bigl[2C_d s\left( \frac{d}{e \eta} \right)^d \sup_{0\le u\le t}\| H(u) \|_{\kappa + \eta} \Bigr]\\
    &\leq R_1 \ell^{d-1} e^{- R_2 (\ell-v_{\mathrm{LR}}t)}.
    \label{eq:LRbound_LightCone}
\end{align}
Here, $R_1$, $R_2$, and $v_{\mathrm{LR}}$ are some positive constants that are independent of $L$, $t$, and $\ell$.
Thus, we see that $\aLocal_{\bm{r},\ell}^{\tilde{H}}(t)$ is an $(2\ell+\ell_A)$-local observable approximating $\aLocal_{\bm{r}}^H(t)$ with an error of $O(\ell^{d-1} e^{-\Theta(\ell-v_{\mathrm{LR}}t)})$, which becomes sufficiently small when $\ell\gg v_{\mathrm{LR}}t$.
Consequently, by letting
\begin{align}
    B_{t,\ell}=\sum_{\supp(\aLocal_{\bm{r}}) \subset \mathcal{S}^{(k)}} \aLocal_{\bm{r},\ell}^{\tilde{H}}(t),
\end{align}
we have
\begin{align}
    \| A^H(t) / |\mathcal{S}^{(k)}| - B_{t,\ell} / |\mathcal{S}^{(k)}| \|
    \leq \frac{1}{|\mathcal{S}^{(k)}|} \sum_{\supp(\aLocal_{\bm{r}}) \subset \mathcal{S}^{(k)}} \| \aLocal_{\bm{r}}^H(t) - \aLocal_{\bm{r},\ell}^{\tilde{H}}(t) \|
    \le R_1 \ell^{d-1} e^{- R_2 (\ell-v_{\mathrm{LR}}t)}.
    \label{eq:A-B}
\end{align}
Note that this error bound does not depend on $L$. This shows that $A^H(t)/|\mathcal{S}^{(k)}|$ can be approximated by a sum of local observables uniformly in $L$. However, $B_{t,\ell}$ is still not an additive observable since $\aLocal_{\bm{r},\ell}^{\tilde{H}}(t)$ and $\aLocal_{\bm{r}',\ell}^{\tilde{H}}(t)$ are generally not related to each other via spatial translation. Moreover, the support of $B_{t,\ell}$ may even extend beyond $\mathcal{S}^{(k)}$.

Let us further examine the properties of $\aLocal_{\bm{r},\ell}^{\tilde{H}}(t)$. Assume that the support of $\aLocal_{\bm{r}}$ is sufficiently far from the surface of the subsystem $\mathcal{S}^{(k)}$. More precisely, we consider the case where
\begin{align}
    \mathrm{dist} (\supp(\aLocal_{\bm{r}}), \Lambda_L \setminus \mathcal{S}^{(k)}) > \ell + \delta.
\end{align}
Here $\delta$ is the maximum of diameters of the local terms that constitute $H(t)$ and is a constant independent of $L$.
In this case, the support of $H_{\bm{r},\ell}^0$ is entirely within $\mathcal{S}^{(k)}$.
Since $H(t)$ is defined as a sum of additive observables, which are sums of translations of local observables in macroscopic subsystems, if $H_{\bm{r},\ell}^0$ has support entirely within the primitive macroscopic subsystem, it can be expressed as a sum of translations of a local observable $h(t)$ independent of $\bm{r}$ as
\begin{align}
    H_{\bm{r},\ell}^0(t) = \sum_{\supp(h(t))+\bm{r}' \cap \Lambda_{\bm{r},\ell} \neq \emptyset} T_{\bm{r}'} h(t) T_{\bm{r}'}^\dagger.
\end{align}
By the definition of $\Lambda_{\bm{r},\ell}$ \eqref{eq:Lambda_r-ell} and $\supp(\aLocal_{\bm{r}}) = \supp(\aLocal)+{\bm{r}}$, we obtain
\begin{align}
    H_{\bm{r},\ell}^0(t)
    &= \sum_{\supp(h(t))+\bm{r}'-\bm{r} \cap \Lambda_{\bm{0},\ell} \neq \emptyset} T_{\bm{r}'} h(t) T_{\bm{r}'}^\dagger \nonumber\\
    &= \sum_{\supp(h(t))+\bm{r}' \cap \Lambda_{\bm{0},\ell} \neq \emptyset} T_{\bm{r}'+\bm{r}} h(t) T_{\bm{r}'+\bm{r}}^\dagger
    = T_{\bm{r}} H_{\ell}^0(t) T_{\bm{r}}^\dagger,
\end{align}
where
\begin{align}
    H_{\ell}^0(t) = \sum_{\supp(h(t))+\bm{r}' \cap \Lambda_{\bm{0},\ell} \neq \emptyset} T_{\bm{r}'} h(t) T_{\bm{r}'}^\dagger.
\end{align}
Therefore, using the Heisenberg operator $c_{t,\ell} = {U_{\ell}^0(t,0)}^\dagger \aLocal U_{\ell}^0(t,0)$ for $\aLocal$ obtained by applying the unitary evolution $U_{\ell}^0(t,0)$ generated by $H_{\ell}^0(t)$, we have
\begin{align}
    \aLocal_{\bm{r},\ell}^{\tilde{H}}(t) = T_{\bm{r}} c_{t,\ell} T_{\bm{r}}^\dagger.
\end{align}
In other words, among the $\aLocal_{\bm{r},\ell}^{\tilde{H}}(t)$ constituting $B_{\bm{r},\ell}$, those associated with $\bm{r}$ for which the support of $\aLocal_{\bm{r}}$ is at a distance greater than $\ell + \delta$ from the surface of $\mathcal{S}^{(k)}$ can be represented as translations of $c_{t,\ell}$. Then we introduce an additive observable $C_{t,\ell}$ on $\mathcal{S}^{(k)}$ obtained from 
a $(2\ell+\ell_A)$-local observable $c_{t,\ell}$,
\begin{align}
    C_{t,\ell} = \sum_{\supp(c_{t,\ell})+\bm{r}\, \subset \mathcal{S}^{(k)}} T_{\bm{r}} c_{t,\ell} T_{\bm{r}}^\dagger.
\end{align}
Recalling that $\ell+\delta$ is a constant independent of $L$, we obtain
\begin{align}
    \| B_{t,\ell} / |\mathcal{S}^{(k)}| - C_{t,\ell} / |\mathcal{S}^{(k)}| \|
    \leq R_3 \ell L^{-1},
    \label{eq:B-C}
\end{align}
where $R_3$ is some positive constant that is independent of both $L$ and $\ell$.

From the above, we get
\begin{align}
    &\| A^H(t) / |\mathcal{S}^{(k)}| - C_{t,\ell} / |\mathcal{S}^{(k)}| \| \nonumber\\
    &\leq \| A^H(t) / |\mathcal{S}^{(k)}| - B_{t,\ell} / |\mathcal{S}^{(k)}| \|
    + \| B_{t,\ell} / |\mathcal{S}^{(k)}| - C_{t,\ell} / |\mathcal{S}^{(k)}| \| \nonumber\\
    &\leq R_1 \ell^{d-1} e^{- R_2 (\ell-v_{\mathrm{LR}}t)} + R_3 \ell L^{-1}.
    \label{eq:A-C}
\end{align}
That is, $C_{t,\ell}$ is an additive observable on the macroscopic subsystem $\mathcal{S}^{(k)}$ that approximates $A^H(t)$~\footnote{This result implies that the completion of $\SetAdditive$, $\overline{\SetAdditive}$, is invariant under macroscopic operations.}.

Now we are ready to prove Propositions~\ref{proposition:JAIVTMXN_1} and \ref{proposition:JAIVTMXN_2} and Theorem~\ref{theorem:JAIVTMXN_3}.
\begin{proof}[Proof of Proposition~\ref{proposition:JAIVTMXN_1}]
We prove that the expectation value of the Heisenberg operator $A^H(t)/|\mathcal{S}|$ for the additive observable density $A/|\mathcal{S}|$ converges in the thermodynamic limit for every macroscopic subsystem $\mathcal{S}$ and additive observable $A$ on $\mathcal{S}$. To this end, we prove that $\left( \mathrm{Tr} \left[ \rho_{L} A^H(t) / |\mathcal{S}| \right] \right)_{L\in\mathbb{N}}$ is a Cauchy sequence.

Suppose that the macroscopic subsystem $\mathcal{S}$ is a composite system of $J$ primitive macroscopic subsystems $\{\mathcal{S}^{(k_1)}, \mathcal{S}^{(k_2)}, \dots, \mathcal{S}^{(k_J)}\}$. Then, as shown in Eq.~\eqref{eq:Additive_S_approximation}, it holds that
\begin{align}
    \left\| A^H(t) / |\mathcal{S}| - \sum_{j=1}^J \lambda^{(k_j)} A_{\mathcal{S}^{(k_j)}}^H(t) / |\mathcal{S}^{(k_j)}| \right\| = o(L^0),
\end{align}
where $\displaystyle \lambda^{(k_j)} = \lim_{L\to\infty} |\mathcal{S}^{(k_j)}|/|\mathcal{S}|$.
Therefore, if $\left( \mathrm{Tr} \left[ \rho_{L} A_{\mathcal{S}^{(k_j)}}^H(t) / |\mathcal{S}^{(k_j)}| \right] \right)_{L\in\mathbb{N}}$ converges in the thermodynamic limit for all $j=1, 2, \dots, J$, $\left( \mathrm{Tr} \left[ \rho_{L} A^H(t) / |\mathcal{S}| \right] \right)_{L\in\mathbb{N}}$ also converges.
Thus, we consider the case where the macroscopic subsystem $\mathcal{S}$ is a primitive macroscopic subsystem.

Since $A$ is an additive observable on the primitive macroscopic subsystem, there exists an additive observable $C_{t,\ell}$ on the macroscopic subsystem that approximates $A^H(t)$ in the sense of Eq.~\eqref{eq:A-C}.
Then, by the triangle inequality and H\"older's inequality, for any positive integer $\ell_0$, we have
\begin{align}
    &\left|
        \left. \mathrm{Tr} \left[ \rho_L A^H(t) / |\mathcal{S}| \right] \right|_{L=L_1}
        - \left. \mathrm{Tr} \left[ \rho_L A^H(t) / |\mathcal{S}| \right] \right|_{L=L_2}
    \right| \nonumber\\
    &\leq \left|
        \left. \mathrm{Tr} \left[ \rho_L A^H(t) / |\mathcal{S}| \right] \right|_{L=L_1}
        - \left. \mathrm{Tr} \left[ \rho_L C_{t,\ell_0} / |\mathcal{S}| \right] \right|_{L=L_1}
    \right| \nonumber\\
    &+ \left|
        \left. \mathrm{Tr} \left[ \rho_L C_{t,\ell_0} / |\mathcal{S}| \right] \right|_{L=L_1}
        - \left. \mathrm{Tr} \left[ \rho_L C_{t,\ell_0} / |\mathcal{S}| \right] \right|_{L=L_2}
    \right| \nonumber\\
    &+ \left|
        \left. \mathrm{Tr} \left[ \rho_L C_{t,\ell_0} / |\mathcal{S}| \right] \right|_{L=L_2}
        - \left. \mathrm{Tr} \left[ \rho_L A^H(t) / |\mathcal{S}| \right] \right|_{L=L_2}
    \right| \nonumber\\
    &= \left| \left. \mathrm{Tr} \left[ \rho_L \left\{ A^H(t) / |\mathcal{S}| - C_{t,\ell_0} / |\mathcal{S}| \right\} \right] \right|_{L=L_1} \right| \nonumber\\
    &+ \left| \left. \mathrm{Tr} \left[ \rho_L \left\{ A^H(t) / |\mathcal{S}| - C_{t,\ell_0} / |\mathcal{S}| \right\} \right] \right|_{L=L_2} \right| \nonumber\\
    &+ \left|
        \left. \mathrm{Tr} \left[ \rho_L C_{t,\ell_0} / |\mathcal{S}| \right] \right|_{L=L_1}
        - \left. \mathrm{Tr} \left[ \rho_L C_{t,\ell_0} / |\mathcal{S}| \right] \right|_{L=L_2}
    \right| \nonumber\\
    &\leq \underbrace{
        \left. \left\| A^H(t) / |\mathcal{S}| - C_{t,\ell_0} / |\mathcal{S}| \right\| \right|_{L=L_1}
    }_{=(\ast 1)} \nonumber\\
    &+ \underbrace{
        \left. \left\| A^H(t) / |\mathcal{S}| - C_{t,\ell_0} / |\mathcal{S}| \right\| \right|_{L=L_2}
    }_{=(\ast 2)} \nonumber\\
    &+ \underbrace{\left|
        \left. \mathrm{Tr} \left[ \rho_L C_{t,\ell_0} / |\mathcal{S}| \right] \right|_{L=L_1}
        - \left. \mathrm{Tr} \left[ \rho_L C_{t,\ell_0} / |\mathcal{S}| \right] \right|_{L=L_2}
    \right|}_{=(\ast\ast)}.
\end{align}
Let $\ell_\ast \in \mathbb{N}$ be such that
\begin{align}
    \ell^{d-1}
    \leq \exp \left[\frac{R_2}{2}(\ell-v_{\mathrm{LR}}t)\right]
    \qquad \text{for all $\ell \geq \ell_\ast$}.
\end{align}
Then, from Eq.~\eqref{eq:A-C}, for any given $\epsilon>0$,
choosing $\ell_0 = \max \{ \ell_\ast, \lceil v_{\mathrm{LR}}t + \frac{2}{R_2} \log \frac{8 R_1}{\epsilon} \rceil \}$
and $L_0^{\ast} = \lceil \frac{8 R_3 \ell_0}{\epsilon} \rceil$,
for every $L_1, L_2 > L_0^{\ast}$, it holds that
\begin{align}
    (\ast 1)
    &\leq R_1 {\ell_0}^{d-1} e^{- R_2 (\ell_0-v_{\mathrm{LR}}t)} + R_3 \ell_0 {L_1}^{-1}
    \leq R_1 e^{- \frac{R_2}{2} (\ell_0-v_{\mathrm{LR}}t)} + R_3 \ell_0 {L_0^{\ast}}^{-1}
    \leq \epsilon/4,\\
    (\ast 2)
    &\leq R_1 {\ell_0}^{d-1} e^{- R_2 (\ell_0-v_{\mathrm{LR}}t)} + R_3 \ell_0 {L_2}^{-1}
    \leq R_1 e^{- \frac{R_2}{2} (\ell_0-v_{\mathrm{LR}}t)} + R_3 \ell_0 {L_0^{\ast}}^{-1}
    \leq \epsilon/4.
\end{align}
In addition, since $C_{t,\ell_0}$ is an additive observable on the macroscopic subsystem,
its expectation value in the sequence of the microscopic states $(\rho_L)_{L\in\mathbb{N}}$ representing the macroscopic state
converges in the thermodynamic limit.
Therefore, for any $\epsilon>0$, there exists $L_0^{\ast\ast} \in \mathbb{N}$ such that for every $L_1, L_2 > L_0^{\ast\ast}$, it holds that
\begin{align}
    (\ast\ast)
    = \left|
        \left. \mathrm{Tr} \left[ \rho_L C_{t,\ell_0} / |\mathcal{S}| \right] \right|_{L=L_1}
        - \left. \mathrm{Tr} \left[ \rho_L C_{t,\ell_0} / |\mathcal{S}| \right] \right|_{L=L_2}
    \right|
    < \epsilon/2.
\end{align}
Thus, from the above, for any $\epsilon>0$, there exists $L_0 \in \mathbb{N}$ such that for every $L_1, L_2 > L_0$, it holds that
\begin{align}
    \left| 
        \left. \mathrm{Tr} \left[ \rho_L A^H(t) / |\mathcal{S}| \right] \right|_{L=L_1}
        - \left. \mathrm{Tr} \left[ \rho_L A^H(t) / |\mathcal{S}| \right] \right|_{L=L_2}
    \right| < \epsilon.
\end{align}
This implies that $\left( \mathrm{Tr} \left[ \rho_{L} A^H(t) / |\mathcal{S}| \right] \right)_{L\in\mathbb{N}}$ is a Cauchy sequence.
This completes the proof of Proposition~\ref{proposition:JAIVTMXN_1}.
\end{proof}

\begin{proof}[Proof of Proposition~\ref{proposition:JAIVTMXN_2}]
We prove that the variance of the Heisenberg operator $A^H(t)/|\mathcal{S}|$ for the additive observable density $A/|\mathcal{S}|$ is $o(L^0)$ for every macroscopic subsystem $\mathcal{S}$ and additive observable $A$ on $\mathcal{S}$.

Suppose that the macroscopic subsystem $\mathcal{S}$ is a composite system of $J$ primitive macroscopic subsystems $\{\mathcal{S}^{(k_1)}, \mathcal{S}^{(k_2)}, \dots, \mathcal{S}^{(k_J)}\}$. Then, as shown in Eq.~\eqref{eq:Additive_S_approximation}, it holds that
\begin{align}
    \left\| A^H(t) / |\mathcal{S}| - \sum_{j=1}^J \lambda^{(k_j)} A_{\mathcal{S}^{(k_j)}}^H(t) / |\mathcal{S}^{(k_j)}| \right\| = o(L^0).
\end{align}
Thus, we have
\begin{align}
    &\mathrm{Tr} \left[ \rho_L \left( A^H(t) / |\mathcal{S}| \right)^2 \right]
    - \left( \mathrm{Tr} \left[ \rho_L A^H(t) / |\mathcal{S}| \right] \right)^2 \nonumber\\
    &= \sum_{j,j'=1}^J \lambda^{(k_j)}\lambda^{(k_{j'})} \mathrm{Tr} \left[ \rho_L \left( \delta A_{\mathcal{S}^{(k_j)}}^H(t) / |\mathcal{S}^{(k_j)}| \right) \left( \delta A_{\mathcal{S}^{(k_{j'})}}^H(t) / |\mathcal{S}^{(k_{j'})}| \right) \right]
    + o(L^0),
\end{align}
where $\delta A_{\mathcal{S}^{(k_j)}}^H(t) = A_{\mathcal{S}^{(k_j)}}^H(t) - \mathrm{Tr} \left[ \rho_L A_{\mathcal{S}^{(k_j)}}^H(t) \right]$.
Using the triangle inequality and Cauchy--Schwarz inequality, we have
\begin{align}
    &\left| \mathrm{Tr} \left[ \rho_L \left( A^H(t) / |\mathcal{S}| \right)^2 \right]
    - \left( \mathrm{Tr} \left[ \rho_L A^H(t) / |\mathcal{S}| \right] \right)^2 \right| \nonumber\\
    &\leq \sum_{j,j'=1}^J \lambda^{(k_j)}\lambda^{(k_{j'})} \left| \mathrm{Tr} \left[ \rho_L \left( \delta A_{\mathcal{S}^{(k_j)}}^H(t) / |\mathcal{S}^{(k_j)}| \right) \left( \delta A_{\mathcal{S}^{(k_{j'})}}^H(t) / |\mathcal{S}^{(k_{j'})}| \right) \right] \right|
    + o(L^0) \nonumber\\
    &\leq \sum_{j,j'=1}^J \lambda^{(k_j)}\lambda^{(k_{j'})} \sqrt{ \mathrm{Tr} \left[ \rho_L \left( \delta A_{\mathcal{S}^{(k_j)}}^H(t) / |\mathcal{S}^{(k_j)}| \right)^2 \right] \mathrm{Tr} \left[ \rho_L \left( \delta A_{\mathcal{S}^{(k_{j'})}}^H(t) / |\mathcal{S}^{(k_{j'})}| \right)^2 \right] }
    + o(L^0),
\end{align}
Therefore, if the variance of $A_{\mathcal{S}^{(k_j)}}^H(t) / |\mathcal{S}^{(k_j)}|$ is $o(L^0)$ for all $j=1, 2, \dots, J$, the variance of $A^H(t)$ is also $o(L^0)$.
Thus, we consider the case where the macroscopic subsystem $\mathcal{S}$ is a primitive macroscopic subsystem.

Since $A$ is an additive observable on the primitive macroscopic subsystem, there exists an observable $B_{t,\ell}$ that approximates $A^H(t)$ in the sense of Eq.~\eqref{eq:A-B} and an additive observable $C_{t,\ell}$ on the macroscopic subsystem that approximates $B_{t,\ell}$ in the sense of Eq.~\eqref{eq:B-C}.
Therefore, we have
\begin{align}
    \lim_{L\to\infty} \mathrm{Tr} \left[ \rho_L A^H(t) / |\mathcal{S}| \right]
    = \lim_{\ell\to\infty} \lim_{L\to\infty} \mathrm{Tr} \left[ \rho_L B_{t,\ell} / |\mathcal{S}| \right]
    = \lim_{\ell\to\infty} \lim_{L\to\infty} \mathrm{Tr} \left[ \rho_L C_{t,\ell} / |\mathcal{S}| \right],
    \label{eq:Proof_JAIVTMXN_1_A=B=C_DoubleLimits}
\end{align}
where 
the first equality follows from the fact that Eq.~\eqref{eq:A-B} holds even after $L\to\infty$,
and the second equality from Eq.~\eqref{eq:B-C} combined with $L\to\infty$.
In a similar manner, we have
\begin{align}
    \lim_{L\to\infty} \mathrm{Tr} \left[ \rho_L \left( A^H(t) / |\mathcal{S}| \right)^2 \right]
    = \lim_{\ell\to\infty} \lim_{L\to\infty} \mathrm{Tr} \left[ \rho_L \left( B_{t,\ell} / |\mathcal{S}| \right)^2 \right]
    = \lim_{\ell\to\infty} \lim_{L\to\infty} \mathrm{Tr} \left[ \rho_L \left( C_{t,\ell} / |\mathcal{S}| \right)^2 \right].
\end{align}
Thus, we obtain
\begin{align}
    \lim_{L\to\infty} \left\{
        \mathrm{Tr} \left[ \rho_L \left( A^H(t) / |\mathcal{S}| \right)^2 \right]
        - \left( \mathrm{Tr} \left[ \rho_L A^H(t) / |\mathcal{S}| \right] \right)^2
    \right\}
    = \lim_{\ell\to\infty} \lim_{L\to\infty} \left\{
        \mathrm{Tr} \left[ \rho_L \left( C_{t,\ell} / |\mathcal{S}| \right)^2 \right]
        - \left( \mathrm{Tr} \left[ \rho_L C_{t,\ell} / |\mathcal{S}| \right] \right)^2
    \right\}.
\end{align}
Since $C_{t,\ell}$ is an additive observable composed of $(2\ell+\ell_A)$-local observables, the variance of $C_{t,\ell} / |\mathcal{S}|$ in the sequence of microscopic states representing a normal macroscopic state $(\rho_L)_{L\in\mathbb{N}}$ is $o(L^0)$. Therefore, we obtain
\begin{align}
    \lim_{L\to\infty} \left\{
        \mathrm{Tr} \left[ \rho_L \left( A^H(t) / |\mathcal{S}| \right)^2 \right]
        - \left( \mathrm{Tr} \left[ \rho_L A^H(t) / |\mathcal{S}| \right] \right)^2
    \right\}
    = 0.
\end{align}
This completes the proof of Proposition~\ref{proposition:JAIVTMXN_2}.
\end{proof}

\begin{proof}[Proof of Theorem~\ref{theorem:JAIVTMXN_3}]
We prove that for every macroscopic subsystem $\mathcal{S}$ and additive observable $A$ on $\mathcal{S}$, we have
\begin{align}
    \lim_{L\to\infty} \mathrm{Tr} \left[ \rho_L A^H(t)/|\mathcal{S}| \right]
    = \lim_{L\to\infty} \mathrm{Tr} \left[ \sigma_L A^H(t)/|\mathcal{S}| \right].
    \label{eq:JAIVTMXN_2}
\end{align}

Suppose that the macroscopic subsystem $\mathcal{S}$ is a composite system of $J$ primitive macroscopic subsystems $\{\mathcal{S}^{(k_1)}, \mathcal{S}^{(k_2)}, \dots, \mathcal{S}^{(k_J)}\}$. Then, as shown in Eq.~\eqref{eq:Additive_S_approximation}, it holds that
\begin{align}
    \left\| A^H(t) / |\mathcal{S}| - \sum_{j=1}^J \lambda^{(k_j)} A_{\mathcal{S}^{(k_j)}}^H(t) / |\mathcal{S}^{(k_j)}| \right\| = o(L^0).
\end{align}
Therefore, if Eq.~\eqref{eq:JAIVTMXN_2} is satisfied when $A=A_{\mathcal{S}^{(k_j)}}$ for all $j=1, 2, \dots, J$, it is also satisfied for $A$.
Thus, we consider the case where the macroscopic subsystem $\mathcal{S}$ is a primitive macroscopic subsystem.

Since $A$ is an additive observable on the primitive macroscopic subsystem, there exists an observable $B_{t,\ell}$ that approximates $A^H(t)$ in the sense of Eq.~\eqref{eq:A-B} and an additive observable $C_{t,\ell}$ on the macroscopic subsystem that approximates $B_{t,\ell}$ in the sense of Eq.~\eqref{eq:B-C}.
Therefore, in the same manner as Eq.~\eqref{eq:Proof_JAIVTMXN_1_A=B=C_DoubleLimits}, we have
\begin{align}
    \lim_{L\to\infty} \mathrm{Tr} \left[ \rho_L A^H(t) / |\mathcal{S}| \right]
    = \lim_{\ell\to\infty} \lim_{L\to\infty} \mathrm{Tr} \left[ \rho_L B_{t,\ell} / |\mathcal{S}| \right]
    = \lim_{\ell\to\infty} \lim_{L\to\infty} \mathrm{Tr} \left[ \rho_L C_{t,\ell} / |\mathcal{S}| \right],
\end{align}
and
\begin{align}
    \lim_{L\to\infty} \mathrm{Tr} [ \sigma_L A^H(t) / |\mathcal{S}| ]
    = \lim_{\ell\to\infty}\lim_{L\to\infty} \mathrm{Tr} \left[ \sigma_L C_{t,\ell} / |\mathcal{S}| \right].
\end{align}
On the other hand, since $C_{t,\ell}$ is an additive observable composed of $(2\ell+\ell_A)$-local observables with $2\ell+\ell_A=O(L^0)$, macroscopically equivalent states $(\rho_L)_{L\in\mathbb{N}}$ and $(\sigma_L)_{L\in\mathbb{N}}$ satisfy
\begin{align}
    \lim_{L\to\infty} \mathrm{Tr} [ \rho_L C_{t,\ell} / |\mathcal{S}| ]
    = \lim_{L\to\infty} \mathrm{Tr} \left[ \sigma_L C_{t,\ell} / |\mathcal{S}| \right].
\end{align}
By combining the above, we finally obtain
\begin{align}
    \lim_{L\to\infty} \mathrm{Tr} [ \rho_L A^H(t) / |\mathcal{S}| ]
    = \lim_{L\to\infty} \mathrm{Tr} \left[ \sigma_L A^H(t) / |\mathcal{S}| \right].
\end{align}
This completes the proof of Theorem~\ref{theorem:JAIVTMXN_3}~\footnote{The above proofs also imply that we can replace $\SetAdditive$ in Definitions~\ref{definition:MacroState}--\ref{definition:MacroState_Normal} by its completion $\overline{\SetAdditive}$ without causing any change to these definitions.}.
\end{proof}

\section{\label{sec:Proof_s^red=s^TD}Proof of Proposition~\ref{proposition:Equiv_rho^red}, Theorems~\ref{theorem:s^red=s^TD}, \ref{theorem:EntropyIncrease}, and \ref{theorem:EntropyIncrease_WithoutThermalization}}

\subsection{\label{sec:Proof_Equiv_rho^k}Proof of Proposition~\ref{proposition:Equiv_rho^red}}

Here, we prove Proposition~\ref{proposition:Equiv_rho^red}.
We divide the proof of this proposition into two parts: proof for the necessity and sufficiency of Eq.~\eqref{eq:Equiv_rho^red}.
\begin{proof}[Proof for the necessity of Eq.~\eqref{eq:Equiv_rho^red}.]
Take arbitrary $k\in\{1,...,K^d\}$ and $\ell\in\mathbb{N}$. As explained above, Eq.~\eqref{eq:<a>^red=<A_Sigma>/Sigma} holds for any observable $\aLocal$ on $\Cell$. If $\{\rho_L\}_L$ and $\{\sigma_L\}_L$ are macroscopically equivalent, we can show
\begin{align}
    \lim_{L\to\infty}\mathrm{Tr}\Bigl[\rho_{L}\frac{A_{\mathcal{S}^{(k)}}(\aLocal)}{|\mathcal{S}^{(k)}|}\Bigr]
    =\lim_{L\to\infty}\mathrm{Tr}\Bigl[\sigma_{L}\frac{A_{\mathcal{S}^{(k)}}(\aLocal)}{|\mathcal{S}^{(k)}|}\Bigr]
    \label{eq:Proof_Equiv_rho^red_1}
\end{align}
by multiplying $N/|\mathcal{S}^{(k)}|$ to the definition of macroscopic equivalence~\eqref{eq:MacroscopicEquiv}. Here $A_{\mathcal{S}^{(k)}}$ is the additive observable on $\mathcal{S}^{(k)}$ obtained from a local observable $\aLocal$. 
By combining Eqs.~\eqref{eq:<a>^red=<A_Sigma>/Sigma} and \eqref{eq:Proof_Equiv_rho^red_1}, we have
\begin{align}
    \mathrm{Tr}\bigl[\rho_{\infty|\ell}^{(k)}\aLocal\bigr]
    =\mathrm{Tr}\bigl[\sigma_{\infty|\ell}^{(k)}\aLocal\bigr]
    \label{eq:Proof_Equiv_rho^red_2}
\end{align}
for any observable $\aLocal$ on $\Cell$. This implies Eq.~\eqref{eq:Equiv_rho^red}.
\end{proof}

\begin{proof}[Proof for the sufficiency of Eq.~\eqref{eq:Equiv_rho^red}.] 
Assume that Eq.~\eqref{eq:Equiv_rho^red} holds for all $k\in\{1,...,K^d\}$ and for all $\ell\in\mathbb{N}$. First we will show Eq.~\eqref{eq:Proof_Equiv_rho^red_1} for arbitrary additive observables on an arbitrary primitive macroscopic subsystem. Let $A_{\mathcal{S}^{(k)}}$ be an additive observable on a primitive macroscopic subsystem $\mathcal{S}^{(k)}$ obtained from a local observable~$\aLocal$. By taking a sufficiently large positive number $\ell\in\mathbb{N}$, the local observable~$\aLocal$ can be regarded as an observable on $\Cell$. From Eq.~\eqref{eq:Equiv_rho^red}, we have Eq.~\eqref{eq:Proof_Equiv_rho^red_2}. Combining this with Eq.~\eqref{eq:<a>^red=<A_Sigma>/Sigma}, we obtain Eq.~\eqref{eq:Proof_Equiv_rho^red_1}. 

Finally, we
show Eq.~\eqref{eq:MacroscopicEquiv} for arbitrary additive observables on an arbitrary macroscopic subsystem. Let $A_{S}$ be an additive observable on a macroscopic subsystem~$\mathcal{S}$ obtained from the local observable~$\aLocal$. From Eqs.~\eqref{eq:Additive_S_approximation} and \eqref{eq:Proof_Equiv_rho^red_1}, we have
\begin{align}
    \lim_{L\to\infty}\mathrm{Tr}\Bigl[\rho_{L}\frac{A_{S}(\aLocal)}{N}\Bigr]
    =\lim_{L\to\infty}\mathrm{Tr}\Bigl[\sigma_{L}\frac{A_{S}(\aLocal)}{N}\Bigr].
\end{align}
This implies macroscopic equivalence~\eqref{eq:MacroscopicEquiv}.
\end{proof}

\subsection{Upper bound on the quantum macroscopic entropy density}

\begin{lemma}[Principle of maximum entropy]\label{lemma:s^red<=s^TD}
Let $(\rho_{L})_{L\in\mathbb{N}}$ represent a macroscopic state. If its energy density coincides with that of an equilibrium state described by a Hamiltonian $H$ at some 
inverse temperature $\beta^*$,
\begin{align}
    \lim_{L\to\infty}\mathrm{Tr}[\rho_{L} H]/N
    =\lim_{L\to\infty}\mathrm{Tr}[\rho^{\mathrm{can}}_{L}(\beta^*|H) H]/N,
    \label{eq:s^red<=s^TD_DEF_beta}
\end{align}
then the quantum macroscopic entropy density $s^{\mathrm{mac}}[\bullet]$ is bounded from above by thermodynamic entropy 
\begin{align}
    \limsup_{\ell\to\infty}s^{\mathrm{mac}}_{\ell}[(\rho_{L})_{L\in\mathbb{N}}]\le s^{\mathrm{TD}}(\beta^*|H).
    \label{eq:s^red<=s^TD}
\end{align}
\end{lemma}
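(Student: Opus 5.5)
The proof is a Gibbs variational (maximum-entropy) argument applied to a single spatially averaged local state, with a block-product Gibbs state as reference. Fix $\ell$. By concavity of $S_{\mathrm{vN}}$ (Eq.~\eqref{eq:s^red<=tilde_s}) it suffices to bound $\tilde{s}_\ell=S_{\mathrm{vN}}[\rho^{\mathrm{ave}}_{\infty|\ell}]/\ell^d$. The state $\rho^{\mathrm{ave}}_{\infty|\ell}$ lives on the cube $\Cell$. Let $H_\ell:=\sum_{\bm{r}:\,\supp(h_{\bm{r}})\subset \Cell}h_{\bm{r}}$ be the open-boundary Hamiltonian on $\Cell$, and let $\tau_\ell:=e^{-\beta^* H_\ell}/Z_\ell$.

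Nonnegativity of relative entropy gives
\begin{align}
S_{\mathrm{vN}}[\rho^{\mathrm{ave}}_{\infty|\ell}]\le \beta^*\mathrm{Tr}[\rho^{\mathrm{ave}}_{\infty|\ell}H_\ell]+\log Z_\ell .
\end{align}
This inequality holds for any real $\beta^*$, including negative values.

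Next I compute the energy term using translation averaging. By definition of $\rho^{\mathrm{ave}}$,
\begin{align}
\mathrm{Tr}[\rho^{\mathrm{ave}}_{L|\ell}H_\ell]=\sum_{\bm{r}:\,\supp(h_{\bm{r}})\subset\Cell}\mathrm{Tr}[\rho^{\mathrm{ave}}_{L|\ell}h_{\bm{r}}]=\#\{\bm{r}\}\cdot\mathrm{Tr}[\rho_L H_L]/N .
\end{align}
The second equality holds because each term equals the spatial average of $h$, namely $\mathrm{Tr}[\rho_L A_{\Lambda_L}(h)]/N$. The number of such $\bm{r}$ is $\ell^d-O(\ell^{d-1})$. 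Taking $L\to\infty$ and using hypothesis~\eqref{eq:s^red<=s^TD_DEF_beta}, we obtain $\mathrm{Tr}[\rho^{\mathrm{ave}}_{\infty|\ell}H_\ell]=(\ell^d-O(\ell^{d-1}))\,u^*$, where $u^*$ is the equilibrium energy density at $\beta^*$. Here $u^*$ exists by Assumption~\ref{assumption:GibbsState}.

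Dividing by $\ell^d$ gives
\begin{align}
\tilde{s}_\ell\le \beta^*u^*+\frac{\log Z_\ell}{\ell^d}+O(1/\ell).
\end{align}
The free-energy density with open boundaries converges to the same limit as with periodic boundaries, $\lim_\ell \ell^{-d}\log Z_\ell=-\beta^* f(\beta^*)$ (Refs.~\cite{Ruelle1999,Tasaki2018}), since boundary terms are $O(\ell^{d-1})$ in operator norm. Hence
\begin{align}
\limsup_{\ell\to\infty}\tilde{s}_\ell\le \beta^*u^*+\lim_{L\to\infty}\frac{\log Z_L}{N}.
\end{align}
Finally, $S_{\mathrm{vN}}[\rho^{\mathrm{can}}_L]/N=\beta^*\mathrm{Tr}[\rho^{\mathrm{can}}_LH_L]/N+\log Z_L/N$, whose limit is exactly $s^{\mathrm{TD}}(\beta^*|H)$. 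This is the same argument used to justify the existence of~\eqref{eq:DEF_s^TD}, and it gives Eq.~\eqref{eq:s^red<=s^TD}.

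The main obstacle is boundary bookkeeping, of two kinds. First, the shrinking count of interior terms in $H_\ell$ must be handled. Second, the open-boundary partition function must be identified with the periodic one in the limit; I would invoke the standard bound $|\log Z_\ell^{\mathrm{open}}-\log Z_\ell^{\mathrm{per}}|\le|\beta^*|\,\|H^{\mathrm{open}}_\ell-H^{\mathrm{per}}_\ell\|=O(\ell^{d-1})$. Both corrections are $O(1/\ell)$ per unit volume, so the order of limits ($L\to\infty$ first, then $\ell\to\infty$) poses no difficulty.
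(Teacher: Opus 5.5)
Your proof is correct and follows the same route as the paper: you bound $S_{\mathrm{vN}}[\rho^{\mathrm{ave}}_{\infty|\ell}]$ by nonnegativity of the relative entropy against a Gibbs state on $\Cell$ at $\beta^*$, use the exact translation-averaging identity for the energy term, and control the boundary corrections, which are of order $\ell^{d-1}$. The one difference is the boundary condition of the reference state. The paper uses the periodic-boundary Gibbs state on $\Cell$, so the boundary error enters through $\|H_\ell^{\mathrm{p}}-H_\ell^{\mathrm{o}}\|$ in the energy term, and $S_{\mathrm{vN}}[\rho^{\mathrm{can}}_\ell]/\ell^d\to s^{\mathrm{TD}}$ holds by definition; you use the open-boundary state and pay for it in the comparison of partition functions instead. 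Your explicit $O(1/\ell)$ bookkeeping before taking $\limsup$ is also slightly cleaner than the paper's inequality stated at fixed $\ell$.
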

\begin{proof}
We will utilize the fact that $s^{\mathrm{mac}}_{\ell}[\bullet]$ is bounded from above by $\tilde{s}_{\ell}[\bullet]$, as shown in Eq.~\eqref{eq:s^red<=tilde_s}, to obtain Eq.~\eqref{eq:s^red<=s^TD}.
Therefore, we first investigate $\rho_{\infty|\ell}^{\mathrm{ave}}$, which is defined by Eq.~\eqref{eq:DEF_rho_infty^tot}.
Let $H_\ell^{\mathrm{p}}$ ($H_\ell^{\mathrm{o}}$) be the Hamiltonian on the hypercube $\Cell$ with the periodic (resp. open) boundary conditions and let $\rho^{\mathrm{can}}_{\ell}(\beta|H)$ be the canonical Gibbs state $e^{-\beta H_\ell^{\mathrm{p}}}/Z_{\ell}$.
From the definition of the quantum relative entropy $D(\rho\Vert\sigma)=\mathrm{Tr}\bigl[\rho\log\rho\bigr]-\mathrm{Tr}\bigl[\rho\log\sigma\bigr]$, we have
\begin{align}
    D\Bigl(
    \rho_{\infty|\ell}^{\mathrm{ave}}
    \Bigm|\rho^{\mathrm{can}}_{\ell}(\beta^*|H)\Bigr)
    &=\beta^*\bigl(\mathrm{Tr}[
    \rho_{\infty|\ell}^{\mathrm{ave}}
    H_\ell^{\mathrm{p}}]-\mathrm{Tr}[\rho^{\mathrm{can}}_{\ell}(\beta^*|H) H_\ell^{\mathrm{p}}]\bigr)
    \notag\\
    &\quad +S_{\mathrm{vN}}\bigl[\rho^{\mathrm{can}}_{\ell}(\beta^*|H)\bigr]-S_{\mathrm{vN}}\bigl[
    \rho_{\infty|\ell}^{\mathrm{ave}}
    \bigr].
    \label{eq:RelativeEnt_red|can}
\end{align}
For the first term of Eq.~\eqref{eq:RelativeEnt_red|can}, we have the following equality:
\begin{align}
    &\lim_{\ell\to\infty}\frac{\mathrm{Tr}\bigl[
    \rho_{\infty|\ell}^{\mathrm{ave}}
    H_\ell^{\mathrm{p}}\bigr]}{\ell^d}\notag\\
    &=\lim_{\ell\to\infty}\frac{\mathrm{Tr}\bigl[
    \rho_{\infty|\ell}^{\mathrm{ave}}
    H_\ell^{\mathrm{o}}\bigr]}{\ell^d}\quad \Bigl(\because \|H_\ell^{\mathrm{p}}-H_\ell^{\mathrm{o}}\|=O(\ell^{d-1})\Bigr)\\
    &=\lim_{\ell\to\infty}\lim_{L\to\infty}\frac{1}{N \ell^d}\sum_{r\in\Lambda_{L}}\sum_{r^\prime (\mathrm{Supp}[h]+r^\prime\subset \Cell)}\mathrm{Tr}[\rho_{L} T_{r+r^\prime}hT_{r+r^\prime}^{\dagger}]\\
    &=\lim_{\ell\to\infty}\lim_{L\to\infty}\frac{1}{N \ell^d}\sum_{r^\prime (\mathrm{Supp}[h]+r^\prime\subset \Cell)}\sum_{r\in\Lambda_{L}}\mathrm{Tr}[\rho_{L} T_{r}hT_{r}^{\dagger}]\\
    &=\lim_{\ell\to\infty}\lim_{L\to\infty}\frac{1}{N \ell^d}\sum_{r^\prime (\mathrm{Supp}[h]+r^\prime\subset \Cell)}\mathrm{Tr}[\rho_{L} H_L^{\mathrm{p}}]\\
    &=\lim_{\ell\to\infty}\frac{1}{\ell^d}\sum_{r^\prime (\mathrm{Supp}[h]+r^\prime\subset \Cell)}\lim_{L\to\infty}\frac{\mathrm{Tr}[\rho_{L} H_L^{\mathrm{p}}]}{N}\\
    &=\lim_{L\to\infty}\frac{\mathrm{Tr}[\rho_{L} H_L^{\mathrm{p}}]}{N}\label{eq:<H>^red=<H>}\\
    &=\lim_{L\to\infty}\frac{\mathrm{Tr}[\rho^{\mathrm{can}}_{L}(\beta^*|H) H_L^{\mathrm{p}}]}{N}\quad \Bigl(\because \text{ Eq.~\eqref{eq:s^red<=s^TD_DEF_beta} }\Bigr)\\
    &=\lim_{\ell\to\infty}\frac{\mathrm{Tr}[\rho^{\mathrm{can}}_{\ell}(\beta^*|H) H_\ell^{\mathrm{p}}]}{\ell^d}
    \label{eq:canEnergy_TDL}
\end{align}
Note that the above calculation can be performed in almost the same manner for other boundary conditions.

Since the relative entropy is nonnegative, RHS of Eq.~\eqref{eq:RelativeEnt_red|can} is also nonnegative.
Combining this with Eqs.~\eqref{eq:canEnergy_TDL} and \eqref{eq:s^red<=tilde_s}, we have
\begin{align}
    s^{\mathrm{mac}}_{\ell}[(\rho_{L})_{L\in\mathbb{N}}]
    \le \tilde{s}_{\ell}[(\rho_{L})_{L\in\mathbb{N}}]
    \le s^{\mathrm{TD}}(\beta^*|H).
\end{align}
By taking $\limsup_{\ell\to\infty}$, we obtain Eq.~\eqref{eq:s^red<=s^TD}.

\end{proof}

\subsection{Lower bound on the quantum macroscopic entropy density after a unital CPTP map}

Next, we investigate how $s^{\mathrm{mac}}[\bullet]$ can be decreased by a unital completely positive trace-preserving (CPTP) map. A CPTP map $\mathcal{U}(\bullet)$ on the total Hilbert space is said to be unital when it satisfies
\begin{align}
    \mathcal{U}(I_{\Lambda_L})=I_{\Lambda_{L}},
    \label{eq:DEF_unital}
\end{align}
where $I_{\Lambda_{L}}$ is the identity operator on the total Hilbert space.
For instance, a similarity transformation by a unitary $U$, $\mathcal{U}(\bullet)=U\bullet U^{\dagger}$, is a unital CPTP map.
For arbitrary unital CPTP maps, we can show the following proposition.
\begin{proposition}[Lower bound on the quantum macroscopic entropy density after a unital CPTP map]\label{proposition:s^red>=SvN}
Let $(\mathcal{U}_{L})_{L\in\mathbb{N}}$
be a sequence of unital CPTP maps  and 
$(\rho_{L})_{L\in\mathbb{N}}$ be a representation of 
a macroscopic state. 
Suppose that $\bigl(\mathcal{U}_L(\rho_{L})\bigr)_{L\in\mathbb{N}}$ also represents some macroscopic state.
Then, it holds that
\begin{align}
    s^{\mathrm{mac}}_{\ell}\bigl[\bigl(\mathcal{U}_L(\rho_{L})\bigr)_{L\in\mathbb{N}}\bigr]
    \ge \limsup_{L\to\infty}\frac{S_{\mathrm{vN}}[\rho_L]}{N}\quad \text{for all }\ell\in\mathbb{N}.
    \label{eq:s^red>=svN}
\end{align}
\end{proposition}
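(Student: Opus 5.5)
The plan is to reduce everything to the subadditivity of the von Neumann entropy together with the fact that unital maps do not decrease entropy.

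\emph{Step 1: unital maps preserve entropy from below.} For any unital CPTP map $\mathcal{U}_L$ we have $S_{\mathrm{vN}}[\mathcal{U}_L(\rho_L)]\ge S_{\mathrm{vN}}[\rho_L]$. This follows from monotonicity of relative entropy applied to $D(\rho_L\Vert I/D^N)=N\log D-S_{\mathrm{vN}}[\rho_L]$. Indeed, $\mathcal{U}_L(I/D^N)=I/D^N$, so
\begin{align}
    N\log D-S_{\mathrm{vN}}[\mathcal{U}_L(\rho_L)]\le N\log D-S_{\mathrm{vN}}[\rho_L].
\end{align}
It therefore suffices to prove, for an arbitrary representation $(\tau_L)_{L\in\mathbb{N}}$ of a macroscopic state (here $\tau_L=\mathcal{U}_L(\rho_L)$), the bound $s^{\mathrm{mac}}_{\ell}[(\tau_L)_L]\ge\limsup_{L\to\infty}S_{\mathrm{vN}}[\tau_L]/N$.

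\emph{Step 2: tiling and subadditivity.} Fix $\ell$ and $k$, and for each offset $\bm{a}\in\Cell$ tile $\mathcal{S}^{(k)}_L$ by disjoint translates $\Cell[\bm{r}]$ with $\bm{r}\in\bm{a}+\ell\mathbb{Z}^d$ and $\Cell[\bm{r}]\subset\mathcal{S}^{(k)}_L$. Let $R^{(k)}_{\bm a}$ denote the uncovered boundary remainder, which has $O(\ell L^{d-1})$ sites. Subadditivity over all primitive subsystems gives
\begin{align}
    S_{\mathrm{vN}}[\tau_L]\le\sum_{k}\Bigl(\sum_{\bm r\in\bm a+\ell\mathbb{Z}^d}S_{\mathrm{vN}}[\tau^{\bm r}_{L|\ell}]+|R^{(k)}_{\bm a}|\log D\Bigr).
\end{align}
Averaging over the $\ell^d$ offsets $\bm a$ turns the inner sum into $\ell^{-d}\sum_{\bm r\in\mathcal{S}^{(k)}_L}S_{\mathrm{vN}}[\tau^{\bm r}_{L|\ell}]$, up to $O(\ell L^{d-1})$ boundary terms. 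By concavity of $S_{\mathrm{vN}}$, this average is at most $\ell^{-d}|\mathcal{S}^{(k)}_L|\,S_{\mathrm{vN}}[\tau^{(k)}_{L|\ell}]$, because $\tau^{(k)}_{L|\ell}$ is precisely the average of the $\tau^{\bm r}_{L|\ell}$.

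\emph{Step 3: pass to the limit.} Dividing by $N$ and using $|\mathcal{S}^{(k)}_L|/N\to K^{-d}$, we get
\begin{align}
    \frac{S_{\mathrm{vN}}[\tau_L]}{N}\le\frac{1}{K^d}\sum_k\frac{S_{\mathrm{vN}}[\tau^{(k)}_{L|\ell}]}{\ell^d}+O(\ell/L).
\end{align}
Taking $\limsup_{L\to\infty}$ and using continuity of $S_{\mathrm{vN}}$ in finite dimension $D^{\ell^d}$ (the limit $\tau^{(k)}_{\infty|\ell}$ exists because $(\tau_L)_L$ is a macroscopic state, by the well-definedness of Eq.~\eqref{eq:DEF_rho_infty^red}), the right-hand side converges to $s^{\mathrm{mac}}_\ell[(\tau_L)_L]$. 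Combining this with Step 1 yields Eq.~\eqref{eq:s^red>=svN}.

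The main obstacle is the bookkeeping in Step 2. Cubes cut by the periodic boundary and by the boundaries of the primitive subsystems must be kept in the remainder. The remainder must also be shown to be $o(N)$ uniformly in the offset $\bm a$. The per-$k$ averaging must match exactly the sum $\sum_{\bm r\in\mathcal{S}^{(k)}}$ in Eq.~\eqref{eq:DEF_rho_L^red}, whereas the offsets really sum only over $\bm r$ with $\Cell[\bm r]\subset\mathcal{S}^{(k)}$. The difference between these two sums is again a boundary term of relative size $O(\ell/L)$, and it is handled by the same $\log D$-per-site bound.
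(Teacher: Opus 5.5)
Your proposal is correct and follows essentially the same route as the paper. Both arguments use entropy non-decrease under unital maps via monotonicity of relative entropy to $I/D^N$, subadditivity over an $\ell$-cube tiling with an $O(L^{d-1})$ remainder, averaging over the $\ell^d$ offsets, nonnegativity and concavity to reach $S_{\mathrm{vN}}[\tau^{(k)}_{L|\ell}]$, and continuity to take $L\to\infty$. The only cosmetic difference is that the paper keeps the tiling fixed and instead translates the state by $T_{\bm r}$ for $\bm r\in\Cell$, applying the unital bound to $T_{\bm r}^\dagger\mathcal{U}_L(\bullet)T_{\bm r}$; the shifted cubes may then extend beyond $\mathcal{S}^{(k)}$, so only nonnegativity of entropy is needed to match the sum over centers $\bm r\in\mathcal{S}^{(k)}$ in Eq.~\eqref{eq:DEF_rho_L^red}.
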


\begin{proof}
Let us consider a combination of a unital CPTP map $\mathcal{U}_L(\bullet)$ and the translation $T_{\boldsymbol{r}}^{\dagger}\bullet T_{\boldsymbol{r}}$ by some $\bm{r}\in\Lambda_L$,
\begin{align}
    \mathcal{U}_L^{\bm{r}}(\bullet):=T_{\boldsymbol{r}}^{\dagger}\mathcal{U}_L(\bullet) T_{\boldsymbol{r}},
\end{align}
which is also a unital CPTP map.
From the monotonicity of relative entropy~\cite{NielsenChuang2010}, we have
\begin{align}
    D(\rho_{L}\Vert I_{\Lambda_L}/D^N)
    \ge D(\mathcal{U}_L^{\bm{r}}(\rho_{L})\Vert \mathcal{U}_L^{\bm{r}}(I_{\Lambda_L}/D^N))
    =D(\mathcal{U}_L^{\bm{r}}(\rho_{L})\Vert I_{\Lambda_L}/D^N),
    \label{eq:Monotonicity_RelEnt}
\end{align}
where $D$ is the dimension of the Hilbert space on each site.
In addition, the relative entropy is related to the von Neumann entropy by
\begin{align}
    D(\rho_{L}\Vert I_{\Lambda_L}/D^N)=N\log D-S_{\mathrm{vN}}[\rho_{L}].
    \label{eq:RelEnt_SvN}
\end{align}
Combining Eqs.~\eqref{eq:Monotonicity_RelEnt} and \eqref{eq:RelEnt_SvN}, 
we obtain the monotonicity of the von Neumann entropy against a unital CPTP map
\begin{align}
    S_{\mathrm{vN}}[\rho_{L}]
    \le S_{\mathrm{vN}}[T_{\boldsymbol{r}}^{\dagger}\tilde{\rho}_{L}T_{\boldsymbol{r}}],
\end{align}
where we write $\tilde{\rho}_{L}=\mathcal{U}_L(\rho_{L})$ for simplicity of notation.
This inequality is used to obtain the lower bound of the quantum macroscopic entropy density of a macroscopic state $(\tilde{\rho}_{L})_{L\in\mathbb{N}}$.

To evaluate the quantum macroscopic entropy density of $(\tilde{\rho}_{L})_{L\in\mathbb{N}}$, let us consider partitioning the system into many hypercubes of side length $\ell=O(L^0)$. 
Above Eq.~\eqref{eq:DEF_rho_L^red} of Sec.~\ref{sec:Entropy_CGLS}, we have introduced a $d$-dimensional hypercube $\Cell\subset\Lambda_L$ with side length $\ell$ centered at site $\boldsymbol{0}\in\Lambda_L$.
Now we introduce hypercubes of the same size and partition the system by them.
Let $M:=(\lfloor L/K\ell\rfloor)^d$ be the number of hypercubes in a primitive macroscopic subsystem $\mathcal{S}^{(k)}$ and
$\Cell[\bm{r}_{j}^{k}]\subset\mathcal{S}^{(k)}$ ($j=1,...,M$) be disjoint $d$-dimensional hypercubes with side length $\ell$.  
Suppose that each $\Cell[\bm{r}_{j}^{k}]$ is centered at site $\bm{r}_{j}^{k}\in\mathcal{S}^{(k)}$.Then, $\Cell[\bm{r}_{j}^{k}]$ can be written as a translation of $\Cell$,
\begin{align}
    \Cell[\bm{r}_{j}^{k}]= \Cell+\bm{r}_{j}^{k}\quad (\bm{r}_{j}^{k}\in \mathcal{S}^{(k)}, j=1,...,M).
    \label{eq:C^k,j=C+r^k,j}
\end{align}
Let $R^{k}$ be the set of remaining sites in $\mathcal{S}^{(k)}$.
They satisfy
\begin{align}
    \mathcal{S}^{(k)}=\Bigl(\bigcup_{j=1}^{M}\Cell[\bm{r}_{j}^{k}]\Bigr)\cup R^k
    \label{eq:S^k=C^k,j_cup_R^k}
\end{align}
by definition.
Note that the number of sites in $R^k$,
\begin{align}
    |R^k|=|\mathcal{S}^{(k)}|-M\ell^d
    =(L/K)^d-(\lfloor L/K\ell\rfloor)^d\ell^d
    =O(L^{d-1}),
\end{align}
is negligibly small compared to the total number of sites $N=L^d$ when $L$ is large.

For any density matrix $\rho_{L}$ on $\Lambda_{L}$ and for any subset $S\subset \Lambda_{L}$, let $(\rho_{L})|_{S}:=\mathrm{Tr}_{\Lambda_L\setminus S}\bigl[\rho_{L}\bigr]$ be the reduced density matrix on $S$.
Note that $(\rho_{L})|_{\Cell[\bm{r}]}$ corresponds to $\rho_{L|\ell}^{\bm{r}}$ defined in Eq.~\eqref{eq:ReducedState_rho^r}.
Now, we consider the state given by the tensor product of the reduced density matrices of $T_{\boldsymbol{r}}^{\dagger}\tilde{\rho}_{L}T_{\boldsymbol{r}}$ on $\Cell[\bm{r}_{j}^{k}]$'s as a reference state,
\begin{align}
    &\tau_{L}^{\bm{r}}:= 
    \Bigl(\bigotimes_{k=1}^{K^d}\bigotimes_{j=1}^{M}(T_{\bm{r}}^{\dagger}\tilde{\rho}_{L}T_{\bm{r}})|_{\Cell[\bm{r}_{j}^{k}]}\Bigr)
    \otimes
    \Bigl(\bigotimes_{k=1}^{K^d}(T_{\bm{r}}^{\dagger}\tilde{\rho}_{L}T_{\bm{r}})|_{R^k}\Bigr).
\end{align}
The nonnegativity of the quantum relative entropy $D(T_{\bm{r}}^{\dagger}\tilde{\rho}_{L}T_{\bm{r}}\Vert \tau_{L}^{\bm{r}})$ results in the following inequality,
\begin{align}
    S_{\mathrm{vN}}[T_{\boldsymbol{r}}^{\dagger}\tilde{\rho}_{L}T_{\boldsymbol{r}}]
    &\le \sum_{k=1}^{K^d}\sum_{j=1}^{M}S_{\mathrm{vN}}[(T_{\boldsymbol{r}}^{\dagger}\tilde{\rho}_{L}T_{\boldsymbol{r}})|_{\Cell[\bm{r}_{j}^{k}]}]
    +\sum_{k=1}^{K^d}S_{\mathrm{vN}}[(T_{\boldsymbol{r}}^{\dagger}\tilde{\rho}_{L}T_{\boldsymbol{r}})|_{R^k}],
    \label{eq:SvN_Partition_1}
\end{align}
which is known as the subadditivity of the von Neumann entropy.
For the first term of the RHS of Eq.~\eqref{eq:SvN_Partition_1}, it holds that 
\begin{align}
    \sum_{j=1}^{M}S_{\mathrm{vN}}[(T_{\boldsymbol{r}}^{\dagger}\tilde{\rho}_{L}T_{\boldsymbol{r}})|_{\Cell[\bm{r}_{j}^{k}]}]
    =\sum_{j=1}^{M}S_{\mathrm{vN}}[(T_{\boldsymbol{r}+\bm{r}_{j}^{k}}^{\dagger}\tilde{\rho}_{L}T_{\boldsymbol{r}+\bm{r}_{j}^{k}})|_{\Cell}]
\end{align}
because of Eq.~\eqref{eq:C^k,j=C+r^k,j}.
Summing up the above quantity for all $\boldsymbol{r}\in \Cell$, we have
\begin{align}
    &\sum_{\boldsymbol{r}\in \Cell}\sum_{j=1}^{M}S_{\mathrm{vN}}[(T_{\boldsymbol{r}}^{\dagger}\tilde{\rho}_{L}T_{\boldsymbol{r}})|_{\Cell[\bm{r}_{j}^{k}]}]\notag\\
    &\quad=\sum_{\boldsymbol{r}\in \Cell}\sum_{j=1}^{M}S_{\mathrm{vN}}[(T_{\boldsymbol{r}+\bm{r}_{j}^{k}}^{\dagger}\tilde{\rho}_{L}T_{\boldsymbol{r}+\bm{r}_{j}^{k}})|_{\Cell}]\\
    &\quad=\sum_{\boldsymbol{r}^\prime\in \mathcal{S}^{(k)}\setminus R^k}S_{\mathrm{vN}}[(T_{\boldsymbol{r}^\prime}^{\dagger}\tilde{\rho}_{L}T_{\boldsymbol{r}^\prime})|_{\Cell}]\\
    &\quad\le \sum_{\boldsymbol{r}\in \mathcal{S}^{(k)}}S_{\mathrm{vN}}[(T_{\boldsymbol{r}}^{\dagger}\tilde{\rho}_{L}T_{\boldsymbol{r}})|_{\Cell}],
    \label{eq:SvN_S^k}
\end{align}
where the second equality follows from Eq.~\eqref{eq:S^k=C^k,j_cup_R^k} and the last inequality from the nonnegativity of the von Neumann entropy.
Furthermore, from the concavity of $S_{\mathrm{vN}}[\bullet]$ and the definition of the coarse-grained local state, Eq.~\eqref{eq:DEF_rho_L^red}, the above upper bound satisfies
\begin{align}
    \sum_{\boldsymbol{r}\in \mathcal{S}^{(k)}}S_{\mathrm{vN}}[(T_{\boldsymbol{r}}^{\dagger}\tilde{\rho}_{L}T_{\boldsymbol{r}})|_{\Cell}]
    \le |\mathcal{S}^{(k)}| \ S_{\mathrm{vN}}[
    \tilde{\rho}_{L|\ell}^{(k)}].
    \label{eq:SvN_Concavity}
\end{align}
Here, recall that $\tilde{\rho}_{L|\ell}^{(k)}$ is defined by Eq.~\eqref{eq:DEF_rho_L^red}.
Combining Eqs.~\eqref{eq:SvN_Partition_1}, \eqref{eq:SvN_S^k} and \eqref{eq:SvN_Concavity}, we have
\begin{align}
    \frac{S_{\mathrm{vN}}[\rho_{L}]}{N}
    &\le \frac{1}{\ell^d}\sum_{\boldsymbol{r}\in \Cell}
    \frac{S_{\mathrm{vN}}[T_{\boldsymbol{r}}^{\dagger}\tilde{\rho}_{L}T_{\boldsymbol{r}}]}{N}
    \\
    &\le \frac{1}{K^d}\sum_{k=1}^{K^d}\frac{S_{\mathrm{vN}}[
    \tilde{\rho}_{L|\ell}^{(k)}
    ]}{\ell^d}
    +\frac{1}{\ell^d}\sum_{\boldsymbol{r}\in \Cell}\sum_{k=1}^{K^d}
    \frac{S_{\mathrm{vN}}[(T_{\boldsymbol{r}}^{\dagger}\tilde{\rho}_{L}T_{\boldsymbol{r}})|_{R^k}]}{N}.
    \label{eq:SvN_Partition_2}
\end{align}
For the second term of Eq.~\eqref{eq:SvN_Partition_2}, we have
\begin{align}
    S_{\mathrm{vN}}[(T_{\boldsymbol{r}}^{\dagger}\tilde{\rho}_{L}T_{\boldsymbol{r}})|_{R^k}]
    \le |R^k|\log D=O(L^{d-1}).
\end{align}
Therefore, 
taking $\limsup_{L\to\infty}$ of Eq.~\eqref{eq:SvN_Partition_2}, we obtain
\begin{align}
    \limsup_{L\to\infty}\frac{S_{\mathrm{vN}}[\rho_L]}{N}
    \le 
    \frac{1}{K^d}\sum_{k=1}^{K^d}\frac{S_{\mathrm{vN}}[
    \tilde{\rho}_{\infty|\ell}^{(k)}
    ]}{\ell^d},
\end{align}
where we interchanged $\limsup_{L\to\infty}$ and $S_{\mathrm{vN}}[\bullet]$ using the continuity of the von Neumann entropy~\cite{NielsenChuang2010}.
The RHS of the above inequality corresponds to $s^{\mathrm{mac}}_{\ell}\bigl[\bigl(\mathcal{U}_L(\rho_{L})\bigr)_{L\in\mathbb{N}}\bigr]$.

\end{proof}

\subsection{Proof of Eq.~\eqref{eq:s^red_rho^can=s^TD} which is used in the proof of Theorem~\ref{theorem:s^red=s^TD}}

\begin{proof}
Inserting $\rho_{L}=\rho^{\mathrm{can}}_{L}(\beta|H)$ into Lemma~\ref{lemma:s^red<=s^TD},
we have
\begin{align}
    \limsup_{\ell\to\infty}s^{\mathrm{mac}}_{\ell}\bigl[\bigl(\rho^{\mathrm{can}}_{L}(\beta|H)\bigr)_{L\in\mathbb{N}}\bigr]
    \le s^{\mathrm{TD}}(\beta|H).
\end{align}
On the other hand, inserting $\rho_{L}=\rho^{\mathrm{can}}_{L}(\beta|H)$ into Proposition~\ref{proposition:s^red>=SvN} and taking $\mathcal{U}_{L}$ in Proposition~\ref{proposition:s^red>=SvN} as the identity map, we have
\begin{align}
    s^{\mathrm{mac}}_{\ell}\bigl[\bigl(\rho^{\mathrm{can}}_{L}(\beta|H)\bigr)_{L\in\mathbb{N}}\bigr]
    \ge \limsup_{L\to\infty}\frac{S_{\mathrm{vN}}[\rho^{\mathrm{can}}_{L}(\beta|H)]}{N}
    =s^{\mathrm{TD}}(\beta|H)
    \quad \text{for all }\ell\in\mathbb{N},
\end{align}
where the last equality follows by definition~\eqref{eq:DEF_s^TD}.
Combining these inequalities and taking $\liminf_{\ell\to\infty}$ of the latter inequality, 
we obtain 
\begin{align}
    s^{\mathrm{TD}}(\beta|H)
    \le \liminf_{\ell\to\infty}s^{\mathrm{mac}}_{\ell}\bigl[\bigl(\rho^{\mathrm{can}}_{L}(\beta|H)\bigr)_{L\in\mathbb{N}}\bigr]
    \le \limsup_{\ell\to\infty}s^{\mathrm{mac}}_{\ell}\bigl[\bigl(\rho^{\mathrm{can}}_{L}(\beta|H)\bigr)_{L\in\mathbb{N}}\bigr]
    \le s^{\mathrm{TD}}(\beta|H),
\end{align}
which corresponds to Eq.~\eqref{eq:s^red_rho^can=s^TD}.

\end{proof}

\subsection{\label{sec:Proof_EntropyIncrease}Proof of Theorem~\ref{theorem:EntropyIncrease_WithoutThermalization}}

\begin{proof}
We introduce
\begin{align}
    \mathcal{U}_L(\bullet):=\lim_{\mathcal{T}\to\infty}\frac{1}{\mathcal{T}}\int_{0}^{\mathcal{T}} dt\ e^{-iH_{1}t}U(t^*,0)\bullet U^{\dagger}(t^*,0)e^{iH_{1}t},
\end{align}
which is a unital CPTP map.
The existence of the limit $\lim_{\mathcal{T}\to\infty}$ can be shown from the fact that the Hilbert space is finite dimensional. 
Inserting $\rho_{L}=\rho^{\mathrm{can}}_{L}(\beta|H)$ into Proposition~\ref{proposition:s^red>=SvN} and taking $\mathcal{U}_L(\bullet)$ in Proposition~\ref{proposition:s^red>=SvN} as the above one,
we have
\begin{align}
    s^{\mathrm{mac}}_{\ell}\Bigl[\Bigl(\mathcal{U}_L\bigl(\rho_{L}^{\mathrm{can}}(\beta_{0}|H_{0})\bigr)\Bigr)_{L\in\mathbb{N}}\Bigr]\ge s^{\mathrm{TD}}(\beta_{0}|H_{0})\quad \text{for all }\ell\in\mathbb{N}.
\end{align}
Furthermore, from Assumption~\ref{assumption:UniqueFinalState}, 
for any representation $(\rho_{L})_{L\in\mathbb{N}}$ of iMATE described by $H_{0}$ at inverse temperature $\beta_{0}$, i.e., 
$(\rho_{L})_{L\in\mathbb{N}}\maceq \bigl(\rho_{L}^{\mathrm{can}}(\beta_{0}|H_{0})\bigr)_{L\in\mathbb{N}}$, 
it holds that
\begin{align}
    \bigl(\overline{\rho_{L}(t>t^*)}\bigr)_{L\in\mathbb{N}}
    =\bigl(\mathcal{U}_L(\rho_{L})\bigr)_{L\in\mathbb{N}}
    \maceq \Bigl(\mathcal{U}_L\bigl(\rho_{L}^{\mathrm{can}}(\beta_0|H_0)\bigr)\Bigr)_{L\in\mathbb{N}}.
\end{align}
Combining these with Corollary~\ref{corollary:Equiv_s^red}, we have
\begin{align}
    s^{\mathrm{mac}}_{\ell}\bigl[\bigl(\overline{\rho_{L}(t>t^*)}\bigr)_{L\in\mathbb{N}}\bigr]
    \ge s^{\mathrm{TD}}(\beta_{0}|H_{0})\quad \text{for all }\ell\in\mathbb{N}.
    \label{eq:s^red_final>=s^TD_0}
\end{align}
In addition, using Theorem~\ref{theorem:s^red=s^TD}, the RHS of the above inequality can be written as
\begin{align}
    s^{\mathrm{TD}}(\beta_{0}|H_{0})=\lim_{\ell\to\infty}s^{\mathrm{mac}}_{\ell}\bigl[(\rho_{L})_{L\in\mathbb{N}}\bigr].
\end{align}
Taking $\liminf_{\ell\to\infty}$ of Eq.~\eqref{eq:s^red_final>=s^TD_0}, we obtain
\begin{align}
    \liminf_{\ell\to\infty}s^{\mathrm{mac}}_{\ell}\bigl[\bigl(\overline{\rho_{L}(t>t^*)}\bigr)_{L\in\mathbb{N}}\bigr]
    \ge\lim_{\ell\to\infty}s^{\mathrm{mac}}_{\ell}\bigl[(\rho_{L})_{L\in\mathbb{N}}\bigr],
\end{align}
which corresponds to Theorem~\ref{theorem:EntropyIncrease_WithoutThermalization}.

\end{proof}

\subsection{\label{sec:Proof_Th3_from_Th4}Proof of Theorem~\ref{theorem:EntropyIncrease} using Theorem~\ref{theorem:EntropyIncrease_WithoutThermalization}}

Equation \eqref{eq:Increasing_s^red_Thermalization} of Theorem~\ref{theorem:EntropyIncrease} readily follows 
from Theorem~\ref{theorem:EntropyIncrease_WithoutThermalization}
as follows.

\begin{proof}
Suppose that the assumptions of Theorem~\ref{theorem:EntropyIncrease} are satisfied. First, we will show that Assumption~\ref{assumption:UniqueFinalState} is also satisfied. Consider two representations of the same iMATE, i.e., $(\rho_{L})_{L\in\mathbb{N}}$ and $(\sigma_{L})_{L\in\mathbb{N}}$ satisfying Eq.~\eqref{eq:UniqueFinalState_InitialStates}. From Corollary~\ref{corollary:macroscopically-equivalence_equilibrium-state}, the expectation values of $H_1/N$ in $\rho_L(t^*)$ and $\sigma_L(t^*)$ coincide, which implies that the inverse temperatures $\beta_1$ determined from Eq.~\eqref{eq:FiniteEnergyDensity} also coincide. From Assumption~\ref{assumption:thermalization}, $\bigl(\overline{\rho_{L}(t>t^*)}\bigr)_{L\in\mathbb{N}}$ and $\bigl(\overline{\sigma_{L}(t>t^*)}\bigr)_{L\in\mathbb{N}}$ represent the same iMATE, $\bigl(\rho_{L}^{\mathrm{can}}(\beta_{1}|H_1)\bigr)_{L\in\mathbb{N}}$. This implies Eq.~\eqref{eq:UniqueFinalState_FinalState}, indicating that Assumption~\ref{assumption:UniqueFinalState} is satisfied.

Next, we 
show that Eq.~\eqref{eq:Increasing_s^red_NoThermalization} reduces to Eq.~\eqref{eq:Increasing_s^red_Thermalization}.
Since $\bigl(\overline{\rho_{L}(t>t^*)}\bigr)_{L\in\mathbb{N}}$ represents iMATE, $\lim_{\ell\to\infty}s^{\mathrm{mac}}_{\ell}\bigl[\bigl(\overline{\rho_{L}(t>t^*)}\bigr)_{L\in\mathbb{N}}\bigr]$ exists, as explained below Eq.~\eqref{eq:Increasing_s^red_Thermalization}. This means that the RHS of Eqs.~\eqref{eq:Increasing_s^red_Thermalization} and \eqref{eq:Increasing_s^red_NoThermalization} coincide, implying that Eq.~\eqref{eq:Increasing_s^red_NoThermalization} reduces to Eq.~\eqref{eq:Increasing_s^red_Thermalization}.

\end{proof}

\subsection{\label{sec:DirectProof_EntropyIncrease}Direct proof of Theorem~\ref{theorem:EntropyIncrease}}

We here give a more direct proof of Theorem~\ref{theorem:EntropyIncrease}.

\begin{proof}
First, we will show that 
$\bigl(\overline{\rho_{L}(t>t^*)}\bigr)_{L\in\mathbb{N}}$ represents iMATE.
Since $\bigl(\rho_{L}(0)\bigr)_{L\in\mathbb{N}}$ represents normal iMATE, Proposition~\ref{proposition:JAIVTMXN_2} implies that $\bigl(\rho_{L}(t^*)\bigr)_{L\in\mathbb{N}}$ represents a normal macroscopic state. This means that Eq.~\eqref{eq:NegligibleVarH_1} is satisfied. This fact combined with Assumption~\ref{assumption:thermalization} implies that $\bigl(\overline{\rho_{L}(t>t^*)}\bigr)_{L\in\mathbb{N}}$ represents iMATE.

Next, we will show Eq.~\eqref{eq:Increasing_s^TD}.
Let $\sigma_{L}=\rho_{L}^{\mathrm{can}}(\beta_{0}|H_{0})$ be the canonical Gibbs state described by the initial Hamiltonian, and $\sigma_{L}(t^*)=U_{L}(t^*,0)\sigma_{L}U_{L}^{\dagger}(t^*,0)$ be its time evolution by a macroscopic operation.
Consider the quantum relative entropy $D(\sigma\Vert \rho)=\mathrm{Tr}[\sigma\log\sigma]-\mathrm{Tr}[\sigma\log\rho]$ of $\sigma_{L}(t^*)$ with respect to $\rho_{L}^{\mathrm{can}}(\beta_{1}|H_{1})$,
\begin{align}
    &D\bigl(\sigma_{L}(t^*)||\rho_{L}^{\mathrm{can}}(\beta_{1}|H_{1})\bigr)
    \notag\\
    &\quad  =\beta_{1}\bigl(\mathrm{Tr}[\sigma_{L}(t^*) H_{1}]-\mathrm{Tr}[\rho_{L}^{\mathrm{can}}(\beta_{1}|H_{1}) H_{1}]\bigr)
    \notag\\
    &\quad +S_{\mathrm{vN}}[\rho_{L}^{\mathrm{can}}(\beta_{1}|H_{1})]-S_{\mathrm{vN}}[\sigma_{L}(t^*)].
\end{align}
From Theorem~\ref{theorem:JAIVTMXN_3}, $\bigl(\sigma_{L}(t^*)\bigr)_{L\in\mathbb{N}}$ is macroscopically equivalent to $\bigl(\rho_{L}(t^*)\bigr)_{L\in\mathbb{N}}$. This implies $\sigma_L(t^*)$ has the same energy density $H_1/N$ as that of $\rho_L(t^*)$.
Moreover, from Eq.~\eqref{eq:FiniteEnergyDensity}, energy density of $\rho_{L}^{\mathrm{can}}(\beta_{1}|H_1)$ also coincides.
In addition, the unitary invariance of the von Neumann entropy implies $S_{\mathrm{vN}}[\sigma_{L}(t^*)]=S_{\mathrm{vN}}[\rho_{L}^{\mathrm{can}}(\beta_{0}|H_{0})]$.
Thus, combining these with nonnegativity of relative entropy, we obtain Eq.~\eqref{eq:Increasing_s^TD}.

Finally, note that since $\bigl(\overline{\rho_{L}(t>t^*)}\bigr)_{L\in\mathbb{N}}$ represents iMATE, Eq.~\eqref{eq:Increasing_s^TD} is equivalent to Eq.~\eqref{eq:Increasing_s^red_Thermalization}.
\end{proof}

\subsection{\label{sec:Proof_Kdependence}Proof of Proposition~\ref{proposition:Kindependent_s^mac}}

The core idea of the proof of Proposition~\ref{proposition:Kindependent_s^mac} is given in Sec.~\ref{sec:Kdependence}. The remaining task is to prove Eq.~\eqref{eq:Kdependence_concavity_s^macK}.
\begin{proof}[Proof of Eq.~\eqref{eq:Kdependence_concavity_s^macK}]
For simplicity of explanation, 
we here write the spatial average of $\ell$-local density matrices~\eqref{eq:DEF_rho_L^red} as 
\begin{align}
    \rho_{L|\ell}^{\bm{k},K},\qquad \bm{k}=(k^1,k^2,...,k^d)\in\{1,2,...,K\}^d
\end{align}
instead of the original symbol $\rho_{L|\ell}^{(k)}$ ($k\in\{1,2,...,K^d\}$).
Here, the label $\bm{k}$ divided by $K$ expresses the relative position of the primitive subsystem.
Furthermore, for notational simplicity, we write $K_1=(M^{\mathrm{max}})!$ and $K_2=(M^{\mathrm{max}}+1)!$.
In a similar manner as Eq.~\eqref{eq:DEF_rho_infty^tot_1}, we can show that
\begin{align}
    \rho_{L|\ell}^{\bm{k},K_1}=\frac{1}{(M^{\mathrm{max}}+1)^d}\sum_{\bm{q}\text{ near } \bm{k}}\rho_{L|\ell}^{\bm{q},K_2},
    \label{eq:Kdependence_ConvComb}
\end{align}
where $\bm{k}\in\{1,2,...,K_1\}^d$ and $\bm{q}\in\{1,2,...,K_2\}^d$.
Here, the sum $\sum_{\bm{q}\text{ near } \bm{k}}$ runs over all tuples of integers $\bm{q}=(q^1,q^2,...,q^d)$ satisfying
\begin{align}
    (k^\mu-1)(M^{\mathrm{max}}+1)+1\le q^{\mu}\le k^{\mu}(M^{\mathrm{max}}+1)
    \label{eq:Kdependence_Cond_q}
\end{align}
for all $\mu=1,2,...,d$.
The condition~\eqref{eq:Kdependence_Cond_q} roughly states that $\bm{q}/K_2$ is close to $\bm{k}/K_1$ [within the precision of $1/K_1$].
Combining Eq.~\eqref{eq:Kdependence_ConvComb} with the concavity of the von Neumann entropy, we have
\begin{align}
    S_{\mathrm{vN}}[\rho_{\infty|\ell}^{\bm{k},K_1}]\ge
    \frac{1}{(M^{\mathrm{max}}+1)^d}\sum_{\bm{q}\text{ near } \bm{k}}S_{\mathrm{vN}}[\rho_{\infty|\ell}^{\bm{q},K_2}].
\end{align}
This results in 
\begin{align}
    \frac{1}{K_1^d}\sum_{\bm{k}\in\{1,2,...,K_1\}^d}S_{\mathrm{vN}}[\rho_{\infty|\ell}^{\bm{k},K_1}]
    \ge 
    \frac{1}{K_2^d}\sum_{\bm{q}\in\{1,2,...,K_2\}^d}S_{\mathrm{vN}}[\rho_{\infty|\ell}^{\bm{q},K_2}],
\end{align}
which reduces to Eq.~\eqref{eq:Kdependence_concavity_s^macK}.
\end{proof}

\section{\label{sec:Proof_Example_OrdinaryMATE}Difficulty in characterizing thermal equilibrium by only a few additive observables (details of Example~\ref{example:ProblemFiniteObs})}

\begin{proposition}[\label{proposition:ProblemFiniteObs}Contradiction to Planck's principle when thermal equilibrium were characterized by only a
few additive observables]
Consider the system described by the Hamiltonian~\eqref{eq:Counterex_XYmodel} and take $h^x$ as a control parameter.
There exist a sequence of states $(\rho_L)_{L\in\mathbb{N}}$ and a macroscopic operation $\bigl(U_L(t^*,0)\bigr)$ of operation time $t^*=O(L^0)$ such that, although $(\rho_L)_{L\in\mathbb{N}}$ represents thermal equilibrium with respect to $\tilde{\mathcal{A}}=\{(M_L^x)_{L\in\mathbb{N}},(M_L^y)_{L\in\mathbb{N}},(H_L^{XY})_{L\in\mathbb{N}}\}$ in the sense that 
\begin{align}
    \lim_{L\to\infty}\mathrm{Tr}[\rho_L\frac{A_L}{N}]=\lim_{L\to\infty}\mathrm{Tr}[\rho_{L}^{\mathrm{can}}(\beta|H_L)\frac{A_L}{N}]\quad \text{for any }(A_L)_{L\in\mathbb{N}}\in\tilde{\mathcal{A}}
    \label{eq:Indistinguishable_tildeA}
\end{align}
is satisfied, Planck's principle is violated,
\begin{align}
    \lim_{L\to\infty}\mathrm{Tr}[\rho_L(t^*)\frac{H_L}{N}]<\lim_{L\to\infty}\mathrm{Tr}[\rho_L\frac{H_L}{N}],
    \label{eq:ProblemFiniteObs}
\end{align}
where $\rho_L(t^*)=U_L(t^*,0)\rho_L U_L^{\dagger}(t^*,0)$.
Moreover, $(\rho_L)_{L\in\mathbb{N}}$ satisfies the condition for the ordinary MATE formulated by Tasaki~\cite{Tasaki2016} in addition to Eq.~\eqref{eq:Indistinguishable_tildeA}.

\end{proposition}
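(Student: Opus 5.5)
The plan is to construct $(\rho_L)$ explicitly as a \emph{product state built from a rotated Gibbs state}, so that the three observables in $\tilde{\mathcal{A}}$ cannot detect the rotation while $H_L$ itself can. The key structural fact is that $M^x_L$, $M^y_L$ and $H^{XY}_L$ are all quadratic or linear in $\sigma^x,\sigma^y$ and contain no $\sigma^z$. Consider the global rotation $R_\theta=\exp(-i\theta M^z_L/2)$ by angle $\theta$ about the $z$ axis. We have $R_\pi \sigma^{x,y} R_\pi^\dagger=-\sigma^{x,y}$. Hence $R_\pi H^{XY}_L R_\pi^\dagger=H^{XY}_L$ while $M^{x,y}_L\mapsto -M^{x,y}_L$. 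That flips the wrong things, so I would instead use a state that mixes the two sectors.

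I would look for $\rho_L$ agreeing with $\rho^{\mathrm{can}}_L(\beta|H_L)$ on $\langle M^x\rangle,\langle M^y\rangle,\langle H^{XY}\rangle$ but carrying extra extensive "hidden" energy in a direction the operation can convert. The concrete choice is to take $h^x=h^y=0$ first, where $\langle M^{x,y}\rangle_{\mathrm{can}}=0$. Take $\rho_L=V\rho^{\mathrm{can}}_L(\beta'|H'_L)V^\dagger$ or, more simply, a Gibbs state of the modified Hamiltonian $H'_L=H^{XY}_L-\lambda\sum_{\bm r}\sigma^z_{\bm r}\sigma^z_{\bm r+\hat e}$. Here $\beta',\lambda$ are tuned so that the $H^{XY}$ density matches its equilibrium value; $M^{x,y}$ vanish by $\pi$-rotation symmetry about $z$. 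This $\rho_L$ is normal (Assumption~\ref{assumption:GibbsState_Normal} for $H'$), so Tasaki's MATE condition holds. Since $\rho_L$ agrees with $\rho^{\mathrm{can}}_L(\beta|H_L)$ on the energy density, and $H_L$ is the unique local conserved quantity~\cite{ShiraishiTasaki2024}, $\rho_L$ is a nonequilibrium state with the same energy, hence of strictly lower entropy density. By the strict inequality $s^{\mathrm{TD}}$ deficit, there is room to extract work.

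Next I would construct the operation. A pulse in $h^x$ of duration $t^*=O(L^0)$ generates $e^{-i\phi M^x_L}$ (after a bang with large amplitude, or accounting for $H_L$ via the Lieb--Robinson-controlled approximation of Appendix~\ref{sec:JAIVTMXN}). A $\pi/2$ rotation about $x$ maps $\sigma^z\sigma^z\to\sigma^y\sigma^y$ and $\sigma^y\sigma^y\to\sigma^z\sigma^z$, reshuffling the $zz$ correlations of $\rho_L$ into $yy$ correlations. Since $J^x\neq J^y$, I choose the sign of $\lambda$ and the rotation so that the converted correlation lowers $\langle H^{XY}\rangle$ extensively. The first-order change in energy density is $-J^y(\langle\sigma^z\sigma^z\rangle_\rho-\langle\sigma^y\sigma^y\rangle_\rho)$ up to the $xx$ term. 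This requires showing $\langle\sigma^z_{\bm r}\sigma^z_{\bm r+\hat e}\rangle_{\rho}>\langle\sigma^y\sigma^y\rangle_\rho$ in the limit, which holds for $\lambda>0$ large enough at fixed energy. That gives Eq.~\eqref{eq:ProblemFiniteObs}.

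The main obstacle is making the inequalities rigorous in the thermodynamic limit: (i) simultaneously matching the $H^{XY}$ density, and (ii) proving a strict gap in the $zz$ versus $yy$ correlations. I would handle this by working at high temperature, where a cluster/high-temperature expansion gives $\langle\sigma^z\sigma^z\rangle\approx\beta'\lambda$ and $\langle\sigma^y\sigma^y\rangle\approx\beta'J^y$ with controlled errors. I would also allow nonzero $h^x,h^y$ only afterwards by continuity. Should the matching be awkward, the fallback is to pick $\rho_L$ as a product state of two-site blocks with prescribed $xx,yy,zz$ correlations, where all expectations are explicit.
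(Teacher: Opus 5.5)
Your route is viable, but it differs from the paper's proof and is much heavier. The paper takes $\beta=0$ and the fully $z$-polarized product state $\rho_L=(\ket{0}\bra{0})^{\otimes N}$.
\begin{itemize}
\item Because $M^x_L$, $M^y_L$ and $H^{XY}_L$ contain only $\sigma^x$ and $\sigma^y$, all three densities vanish exactly in this state. This matches the maximally mixed state, for any $h^x$.
\item The product structure gives the sharp distributions needed for Tasaki's MATE directly.
\item A strong pulse $h^x=1/\delta$ of duration $O(\delta)$ realizes a $\pi/2$ rotation about $x$. The error in energy density is $O(\delta)$, controlled by $t^*\sup_t\|[\tilde H_L(t),H^{XY}_L]\|/N$. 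The result is the $\sigma^y=-1$ product state with energy density $-2J^y<0$, an $O(1)$ drop.
\end{itemize}

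Your Gibbs state of $H'_L$ uses the same mechanism: the $x$-pulse leaves the $xx$ term and $M^x$ fixed and swaps hidden $zz$ correlations into $yy$ correlations. The price is a rigorous high-temperature cluster expansion, which must supply:
\begin{itemize}
\item existence and continuity of the thermodynamic limits;
\item tuning of $\beta'$ by an intermediate-value argument;
\item the strict gap $\langle\sigma^z\sigma^z\rangle-\langle\sigma^y\sigma^y\rangle\simeq\beta(\lambda-J^y)>0$, beyond the $O(\beta^2)$ corrections;
\item a concentration bound for the Gibbs state of $H'_L$, since the paper's Assumption~2 concerns Gibbs states of $H_L$, not $H'_L$.
\end{itemize}
The energy drop is then only $O(\beta)$, so $\delta$ must be fixed after $\beta$. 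Nonzero $h^x$ also needs an extra tuned field in $H'_L$, whereas the paper's $\beta=0$ target does not depend on $h^x$. What your version buys is a counterexample at positive temperature whose deceptive state is itself a correlated Gibbs state of a local Hamiltonian. That shows the failure is not an artifact of infinite temperature or of product states.

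Three smaller points:
\begin{itemize}
\item The relevant regime is $\lambda>J^y$ with $\beta'\lambda$ small, not ``$\lambda$ large enough''. A very strong $zz$ term suppresses the $xy$ correlations and may make the $H^{XY}$ density impossible to match.
\item The entropy-deficit paragraph does no work in the argument. A lower entropy density by itself says nothing about what $O(L^0)$-time macroscopic operations can extract, so drop it.
\item Your product-state fallback, pushed to single sites with no $xy$ correlations at all, is exactly the paper's construction.
\end{itemize}
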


\noindent 
In other words, the choice $\tilde{\mathcal{A}}$ is inappropriate in the sense that it cannot distinguish a macroscopically-nonpassive state $\rho_L$ from $\rho_{L}^{\mathrm{can}}(\beta|H_L)$.

\begin{proof}
Consider a sequence of states $(\ket{\psi_{L}}\bra{\psi_{L}})_{L\in\mathbb{N}}$ defined by $\ket{\psi_{L}}:=\ket{0}^{\otimes N}$, which is the eigenstate of $\sigma_{j,k}^z$ with the eigenvalue $+1$. 
Because the expectation values of the additive observables $M_L^x,M_L^y,H_L^{XY}$ in the state $\ket{\psi_{L}}$ are zero coinciding with those in $\rho_{L}^{\mathrm{can}}(\beta=0|H_L)$, Eq.~\eqref{eq:Indistinguishable_tildeA} is satisfied at $\beta=0$.
In addition, since $\ket{\psi_{L}}$ is a product state, the distributions of additive observables $\tilde{\mathcal{A}}$ have a sufficiently sharp peak, indicating that it satisfies the condition for the ordinary MATE formulated by Tasaki~\cite{Tasaki2016}.

However, $\ket{\psi_{L}}$ responds to a macroscopic operation differently than $\rho_{L}^{\mathrm{can}}(\beta=0|H_L)$, and it does not satisfy macroscopic passivity~\eqref{eq:macroscopic-passivity}. We first provide a rough sketch of the proof.
For simplicity, we take $h^y=0$. We try to realize the $\pi/2$ pulse around the $x$ axis. To this end, at $t=0$, we suddenly change $h^x$ to some positive value $1/\delta$ that is sufficiently larger than $|J^x|$ and $|J^y|$. The time evolution $U_L(\bullet,0)$ generated by such a Hamiltonian is a macroscopic operation, and at $t=t^*:=\pi\delta/2$, the unitary $U_L(t^*,0)$ is close to the $\pi/2$ pulse around the $x$ axis, which induces the state transition to the eigenstate of $\sigma_{j,k}^{y}$ with the eigenvalue~$-1$.
As a result, we have 
\begin{align}
    &\lim_{L\to\infty}\bra{\psi_{L}}U_{L}^{\dagger}(t^*,0)\frac{H_L}{N} U_{L}(t^*,0)\ket{\psi_{L}}= -2J^y+O(\delta)
    <0
    \ \Bigl(=\lim_{L\to\infty}\bra{\psi_{L}}\frac{H_L}{N} \ket{\psi_{L}}\Bigr),
    \label{eq:ProblemFiniteObs_EnergyChange}
\end{align}
which implies Eq.~\eqref{eq:ProblemFiniteObs}.

Now, we will show Eq.~\eqref{eq:ProblemFiniteObs_EnergyChange} exactly.
The above macroscopic operation $U_L(\bullet,0)$ is generated by
\begin{align}
    H_L(t)=\begin{cases}
        H_L=H_L^{XY}-h^xM_L^x &\qquad t\le 0\\
        H_L^{XY}-\frac{1}{\delta} M_L^x &\qquad 0<t<t^*=\pi\delta/2
    \end{cases}.
\end{align}
Below, we consider approximating the macroscopic operation $U_L(t,0)$ by $e^{\frac{i}{\delta}M_L^x t}$ for small $\delta$.
By a calculation similar to Eq.~\eqref{eq:Bound_TEvol_Comm}, we can show that an arbitrary observable $A_L$ satisfies 
\begin{align}
    \|U_L^{\dagger}(t^*,0)A_L U_L(t^*,0)-\tilde{A}_{L}(t^*)\|_{\infty}
    \le \int_{0}^{t^*} dt\ \|[\tilde{A}_L(t),H_L^{XY}]\|_{\infty}
\end{align}
where $\tilde{A}_L(t):=e^{-i\frac{t}{\delta}M_L^x} A_L e^{i\frac{t}{\delta}M_L^x}$.
Inserting $A_L=H_L$, we have 
\begin{align}
    \|U_L^{\dagger}(t^*,0)H_L U_L(t^*,0)-\tilde{H}_{L}(t^*)\|_{\infty}
    &\le \int_{0}^{t^*} dt\ \|[\tilde{H}_L(t),H_L^{XY}]\|_{\infty}\\
    &\le t^*\times\sup_{t\in[0,t^*]}\|[\tilde{H}_L(t),H_L^{XY}]\|_{\infty}.
    \label{eq:Bound_EnergyChange}
\end{align}
Here, $\tilde{H}_{L}(t)$ can be written as $\tilde{H}_{L}(t)=\tilde{H}_L^{XY}(t)-h^xM_L^x$, 
where 
\begin{align}
    &\tilde{H}_L^{XY}(t)=e^{-i\frac{t}{\delta}M_L^x}H_L^{XY}e^{i\frac{t}{\delta}M_L^x}=-\sum_{j,k=1}^{L}\Bigl(J^x(\sigma_{j,k}^{x}\sigma_{j+1,k}^{x}+\sigma_{j,k}^{x}\sigma_{j,k+1}^{x})+J^y\bigl(\sigma_{j,k}^{y}(t)\sigma_{j+1,k}^{y}(t)+\sigma_{j,k}^{y}(t)\sigma_{j,k+1}^{y}(t)\bigr)\Bigr),\\
    &\sigma_{j,k}^{y}(t)=\cos(t/\delta)\sigma_{j,k}^{y}-i\sin(t/\delta)\sigma_{j,k}^z.
\end{align}
Due to the additivity of $\tilde{H}_L(t)$, the commutator in the RHS of Eq.~\eqref{eq:Bound_EnergyChange} is also additive, and hence is of $O(N)$.
This implies that 
\begin{align}
    \Bigl|\lim_{L\to\infty}\bra{\psi_L}U_L^{\dagger}(t^*,0)\frac{H_L}{N}U_L(t^*,0)\ket{\psi_L}
    -\lim_{L\to\infty}\bra{\psi_L}\frac{\tilde{H}_L(t^*)}{N}\ket{\psi_L}\Bigr|
    \le t^*\times\lim_{L\to\infty}\frac{\sup_{t\in[0,t^*]}\|[\tilde{H}_L(t),H_L^{XY}]\|_{\infty}}{N}
    =O(\delta).
\end{align}
Combining this with $\bra{\psi_L}\frac{\tilde{H}_L(t^*)}{N}\ket{\psi_L}=-2J^y$, we obtain Eq.~\eqref{eq:ProblemFiniteObs_EnergyChange}.

\end{proof}

\section{\label{sec:Local_control}Local controls}

\subsection{Passivity to local controls}

In this Appendix, we
extend our results for the setup ``iMATE $+$ macroscopic operations'' to the setup ``MITE $+$ local control''. Precisely, MITE considered here corresponds to $O(L^0)$-local MITE defined by Definition~\ref{definition:MITE} and local control considered here is given as follows:
\begin{definition}[\label{definition:LocalControl}Local control]
Let $r_H$ be a positive integer independent of $L$ and consider a time-dependent Hamiltonian 
\begin{align}
    H_L(t)=\sum_{\bm{r}\in\Lambda_L}h^{\bm{r}}_L(t),
\end{align}
where $h^{\bm{r}}_L(t)$'s are $r_H$-local observables.
Here, $h^{\bm{r}}_L(t)$'s can be arbitrarily chosen for all $\bm{r}\in\Lambda_L$, and different $h^{\bm{r}}_L(t)$'s are not necessarily related by translation in contrast to Eqs.~\eqref{eq:Hamiltonian} and \eqref{eq:general_H}.
Assume that $\|h^{\bm{r}}(t)\|$'s are bounded from above by some constant independent of $L$, $\bm{r}$, and $t$.
Let $U_{L}(T,0)$ be the unitary operator generated by $H(t)$ from $t=0$ to $t=T$.
The sequence of these unitary operators $\bigl(U_{L}(T,0)\bigr)_{L\in\mathbb{N}}$ is called local control of operation time $T$.
\end{definition}
\noindent
Note that the Hamiltonian here does not have any spatial uniformity, unlike Eqs.~\eqref{eq:Hamiltonian} and \eqref{eq:general_H}. Therefore, ``local control'' is a broader class of operations than ``macroscopic operation''.

With 
these definitions of $O(L^0)$-local MITE, Definition~\ref{definition:MITE}, and local control, Definition~\ref{definition:LocalControl}, we 
obtain the following proposition by a straightforward extension of our proof for our original setup of ``iMATE $+$ macroscopic operations'':
\begin{proposition}[\label{corollary:Hokkyo_Passivity}Passivity of MITE to local control]
Let $(\rho_{L})_{L\in\mathbb{N}}$ be a sequence of density matrices that represents $O(L^0)$-local MITE described by a Hamiltonian $H_L$ with inverse temperature $\beta\ge 0$.
Let $t^*$ be an arbitrary positive number independent of $L$.
For any local control $\bigl(U_{L}(t^*,0)\bigr)_{L\in\mathbb{N}}$ of operation time $t^*$, the expectation value of $H_L$ does not decrease extensively,
\begin{align}
    \lim_{L \to \infty} \mathrm{Tr} \left[ \rho_L(t^*) \initH / N \right]
    \geq \lim_{L \to \infty} \mathrm{Tr} \left[ \rho_L(0) \initH / N \right],
    \label{eq:LocalPassivity}
\end{align}
where $\rho_L(t) = U_L(t, 0) \rho_L {U_L(t, 0)}^\dagger$.

\end{proposition}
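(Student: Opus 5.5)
The plan is to mirror the proof of Corollary~\ref{corollary:macroscopic-passivity}, replacing macroscopic equivalence by the stronger \emph{local} closeness that defines $O(L^0)$-local MITE. Set $\sigma_L=\rho_L^{\mathrm{can}}(\beta|H_L)$ and $\sigma_L(t^*)=U_L(t^*,0)\sigma_LU_L^\dagger(t^*,0)$. The target is
\begin{align}
    \lim_{L\to\infty}\frac{1}{N}\Bigl|\mathrm{Tr}\bigl[(\rho_L(t^*)-\sigma_L(t^*))H_L\bigr]\Bigr|=0 .
\end{align}
Given this, the Pusz--Woronowicz/Lenard passivity of $\sigma_L$ for $\beta\ge0$ yields $\mathrm{Tr}[\sigma_L(t^*)H_L]\ge\mathrm{Tr}[\sigma_LH_L]$. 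Moreover, the MITE condition gives $\mathrm{Tr}[\rho_LH_L]/N-\mathrm{Tr}[\sigma_LH_L]/N\to0$, because $H_L=\sum_{\bm r}h_{\bm r}$ and each term has error bounded by $\max_{\bm r}\|\rho^{\bm r}_{L|\ell}-\sigma^{\bm r}_{L|\ell}\|_1\|h\|$. Together these give \eqref{eq:LocalPassivity}, provided the limits exist. If they do not, the same argument can be run with $\liminf$ and $\limsup$.

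To prove the target, write $H_L^H(t^*)=U_L^\dagger H_LU_L=\sum_{\bm r}h_{\bm r}^H(t^*)$. The Lieb--Robinson argument of Appendix~\ref{sec:JAIVTMXN}, namely Lemma~\ref{lemma:Lieb-Robinson-bound} and the bound \eqref{eq:LRbound_LightCone}, used only $r_H$-locality and a uniform bound on the norms of the local terms. It never used translation structure, and local control satisfies both hypotheses. Each $h_{\bm r}^H(t^*)$ is therefore approximated by an operator $h^{\tilde H}_{\bm r,\ell}(t^*)$ supported on the $(2\ell+\ell_h)$-cube around $\bm r$, with error at most $R_1\ell^{d-1}e^{-R_2(\ell-v_{\mathrm{LR}}t^*)}$ uniformly in $\bm r$ and $L$, and with norm at most $\|h\|$.

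Next, $|\mathrm{Tr}[(\rho_L-\sigma_L)h^{\tilde H}_{\bm r,\ell}]|\le\|h\|\,\max_{\bm r}\|\rho^{\bm r}_{L|2\ell+\ell_h}-\sigma^{\bm r}_{L|2\ell+\ell_h}\|_1$, since the reduced states on a cube containing the support control the expectation. Summing over $\bm r$ and dividing by $N$ gives
\begin{align}
    \frac{1}{N}\bigl|\mathrm{Tr}[(\rho_L-\sigma_L)H_L^H(t^*)]\bigr|\le 2R_1\ell^{d-1}e^{-R_2(\ell-v_{\mathrm{LR}}t^*)}+\|h\|\max_{\bm r}\|\rho^{\bm r}_{L|2\ell+\ell_h}-\sigma^{\bm r}_{L|2\ell+\ell_h}\|_1 .
\end{align}
Take $L\to\infty$ first, so the second term vanishes by $O(L^0)$-local MITE (Definition~\ref{definition:MITE}) at the fixed locality $2\ell+\ell_h$. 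Then take $\ell\to\infty$ to remove the first term.

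The main point to check is that Definition~\ref{definition:MITE} is used with the \emph{maximum} over $\bm r$. Local control lacks translation structure, so the Heisenberg-evolved local terms differ from site to site, and the site-averaged quantities $\rho^{(k)}_{\infty|\ell}$ alone would not suffice. This is exactly why Example~\ref{example:Problem_CombiningiMATEandLocalControl} fails under iMATE. The uniform MITE bound supplies the needed control, and beyond this the rest is bookkeeping.
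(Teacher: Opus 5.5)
Your proposal is correct and is essentially the paper's argument: the paper only says the result follows by a ``straightforward extension'' of the iMATE proof, and your route is exactly that extension. You apply the translation-free Lieb--Robinson approximation \eqref{eq:LRbound_LightCone} site by site, let the $\max_{\bm r}$ in Definition~\ref{definition:MITE} play the role that macroscopic equivalence plays in Theorem~\ref{theorem:JAIVTMXN_3}, and finish with Gibbs passivity as in Corollary~\ref{corollary:macroscopic-passivity}. Your $\liminf$ caveat is also appropriate: local controls for different $L$ need not be related, so the left-hand limit in \eqref{eq:LocalPassivity} need not exist in general.
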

\noindent
This means that extensive work cannot be extracted from $O(L^0)$-local MITE by any local control of operation time independent of $L$.
This extends thermodynamic passivity 
by Hokkyo and Ueda \cite{Hokkyo2025} 
to the case of $\beta=0$ and to the case where $(\rho_{L})_{L\in\mathbb{N}}$ represents $O(L^0)$-local MITE but does not satisfy 
Eq.~\eqref{eq:condition_MITE_LargerLength}.

\subsection{\label{sec:Proof_Problem_CombiningiMATEandLocalControl}Proof of Eq.~\eqref{eq:Problem_CombiningiMATEandLocalControl_nonPassive} in Example~\ref{example:Problem_CombiningiMATEandLocalControl}
}

Thus, the passivity holds true for the ``MITE $+$ local control'' setup, 
in addition to the ``iMATE $+$ macroscopic operation'' setup (Corollary~\ref{corollary:macroscopic-passivity}).
One might expect that they could be extended to a broader setup of ``iMATE $+$ local control.'' 
However, Example~\ref{example:Problem_CombiningiMATEandLocalControl} shows that such an extension is impossible. Below, we prove Eq.~\eqref{eq:Problem_CombiningiMATEandLocalControl_nonPassive} in Example~\ref{example:Problem_CombiningiMATEandLocalControl}.

\begin{proof}[Proof of Eq.~\eqref{eq:Problem_CombiningiMATEandLocalControl_nonPassive}]
Let us consider a pure product state $\rho_L=\ket{\psi_L}\bra{\psi_L}$ given by Eq.~\eqref{eq:Example_MacroEqState_infty} and the initial Hamiltonian by Eq.~\eqref{eq:Problem_CombiningiMATEandLocalControl_initH}.
The sequence $(\rho_L)_{L\in\mathbb{N}}$ represents iMATE at $\beta=0$.
Consider flipping all up spins by a local control $U_L(t^*,0)$ generated by
\begin{align}
    H_L(t)=\sum_{j=1}^{L}b_L^{j}\sigma_{j}^x,
\end{align}
where $b_{L}^{j}\in\{0,1\}$ are taken such that $b_{L}^{j}=0$ holds if the spin on site~$j$ in $\ket{\psi_L}$ points downward, and $b_{L}^{j}=1$ if that spin points upward.
By taking the operation time as $t^*=\pi/2$, $U_L(t^*,0)$ becomes the $\pi$-pulse around $x$-axis for any site $j$ satisfying $b_{L}^{j}=1$. 
Applying this local control $\rho_L(t^*) = U_L(t^*, 0) \rho_L {U_L(t^*, 0)}^\dagger$, we have
\begin{align}
    \lim_{L \to \infty} \mathrm{Tr} \left[ \rho_L(t^*) \initH / N \right]=-1
    < \lim_{L \to \infty} \mathrm{Tr} \left[ \rho_L \initH / N \right]=0,
\end{align}
which means that Eq.~\eqref{eq:Problem_CombiningiMATEandLocalControl_nonPassive} holds.
\end{proof}

\section{Construction of counterexamples at a longer operation time}

\subsection{\label{sec:breakdown_macroscopic-passivity}Proof of Proposition~\ref{proposition:breakdown_macroscopic-passivity}}
We prove Proposition~\ref{proposition:breakdown_macroscopic-passivity} by an explicit construction. Since we can obtain a $d$-dimensional system by bundling $L^{d-1}$ copies of a one-dimensional system, it suffices to provide an example for the case of $d = 1$. Moreover, since the timescale can be freely rescaled by a factor independent of $L$ through multiplying the Hamiltonian by a constant, we can assume without loss of generality that $T_L < L/8$.

While we consider a system with a local Hilbert space of dimension $4$, for the sake of explanation, we use an equivalent setting where the local Hilbert space is of dimension $2$ and $L$ is even. By grouping two adjacent sites ($2j-1$ and $2j$ for $j=1,2,\dots L/2$) into a single site, we can obtain a system with a local Hilbert space dimension of $4$. It is important to note that a one-site translation in the system with a local Hilbert space dimension of $4$ corresponds to a two-site translation in the system with a local Hilbert space dimension of $2$.

\subsubsection{Setup}
We choose the initial Hamiltonian $\initH$ as
\begin{align}
    \initH = \frac{\pi}{4} \sum_{j=1}^L \vec{\sigma}_{j} \cdot \vec{\sigma}_{j+1} + \frac{\pi}{4} \sum_{j=1}^L \vec{\sigma}_{j} \cdot \vec{\sigma}_{j+3},
\end{align}
where $\vec{\sigma}_{i} \cdot \vec{\sigma}_{j} = \sum_{\mu=x,y,z}\sigma_{i}^\mu \sigma_{j}^\mu$.

In addition, for the initial state $\rho_L$, we select a pure state $\rho_L = \ket{\psi_L}\bra{\psi_L}$, where
\begin{align}
    \ket{\psi_L} &= \bigotimes_{j=1}^{L/2} \ket{\Phi}_{2j-1-2M_L \mathrm{\ mod\ } L,\ 2j+2M_L \mathrm{\ mod\ } L}.
    \label{eq:breakdown_InitialState}
\end{align}
Here, $M_L = \lfloor T_L \rfloor$ is the largest integer not greater than $T_L$, and $\ket{\Phi} = \frac{1}{\sqrt{2}} (\ket{01}-\ket{10})$ is a spin singlet state.
Then $(\rho_L)_{L\in\mathbb{N}}$
represents 
iMATE at infinite temperature because for any subsystem with diameter of $O(L^0)$ the reduced density matrix is maximally mixed.

Furthermore, we choose the macroscopic operation $(U_L(\bullet, 0))_{L\in\mathbb{N}}$ to be determined, according to Definition~\ref{definition:macroscopic-operation}, by the following additive observables and the external fields coupled to them:
\begin{align}
    B^1 = \frac{\pi}{4} \sum_{j=1}^{L} (-1)^j \vec{\sigma}_{j} \cdot \vec{\sigma}_{j+1}, \qquad
    B^2 = \frac{\pi}{4} \sum_{j=1}^{L} \vec{\sigma}_{j} \cdot \vec{\sigma}_{j+3},
\end{align}
and
\begin{align}
    f^1(t) = \begin{cases}
        + 1 & (n \leq t < n+1/2)\\
        - 1 & (n+1/2 \leq t < n+1)
    \end{cases} \quad (n \in \mathbb{Z}_{\geq 0}), \qquad
    f^2(t) = 1.
\end{align}
Then, from Eq.~\eqref{eq:general_H}, the generator of the time evolution $H(t)$ satisfies
\begin{align}
    H(t) = \begin{cases}
        \displaystyle \frac{\pi}{2} \sum_{j=1}^{L/2} \vec{\sigma}_{2j-1} \cdot \vec{\sigma}_{2j} & (n \leq t < n+1/2)\\ \\
        \displaystyle \frac{\pi}{2} \sum_{j=1}^{L/2} \vec{\sigma}_{2j} \cdot \vec{\sigma}_{2j+1} & (n+1/2 \leq t < n+1)
    \end{cases} \quad (n \in \mathbb{Z}_{\geq 0}).
    \label{eq:breakdown_Hamiltonian}
\end{align}

\subsubsection{Proof of Eq.~\eqref{eq:breakdown_macroscopic-passivity}}
We now prove that Eq.~\eqref{eq:breakdown_macroscopic-passivity} is satisfied in the setting provided above.

First, note that for any $n \in \mathbb{Z}_{\geq 0}$, it holds that
\begin{align}
    U(n+\delta t, n)
    &= U(\delta t, 0),\\
    U(n+1, n)
    &\propto \prod_{j=1}^{L/2} \mathrm{SWAP}_{2j,2j+1} \prod_{j=1}^{L/2} \mathrm{SWAP}_{2j-1,2j}.
\end{align}
From this, we can rewrite $U_L(T_L,0)$ as
\begin{align}
    U_L(T_L,0)
    &= U_L(T_L,M_L) U_L(M_L,0)
    \propto U_L(T_L-M_L, 0) \left( \prod_{j=1}^{L/2} \mathrm{SWAP}_{2j,2j+1} \prod_{j=1}^{L/2} \mathrm{SWAP}_{2j-1,2j} \right)^{M_L}.
\end{align}
Thus, we have
\begin{align}
    U_L(T_L,0) \ket{\psi_L}
    = U_L(T_L-M_L, 0) \bigotimes_{j=1}^{L/2} \ket{\Phi}_{2j-1,2j}.
    \label{eq:breakdown_FinalState}
\end{align}
Direct calculation verifies that
\begin{align}
    &\left(\bigotimes_{j=1}^{L/2} \bra{\Phi}_{2j-1,2j}\right) {U_L(T_L-M_L, 0)}^\dagger \initH U_L(T_L-M_L, 0) \left(\bigotimes_{j=1}^{L/2} \ket{\Phi}_{2j-1,2j}\right)\\
    &= \begin{cases}
        \displaystyle - \frac{3}{8} \pi L & (0 \leq T_L-M_L < 1/2)\\ \\
        \displaystyle - \frac{3}{8} \pi L \left\{ \sin^4 \pi (T_L-M_L-1/2) + \cos^4 \pi (T_L-M_L-1/2) \right\} & (1/2 \leq T_L-M_L < 1)\\
    \end{cases}
\end{align}
for sufficiently large $L$.
Therefore, since the minimum value of $\sin^4 x + \cos^4 x$ on $[0, \pi/2)$ is $1/2$, we find
\begin{align}
    \lim_{L\to\infty} \braket{\psi_L|U_L(T_L, 0)^\dagger \initH U_L(T_L, 0)/N|\psi_L}
    \leq - \frac{3}{16} \pi.
\end{align}
Here, we have used the fact that in a one-dimensional system, $N = L$.
On the other hand, since $(\ket{\psi_L})_{L\in\mathbb{N}}$ is locally indistinguishable from the maximally mixed state, we have
\begin{align}
    \lim_{L\to\infty} \braket{\psi_L|\initH / N|\psi_L}
    = 0.
\end{align}
Hence, we obtain 
\begin{align}
    \lim_{L\to\infty} \braket{\psi_L|U_L(T_L, 0)^\dagger \initH U_L(T_L, 0)/N|\psi_L}
    < \lim_{L\to\infty} \braket{\psi_L|\initH / N|\psi_L}.
\end{align}
This is what we aimed to prove.

\subsection{\label{sec:breakdown_entropy}Proof of Proposition~\ref{proposition:breakdown_entropy}}

In this subsection, we prove Proposition~\ref{proposition:breakdown_entropy} by an explicit construction.
The basic strategy is almost the same as that of Proposition~\ref{proposition:breakdown_macroscopic-passivity}.

\subsubsection{Setup}
We choose the initial Hamiltonian $H_0$ as
\begin{align}
    H_0 = \frac{\pi}{4} \sum_{j=1}^L \vec{\sigma}_{j} \cdot \vec{\sigma}_{j+1}
\end{align}
and the final Hamiltonian $H_1$ as
\begin{align}
    H_1 = \frac{\pi}{4} \sum_{j=1}^L \vec{\sigma}_{j} \cdot \vec{\sigma}_{j+1} + \frac{\pi}{8} \sum_{j=1}^L \vec{\sigma}_{j} \cdot \vec{\sigma}_{j+2}.
\end{align}
The latter is known as the Majumdar-Ghosh model~\cite{Majumdar1969}.
Furthermore, we select the initial state $\rho_L$ as Eq.~\eqref{eq:breakdown_InitialState}
with $M_L = T_L$.
Moreover, we choose the macroscopic operation $(U_L(\bullet, 0))_{L\in\mathbb{N}}$ to be determined, according to Definition~\ref{definition:macroscopic-operation}, by the following additive observable and the external fields coupled to it:
\begin{align}
    B^1 = \frac{\pi}{4} \sum_{j=1}^{L} (-1)^j \vec{\sigma}_{j} \cdot \vec{\sigma}_{j+1}, \qquad
\end{align}
and
\begin{align}
    f^1(t) = \begin{cases}
        + 1 & (n \leq t < n+1/2)\\
        - 1 & (n+1/2 \leq t < n+1)
    \end{cases} \quad (n \in \mathbb{Z}_{\geq 0}), \qquad
\end{align}
Then, from Eq.~\eqref{eq:general_H}, the generator of the time evolution $H(t)$ is written as
\begin{align}
    H(t) = \begin{cases}
        H_0 & (t\le 0)\\
        \displaystyle \frac{\pi}{2} \sum_{j=1}^{L/2} \vec{\sigma}_{2j-1} \cdot \vec{\sigma}_{2j} & (0<t<T_L,\ n \leq t < n+1/2)\\ \\
        \displaystyle \frac{\pi}{2} \sum_{j=1}^{L/2} \vec{\sigma}_{2j} \cdot \vec{\sigma}_{2j+1} & (0<t<T_L,\ n+1/2 \leq t < n+1)\\
        H_1 & (t\ge T_L)
    \end{cases} \quad (n \in \mathbb{Z}_{\geq 0}).
\end{align}

Under this setup, we prove Eq.~\eqref{eq:Counterex_IncreasingEntropy} in the next subsubsection.
Note that, since the initial state is at $\beta=0$, Proposition~\ref{proposition:breakdown_entropy} is independent of the choice of the initial Hamiltonian $H_0$.
We choose the above $H_0$ such that $B^{1}$ and $f^{1}(t)$ become simple.

\subsubsection{Proof of Eq.~\eqref{eq:Counterex_IncreasingEntropy}}
We now prove that Eq.~\eqref{eq:Counterex_IncreasingEntropy} is satisfied for the setting provided in the previous subsubsection.

It is easy to see that the reduced density matrix of $\rho_L$ on a subset with diameter less than $4T_L+1$ is maximally mixed. This indicates that $s^{\mathrm{mac}}_{\ell}[(\rho_L)_{L\in\mathbb{N}}]=\log 2$.
Therefore, what we need to show below is only $\lim_{\ell\to\infty}s^{\mathrm{mac}}_{\ell}[(\sigma_L)_{L\in\mathbb{N}}]=0$.
As shown in Sec.~\ref{sec:breakdown_macroscopic-passivity}, the state at time $t=T_L$ is given by a pure state
\begin{align}
    U_L(T_L,0) \ket{\psi_L}
    =  \bigotimes_{j=1}^{L/2} \ket{\Phi}_{2j-1,2j}=:\ket{\phi_L}.
    \label{eq:Counterex_Entropy_State_T}
\end{align}
This state is known as the ground state of the Majumdar-Ghosh Hamiltonian $H_1$~\cite{Majumdar1969}.
This shows that, at $t>T_L$, the state remains unchanged under the unitary time evolution generated by $H_1$, indicating that the long-time averaged  state~\eqref{eq:Counterex_Entropy_FinalState} is given by
\begin{align}
    \sigma_{L}=\ket{\phi_L}\bra{\phi_L}.
\end{align}
Since this state is an eigenstate of the two-site shift operator $\mathcal{T}^2$, the coarse-grained local state~\eqref{eq:DEF_rho_infty^red} is given by 
\begin{align}
    (\sigma_L)|^{k}_{\Cell}=\frac{1}{2}\bigl(\sigma_L)|_{\Cell}+\frac{1}{2}\bigl(\mathcal{T}\sigma_L\mathcal{T}^{\dagger})|_{\Cell}.
\end{align}
Because of the structure of $\ket{\phi_L}$, we can show that the rank of each of $(\sigma_L)|_{\Cell}$ and $(\mathcal{T}\sigma_L\mathcal{T}^{\dagger})|_{\Cell}$ is bounded by $4$.
This implies that
\begin{align}
    \mathrm{rank}\ (\sigma_L)|^{k}_{\Cell}
    \le \mathrm{rank}\ (\sigma_L)|_{\Cell}
    +\mathrm{rank}\ (\mathcal{T}\sigma_L\mathcal{T}^{\dagger})|_{\Cell}
    \le 8
\end{align}
for $k=1,...,K^d$ and for arbitrary $\ell\ll L$.
Since the von Neumann entropy is bounded from above by the rank of the density matrix, we have
\begin{align}
    s^{\mathrm{mac}}_{\ell}[(\sigma_L)_{L\in\mathbb{N}}]=\frac{1}{K^d}\sum_{k=1}^{K^d}\frac{S_{\mathrm{vN}}[(\sigma_\infty)|^{k}_{\Cell}]}{\ell^d}\le \frac{\log 8}{\ell^d},
\end{align}
which indicates $\lim_{\ell\to\infty}s^{\mathrm{mac}}_{\ell}[(\sigma_L)_{L\in\mathbb{N}}]=0$.

\section{\label{sec:MeasurementDeviation}Smallness of deviation in measurement outcomes}

In this section, we show that the deviation of the measurement outcomes in Corollary~\ref{corollary:Measurement_s^mac} from the $L\to\infty$ limit of the expectation value of the density of additive observables is negligibly small for obtaining quantum macroscopic entropy density.

\begin{corollary}[\label{corollary:MeasurementDeviation_s^mac}Deviation of the measurement outcomes does not affect the quantum macroscopic entropy density]
Consider the measurement protocol given in Corollary~\ref{corollary:Measurement_s^mac}. If it is applied to a system with $L<\infty$, the deviation of its prediction from the $L\to\infty$ limit, Eq.~\eqref{eq:Measurement_s^mac_Add=rho}, is given by the difference between the sample average of $M$ measurement outcomes, $a^{i_1},a^{i_2},...,a^{i_M}$, of $A_{\mathcal{S}^{(k)}}(\ket{\bm{j}}\bra{\bm{j}^\prime})/|\mathcal{S}^{(k)}|$ and the LHS of Eq.~\eqref{eq:Measurement_s^mac_Add=rho}, 
\begin{align}
    \delta_{\bm{j},\bm{j}^\prime}^{(k)}(\bm{i}):=\frac{a^{i_1}+...+a^{i_M}}{M}-\lim_{L\to\infty}\mathrm{Tr}\Bigl[\rho_L \frac{A_{\mathcal{S}^{(k)}}(\ket{\bm{j}}\bra{\bm{j}^\prime})}{|\mathcal{S}^{(k)}|}\Bigr].
\end{align}
Its mean square over all outcomes, denoted by $\mathbb{E}_{\bm{i}}\bigl[\bigl|\delta_{\bm{j},\bm{j}^\prime}^{(k)}(\bm{i})\bigr|^2\bigr]$, can be taken sufficiently small by taking $M$ and $L$ sufficiently large and choosing the measurement device appropriately. [If $(\rho_L)_{L\in\mathbb{N}}$ represents a normal macroscopic state, $M=1$ is sufficient.]
This means that in such measurements, we can evaluate $\bra{\boldsymbol{j}^\prime}_{\Cell}\rho_{\infty|\ell}^{(k)}\ket{\boldsymbol{j}}_{\Cell}$ in a sufficiently high precision.

Moreover, the precision of quantum macroscopic entropy density is given as follows.
Let $\delta^{(k)}$ be a $D^{\ell^d}\times D^{\ell^d}$ matrix whose matrix elements are given by $\delta_{\bm{j},\bm{j}^\prime}^{(k)}(\bm{i})$, 
and $\tilde{\rho}_{\ell}^{(k)}$ be a $D^{\ell^d}\times D^{\ell^d}$ density matrix whose matrix elements $\bra{\bm{j}^\prime}\tilde{\rho}_{\ell}^{(k)}\ket{\bm{j}}$ are given by the sample average $(a^{i_1}+...+a^{i_M})/M$ of $A_{\mathcal{S}^{(k)}}(\ket{\bm{j}}\bra{\bm{j}^\prime})/|\mathcal{S}^{(k)}|$. It satisfies
\begin{align}
    \|\tilde{\rho}_{\ell}^{(k)}-\rho_{\infty|\ell}^{(k)}\|_{1}=\|\delta^{(k)}\|_{1},
\end{align}
where $\|\bullet\|_{1}$ is the trace norm.
From the Fannes-Audenaert inequality~\cite{Audenaert2007}, we have 
\begin{align}
    \frac{|S_{\mathrm{vN}}[\tilde{\rho}_{\ell}^{(k)}]-S_{\mathrm{vN}}[\rho_{\infty|\ell}^{(k)}]|}{\ell^d}\le \|\delta^{(k)}\|_{1}\frac{\log D}{2}+\frac{H_2(\|\delta^{(k)}\|_{1}/2)}{\ell^d},
\end{align}
where $H_2(x)=-x\log x-(1-x)\log (1-x)$.
Since all elements of $\delta^{(k)}$ can be taken sufficiently small when taking $M$ and $L$ sufficiently large and choosing the measurement device appropriately,
$\tilde{\rho}_{\ell}^{(k)}$ and $S_{\mathrm{vN}}[\tilde{\rho}_{\ell}^{(k)}]$ becomes sufficiently close to $\rho_{\infty|\ell}^{(k)}$ and $S_{\mathrm{vN}}[\rho_{\infty|\ell}^{(k)}]$, respectively.
Therefore, taking the average of $S_{\mathrm{vN}}[\tilde{\rho}_{\ell}^{(k)}]/\ell^d$ over all primitive macroscopic subsystems $k$, we obtain quantum macroscopic entropy $s^{\mathrm{mac}}_{\ell}[(\rho_{L})_{L\in\mathbb{N}}]$.
\end{corollary}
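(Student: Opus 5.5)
The statement to be proved is the last one displayed: the corollary that measurement deviations do not affect the quantum macroscopic entropy density. The plan has two parts. First, bound the mean square deviation $\mathbb{E}_{\bm{i}}[|\delta_{\bm{j},\bm{j}^\prime}^{(k)}(\bm{i})|^2]$. Second, convert an entrywise bound on $\delta^{(k)}$ into a trace-norm bound and feed it into the Fannes--Audenaert inequality.

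\emph{Step 1: the mean square deviation.} Let $\bar{a}_L:=\mathrm{Tr}[\rho_L A_{\mathcal{S}^{(k)}}(\ket{\bm{j}}\bra{\bm{j}^\prime})/|\mathcal{S}^{(k)}|]$, and let $a_\infty$ be its limit; this limit exists because $(\rho_L)$ represents a macroscopic state. I would split
\begin{align}
\mathbb{E}_{\bm{i}}\bigl[|\delta|^2\bigr]\le 2\,\mathbb{E}_{\bm{i}}\Bigl[\Bigl|\tfrac{a^{i_1}+\dots+a^{i_M}}{M}-\bar{a}_L\Bigr|^2\Bigr]+2|\bar{a}_L-a_\infty|^2 .
\end{align}
The second term vanishes as $L\to\infty$. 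For the first term, I would treat the (non-Hermitian) observable through its Hermitian and anti-Hermitian parts, each of which is an additive observable. Assuming an appropriately chosen, unbiased measurement device for each part, the $M$ independent outcomes have mean $\bar{a}_L$. Their variance is then the intrinsic variance plus the device noise, and the sample mean has variance $1/M$ times that. Since $\|A_{\mathcal{S}^{(k)}}(\cdot)\|/|\mathcal{S}^{(k)}|\le 1$, the intrinsic variance is $O(1)$, so the first term is $O(1/M)$ plus the device-noise contribution. In the normal case the intrinsic variance is $o(L^0)$ by Definition~\ref{definition:MacroState_Normal}, so $M=1$ already suffices.

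\emph{Step 2: from entries to entropy.} The identity $\|\tilde{\rho}_\ell^{(k)}-\rho_{\infty|\ell}^{(k)}\|_1=\|\delta^{(k)}\|_1$ is immediate from Eq.~\eqref{eq:Measurement_s^mac_Add=rho}. To pass from entries to the trace norm, I would use $\|\delta^{(k)}\|_1\le D^{\ell^d/2}\|\delta^{(k)}\|_2$ and then Jensen's inequality:
\begin{align}
\mathbb{E}\|\delta^{(k)}\|_1\le D^{\ell^d/2}\Bigl(\sum_{\bm{j},\bm{j}^\prime}\mathbb{E}|\delta_{\bm{j},\bm{j}^\prime}^{(k)}|^2\Bigr)^{1/2}.
\end{align}
For fixed $\ell$ the number of entries, $D^{2\ell^d}$, is finite, so the right-hand side goes to zero as $M,L\to\infty$ with the device noise made small.

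Next I would apply the Fannes--Audenaert inequality~\cite{Audenaert2007} on the $D^{\ell^d}$-dimensional space and divide by $\ell^d$; this gives exactly the displayed bound. Because $H_2$ is continuous with $H_2(0)=0$, the bound is small whenever $\|\delta^{(k)}\|_1$ is. Averaging over $k$ then yields $s^{\mathrm{mac}}_\ell$ to the corresponding precision.

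\emph{Main obstacle.} The delicate point is that $\tilde{\rho}_\ell^{(k)}$ may fail to be a genuine density matrix: it could be non-positive or carry small non-Hermitian parts. In that case Fannes--Audenaert does not apply directly. My first remedy would be to Hermitize $\tilde{\rho}_\ell^{(k)}$, clip its negative eigenvalues, and renormalize the trace to one. Each of these projections moves the matrix by at most a constant times $\|\delta^{(k)}\|_1$ in trace norm, because the true $\rho_{\infty|\ell}^{(k)}$ is itself a valid state. Hence the conclusion holds with a slightly larger constant.
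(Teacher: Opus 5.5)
Your proposal is correct and follows essentially the same route as the paper. You split the mean-square deviation into the statistical fluctuation (suppressed by $1/M$, or $o(L^0)$ for normal states), the device error, and the finite-size term $\bar a_L-a_\infty$, and then feed the resulting entrywise control into the Fannes--Audenaert inequality. Two of your steps fill in points the paper takes for granted: the explicit passage from entries to the trace norm via $\|\delta^{(k)}\|_1\le D^{\ell^d/2}\|\delta^{(k)}\|_2$, and the projection of $\tilde{\rho}_\ell^{(k)}$ onto a valid density matrix at the cost of a constant factor in trace distance. The paper simply asserts that small entries suffice and declares $\tilde{\rho}_\ell^{(k)}$ to be a density matrix.
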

\begin{proof}[Proof outline.]
The mean square of the deviation is composed of three contributions,
\begin{align}
    \mathbb{E}_{\bm{i}}[|\delta_{\bm{j},\bm{j}^\prime}^{(k)}(\bm{i})|^2]
    =|\delta_{\bm{j},\bm{j}^\prime}^{(k),\mathrm{fluc}}|^2
    +|\delta_{\bm{j},\bm{j}^\prime}^{(k),\mathrm{error}}|^2
    +|\delta_{\bm{j},\bm{j}^\prime}^{(k),\mathrm{size}}|^2,
\end{align}
where 
the first, second, and third terms represent the quantum fluctuation of the additive observable, measurement error, and finite-size effect, respectively.
For the first term, if measurements are repeated $M$ times, it scales as $\delta_{\bm{j},\bm{j}^\prime}^{(k),\mathrm{fluc}}=O(1/\sqrt{M})$, which becomes sufficiently small when $M$ is sufficiently large.
[In addition, if $(\rho_L)_{L\in\mathbb{N}}$ represents a normal macroscopic state, it scales as $\delta_{\bm{j},\bm{j}^\prime}^{(k),\mathrm{fluc}}=o(L^0)$ even for a single-shot measurement.]
For the second term, choosing an appropriate measurement device makes $\delta_{\bm{j},\bm{j}^\prime}^{(k),\mathrm{error}}$ sufficiently small~\footnote{Note that, since post-measurement state is not used, the disturbance of measurement is not problematic, and $\delta_{\bm{j},\bm{j}^\prime}^{(k),\mathrm{error}}$ can be taken arbitrarily small.}, such as $\delta_{\bm{j},\bm{j}^\prime}^{(k),\mathrm{error}}=O(1/\sqrt{N})$~\cite{Fujikura2016,Shimizu2017}.
For the third term, it scales as $\delta_{\bm{j},\bm{j}^\prime}^{(k),\mathrm{size}}=o(L^0)$ by the following definition,
\begin{align}
    \delta_{\bm{j},\bm{j}^\prime}^{(k),\mathrm{size}}:=\mathrm{Tr}\Bigl[\rho_L \frac{A_{\mathcal{S}^{(k)}}(\ket{\bm{j}}\bra{\bm{j}^\prime})}{|\mathcal{S}^{(k)}|}\Bigr]-\lim_{L\to\infty}\mathrm{Tr}\Bigl[\rho_L \frac{A_{\mathcal{S}^{(k)}}(\ket{\bm{j}}\bra{\bm{j}^\prime})}{|\mathcal{S}^{(k)}|}\Bigr].
\end{align}
The second and third terms (and the first term in the case of normal macroscopic states) become sufficiently small when $L$ is taken sufficiently large.
\end{proof}

\twocolumngrid
\bibliography{document}
\end{document}